\renewcommand{\ALG@name}{Protocol}
\newcommand{\appref}[1]{\hyperref[#1]{{Appendix~\ref*{#1}}}}
\newcommand{\be}{\begin{eqnarray} \begin{aligned}}
\newcommand{\ee}{\end{aligned} \end{eqnarray} }
\newcommand{\benn}{\begin{eqnarray*} \begin{aligned}}
\newcommand{\eenn}{\end{aligned} \end{eqnarray*}}
\newcommand*{\textfrac}[2]{{{#1}/{#2}}}
\newcommand*{\cB}{\mathcal{B}}
\newcommand*{\cC}{\mathcal{C}}
\newcommand*{\cE}{\mathcal{E}}
\newcommand*{\cF}{\mathcal{F}}
\newcommand*{\cG}{\mathcal{G}}
\newcommand*{\cH}{\mathcal{H}}
\newcommand*{\cK}{\mathcal{K}}
\newcommand*{\cL}{\mathcal{L}}
\newcommand*{\cM}{\mathcal{M}}
\newcommand*{\cN}{\mathcal{N}}
\newcommand*{\tr}{\mathop{\mathrm{tr}}\nolimits}
\newcommand{\bc}{\begin{center}}
\newcommand{\ec}{\end{center}}
\newcommand{\abs}[1]{\left|#1 \right|}				% ABS
\newcommand{\e}{\mathrm{e}}
\newtheorem{theorem}{Theorem}[section]
\newtheorem{lemma}[theorem]{Lemma}
\newtheorem{corollary}[theorem]{Corollary}
\def\01{\{0,1\}}
\newcommand{\ceil}[1]{\lceil{#1}\rceil}
\newcommand*{\ExpE}{\mathbb{E}}
\DeclareSymbolFont{cyrletters}{OT2}{wncyr}{m}{n}
\DeclareMathSymbol{\Sha}{\mathalpha}{cyrletters}{"58}
\newcommand*{\gkp}{\mathsf{GKP}}
\newcommand*{\round}[1]{\lfloor#1\rceil}
\newcommand*{\tGKP}{\mathsf{gkp}}
\newcommand*{\accepts}{\mathsf{accepts}}
\newcommand*{\acc}{\mathsf{acc}}
\newcommand*{\rej}{\mathsf{rej}}
\newcommand*{\vac}{\mathsf{vac}}
\newcommand*{\poly}{\mathsf{poly}}
\newcommand*{\her}{\mathsf{her}}
\newcommand*{\rhs}{RHS }
\begin{document}

\title{The complexity of Gottesman-Kitaev-Preskill states}

\author[1,3]{Lukas Brenner}
\author[1,3]{Libor Caha}
\author[1,2,3]{\authorcr Xavier Coiteux-Roy}
\author[1,3]{Robert Koenig}
\affil[1]{School of Computation, Information and Technology, Technical University of Munich, Munich, Germany}
\affil[2]{School of Natural Sciences, Technical University of Munich, Munich, Germany}
\affil[3]{Munich Center for Quantum Science and Technology, Munich, Germany}
\maketitle

\begin{abstract}
We initiate the study of state complexity for continuous-variable quantum systems. Concretely, we consider a setup with bosonic modes and auxiliary qubits,  where available operations include Gaussian one- and two-mode operations, single- and two-qubit operations, as well as qubit-controlled phase-space displacements. We define the (approximate) complexity of a bosonic state by the minimum size of a circuit that prepares an $L^1$-norm approximation to the state.

We propose a new circuit which prepares an approximate Gottesman-Kitaev-Preskill (GKP) state~$\ket{\gkp_{\kappa,\Delta}}$. Here $\kappa^{-2}$ is  the variance of the envelope and~$\Delta^2$ is the variance of the individual peaks. 
 We show that the circuit accepts with constant probability and --- conditioned on acceptance --- the output state is polynomially close in $(\kappa,\Delta)$ to the state~$\ket{\gkp_{\kappa,\Delta}}$. 
 The size of our circuit is  linear in $(\log 1/\kappa,\log 1/\Delta)$. 
To our knowledge, this is the first protocol for GKP-state preparation with fidelity guarantees for the prepared state. 

We also show converse bounds, establishing that the  linear circuit-size dependence of our construction is optimal. This fully characterizes the complexity of GKP states.
\end{abstract}
\tableofcontents
\hypersetup{
    colorlinks=true,
    linkcolor=blue,
    filecolor=blue,
    urlcolor=blue,
    citecolor=blue
}

\section{Introduction}
\subsection{GKP states and their use}
Gottesman-Kitaev-Preskill (GKP) states are a key resource in continuous-variable (CV)  quantum information processing~\cite{GKP}. Originally introduced as basis states of an error-correcting code protecting against phase-space displacement noise, their primary use  has traditionally been in the context of quantum fault-tolerance. While the original paper~\cite{GKP} proposes the use of such states to  encode individual qubits into oscillators, a number of subsequent works make essential use of GKP states as building blocks in other fault-tolerance constructions: Examples include toric- and surface-GKP-codes (obtained by concatenating a qubit code with the GKP code)~\cite{ToricGKPCode,DesignedBiasGKPcodesresil,HighthresholdFTcomputation,NohChamberlandPhysRevA.101.012316} and oscillator-to-oscillator codes where GKP states are used in auxiliary oscillators to define encoding isometries~\cite{OscOscCodes}. Beyond the design of robust quantum memories,   several GKP-based schemes for achieving universal fault-tolerant quantum computation in CV systems subject to random displacement errors have been proposed, starting from the original work~\cite{GKP}, see e.g.,~\cite{HighthresholdFTcomputation}. Fault-tolerant  CV measurement-based computation schemes
based on GKP codes were proposed in~\cite{MenicucciCVCluster}. In this context, GKP states do not only act as a substrate for protecting quantum information, but also provide computational power: 
As shown in~\cite{Baragiolauniversality}, 
universal fault-tolerant quantum computation can be achieved using only Gaussian (linear optics) operations when a  supply of GKP states is provided.

Applications of GKP states outside the area of error correction include schemes for achieving maximal violations of CHSH-type Bell inequalities~\cite{WengeretalBellinequalityviolation,Etesseetal2014}, and procedures for process tomography of small displacements~\cite{DuivenvoordenSingleMode17} (themselves applicable to quantum fault-tolerance). Distributed sensing protocols relying on GKP codes were proposed in~\cite{Zhuang_2020,ZhouBradyZhuang22}.

\subsection{Prior work on GKP-state preparation} The versatility and central role of GKP states in these applications directly motivates the design and analysis of corresponding preparation procedures. Numerous proposals have been made in the past. We refer to~\cite{GKPreviewEikbushetcal} for a recent review, which includes a discussion of concrete physical setups. 

The primary distinction between  different proposals is  the type of non-Gaussian operation involved. In the seminal work~\cite{GKP}, the use of a (unitary) cubic coupling between two oscillators (with a Hamiltonian of the form $Q_1 (Q_2^2+P_2^2)$) was suggested. Protocols such as~\cite{Terhal_2016,Shi_2019,Weigand_2018} use a single auxiliary  qubit and a qubit-oscillator coupling (generated by a Hamiltonian of the form~$\sigma_z P$) allowing for qubit-controlled displacements. 
Such qubit-controlled displacements were also used in~\cite{CampagneEikbushetal20}, which provides a remarkable  proof-of-principle experimental demonstration of GKP-codes and associated fault-tolerance operations.

A different approach to creating GKP states is that of engineering  a ``GKP Hamiltonian'' whose ground states are (approximate) GKP code states, an approach already suggested in the seminal work~\cite{GKP}, see e.g.,~\cite{Ganeshan16, Doucot2012,Rymarzetal21}. Closely related to this are proposals  to realize such Hamiltonians as
effective evolution operators, e.g., by dynamical decoupling~\cite{Conrad21} or as effective (Floquet) Hamiltonians in  periodically driven systems, see e.g.,~\cite{sellem2024gkpqubitprotecteddissipation,KolesnikowFloquet24}. We refer to~\cite{GKPreviewEikbushetcal} for a more complete discussion of these different proposals, as well as experimental demonstrations.

\subsection{Complexity of bosonic states} \label{subsec: complexity}
We initiate the study of the complexity of  CV states, with a special focus on (approximate) GKP states.

 To define the notion of the complexity of a CV  quantum state formally, let us briefly review the established definition of  complexity for an $n$-qubit ``target'' state~$\ket{\Psi_\textrm{target}}\in (\mathbb{C}^2)^{\otimes n}$, 
see e.g.,~\cite{Aaronson2016TheCO}. For $\varepsilon\geq 0$, the (unitary state) complexity~$\cC^*_\varepsilon(\ket{\Psi_\textrm{target}})$ of~$\ket{\Psi_\textrm{target}}$   is defined  as the minimum size (i.e., number of operations) of a circuit~$U$  over a
gate set~$\cG$ which turns a product state~$\ket{0}^{\otimes (n+m)}$ consisting of~$n$ ``system'' qubits and $m$ auxiliary qubits into an $\varepsilon$-approximate version of~$\ket{\Psi_\textrm{target}}$ on the system qubits.
Closeness is measured in terms of the $L^1$-norm, i.e., the reduced density operator of the state~$U\ket{0}^{\otimes (n+m)}$ after tracing out the $m$~auxiliary qubits should satisfy
\begin{align}
    \left\|\tr_{m} U\proj{0}^{\otimes (n+m)}U^\dagger -\proj{\Psi_\textrm{target}}\right\|_1\leq \varepsilon\ .
\end{align}
 Here~$\cG$ is a set of allowed (unitary) operations that is computationally universal. For example, it could be chosen as the set of all single-qubit Hadamard 
 and $T$-gates, and all two-qubit~$\mathsf{CNOT}$ gates, but the exact choice of gate set~$\cG$ is typically irrelevant for complexity-theoretic considerations. This is because particular choices only affect overall constants and do not affect the (asymptotic) scaling.
 
 Beyond unitary state complexity,  similar natural notions of the complexity of a state can be defined by considering preparation protocols involving measurements, adaptivity (meaning that the applied gates depend on previously obtained measurement results), as well as post-selection (heralding).  Our definitions for the bosonic case mirror these concepts of complexity of an $n$-qubit state.

 \subsubsection{Allowed operations for hybrid qubit--boson systems}\label{sec: allowed operations}
In the following, we consider single-mode bosonic states~$\ket{\Psi_{\mathrm{target}}}\in L^2(\mathbb{R})$ for simplicity (but these considerations immediately generalize to multimode systems). To define a notion of complexity of such a state, we need to agree on the available resources, i.e., the auxiliary systems and operations used.  Here it is important to note that one typically considers infinite  families of states. This can be a countable family 
 $\{\ket{\Psi_n}\}_{n\in\mathbb{N}}\subset L^2(\mathbb{R})$ 
 or more generally a multiparameter family such as the family~$\{\ket{\gkp_{\kappa,\Delta}}\}_{(\kappa,\Delta)\in (0,\infty)^2}$ of GKP states we define below. In either case, our  notion of the complexity~$\cC^*(\ket{\Psi})$ of a state~$\ket{\Psi}$ should be such that  the function $n\mapsto \cC^*(\ket{\Psi_n})$ respectively $(\kappa,\Delta)\mapsto \cC^*(\ket{\Psi_{\kappa,\Delta}})$  quantifies 
 the number of operations needed to prepare these states for different elements in the family, yet with the same ($n$- respectively $(\kappa,\Delta)$-independent)  set of operations.   
In particular,  the considered set of operations should only consist of ``physically reasonable'' operations. For example, this means that 
only bounded-strength operations should be allowed, implying that highly squeezed states necessarily have high  complexity as their preparation requires  application of a large number of (squeezing)  unitaries from such a gate set. Our choice of operations will satisfy this requirement.

In our work, we are particularly interested in qubit--oscillator couplings as an experimental building block for designing GKP preparation procedures, similar to Refs.~\cite{Terhal_2016,Shi_2019,Weigand_2018}. We refer to Ref.~\cite{liu2024hybridoscillatorqubitquantumprocessors}, which provides a thorough review of the state-of-the-art of such operations, including, in particular, an extensive discussion of concrete physical realizations. In addition to such qubit--oscillator operations, we allow arbitrary single- and two-qubit 
operations, and (limited) bosonic Gaussian operations.  

In more detail, the systems we consider consist of one ``system'' oscillator (boson), $m$ auxiliary oscillators (bosons)  and $m'$~auxiliary qubits. That is, it is described by the Hilbert space~$L^2(\mathbb{R})\otimes L^2(\mathbb{R})^{\otimes m}\otimes (\mathbb{C}^2)^{\otimes m'}$. The set of operations we consider consists of the following:
\begin{enumerate}[(i)]
\item\label{it:preparation}
Preparation of the single-qubit computational basis state~$\ket{0}$ on a qubit, and of the vacuum state~$\ket{\vac}$ on any mode. (Here~$\ket{\vac}\in L^2(\mathbb{R})$ is the ground state of the harmonic oscillator Hamiltonian, i.e., a centered Gaussian state saturating Heisenberg's uncertainty relation for the position- and momentum-quadratures.)

\item \label{it:singletwoqubitunitary}
Arbitrary single- and two-qubit unitaries acting on any qubit, or any pair of qubits.

\item\label{it:constantstrenghquadratic}
Gaussian one- and two-mode unitaries generated by Hamiltonians, which are quadratic in the mode operators, and have bounded strength.
In more detail, for an $n$-mode bosonic system (here $n=m+1$), we consider Hamiltonians of the form
\begin{align}
H(A)&=\frac{1}{2} \sum_{j, k=1}^{2 n} A_{j, k} R_j R_k=:\frac{1}{2}R^T AR
\end{align}
where $R=(Q_1,P_1,\ldots,Q_n,P_n)$ denotes the vector of quadrature operators, and where $A=A^T\in \mathsf{Mat}_{2n\times 2n}(\mathbb{R})$ is a symmetric matrix with real entries. Denoting by $J=-J^T\in  \mathsf{Mat}_{2n\times 2n}(\mathbb{R})$ the symplectic form defined by the canonical commutation relations $[R_j,R_k]=iJ_{j,k}I$ (where $I$ denotes the identity on the Hilbert space), the action of the associated Gaussian  unitary~$U(A)=e^{iH(A)}$ on~$L^2(\mathbb{R})^{\otimes n}$ can be described by the sympletic group  element
\begin{align}
S(A)&=e^{A J}\in \mathrm{Sp}(2 n)=\left\{S \in \mathsf{Mat}_{2n\times 2n}(\mathbb{R}) \mid S J S^T=J\right\}\ .\label{eq:symplecticmatrix}
\end{align}
That is, the (Gaussian) unitary~$U(A)$ acts as
\begin{align}
U(A)^\dagger R_j U(A) &=\sum_{k=1}^{2n}S(A)_{j,k} R_k\qquad\textrm{ for }\qquad j\in [2n]:=\{1,\ldots,2n\}\ .\label{eq:symplecticactionunitarygaussian}
\end{align}
The unitary $U(A)$ is a one-mode unitary acting on a mode~$j\in [n]$ if $H(A)$ is a linear combination of monomials containing the position- and momentum operators~$Q_j$ and $P_j$ associated with the $j$-th mode only. Similarly,~$U(A)$ is a two-mode unitary acting on two modes~$(j,k)$, $j\neq k$ if and only if $H(A)$ only involves monomials that are products of the operators~$\{Q_j,P_j,Q_k,P_k\}$. 
We say that $U(A)$ has bounded strength if the operator norm~$\|A\|$ is bounded by a constant, which can be arbitrary but fixed (its choice does not  affect the scaling of the resulting notion of complexity). Concretely, we will say that~$U(A)$ is bounded if
\begin{align}
\|A\|&\leq 2\pi\ .\label{eq:anormbound}
\end{align}

\item \label{it:constantstrengthlinear}One-mode phase-space displacements 
of constant strength. For an $n$-mode (here: $n=m+1$) bosonic system with (vector of) quadrature operators
$R=(Q_1,P_1,\ldots,Q_n,P_n)$,
every  vector $d\in\mathbb{R}^{2n}$ defines 
a unitary
\begin{align}
D(d)=e^{i\sum_{j=1}^{2n} d_j R_j}\ .
\end{align}
called the  (Weyl) phase-space displacement by~$d$. Its action on mode operators is given by
\begin{align}
    D(d)^\dagger R_j D(d)&=R_j+d_jI\qquad\textrm{ for }\qquad j\in [2n]\ .
\end{align}
Correspondingly, the vector~$d$ will often be referred to as a displacement vector.
The unitary $D(d)$ is a single-mode displacement (acting on a mode~$j\in [n]$) if all entries of~$d$ other than~$d_{2j-1}$ (associated with~$Q_j$) and $d_{2j}$ (associated with~$P_j$) vanish. It is of bounded strength if the Euclidean norm~$\|d\|=\sqrt{\sum_{j=1}^{2n} d_j^2}$ of $d$ is bounded by a constant. In the following, we again arbitrarily choose this constant to be~$2\pi$, i.e., we say that~$D(d)$ is a bounded-strength phase-space displacement if and only if 
\begin{align}
    \|d\|\leq 2\pi\ .\label{eq:dnormbound}
\end{align}

\item \label{it:constantstrengthcontrolledlinear}Qubit-controlled single-mode displacements of bounded strength. For a system of~$n$ bosonic modes (here $n=m+1$) and $m'$ qubits, this is defined as follows.
Given an index~$r\in [m']$ of a qubit and a vector~$d\in \mathbb{R}^{2n}$, the
qubit-$r$-controlled phase-space displacement by~$d$ is the unitary
 \begin{align}
\mathsf{ctrl}_{r}D(d)&=
I_{L^2(\mathbb{R})^{\otimes n}}\otimes \proj{0}_r\otimes + D(d)\otimes \proj{1}_r\  .
\end{align}
Here, we write $\proj{0}_r$ for the~$m'$-qubit operator~$I^{\otimes r-1}\otimes\proj{0}\otimes I^{\otimes m'-r}$, and similarly for~$\proj{1}_r$.
We say that~
$\mathsf{ctrl}_{j}D(d)$ is single-mode
and of bounded strength if this is the case for the displacement operator~$D(d)$.
\item \label{it: measurements}Homodyne ($Q$-)quadrature-measurements of any  bosonic mode, and computational basis measurements on any qubit.
The former is defined in terms of the spectral decomposition of the quadrature operator, i.e., measuring the position (for $Q$) or momentum (for $P$). We assume that these measurements are destructive, i.e., we do not consider post-measurement states on the measured modes. 
\end{enumerate}
In the following, we refer to the operations from~\eqref{it:preparation}-\eqref{it: measurements} as elementary operations. Moreover, we define the set of all unitaries of the form~\eqref{it:singletwoqubitunitary}, \eqref{it:constantstrenghquadratic},~\eqref{it:constantstrengthlinear} and~\eqref{it:constantstrengthcontrolledlinear} as $\cG$.

We consider the use of each of these elementary operations as contributing one unit of complexity to any protocol. Since we further assume that any bosonic mode or qubit is initialized before further manipulations, this means that the overall complexity of any protocol will be lower bounded by the total  number of bosonic modes and qubits involved. (The latter quantity is sometimes referred to as the width of a circuit/protocol.)

Let us comment on the choice of the unitaries in the above list. We note that using
 the Solovay--Kitaev  algorithm (see e.g.,~\cite{dawsonnielson_solovaykitaev}), any single-qubit or two-qubit unitary~$U$ can be approximated to arbitrary precision by a product of generators from a finite universal gate set such as $\{H,T,\mathsf{CNOT}\}$.
 For convenience, we include the set of all single- and two-qubit unitaries in our set of operations, see~\eqref{it:singletwoqubitunitary} above. As a consequence, our complexity measure involves the number of one- and two-qubit operations (but not the number of individual gates from a finite gate set).
 
 Similar observations can be made about our bosonic operations: The bosonic and the qubit--boson unitaries~\eqref{it:constantstrenghquadratic},~\eqref{it:constantstrengthlinear} and~\eqref{it:constantstrengthcontrolledlinear} can be generated by the following set of generators:
\begin{enumerate}[(a)]
\item\label{it:constantstrengthgaussianunitaries}
One- and two-mode Gaussian unitary operations of constant strength. 
Specifically, we use  single-mode displacements of the form
\begin{align}
    e^{ia Q}, e^{iaP}\qquad\qquad\textrm{ with }\qquad a\in [-2\pi,2\pi]\ ,\label{eq:displacementsbasic}
\end{align}
single-mode squeezing operations of the form
\begin{align}
    S(z) &= e^{i \frac{z}{2}\left(Q P+P Q\right)}\qquad\textrm{ with }\qquad z\in [-2\pi,2\pi]\ ,\label{eq:squeezerbasic}
\end{align}
single-mode phase-space rotations (phase shifts), i.e.,
\begin{align}
    P(\phi) = e^{i\frac{\phi}{2} (Q^2 +P^2)}\qquad \textrm{with}\qquad\phi \in [-2\pi, 2\pi] \ , \label{eq: phase shift}
\end{align}
and two-mode beamsplitters acting on two modes~$j$ and $k$, that is,
\begin{align}
B_{j, k}(\omega) &= e^{i \omega\left(Q_j Q_k+P_j P_k\right)}
\qquad\textrm{ with }\qquad \omega \in [-2\pi,2\pi]\ .\label{eq:beamsplitterbasic}
\end{align}
Note that these gates generate the group of Gaussian unitaries on the bosonic modes. Indeed, the group of passive (i.e., number-preserving) Gaussian unitaries (corresponding to the orthogonal symplectic group) is generated by beamsplitters and phase shifters only~\cite{reckzeilinger}.
Note that,  by the Euler decomposition
(which factorizes every symplectic matrix as $S=O_1ZO_2$ with $O_1,O_2$ orthogonal symplectic and $Z$ diagonal symplectic, see~\cite{arvind_real_1995}), every Gaussian unitary is a product of passive Gaussian unitaries and single-mode squeezers.
Since passive Gaussian unitaries can be compiled into a sequence of phase-shifters and beamsplitters (see~\cite{reckzeilinger} for a description of a corresponding algorithm),  it follows that every Gaussian unitary can be realized by beamsplitters, phase-shifters and single-mode-squeezing operations.

\item  \label{it:constriesantqubitcontrolled}
Qubit-controlled phase-space displacements  of constant strength. These are unitaries of the form
\begin{align}
\renewcommand*{\arraystretch}{1.225}
\begin{matrix*}[l]
\mathsf{ctrl}e^{i\theta P}&:=\proj{0}\otimes I+\proj{1}\otimes \exp(i\theta P)\ , \qquad  \theta \in [-2\pi,2\pi]\,\\
\mathsf{ctrl}e^{i\theta Q}&:=\proj{0}\otimes I+\proj{1}\otimes \exp(i\theta Q)\ , \qquad  \theta \in [-2\pi,2\pi]\ .
\end{matrix*}\label{eq:controlleddisplacementconstant}
\end{align}

\end{enumerate}
In fact, the protocols we propose are expressed 
entirely in terms of the generators~\eqref{it:constantstrengthgaussianunitaries}--\eqref{it:constriesantqubitcontrolled} as well as single-qubit Clifford unitaries and the qubit--boson operation~\eqref{it:constantstrengthcontrolledlinear}.

The  constant-strength restriction in~\eqref{it:constantstrenghquadratic},~\eqref{it:constantstrengthlinear},~\eqref{it:constantstrengthcontrolledlinear}  is motivated by the fact that coupling strengths derived from physical interactions are typically constant (i.e., independent of the system size). In particular, this means that the evolution time required to realize such a unitary depends (typically linearly) on the involved parameter. A physically reasonable notion of complexity should take this into account. We achieve this by restricting our parameters to fixed  finite intervals. We note that the choice of constants
in~\eqref{eq:anormbound} and~\eqref{eq:dnormbound} (and similarly the choice of~$2\pi$ in~\eqref{eq:displacementsbasic}-\eqref{eq:controlleddisplacementconstant}) is arbitrary and does not affect the overall scaling we obtain in asymptotic settings.
 
\subsubsection{Unitary state complexity and circuit complexity}\label{sec: unitary state complexity}
Let us first consider the problem of preparing a  target state~$\ket{\Psi_{\textrm{target}}}\in L^2(\mathbb{R})$ by means of unitary operations. Concretely, we consider protocols of the following form on a system with Hilbert space~$L^2(\mathbb{R})\otimes L^2(\mathbb{R})^{\otimes m}\otimes (\mathbb{C}^2)^{\otimes m'}$, i.e., $m+1$~oscillators and $m'$~qubits.
\begin{enumerate}[(a)]
\item 
Each qubit is initialized in the computational basis state~$\ket{0}$, and each bosonic mode is initialized in the vacuum state~$\ket{\vac}$. After this step, the system is in the state~$\ket{\Psi}=\ket{\vac}\otimes\ket{\vac}^{\otimes m}\otimes\ket{0}^{\otimes m'}$.
\item 
A sequence of~$T\in\mathbb{N}$ unitaries~$U_1,\ldots,U_T\in\cG$  is applied.
Denoting by $U=U_T\cdots U_1$ 
the corresponding unitary, the state after this step is the state~$U\ket{\Psi}$.
\item 
The output  of the protocol is the reduced density operator of the first mode, i.e., $\rho=\tr_{m,m'} U\proj{\Psi}U^\dagger$, where
$\tr_{m,m'}$ denotes the partial trace over all $m$~auxiliary modes and $m'$~auxiliary qubits. 
The output state~$\rho$
 is supposed to be a good approximation to the target state~$\proj{\Psi_{\textrm{target}}}$ (as quantified below).
\end{enumerate}
A qubit--boson protocol for  preparing
a target state~$\ket{\Psi_{\textrm{target}}}$ of this form will be referred to a unitary state preparation protocol.  
Observe that the total number of operations 
from the set~\eqref{it:preparation}--\eqref{it: measurements} of allowed operations is equal to~$T+(m+1)+m'$ 
since $m+1$~vacuum states and~$m'$ qubits are prepared initially. We say that the protocol achieves error~$\varepsilon\geq 0$
if the output state is $\varepsilon$-close in $L^1$-distance to the desired target state~$\ket{\Psi_{\mathrm{target}}}$, i.e., if
\begin{align}
\left\|
\tr_{m,m'}U\proj{\Psi}U^\dagger- 
\proj{\Psi_\textrm{target}}
\right\|_1 &\leq \varepsilon\ .\label{eq:lowerboundapproximationcomplexity}
\end{align}
 We shall say that a state~$\ket{\Psi_{\mathrm{target}}}\in L^2(\mathbb{R})$
has unitary circuit complexity~$\cC^{*}_{\varepsilon}(\ket{\Psi_{\mathrm{target}}})$ for $\varepsilon\geq 0$ if 
there is a unitary state preparation protocol 
achieving error~$\varepsilon\geq 0$
such that
\begin{align}
    \cC^{*}_{\varepsilon}(\ket{\Psi_{\mathrm{target}}})&=T+(m+1)+m'\ ,
\end{align}
and the \rhs is minimal among all unitary state preparation protocols with this property. In other words, the reduced density matrix on the first mode of $U(\ket{\vac}\otimes\ket{\vac}^{\otimes m}\otimes \ket{0}^{\otimes m'})$ is $\varepsilon$-close in $L^1$-distance to the target state~$\ket{\Psi_\textrm{target}}$, and the protocol is resource-optimal in the sense that the number $T+(m+1)+m'$~of operations used is minimal.

To study this notion of (unitary) complexity of a state~$\ket{\Psi_{\textrm{target}}}$, we will often decompose a given unitary~$U\in L^2(\mathbb{R})$ into a product~$U=U_T\cdots U_1$ of unitaries~$U_1,\ldots,U_T$ belonging to our allowed gate set~$\cG$ (we only consider cases where this kind of factorization exists and is exact). We will denote by
\begin{align}
    \cC_{\cG}(U)=\min \{T\in\mathbb{N}\ |\ \exists U_1,\ldots,U_T\in\cG \textrm{ such that }U=U_T\cdots U_1\}\ \label{eq: def unitary compl gate}
\end{align}
the minimum number of unitaries needed in such a factorization.
Note that this is closely connected to the unitary state complexity of a state: If
$\|U\proj{\vac}U^\dagger-\proj{\Psi_{\mathrm{target}}}\|_1\leq \varepsilon$, then the unitary state complexity is bounded by~$\cC^*_\varepsilon(\ket{\Psi_{\textrm{target}}})\leq \cC_{\cG}(U)+1$. This is because the factorization of the unitary~$U$ into unitaries from~$\cG$ provides a protocol for preparing~$\ket{\Psi}$, and this protocol involves only the preparation of one vacuum state and application of~$\cC_{\cG}(\ket{\Psi_{\mathrm{target}}})$ gates from~$\cG$.
\subsubsection{Heralded state complexity} \label{sec: heralded state complexity}
Going beyond unitary preparation, we will also consider heralded state generation.
Here, the protocol additionally produces an output flag~$F\in \{\acc,\rej\}$ (i.e., a classical output bit)  indicating whether the protocol accepts (succeeds) or rejects (in the case of failure) the output. In more detail, we consider protocols of the following form:
\begin{enumerate}[(i)]
\item 
Each mode is initialized in the vacuum state $\ket{\vac}$ and each qubit in the computational state $\ket{0}$, i.e., the initial state is~$\ket{\Psi}=\ket{\vac}\otimes\ket{\vac}^{\otimes m}\otimes\ket{0}^{m'}$.
\item 
Subsequently, a unitary~$U$ as described before is applied, i.e., $U$ consists of gates from the set~$\cG$. We denote by~$T_1$ the number of such gates.\label{it:unitaryapplicationtmaf}
\item 
A measurement is then applied to the auxiliary modes and the qubits of the prepared state~$U\ket{\Psi}$. More precisely, a homodyne position-measurement is applied to every auxiliary bosonic mode, and a computational basis measurement to every  auxiliary qubit. 
Let us denote the corresponding POVM by~$\{E_\alpha\}_{\alpha\in\cM}$, where $\cM=\mathbb{R}^m\times \{0,1\}^{m'}$ denotes the set of measurement outcomes.
    \item
    Depending on the measurement outcome~$\alpha$,
    a flag~$F(\alpha)\in \{\acc,\rej\}$ is computed
    by means of an efficiently computable function~$F:\cM\rightarrow\{\acc,\rej\}$.
\item 
If $F(\alpha)=\acc$, a vector~$d(\alpha)=(d_Q(\alpha),d_P(\alpha))\in\mathbb{R}^2$ is computed from the measurement outcome by means of an efficiently computable function~$d:\cM\rightarrow\mathbb{R}^2$. 
The post-measurement state on the first mode is then displaced by application of  the unitary~$D(d(\alpha))=e^{i(d_Q(\alpha)Q-d_P(\alpha)P)}$, i.e., a phase-space displacement by the vector~$d(\alpha)$.

We note that this kind of ``correction'' operation by 
translation is commonly considered, e.g., in the context of Steane-type code state preparation for GKP-codes~\cite{GKP}.

\item The output of the protocol is the output flag~$F(\alpha)$ and, assuming that~$F(\alpha)=\acc$, the reduced density operator on the first mode. Conditioned on acceptance, we want the state of  the first mode to be close to the target state~$\ket{\Psi_{\textrm{taget}}}$. 
\end{enumerate}
We call a protocol for state preparation of this form a heralded state preparation protocol. 
The definition  implies that the (average) output state of the protocol conditioned on acceptance  is given by 
\begin{align}
    \rho_\acc = \frac{1}{p_\acc}\int_{F^{-1}(\{\acc\})} p(\alpha) D(d(\alpha)) \rho^{(\alpha)} D(d(\alpha))^\dagger d\alpha \label{eq: def rho acc}
\end{align}
where we introduce normalized states $\rho^{(\alpha)}$ as 
\begin{align}
    p(\alpha)\rho^{(\alpha)} = \tr_{m,m'}\left((I\otimes E_\alpha)U\proj{\Psi}U^\dagger \right)\qquad\textrm{with}\qquad p(\alpha)= \bra{\Psi}U^\dagger(I\otimes E_\alpha)U\ket{\Psi}
\end{align} and  where 
\begin{align}
    p_\acc =  \int_{F^{-1}(\{\acc\})} p(\alpha) d\alpha \label{eq:acceptanceprobabilitydefinitionm}
\end{align}
is the probability that the protocol accepts (i.e., that $F(\alpha)=\acc$). 
We say that the protocol produces a $\varepsilon$-approximation of the state~$\ket{\Psi_{\textrm{target}}}$ with acceptance probability~$p_{\acc}$ (see Eq.~\eqref{eq:acceptanceprobabilitydefinitionm}) if
\begin{align}
    \left\|\rho_\acc- \proj{\Psi_{\mathrm{target}}}\right\|_1&\leq \varepsilon\ .
\end{align}

The number of operations 
from the set~\eqref{it:preparation}--\eqref{it: measurements}  applied in this protocol 
is determined by the number~$T_1$ of unitaries applied to implement~$U$ (see~\eqref{it:unitaryapplicationtmaf}), and the unitaries applied to implement the ``correction''~$D(d(\alpha))$ after obtaining the measurement result~$\alpha$. We consider the worst case, i.e., we  will use the  minimal number of elementary unitaries from $\cG$ required to implement~$D(d(\alpha))$, maximized over all possible shifts~$d(\alpha)$ associated with different measurement outcomes $\alpha\in F^{-1}(\{\acc\})$. That is, we use the quantity (cf.~\eqref{eq: def unitary compl gate}) 
\begin{align}
    T_2 = \sup_{\alpha\in F^{-1}(\{\acc\})}\cC_{\cG}(D(d(\alpha)))\, . \label{eq: def T_2}
\end{align}
We establish upper and lower bounds on the quantity
$\cC_{\cG}(D(d))$, for arbitrary~$d\in\mathbb{R}^2$ in Section~\ref{sec:coherentstatecomplexity}.

Taking into account the~$(m+1)+m'$ single-mode and single-qubit preparations, the maximal number~$T_1+T_2$ of unitary operations, and the $m+m'$~single-mode and single-qubit measurements, the maximal (i.e., worst-case over measurement outcomes) number of operations from~\eqref{it:preparation}--\eqref{it: measurements} in such a protocol is therefore
\begin{align}
    T_1 + T_2 + (2m+1) + 2m' \ .\label{eq:sumexpressionmtotaloperations}
\end{align} 
(Here, we do not consider  the complexity of the classical computation used to evaluate the functions~$F$ and $d$, but simply assume that this is efficient. Indeed, this is the case for the protocols we consider.)

We say that a state~$\ket{\Psi_{\mathrm{target}}}\in L^2(\mathbb{R})$
has heralded state complexity~$\cC^{*}_{p, \varepsilon}(\ket{\Psi_{\mathrm{target}}})$ for $\varepsilon\geq 0$ and $p>0$ if there is a heralded state preparation protocol which prepares
a $\varepsilon$-approximation to the state~$\ket{\Psi_{\mathrm{target}}}$
with probability at least $p_{\acc}\geq p$, whose number of operations (see Eq.~\eqref{eq:sumexpressionmtotaloperations}) is equal to
\begin{align}
    \cC^{*}_{p, \varepsilon}(\ket{\Psi_{\mathrm{target}}})&=T_1 + T_2 +(2m+1)+2m'\ ,
\end{align}
and has the property that the \rhs of this equation is minimal among all heralded state preparation protocols with this property.

\subsection{Our contribution: Complexity bounds for (approximate) GKP states}
We are interested in characterizing the complexity of states in the two-parameter family~$\{\ket{\gkp_{\kappa, \Delta}}\}_{\kappa,\Delta>0}\subset L^2(\mathbb{R})$ of states defined by 
\begin{align}
\gkp_{\kappa, \Delta}(x)&= \frac{C_{\kappa,\Delta}\sqrt{\kappa}}{\sqrt{\pi \Delta}} \sum_{z\in\mathbb{Z}}
 e^{-\kappa^2z^2/2}e^{-(x-z)^2/(2\Delta^2)}
\qquad\textrm{ for }\qquad x\in\mathbb{R}\ .\label{eq:gkpkappdeltageneraldefinition}
\end{align}
Here $C_{\kappa,\Delta}>0$ is a constant such that the vector~$\ket{\gkp_{\kappa,\Delta}}$ is normalized. 
For each pair~$(\kappa,\Delta)\in [0,\infty)\times [0,\infty)$, Eq.~\eqref{eq:gkpkappdeltageneraldefinition} defines 
an approximate (finitely squeezed) GKP state~$\ket{\gkp_{\kappa, \Delta}}$
with peaks of width~$\Delta$ localized around each integer (constituting a ``grid'' following common terminology in the GKP literature), and a Gaussian envelope of variance~$1/\kappa^2$, see Fig.~\ref{fig: GKP main}.
We note that our  conventions ensures an integer spacing of the peaks. This convention is slightly different from the one typically used in the literature. More precisely, this difference is accounted for by an application of a constant-strength squeezing operator, which amounts to rescaling the parameters~$(\kappa,\Delta)$  by (irrelevant) constant factors.
\begin{figure}
    \centering
    \includegraphics[width=0.95\linewidth]{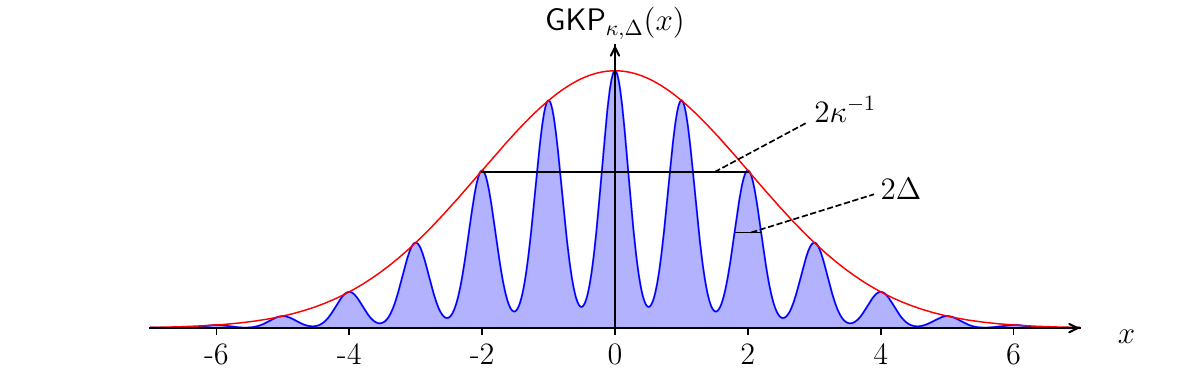}
    \caption{The approximate GKP state $\ket{\gkp_{\kappa, \Delta}}$ in position space. The red line represents the envelope~$\eta_\kappa(x)\propto e^{-\kappa^2x^2/2}$ of the state, a Gaussian  with variance~$\kappa^{-2}$.    
    The GKP wavefunction is illustrated in blue (the shading is for visual emphasis only). According to our convention, this function has Gaussian peaks of variance~$\Delta^2$ at all integers.   
    \label{fig: GKP main}}
\end{figure}

\subsubsection{New  upper and lower bounds on GKP-state preparation}
Our main result is an upper bound on the heralded complexity of approximate GKP  states. Succinctly, it can be stated
as follows: There is a polynomial~$q(\kappa,\Delta)=\poly(\kappa,\Delta)$ 
with no constant terms (i.e., $q(0,0)=0$) such that for all functions~$\varepsilon(\kappa,\Delta)$ and $p(\kappa,\Delta)$ satisfying
    $p(\kappa,\Delta) \in [0,1/10]$ and $\varepsilon(\kappa,\Delta) \ge q(\kappa,\Delta)$ for all sufficiently small $(\kappa,\Delta)$ (i.e., below some fixed constants), we have (see Corollary~\ref{cor: upper bound her state complexity}) that
    \begin{align}
        C_{p(\kappa,\Delta), \varepsilon(\kappa,\Delta)}^{*,\mathsf{her}}(\ket{\gkp_{\kappa,\Delta}}) \le O(\log1/\kappa + \log1/\Delta) \qquad\textrm{for}\qquad(\kappa,\Delta)\rightarrow(0,0)\, .\label{eq:mainresultsummarized}
    \end{align}
 Eq.~\eqref{eq:mainresultsummarized} implies that the state~$\ket{\gkp_{\kappa,\Delta}}$  with parameters~$(\kappa,\Delta)$ can be prepared with a constant success probability and an error vanishing polynomially in~$\Delta$ using resources that scale only linearly in~$\log 1/\kappa$ and $\log 1/\Delta$. We note that we are typically interested in sequence~$\{(\kappa_n,\Delta_n)\}_{n\in\mathbb{N}}$ such that $(\kappa_n,\Delta_n)\rightarrow (0,0)$ for $n\rightarrow\infty$. 
The corresponding sequence~$\{\ket{\gkp_{\kappa_n,\Delta_n}}\}_{n\in\mathbb{N}}$
of states then is an approximating sequence to the ``ideal'' GKP state, a formal linear combination of Dirac-delta distributions localized at integers.

The proof of Eq.~\eqref{eq:mainresultsummarized} actually shows a stronger statement: An approximation to the state~$\ket{\gkp_{\kappa,\Delta}}$ can be obtained probabilistically using a system consisting of only two bosonic modes and a single qubit.
We refer to Section~\ref{sec: gkp state prep} for the description and analysis of the corresponding protocol.

Complementary to this upper bound, we establish the following complexity lower bounds by using the fact that the state~$\ket{\gkp_{\kappa,\Delta}}$ has a mean photon number scaling as a function of~$(\kappa,\Delta)$, and that the operations \eqref{it:constantstrengthgaussianunitaries}--\eqref{it:constriesantqubitcontrolled} are moment-limited. We first have (see Corollary~\ref{cor: unitary state compl gkp limit}) that
\begin{align}
\cC^{*}_{1}(\ket{\gkp_{\kappa,\Delta}})&\geq \Omega\left(\log1/\kappa + \log 1/\Delta\right) \qquad\textrm{for}\qquad (\kappa,\Delta) \rightarrow (0,0)\ .
\end{align}
Furthermore, we show there exists a polynomial $s(\kappa,\Delta)$ with $s(0,0)=0$ such that for all functions $p(\kappa,\Delta)$ and $\varepsilon(\kappa,\Delta)$  satisfying $0\le \varepsilon(\kappa,\Delta) \le p(\kappa,\Delta)$ and $s(\kappa,\Delta) \le p(\kappa,\Delta)\le 1$ for sufficiently small $(\kappa,\Delta)$, we also have (see Corollary~\ref{cor: her state complexity}) that
\begin{align}
\cC_{p(\kappa,\Delta),\varepsilon(\kappa,\Delta)}^{*,\mathsf{her}}(\ket{\gkp_{\kappa,\Delta}}) \ge \Omega\left(\log1/\kappa + \log 1/\Delta \right) \qquad\textrm{for}\qquad (\kappa,\Delta) \rightarrow (0,0)\ .
    \end{align} 
This result implies, in particular, that the approximation error is lower bounded by a constant 
 (independent of $(\kappa,\Delta)$) for any protocol that has constant acceptance probability~$p_\acc$, and uses a number of elementary operations that is sublinear in~$(\log 1/\kappa,\log 1/\Delta)$. 
 
 Combining our protocol with this lower bound establishes the complexity of approximate GKP states:
\begin{corollary}
    There is a polynomial $r(\kappa,\Delta)$ with $r(0,0)=0$ such that for all functions $p(\kappa,\Delta)$ and $\varepsilon(\kappa,\Delta)$ satisfying $r(\kappa,\Delta) \le \varepsilon(\kappa,\Delta) \le p(\kappa,\Delta) \le 1/10$ for sufficiently small $(\kappa,\Delta)$, the heralded state complexity is
    \begin{align}
\cC_{p(\kappa,\Delta),\varepsilon(\kappa,\Delta)}^{*,\mathsf{her}}(\ket{\gkp_{\kappa,\Delta}}) = \Theta(\log1/\kappa + \log1/\Delta) \qquad\textrm{for}\qquad(\kappa,\Delta) \rightarrow (0,0)\, .
    \end{align}
\end{corollary}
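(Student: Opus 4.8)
The plan is to derive this corollary simply by combining the upper bound of Corollary~\ref{cor: upper bound her state complexity} with the lower bound of Corollary~\ref{cor: her state complexity}, the only non-automatic step being the reconciliation of the two polynomial thresholds (the polynomial~$q$ appearing in the upper bound and the polynomial~$s$ appearing in the lower bound) into a single polynomial~$r$ with $r(0,0)=0$.

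First I would fix the polynomial $q(\kappa,\Delta)$ with $q(0,0)=0$ furnished by Corollary~\ref{cor: upper bound her state complexity} and the polynomial $s(\kappa,\Delta)$ with $s(0,0)=0$ furnished by Corollary~\ref{cor: her state complexity}. Since the pointwise maximum of two polynomials need not be a polynomial, I would instead dominate both by a linear function near the origin: writing $q(\kappa,\Delta)=\sum_{i+j\ge 1}a_{ij}\kappa^i\Delta^j$, each monomial with $i+j\ge 1$ satisfies $\kappa^i\Delta^j\le \kappa+\Delta$ for $\kappa,\Delta\in(0,1]$, so $|q(\kappa,\Delta)|\le\big(\sum_{i,j}|a_{ij}|\big)(\kappa+\Delta)$ on $(0,1]^2$, and likewise for~$s$. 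Hence, setting
\begin{align}
r(\kappa,\Delta):=C_1\,(\kappa+\Delta)
\end{align}
with $C_1$ chosen at least as large as the sum of the absolute values of the coefficients of $q$ and of $s$, one obtains a polynomial with $r(0,0)=0$ such that $r(\kappa,\Delta)\ge q(\kappa,\Delta)$ and $r(\kappa,\Delta)\ge s(\kappa,\Delta)$ for all $(\kappa,\Delta)$ in a fixed neighbourhood of the origin, and moreover $r(\kappa,\Delta)>0$ whenever $\kappa,\Delta>0$.

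Next I would verify that the hypothesis $r(\kappa,\Delta)\le\varepsilon(\kappa,\Delta)\le p(\kappa,\Delta)\le 1/10$ (for all sufficiently small $(\kappa,\Delta)$) implies both sets of hypotheses needed to invoke the two corollaries. For the upper bound: $p\in[0,1/10]$ holds since $p\le 1/10$ and $p\ge\varepsilon\ge r>0$, while $\varepsilon\ge r\ge q$ holds near the origin, so Corollary~\ref{cor: upper bound her state complexity} gives $\cC^{*,\mathsf{her}}_{p,\varepsilon}(\ket{\gkp_{\kappa,\Delta}})\le O(\log1/\kappa+\log1/\Delta)$. For the lower bound: $0\le\varepsilon\le p$ holds by assumption, $p\le 1$ follows from $p\le 1/10$, and $s\le r\le\varepsilon\le p$ holds near the origin, so Corollary~\ref{cor: her state complexity} gives $\cC^{*,\mathsf{her}}_{p,\varepsilon}(\ket{\gkp_{\kappa,\Delta}})\ge\Omega(\log1/\kappa+\log1/\Delta)$. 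Combining the two estimates yields $\cC^{*,\mathsf{her}}_{p,\varepsilon}(\ket{\gkp_{\kappa,\Delta}})=\Theta(\log1/\kappa+\log1/\Delta)$ as $(\kappa,\Delta)\to(0,0)$, the relevant ``sufficiently small'' regime being the intersection of those of the two corollaries with $(0,1]^2$, which is again a regime of the required form.

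I do not expect any genuine obstacle here: all the substance resides in the two corollaries being combined, whose proofs — the heralded preparation protocol of Section~\ref{sec: gkp state prep} on the one side, and the moment-limitedness converse on the other — are taken as given. The single point demanding care is the bookkeeping of thresholds: one must not conflate the condition $\varepsilon\ge q$ (needed for the upper bound) with $\varepsilon\ge s$ (which, via $\varepsilon\le p$, underlies $p\ge s$ needed for the lower bound), which is exactly why a common dominating polynomial~$r$ is introduced; and one must check that the chain $r\le\varepsilon\le p\le 1/10$ is oriented precisely as both corollaries require — the upper bound needing $\varepsilon$ not too small and $p$ not too large, the lower bound needing $\varepsilon$ no larger than $p$ and $p$ not too small.
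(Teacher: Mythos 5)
Your proposal is correct and follows exactly the paper's route: the corollary is obtained by combining the upper bound of Corollary~\ref{cor: upper bound her state complexity} with the lower bound of Corollary~\ref{cor: her state complexity}. The paper leaves the reconciliation of the two polynomial thresholds $q$ and $s$ into a single $r$ implicit, and your explicit choice of a dominating polynomial $r(\kappa,\Delta)=C_1(\kappa+\Delta)$ with the subsequent hypothesis-checking is a valid way to carry out that step.
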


In particular, for any two constants~$(p,\varepsilon)$ such that $\varepsilon \le p \le 1/10$, 
 the heralded state complexity~$\cC^{*,\mathsf{her}}_{(p,\varepsilon)}(\ket{\gkp_{\kappa,\Delta}})$ 
 of the approximate GKP state satisfies   \begin{align}
\cC^{*,\mathsf{her}}_{(p,\varepsilon)}(\ket{\gkp_{\kappa,\Delta}})
 &=\Theta\left(\log 1/\kappa+\log 1/\Delta\right)\qquad\textrm{for}\qquad (\kappa,\Delta) \rightarrow (0,0)\ . \end{align}

\subsubsection{Alternative figures of merit for GKP-state preparation}
Our notion of state complexity  quantifies the accuracy of the prepared states using the $L^1$-distance. To our knowledge, the consideration of this stringent quality measure for (approximate)  GKP states is new. We believe it is  especially important for their algorithmic use. 

In the quantum fault-tolerance literature, the figure of merit that is typically considered is motivated by the fact that an ideal GKP state~$\ket{\gkp} \propto \sum_{z\in \mathbb{Z}} \ket{z}$ is 
the simultaneous~$+1$-eigenstates of two commuting phase-space displacement operators~$S_P,S_Q$ (with respect to our conventions, these are $S_P=e^{-iP}$ and $S_Q=e^{2\pi iQ}$). Correspondingly, effective squeezing parameters $\Delta_P(\rho)$ and $\Delta_Q(\rho)$ of a density operator~$\rho\in\cB(L^2(\mathbb{R}))$ are introduced in~\cite{DuivenvoordenSingleMode17} as     \begin{align}
        \Delta_P(\rho) := \sqrt{\log\textfrac{1}{|\tr (S_P\rho)|^2}}\qquad\textrm{ and }\qquad 
        \Delta_Q(\rho):= \sqrt{\log\textfrac{1}{|\tr (S_Q\rho)|^2}}\ .\label{fig:effectivesqueezingparameters}
    \end{align}
    These (formally) vanish for the ideal GKP state~$\ket{\gkp}$, and, correspondingly, upper bounds on these quantities are used as a quality measure for the prepared state, see e.g., Refs.~\cite{Weigand_2018,Hastrup_2021}. In Section~\ref{sec: effective squeezing}, we show upper bounds on these quantities for states~$\rho$ produced by our protocol.
    
    We emphasize, however, that upper bounds on $\Delta_P(\rho)$ and $\Delta_Q(\rho)$ only are not sufficient to conclude that~$\rho$ is close to a state of the form~$\ket{\gkp_{\kappa,\Delta}}$. For example, although such upper bounds have been established for the output state~$\rho$ of a certain protocol (similar to ours) in~\cite{Hastrup_2021}, the state~$\rho$ obtained in that reference is  far (in $L^1$-distance) from an approximate GKP state. Nevertheless, 
    the use of the effective squeezing parameters given in~\eqref{fig:effectivesqueezingparameters} is suitable and typically sufficient for fault-tolerance applications, where the key property of the prepared states is an  approximate phase-space translation-invariance with respect to a lattice of phase-space translation vectors. Upper bounds on the quantities~\eqref{fig:effectivesqueezingparameters} provide a quantitative expression of this translation-invariance.
    
\subsubsection{Different approaches to preparing GKP states}
It is worth mentioning that a number of existing protocols for approximate GKP-state preparation are based on the defining property of ideal GKP states. For example, the  protocols proposed in~\cite{Terhal_2016,Shi_2019,Weigand_2018} rely on the idea of gradually projecting into an eigenspace of the operator $S_P$ with eigenvalue $e^{i\theta}$ for some $\theta \in [0, 2\pi)$, while extracting information about $\theta$. This is achieved by the standard phase-estimation procedures for (controlled) unitaries. Finally, the outcome state is corrected by a shift depending on an estimate  $\hat{\theta}$ of $\theta$, approximately creating a $+1$~eigenstate of $S_P$.

More precisely, the protocol first introduced in~\cite{Terhal_2016} and further modified in~\cite{Shi_2019} uses $n$ qubit ancillas to perform $n$ rounds of phase estimation by repetition. Starting from a squeezed vacuum state,  two new equidistant peaks are added in each round, with an amplitude following a binomial distribution post-selected on the outcome of certain qubit measurements being all zero. Asymptotically, the resulting
wave function can be approximated by a Gaussian distribution with envelope parameter scaling as $\kappa = O(1/n)$ by the central limit theorem.
 This  reflects the approximate projection onto the $+1$~eigenspace of $S_P$. For different (non-zero) qubit measurement outcomes, a corresponding eigenvalue can be estimated, and a suitable correction is applied.

The protocol in~\cite{Weigand_2018} has a different form, but can also be understood as applying a projection onto the~$+1$-eigenspace of~$S_P$, albeit in a gradual fashion:  It uses $n$ rounds and $2^n$ squeezed cat states as input to create an approximate GKP state with $2^n$ peaks.
In each round, two states with $2^k$ peaks are mapped to a state with $2^{k+1}$ peaks. Depending on $P$-quadrature measurement outcomes, the eigenvalue of $S_P$ can be estimated, and a correction is applied subsequently.

Two preparation protocols tailored to the platform of neutral atoms are presented in~\cite{bohnmann2024bosonicquantumerrorcorrection}. The first protocol uses post-selection on mid-circuit measurements of the auxiliary qubit, the second requires mid-circuit reset of the auxiliary  qubit.

We follow a somewhat different approach, closer to the unitary circuit proposed in~\cite{Hastrup_2021}. Our protocol proceeds in two stages: first, an approximate ``comb'' state with $2^n$~peaks (and rectangular envelope) is prepared by a process involving $n$~rounds. Subsequently, we use a protocol coupling two oscillators and a homodyne measurement to create a Gaussian envelope. We note that for some applications, the comb states may be of independent interest.

\subsubsection*{Outline}
Our  paper is structured as follows.
In Section~\ref{sec:coherentstatecomplexity},
we study the complexity of coherent states. 
In Section~\ref{sec: comb state prep}, we introduce comb states as a special kind of grid-states featuring a rectangular envelope. We provide a unitary quantum circuit that prepares these states efficiently in the number of desired peaks and with high fidelity. In Section~\ref{sec: Gaussian envelope shaping}, we present a heralding protocol that imprints a Gaussian envelope on a comb state. Consequently, in Section~\ref{sec: gkp state prep}, we combine the results from Section~\ref{sec: comb state prep} and Section~\ref{sec: Gaussian envelope shaping}, yielding a protocol that prepares approximate GKP states. In Section~\ref{sec: converse bound}, we prove a converse bound for the heralded state complexity of GKP states.

\section{The (zero-error) unitary complexity of coherent states\label{sec:coherentstatecomplexity}}
In this section, we consider the (unitary) complexity of coherent states. These states can be prepared exactly using the set of operations we consider (i.e., preparation of~$\ket{\vac}$ on bosonic modes,~$\ket{0}$ on qubits, and unitaries  belonging to the set~$\cG$, see Section~\ref{sec: allowed operations}). Motivated by this, we consider their  zero-error complexity only. This simple problem serves as a warm-up and illustrates some key concepts. 

To fix notation, let $D(d)= e^{i(d_QQ -d_PP)}$ be the (Weyl) phase-space displacement operator associated with $d= (d_Q,d_P)\in \mathbb{R}^2$. Throughout this section, we consider the coherent state
\begin{align}
\ket{d}=D(d)\ket{\vac}\qquad\textrm{ with }\qquad d=(d_Q,d_P)\in\mathbb{R}^2\ .\label{eq:coherentstatedefinition}
\end{align}

\subsection{Protocols for preparing coherent states}
Let us first consider protocols for generating the state~\eqref{eq:coherentstatedefinition}. 

Clearly, one way of generating the coherent state~$\ket{d}$ using only a single bosonic mode, no auxiliary modes (i.e., $m=0$) and no qubits~(i.e., $m'=0$) is to simply realize the displacement~$D(d)$ by a sequence of displacements: Because
$D(d_1)D(d_2)\propto D(d_1+d_2)$ by the Weyl relations, the state~$\ket{d}$ is proportional to
\begin{align}
\ket{d}\propto D(d/T)^T\ket{\vac}\qquad\textrm{ for any }\qquad T\in \mathbb{N}\ .\label{eq:factorizationdisplacmeentd}
\end{align}
Choosing $T=\lceil \|d\| \rceil$, where $\|d\|=\sqrt{d_Q^2+d_P^2}$ denotes the Euclidean norm of~$d=(d_Q,d_P)\in\mathbb{R}$ ensures that each unitary~$D(d/T)$ is a displacement of constant strength, i.e., belongs to the gate set~$\cG$.  Since~\eqref{eq:factorizationdisplacmeentd} shows that the 
state~$\ket{d}$ can be created exactly (i.e., with error~$\varepsilon=0$) with~$T=\lceil\|d\| \rceil$ gates from~$\cG$ and $1$~bosonic mode, the (zero-error) unitary state complexity of~$\ket{d}$ is upper bounded by~$\cC_0^*(\ket{d})\leq \lceil \|d\|\rceil+1$. We note that the factorization
\begin{align}
D(d)\propto D(d/\lceil \|d\|\rceil)^{\lceil \|d\|\rceil}
\label{eq:factorizationhm}
\end{align}
used here also implies that the circuit complexity of the displacement operator~$D(d)$ is bounded by~$\cC_\cG(D(d))\leq \lceil \|d\|\rceil$.

This na\"ive approach to preparing the coherent state~$\ket{d}$ expressed by~\eqref{eq:factorizationdisplacmeentd} (respectively~\eqref{eq:factorizationhm}) is, however, far from optimal: in fact, a number of gates from~$\cG$ that is only logarithmic in~$\|d\|$ suffices.
To see this, recall that the (passive) unitary~$U(\theta)=e^{-i\theta(Q^2+P^2)/2}$ realizes a rotation in phase space, i.e.,
\begin{align}
\renewcommand*{\arraystretch}{1.225}
\begin{matrix}
U(\theta) Q U(\theta)^\dagger &=&\hphantom{-}(\cos\theta)Q+(\sin\theta) P\\
U(\theta) P U(\theta)^\dagger &=&-(\sin\theta) Q+(\cos\theta) P\ 
\end{matrix}\qquad\textrm{ for }\qquad\theta\in [0,2\pi)\ ,
\end{align}
which implies that 
\begin{align}\renewcommand*{\arraystretch}{1.225}
D(d)&\propto U(\theta) D((0,c)) U(-\theta)\qquad\textrm{ with }\qquad \begin{matrix*}[l]
c&=&\|d\|\\
 \theta&=&\pi-\arg d\mod (2\pi)\in [0,2\pi)\ .
 \end{matrix*} \label{eq:dddefinitionx}
\end{align}
 It is easy to verify that 
\begin{align}
   D((0,c))=e^{-icP} = S(-\log c) e^{-iP} S(\log c) \qquad \textrm{ for all $c > 0$}\ ,\label{eq:dzerocdef}
\end{align}
where $S(z)$ for $z\in \mathbb{R}$ denotes single-mode squeezing, see Eq.~\eqref{eq:squeezerbasic}. 
Since $S(z_1)S(z_2)=S(z_1+z_2)$ for all $z_1,z_2\in\mathbb{R}$, we have 
\begin{align}
S(z)=S(z/\lceil |z|\rceil)^{\lceil |z|\rceil}\qquad\textrm{ for any }\qquad z\in\mathbb{R}\ .\label{eq:szmv}
\end{align}
Combining~\eqref{eq:dddefinitionx},~\eqref{eq:dzerocdef} and~\eqref{eq:szmv}, we obtain
the factorization
\begin{align}\renewcommand*{\arraystretch}{1.225}
D(d)\propto U(\theta_d)S(-z_d)^{N_d}e^{-iP}S(z_d)^{N_d}U(-\theta_d)\qquad\textrm{ with }\qquad 
\begin{matrix*}[l]
z_d&=&\frac{\log \|d\|}{\lceil |\log \|d\||\rceil}\\
N_d&=&\lceil |\log \|d\||\rceil\\
\theta_d&=&\pi-\arg d\mod (2\pi)\ .
\end{matrix*}\label{eq:factorizationddgreat}
\end{align}
We have $z_d\in [-1,1]$, hence~$S(z_d), S(-z_d)\in\cG$ are constant-strength Gaussian unitaries. Since we also have~$U(\theta)\in\cG$ for any $\theta\in [-2\pi,2\pi]$ and $e^{-iP}\in\cG$, it follows from~\eqref{eq:factorizationddgreat} that~$D(d)$ can be realized by a sequence of $T=2N_d+2$ gates from~$\cG$. (The first phase-space rotation~$U(-\theta_d)$ does not need to be applied because the state~$\ket{\vac}$ is invariant under such rotations.) 
In particular, using only a single mode and~$T$ gates from~$\cG$, the coherent state~$\ket{d}$ can be prepared from the vacuum state~$\ket{\vac}$. We have thus shown the following:
\begin{lemma}\label{lem:upperboundcoherentstatecomplexity}
Let $d\in \mathbb{R}^2$ be arbitrary. The circuit complexity of the displacement operator~$D(d)$ is bounded by
\begin{align}
\cC_\cG(D(d))&\leq 2\lceil |\log \|d\||\rceil+3\ .
\end{align}
In particular, the (zero-error) complexity of the coherent state~$\ket{d}$ is bounded by
\begin{align}
\cC_0^*(\ket{d})&\leq 2\lceil |\log \|d\||\rceil+3\ .
\end{align}
\end{lemma}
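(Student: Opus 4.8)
The factorization achieving the claimed bound is essentially the one already displayed in~\eqref{eq:factorizationddgreat}; the plan is to assemble it from three elementary reductions and then count gates. First I would reduce an arbitrary displacement to one along a single quadrature: using the phase-space rotation $U(\theta)=e^{-i\theta(Q^2+P^2)/2}$ and its action on the quadratures, one checks (as in~\eqref{eq:dddefinitionx}) that $D(d)\propto U(\theta_d)\,D((0,\|d\|))\,U(-\theta_d)$ with $\theta_d=\pi-\arg d \bmod 2\pi\in[0,2\pi)$ and $\|d\|$ the Euclidean norm of~$d$; since $\theta_d,-\theta_d\in[-2\pi,2\pi]$, both rotations lie in $\cG$, and the overall phase is irrelevant for both the state-preparation claim and the gate count.

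Next I would rewrite the axis-aligned displacement as a bounded-strength displacement conjugated by two squeezers. For $c=\|d\|>0$, the identity $D((0,c))=e^{-icP}=S(-\log c)\,e^{-iP}\,S(\log c)$ from~\eqref{eq:dzerocdef} holds because single-mode squeezing rescales the momentum quadrature, $S(z)^{\dagger}PS(z)=e^{z}P$ (together with the group law $S(z_1)S(z_2)=S(z_1+z_2)$), and the gate $e^{-iP}$ has strength $1\le 2\pi$, hence lies in $\cG$. Then I would split each of the two (possibly strong) squeezers into $N_d=\lceil|\log\|d\||\rceil$ copies of a constant-strength squeezer $S(z_d)$ with $z_d=\log\|d\|/N_d$, using~\eqref{eq:szmv}; since $|z_d|\le1\le2\pi$, each factor $S(\pm z_d)$ lies in $\cG$. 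Composing the three steps yields exactly~\eqref{eq:factorizationddgreat}, a product of $1+N_d+1+N_d+1=2N_d+3$ gates from $\cG$, which proves $\cC_\cG(D(d))\le 2\lceil|\log\|d\||\rceil+3$.

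For the coherent state, I would observe that the innermost rotation $U(-\theta_d)$ acts trivially on the rotation-invariant vacuum, so $\ket{d}=D(d)\ket{\vac}=U(\theta_d)\,S(-z_d)^{N_d}\,e^{-iP}\,S(z_d)^{N_d}\ket{\vac}$ is prepared exactly ($\varepsilon=0$), with no auxiliary modes or qubits ($m=m'=0$), by one vacuum preparation followed by $2N_d+2$ gates from $\cG$, i.e.\ by $2N_d+3$ elementary operations in total; hence $\cC_0^*(\ket{d})\le 2\lceil|\log\|d\||\rceil+3$. Degenerate parameter values are handled directly: for $d=0$ one has $D(0)=I$ of complexity~$0$, and for $\|d\|=1$ one has $N_d=0$ so no squeezers appear (set $z_d:=0$), the bound reading~$3$ in agreement with $D(d)=U(\theta_d)\,e^{-iP}\,U(-\theta_d)$. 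There is no real obstacle in this argument; the only points requiring (routine) care are tracking the Weyl-relation phases, which affect neither the state nor the gate count, and verifying the squeezing conjugation identity.
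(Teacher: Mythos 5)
Your proposal is correct and follows essentially the same route as the paper: the rotation reduction~\eqref{eq:dddefinitionx}, the squeezing conjugation~\eqref{eq:dzerocdef}, and the splitting~\eqref{eq:szmv} assembled into the factorization~\eqref{eq:factorizationddgreat}, with the innermost rotation dropped for the state because $\ket{\vac}$ is rotation-invariant, and the same gate count $2N_d+3$ for the operator and $2N_d+2$ gates plus one vacuum preparation for the state. Your explicit treatment of the degenerate cases $d=0$ and $\|d\|=1$ is a small refinement the paper leaves implicit, but it does not change the argument.
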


\subsection{Lower bounds on the complexity of coherent states}
Let us now turn to lower bounds on the complexity of a coherent state~$\ket{d}$, $d\in\mathbb{R}^2$.  Such lower bounds can be obtained by considering the (harmonic oscillator) Hamiltonian
\begin{align}
H&=Q^2+P^2\ .
\end{align}
Recalling that $\bra{\vac}H\ket{\vac}=1$ and  $\bra{\vac}Q\ket{\vac} = \bra{\vac}P\ket{\vac}=0$, it is easy to check that the energy of the coherent state~$\ket{d}$ is equal to
\begin{align}
\bra{d}H\ket{d}=\|d\|^2+1\ . \label{eq:coherentstateenergydefinition}
\end{align}
Now consider any protocol that starts from the state~$\ket{\Psi}=\ket{\vac}\otimes\ket{\vac}^{\otimes m}\otimes \ket{0}^{\otimes m'}$ on $m+1$ bosonic modes and $m'$~auxiliary qubits, applies a unitary~$U$ consisting of~$T$ gates from the set~$\cG$, and subsequently outputs the state~$\rho=\tr_{m,m'} U\proj{\Psi}U^\dagger$ on the first mode. By using moment limits, we  show below (see Lemma~\ref{lem: moment limits unitary state prep}) that the energy of the output state~$\rho$ is bounded: We have 
\begin{align}
\tr(H\rho)&\leq 
e^{8\pi T}(m+2)\qquad\textrm{ for any $T\in\mathbb{N}$, $m,m'\in\mathbb{N}\cup \{0\}$\ .}\label{eq:Hrhoupperboundm}
\end{align}
Now suppose that the considered protocol prepares the coherent state~$\ket{d}$ exactly, i.e., $\rho=\proj{d}$. 
Comparing~\eqref{eq:coherentstateenergydefinition} and~\eqref{eq:Hrhoupperboundm}, we then conclude that we must have $\|d\|^2+1\leq  e^{8\pi T}(m+2)$, i.e.,
\begin{align}
T\geq \frac{1}{8\pi}
\log\left(\frac{\|d\|^2+1}{(m+2)}\right)
&= \frac{1}{8\pi} 
\left(2\log \|d\|+\log(1+1/\|d\|^2)-\log(m+2)\right)\ .\label{eq:lowerboundfdm}
\end{align}
Using that
\begin{align}
m+1-\frac{1}{8\pi}\log(m+2)\geq 0\qquad\textrm{ for all }m\in\mathbb{N}\cup \{0\}\ ,    
\end{align}
we conclude that Eq.~\eqref{eq:lowerboundfdm} implies 
\begin{align}
T+(m+1)+m'\geq 
\frac{1}{8\pi} 
\left(2\log \|d\|+\log(1+1/\|d\|^2)\right)=:f(\|d\|)\ . \label{eq: lower bound complexiry displacement}
\end{align}
Since this inequality is satisfied for any protocol preparing the state~$\ket{d}$,  we conclude that
the (zero-error) unitary state complexity of the coherent state~$\ket{d}$ is at least $\cC_0^*(\ket{d})\geq f(\|d\|)$.

Analogous reasoning applies to the circuit complexity of the displacement operator~$D(d)$: If $U=D(d)$ for a circuit~$U$ consisting of~$T$ gates, then $T$ must satisfy~\eqref{eq:lowerboundfdm} (with $m=m'=0$) because $U$ gives rise to a protocol preparing~$\ket{d}=U\ket{\vac}$ from a single vacuum state. In particular, we have
\begin{align}
    \cC_{\cG}(D(d))&\geq f(\|d\|)-1\geq \frac{1}{4\pi}\log \|d\|\ -1\ .\label{eq: lower bound complexiry displacement complexityd}
\end{align}
In summary, we have thus shown the following:
\begin{lemma}\label{lem:upperboundcomplexitycoherent}
Let $d\in\mathbb{R}^2$ be arbitrary.
Then
\begin{align}
\cC_0^*(\ket{d})=\Omega (\log \|d\|)\qquad\textrm{ for }\qquad \|d\|\rightarrow\infty\ 
\end{align}
and
\begin{align}
\cC_\cG(D(d))=\Omega (\log \|d\|)\qquad\textrm{ for }\qquad \|d\|\rightarrow\infty\ . 
\end{align}
\end{lemma}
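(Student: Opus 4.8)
The plan is to run an energy (second-moment) argument. The harmonic-oscillator Hamiltonian $H = Q^2 + P^2$ takes the value $\bra{d}H\ket{d} = \|d\|^2 + 1$ on the coherent state (immediate from $D(d)^\dagger R_j D(d) = R_j + d_j I$ together with $\bra{\vac}Q\ket{\vac} = \bra{\vac}P\ket{\vac} = 0$ and $\bra{\vac}H\ket{\vac} = 1$), so the target energy grows quadratically in $\|d\|$. On the other hand, each bounded-strength gate from $\cG$ can only dilate the energy of a bosonic state by a fixed multiplicative factor, so a circuit of $T$ such gates produces a state of energy at most $e^{8\pi T}$ times a quantity linear in the number of modes. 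Matching these two facts forces $T$ to be at least logarithmic in $\|d\|$.

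In detail: granting the moment-limit bound $\tr(H\rho) \le e^{8\pi T}(m+2)$ of Lemma~\ref{lem: moment limits unitary state prep} --- valid for any $\rho = \tr_{m,m'} U\proj{\Psi}U^\dagger$ with $U$ a product of $T$ gates from $\cG$ acting on $m+1$ modes and $m'$ qubits initialized in $\ket{\vac}$ and $\ket{0}$ --- a protocol preparing $\ket{d}$ exactly satisfies $\|d\|^2 + 1 \le e^{8\pi T}(m+2)$. Solving for $T$ and using $m + 1 \ge \frac{1}{8\pi}\log(m+2)$ for all $m \ge 0$ to absorb the $-\log(m+2)$ term into the additive $(m+1)$ count, one obtains $T + (m+1) + m' \ge \frac{1}{8\pi}\bigl(2\log\|d\| + \log(1 + 1/\|d\|^2)\bigr) = f(\|d\|)$ as in \eqref{eq: lower bound complexiry displacement}. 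Since this holds for every unitary state preparation protocol for $\ket{d}$, we get $\cC_0^*(\ket{d}) \ge f(\|d\|) = \Omega(\log\|d\|)$ as $\|d\| \to \infty$. For the displacement operator, any exact factorization $D(d) = U_T \cdots U_1$ into gates from $\cG$ is in particular a unitary state preparation protocol for $\ket{d} = D(d)\ket{\vac}$ with $m = m' = 0$, so the same inequality yields $T \ge f(\|d\|)$ and hence $\cC_\cG(D(d)) \ge f(\|d\|) - 1 \ge \frac{1}{4\pi}\log\|d\| - 1 = \Omega(\log\|d\|)$, cf.\ \eqref{eq: lower bound complexiry displacement complexityd}.

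The crux --- and the place where essentially all the work lies --- is the moment-limit bound $\tr(H\rho) \le e^{8\pi T}(m+2)$ itself. The strategy there is to control how each elementary gate acts on second moments. For a bounded-strength Gaussian unitary $U(A)$ the Heisenberg action is linear, $U(A)^\dagger R_j U(A) = \sum_k S(A)_{j,k} R_k$ with $S(A) = e^{AJ}$ (see \eqref{eq:symplecticactionunitarygaussian}), and $\|S(A)\| \le e^{\|A\|\|J\|} \le e^{2\pi}$ since $\|J\| = 1$ and $\|A\| \le 2\pi$ by \eqref{eq:anormbound}; hence $U(A)^\dagger H U(A) \le e^{4\pi} H$ as operators on the relevant modes, so such a gate dilates $\tr(H\rho)$ by at most $e^{4\pi}$. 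A bounded displacement shifts $R_j \mapsto R_j + d_j I$ with $\|d\| \le 2\pi$ by \eqref{eq:dnormbound}, which (via Cauchy--Schwarz, $|\tr(R_j\rho)| \le \sqrt{\tr(H\rho)}$) changes $\sqrt{\tr(H\rho)}$ additively by at most $2\pi$; qubit-controlled displacements act this way conditioned on the control qubit, and single- and two-qubit unitaries leave the bosonic modes untouched. Iterating over the $T$ gates, starting from the total vacuum energy $m+1$, a geometric-series estimate gives a bound of the claimed form $e^{8\pi T}(m+2)$ (the precise constant $8\pi$ being immaterial for the asymptotics). This is the only nontrivial ingredient; granting it, Lemma~\ref{lem:upperboundcomplexitycoherent} follows immediately from the displayed chain of inequalities.
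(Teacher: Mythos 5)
Your proposal is correct and follows essentially the same route as the paper: the energy identity $\bra{d}H\ket{d}=\|d\|^2+1$, the moment-limit bound $\tr(H\rho)\leq e^{8\pi T}(m+2)$ from Lemma~\ref{lem: moment limits unitary state prep}, the absorption of $\log(m+2)$ via $m+1\geq \frac{1}{8\pi}\log(m+2)$, and the specialization to $m=m'=0$ for $\cC_\cG(D(d))$ all match the paper's argument (including its own gate-by-gate derivation of the moment limit). No gaps.
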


\subsection{The zero-error unitary state complexity of a coherent state}
Combining the preparation procedure (characterized by Lemma~\ref{lem:upperboundcoherentstatecomplexity}) with
Lemma~\ref{lem:upperboundcomplexitycoherent}, we obtain the following scaling of the zero-error complexity of a coherent state:
\begin{corollary}[Zero-error complexity of coherent states]\label{cor:coherentstatecomplexity}
For $d\in \mathbb{R}^2$, let $\ket{d}\in L^2(\mathbb{R})$ be the coherent state defined by~\eqref{eq:coherentstatedefinition}. Then, we have 
\begin{align}
\cC_0^*(\ket{d})=\Theta(\log \|d\|)\qquad\textrm{ for }\qquad \|d\|\rightarrow\infty\ .
\end{align}
\end{corollary}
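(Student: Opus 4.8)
The plan is to obtain the claimed $\Theta(\log\|d\|)$ scaling simply by sandwiching $\cC^*_0(\ket{d})$ between the two bounds already established in this section. For the upper direction I would invoke Lemma~\ref{lem:upperboundcoherentstatecomplexity}: the factorization~\eqref{eq:factorizationddgreat} writes $D(d)$ as a product of $2N_d+2$ constant-strength gates from~$\cG$ with $N_d=\lceil|\log\|d\||\rceil$, and since the vacuum is rotation-invariant the leading $U(-\theta_d)$ can be dropped, so applying this circuit to a single vacuum mode prepares $\ket{d}$ exactly with $\cC^*_0(\ket{d})\le 2\lceil|\log\|d\||\rceil+3 = O(\log\|d\|)$ as $\|d\|\to\infty$.

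For the lower direction I would invoke Lemma~\ref{lem:upperboundcomplexitycoherent}, whose content is the energy/moment-limit argument: any protocol using $T$ gates from $\cG$, $m$ auxiliary modes and $m'$ qubits produces an output state of energy at most $e^{8\pi T}(m+2)$ with respect to $H=Q^2+P^2$ (Eq.~\eqref{eq:Hrhoupperboundm}), whereas $\bra{d}H\ket{d}=\|d\|^2+1$ (Eq.~\eqref{eq:coherentstateenergydefinition}); comparing these and absorbing the $m'$ and $\log(m+2)$ terms as in~\eqref{eq: lower bound complexiry displacement} yields $\cC^*_0(\ket{d})\ge f(\|d\|)=\Omega(\log\|d\|)$ as $\|d\|\to\infty$.

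Combining the two displays gives constants $c_1,c_2>0$ and a threshold $D_0$ such that $c_1\log\|d\|\le \cC^*_0(\ket{d})\le c_2\log\|d\|$ for all $\|d\|\ge D_0$, which is precisely the assertion $\cC^*_0(\ket{d})=\Theta(\log\|d\|)$ for $\|d\|\to\infty$. There is no real obstacle here: the corollary is a packaging of Lemmas~\ref{lem:upperboundcoherentstatecomplexity} and~\ref{lem:upperboundcomplexitycoherent}, and the only points worth stating carefully are that the upper-bound circuit is exact (zero error) so it bounds $\cC^*_0$ and not merely $\cC^*_\varepsilon$ for $\varepsilon>0$, and that the lower bound applies uniformly over all choices of auxiliary registers $(m,m')$, which is what makes it a bound on the complexity rather than on a single protocol. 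The genuinely non-trivial ingredients — the $\log$-depth squeezing factorization of $D(d)$ and the exponential moment-growth bound under $\cG$ — are exactly the parts already proved in the cited lemmas, so the corollary's proof itself is a one-paragraph combination.
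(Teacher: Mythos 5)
Your proposal is correct and is exactly the paper's argument: the corollary is stated there as an immediate combination of the upper bound from Lemma~\ref{lem:upperboundcoherentstatecomplexity} and the energy-based lower bound from Lemma~\ref{lem:upperboundcomplexitycoherent}. Your added remarks about the upper-bound circuit being exact (hence bounding $\cC_0^*$) and the lower bound holding uniformly over $(m,m')$ are accurate and match the paper's intent.
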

The proof of Corollary~\ref{cor:coherentstatecomplexity} illustrates a few of the building blocks for our main result. For example, in the GKP-state-preparation protocol we propose, we also use the factorization~\eqref{eq:factorizationddgreat} of a phase-space displacement operator~$D(d)$. However, the protocol and its analysis are somewhat more involved, and only prepare an approximation to the state~$\ket{\gkp_{\kappa,\Delta}}$ (i.e., we consider the unitary and heralded complexities~$\cC_\varepsilon^*(\ket{\gkp_{\kappa,\Delta}})$ and ~$\cC_{p,\varepsilon}^{*,\mathsf{her}}(\ket{\gkp_{\kappa,\Delta}})$ with non-zero error~$\varepsilon>0$).

We also derive corresponding lower bounds, again using moment (energy) limits. We note that  in principle, identical arguments can be used to derive lower bounds on $\cC_\varepsilon^*(\ket{d})$ and the heralded complexity~$\cC_{p,\varepsilon}^{*,\mathsf{her}}(\ket{d})$ of a coherent state~$\ket{d}$. Since our focus is on GKP states, we omit the details here.

\section{Comb-state preparation\label{sec: comb state prep}}
In this section, we consider the problem of preparing comb states. We give the formal definition of these states in Section~\ref{sec:combstatedefinition}. 
In Section~\ref{sec:combstatepreparationprotocol}, we give a protocol for their preparation. 
In Section~\ref{sec:combstatepreparationexplanation}, we explain the underlying ideas of the protocol. Finally, in Section~\ref{sec:comb state prep error}, we prove that our protocol indeed prepares (approximate) comb states.

As we argue in Section~\ref{sec: Gaussian envelope shaping}, comb states can be converted to GKP states rather easily. In other words, the comb-state-preparation protocol presented here is a key building block for our GKP-state-preparation protocol.

\subsection{Definition of comb states\label{sec:combstatedefinition}}
Comb states are GKP-like wavefunctions that have  support on an interval of length $L>0$ and a rectangular envelope
given by 
\begin{align}
    \Box_L(z)&=\frac{1}{\sqrt{L}} \delta_{z\in [-L/2,L/2)}\ .
\end{align}
That is, for a squeezing parameter $\Delta>0$ (associated with the width of each peak), we define the states
\begin{align}
\ket{\Sha_{L,\Delta}}&=D_{L,\Delta}\sum_{z\in\mathbb{Z}} \Box_L(z) \ket{\chi_\Delta(z)}\label{eq:sha}\\
&=\frac{D_{L,\Delta}}{\sqrt{L}}\sum_{z=-L/2}^{L/2-1} \ket{\chi_\Delta(z)}\ .\label{eq:sha L}
\end{align}
Here $D_{L,\Delta}$ is normalization factor and $\chi_{\Delta}$ are translated Gaussians
\begin{align}
(\chi_\Delta(z))(x)&=\Psi_\Delta(x-z)
\qquad \textrm{ where }\qquad \Psi_\Delta(x)=\frac{1}{(\pi \Delta^2)^{1/4}}e^{-x^2/(2\Delta^2)}\,
\ .\label{eq:chiDeltadefinition}
\end{align}
Note that $\ket{\Psi_1}=\ket{\vac}$ is the vacuum state, i.e., the ground state of the harmonic oscillator Hamiltonian.
 
We call a state $\ket{\Sha_{L,\Delta}}$ a  comb state. It is illustrated in Fig.~\ref{fig:combstate}.
\begin{figure}[!h]
\begin{center}
    \includegraphics[width=0.45\textwidth]{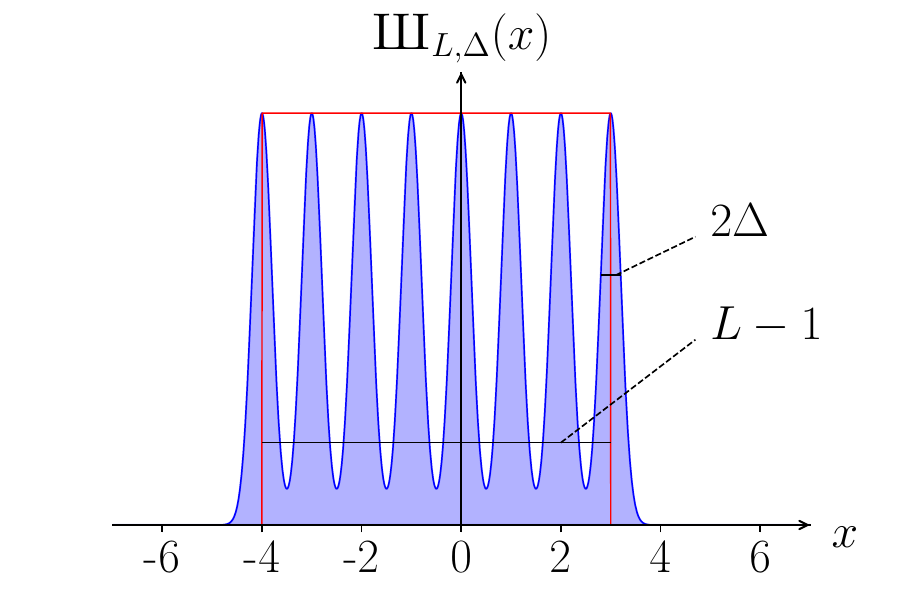}
    ~~~~~~
    \end{center}
    \caption{An illustration of the  comb state $\ket{\Sha_{L,\Delta}}$ with $L=8$ (i.e., with $8$ local maxima). For any large even integer~$L\in2\mathbb{N}$, the state~$\ket{\Sha_{L,\Delta}}$ is ``almost centered'': its peaks lie at positions $\cL_{L}:=\{-L/2,\ldots,-1,0,\ldots,L/2-1\}$.\label{fig:combstate}}
    \end{figure}

\subsection{Unitary comb-state preparation} \label{sec:combstatepreparationprotocol}

We give a unitary circuit for comb-state preparation 
 in Protocol~\ref{prot: comb state prep}. It is illustrated in Fig.~\ref{fig: circuit__protocol1}. It uses one auxiliary qubit (i.e., $m=0$ and $m'=1$).

\begin{algorithm}[H] 
	\caption{Comb-state preparation}
     \label{prot: comb state prep}
	\begin{flushleft}
		\textbf{Input:} 
		A squeezing parameter $\Delta\in(0,1/4)$ and a number of rounds $n\ge1$.\\
		\textbf{Output:} A state close to the state $\ket{\Sha_{2^n, \Delta}}$, (cf.\ Theorem~\ref{thm:comb state prep}).
		\begin{algorithmic}[1]
            \State Prepare the squeezed vacuum state $\ket{\Psi_{2^{-n}\Delta}} \gets  S(n\log2 + \log1/\Delta) \ket{\vac}$. \label{it:prot:comb:squeezed state}
            \State Prepare the qubit state $\ket{+}$. Denote the resulting state by~$\ket{\Phi^{(0)}}\gets \ket{\Psi_{2^{-n}\Delta}}\otimes \ket{+}$.
            \State Apply $(e^{iP}\otimes I)V$ to the state $\ket{\Phi^{(0)}}$ yielding $\ket{\Phi^{(1)}}\gets (e^{iP}\otimes I)V\ket{\Phi^{(0)}}$.\label{it:prot comb breeding start}
            
        \For{$k\in\{2,\ldots, n\}$}
        \State Apply $V$ to the state $\ket{\Phi^{(k-1)}}$, yielding $\ket{\Phi^{(k)}}\gets V\ket{\Phi^{(k-1)}}$.
        \label{it:prot comb breeding finish}
        \EndFor
        \State \Return the first register of the state $\ket{\Phi^{(n)}}$, i.e., the state $\tr_{\mathsf{qubit}} \proj{\Phi^{(n)}}$.
		\end{algorithmic}
	\end{flushleft}
\end{algorithm}

\begin{figure}[!ht]
	\begin{center}
 \includegraphics{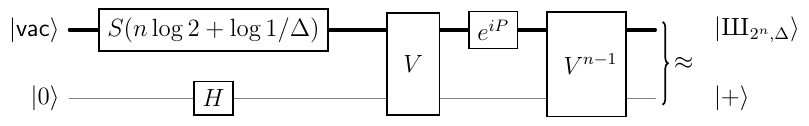}
	\end{center}
	\caption{Circuit diagram of Protocol~\ref{prot: comb state prep}. It uses the $V$ gate described in Fig.~\ref{fig: circuit V}. The exponent of $V$ indicates the number of applications. The squeezing unitary is realized by composing single-mode squeezing operations from the set~$\cG$ (see Section~\ref{sec: unitary state complexity}), see the factorization given in Eq.~\eqref{eq:squeezingdecomposition}.}
	\label{fig: circuit__protocol1}
\end{figure}

Its core is the repeated use of the unitary~$V$ defined as
\begin{align}
V := V^{(4)} V^{(3)} V^{(2)} V^{(1)} \qquad\textrm{ where }\qquad
\begin{matrix*}[l]
    V^{(1)} &=& S(-\log2) \otimes I\\
    V^{(2)} &=& \mathsf{ctrl}e^{-iP}\\
    V^{(3)} &=& I\otimes H\\
    V^{(4)} &=& \mathsf{ctrl}e^{i\pi Q} \ .
\end{matrix*}
\label{eq: V comb prep one round}
\end{align}
\begin{figure}[!ht]
	\begin{center}
\includegraphics{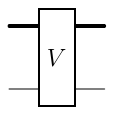}~\raisebox{0.9cm}{$:=$}~\includegraphics{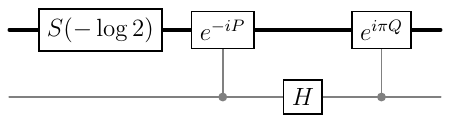}
  \end{center}
	\caption{Circuit implementing the unitary $V$ used in the comb-state-preparation protocol. It uses two qubit-controlled displacements: by $1$ in the $Q$-direction, by $\pi$-in the $P$-direction. \label{fig: circuit V}}
\end{figure}
The circuit diagram for this unitary is given in  Fig.~\ref{fig: circuit V}. The main result of this section is the following.
\begin{restatable}{theorem}{thmcombprep}\label{thm:comb state prep} Given a squeezing parameter $\Delta\in(0,1/4)$ and a number of rounds $n>0$, Protocol~\ref{prot: comb state prep} returns a quantum state $\rho\in\cB(L^2(\mathbb{R}))$ close in $L^1$-distance to the comb state $\ket{\Sha_{2^n,\Delta}}$, i.e., the output state~$\rho$ satisfies
\begin{align}
    \left\|\rho-\proj{\Sha_{2^n,\Delta}}\right\|_1\le 17\sqrt{\Delta} \ . \label{eq: thm comb no eps}
\end{align}
The protocol can be realized by a circuit using $5n+\ceil{\log 1/\Delta} + 4$ elementary operations.
\end{restatable}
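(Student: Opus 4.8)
The plan is to analyze the effect of a single application of the round unitary $V$ on a hybrid oscillator-qubit state, then iterate. First I would compute, at the level of wavefunctions, what $V$ does: starting from a state $\ket{\phi}\otimes\ket{+}$ where $\ket{\phi}$ has peaks of width $\delta$ at some set of positions $S$, the operation $V^{(1)}=S(-\log 2)\otimes I$ rescales positions by a factor of $2$ (so peaks of width $2\delta$ at positions $2S$), $V^{(2)}=\mathsf{ctrl}e^{-iP}$ correlates a $Q$-shift by $1$ with the qubit, $V^{(3)}=I\otimes H$ and $V^{(4)}=\mathsf{ctrl}e^{i\pi Q}$ act so that after tracing/combining the qubit branches one obtains (up to normalization and small errors from peak overlap) a state with peaks at $2S\cup(2S+1)$, i.e.\ the number of peaks doubles and the support interval doubles. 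The role of $V^{(4)}$ with the $e^{i\pi Q}$ phase is to make the interference between the two qubit branches constructive on the doubled comb; I would track the phases carefully here. The initial steps prepare $\ket{\Psi_{2^{-n}\Delta}}$ (a single squeezed peak of width $2^{-n}\Delta$) so that after $n$ doublings the peak width is back to $\Delta$ and there are $2^n$ peaks; the initial $(e^{iP}\otimes I)V$ in line~\ref{it:prot comb breeding start} handles the centering/off-by-one so the final peaks land on $\cL_{2^n}=\{-2^{n-1},\ldots,2^{n-1}-1\}$ as claimed.

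Second, I would set up the error analysis. The ideal target after $k$ rounds is $\ket{\Sha_{2^k,2^{-(n-k)}\Delta}}$ (a comb with $2^k$ peaks of width $2^{-(n-k)}\Delta$, rescaled appropriately), and I would define $\ket{\Phi^{(k)}}$ to be the actual state. The discrepancy at each round comes entirely from two sources: (i) the Gaussian peaks are not perfectly orthogonal, so normalization constants and the "collapse" of the two qubit branches are only approximate, and (ii) the finitely-squeezed peaks have tails. Since at round $k$ the peaks have width $2^{-(n-k)}\Delta$ and are separated by integer distances, the overlap between adjacent peaks is of order $e^{-c\,4^{\,n-k}/\Delta^2}$, which is extremely small except in the very last rounds; the dominant contribution to the total error is the final round, giving the $\sqrt{\Delta}$ scaling (the square root presumably coming from converting an $L^2$/fidelity bound to an $L^1$ bound via the Fuchs–van de Graaf inequality, $\|\cdot\|_1 \le 2\sqrt{1-F}$). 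I would bound the per-round error in $L^1$ (using that $V$ is unitary, so it does not amplify errors, plus the tracing-out of the qubit is a CPTP map and hence contractive), sum the geometric-type series, and collect the constant $17$.

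Third, the operation count: line~\ref{it:prot:comb:squeezed state} needs the squeezing $S(n\log 2 + \log 1/\Delta)$, which by the subadditivity $S(z_1)S(z_2)=S(z_1+z_2)$ and the constant-strength constraint (strength $\le 2\pi$, but here we just need $|z|\le 2\pi$ per gate, cf.\ Eq.~\eqref{eq:squeezerbasic}) decomposes into $O(n+\log 1/\Delta)$ constant-strength squeezers — I would check the precise constant gives the claimed $\ceil{\log 1/\Delta}$-type term plus something absorbed into the $5n$; line~\ref{it:prot comb breeding start} is one $V$ (which is $4$ elementary gates) plus one $e^{iP}$; the loop contributes $n-1$ further applications of $V$; preparing $\ket{+}$ is one single-qubit gate; and each $V$ costs $4$ elementary operations ($V^{(1)}$ is one constant-strength squeezer since $\log 2 \le 2\pi$, $V^{(2)},V^{(4)}$ are controlled displacements, $V^{(3)}$ a Hadamard). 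Tallying: $n$ applications of $V$ is $4n$, plus $n\log 2$ worth of squeezing in line~1 contributing $\le n$ more (so $5n$), plus $\ceil{\log 1/\Delta}$ from the $\log 1/\Delta$ part of the initial squeezing, plus $O(1)$ for $\ket{+}$, $e^{iP}$, and rounding — giving $5n+\ceil{\log 1/\Delta}+4$.

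The main obstacle I anticipate is step one done carefully: precisely controlling how the two qubit branches recombine under $V^{(3)}V^{(4)}$ and verifying that the interference is constructive and yields exactly the doubled comb rather than some state with alternating signs or missing peaks, all while keeping track of the normalization factors $D_{L,\Delta}$ and the non-orthogonality corrections. Getting the phase conventions in $\mathsf{ctrl}e^{-iP}$ and $\mathsf{ctrl}e^{i\pi Q}$ exactly right (including the sign/direction of the displacement $D(d)=e^{i(d_QQ-d_PP)}$ convention used in the paper) is where a careless calculation would go wrong; the error-summation and gate-counting, by contrast, should be routine once the single-round picture is pinned down.
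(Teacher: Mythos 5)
Your overall architecture coincides with the paper's: a single-round ``peak-doubling'' analysis of $V$, an induction using unitary invariance of $\|\cdot\|_1$ and contractivity under tracing out the qubit, an error sum dominated by the last rounds, and essentially the same gate tally ($4n$ for the $V$'s, $n+\ceil{\log 1/\Delta}$ for the initial squeezing, plus a constant). However, your error accounting has a genuine gap: the two sources you name --- non-orthogonality of adjacent Gaussian peaks and their tails --- are of order $e^{-c/\Delta^2}$ even at the final round, so they can neither produce nor explain the $\sqrt{\Delta}$ bound. The dominant error is one your list omits: $V^{(4)}=\mathsf{ctrl}\,e^{i\pi Q}$ is supposed to remove the relative sign $(-1)^z$ between the two qubit branches, but the phase $e^{i\pi x}$ is not constant across a peak of nonzero width, so on each peak it acts as a momentum kick of size $\pi$ rather than a pure sign. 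The recombination onto $\ket{+}$ is therefore only approximate \emph{even if the peaks were exactly orthogonal}, with a per-peak fidelity deficit that is polynomial (of order $\Delta_k^2$ at round $k$), and this is what limits the accuracy. Carrying out your plan with only overlap and tail terms would either leave the single-round estimate incomplete or lead to the incorrect conclusion that the output is exponentially close to the comb state.

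The paper handles exactly this issue by working with truncated comb states $\ket{\Sha^{\varepsilon}_{L,\Delta}}$ (each peak restricted to $[z-\varepsilon,z+\varepsilon]$): the peaks are then exactly orthogonal, and the only per-round loss is the phase-variation bound $\langle\chi^{\varepsilon}_{\Delta}(z),e^{-i\pi z}e^{i\pi Q}\chi^{\varepsilon}_{\Delta}(z)\rangle\ge 1-5\varepsilon^2$ (Lemma~\ref{lem: e^iQ vs e^iz}), which gives trace distance $9\varepsilon$ per application of $V$ (Lemmas~\ref{lem: doubling sha state base case} and~\ref{lem: doubling sha state}) and, after the telescoping induction of Lemma~\ref{lem:iteratedlemmaV}, a total of $9\varepsilon$; choosing $\varepsilon=\sqrt{\Delta}$ balances this against the truncation comparisons ($3\sqrt{\Delta}$ for the squeezed vacuum, $5\sqrt{\Delta}$ for the comb), yielding $17\sqrt{\Delta}$. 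An untruncated analysis along your lines is possible and would in fact give an $O(\Delta)$ bound (plus genuinely exponentially small overlap corrections), but only if you include the momentum-kick term; the Fuchs--van de Graaf step applied solely to exponentially small overlap deficits does not produce the stated scaling, and the normalization/orthogonality corrections you emphasize are a negligible part of the final constant.
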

Since Protocol~\ref{prot: comb state prep} only uses one auxiliary qubit in addition to the bosonic system mode, Theorem~\ref{thm:comb state prep} implies that the unitary state complexity of the comb state with $L=2^n$ peaks is upper bounded by
\begin{align}
    \cC^*_{17\sqrt{\Delta}}(\ket{\Sha_{2^n,\Delta}})\leq 5n+  \ceil{\log 1/\Delta}+4\ .
\end{align}

Before proving~\eqref{eq: thm comb no eps}, let us verify that the stated number of elementary operations (cf. Section~\ref{sec: allowed operations}) is correct.
The circuit in Fig.~\ref{fig: circuit__protocol1}
uses gates from the set~$\cG$ (used in the definition of~$V$) only, except for the single-mode squeezing operator~$S(n\log 2+\log 1/\Delta)$
which is used to prepared the squeezed vacuum state~$\ket{\Psi_{2^{-n}\Delta}}$ from the vacuum state~$\ket{\vac}$ (see Step~\eqref{it:prot:comb:squeezed state} in Protocol~\ref{prot: comb state prep}.
We can decompose this unitary as 
\begin{align}
S(n\log2 + \log1/\Delta)= S(\log(2))^n     S(z_\Delta)^{\ceil{\log1/\Delta}}\textrm{ where } z_\Delta = \frac{\log1/\Delta}{\ceil{|\log1/\Delta}|}\in (0, 1]\
         .\label{eq:squeezingdecomposition}
         \end{align}
         Observe that the \rhs only involves single-mode squeezing operators with squeezing parameters $z\in [-2\pi,2\pi]$ in accordance with the definition of the set~$\cG$, i.e., it is a sequence  of $n+ \ceil{\log 1/\Delta}$
         gates belonging to the gate set~$\cG$. 
         
Since the circuit involves~$n$ applications of~$V$ (each composed of $4$~gates from~$\cG$), one application of~$e^{iP}$, application of the Hadamard gate $H$ as well as the squeezing unitary~$S(n\log2 + \log1/\Delta)$, the total number of gates is $5n+\ceil{\log1/\Delta}+2$. Adding the initialization of $\ket{\vac}$  and $\ket{0}$ implies the claim.

\subsection{Underlying ideas for the comb-state-preparation circuit\label{sec:combstatepreparationexplanation}}
Given a copy of the truncated comb state $\ket{\Sha_{L,\Delta}}$ and a qubit in the state $\ket{+}$, the unitary~$V$ generates an approximate instance of the comb state $\ket{\Sha_{2L,2\Delta}}$, see Fig.~\ref{fig:peakdoublingV} for an illustration. We give a quantitative statement in Lemma~\ref{lem: doubling sha state} below. 
In other words, the unitary~$V$ effectively doubles the number~$L$ of peaks, while also doubling the squeezing parameter $\Delta$.

\begin{figure}
\centering
\includegraphics{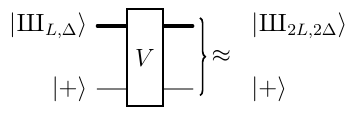}
\caption{
Key to the protocol construction is that the unitary~$V$ essentially doubles the number of peaks when the qubit is in the state~$\ket{+}$. The output represented on the right-hand side is approximate (see Lemma~\ref{lem: doubling sha state} for details). Note that the qubit approximately acts as a catalyst.\label{fig:peakdoublingV}}
\end{figure}

To give some intuition on the repeated action of the unitary~$V$, let us consider its action on a single-peak state of the form~$\ket{x}\otimes\ket{+}$, where~$\ket{x}$ is the (unnormalized) position-eigenstate associated with an integer eigenvalue~$x\in\mathbb{Z}$ of the position operator~$Q$. Applying the unitary $V$ splits the peak in two:
\begin{align}
\begin{array}{rcl}
\ket{x}\otimes \ket{+}&\xrightarrow{\mathmakebox[4.5em]{S(-\log2)\otimes I}}& 
\ket{2x}\otimes \ket{+}\\
&\xrightarrow{\mathmakebox[4.5em]{\mathsf{ctrl}e^{-iP}}}&
\ket{2x}\otimes \ket{0}+\ket{2x+1}\otimes\ket{1}\\
&\xrightarrow{\mathmakebox[4.5em]{I\otimes H}}& \ket{2x}\otimes\ket{+}+\ket{2x+1} \otimes \ket{-}\\
&&=\big(\ket{2x}+\ket{2x+1}\big)\otimes\ket{0}+\big(\ket{2x}-\ket{2x+1}\big)\otimes\ket{1}\\
&\xrightarrow{\mathmakebox[4.5em]{\mathsf{ctrl}e^{i\pi Q}}}&
\big(\ket{2x}+\ket{2x+1}\big)\otimes\ket{+}\ ,
\end{array}
\end{align}
that is,
\begin{align}
V\left(\ket{x}\otimes\ket{+}\right)\propto \big(\ket{2x}+\ket{2x+1}\big)\otimes\ket{+}\qquad\textrm{ for }\qquad x\in\mathbb{Z}\ .
\end{align}
After $n$ iterations of $V$, we thus obtain $2^n$ peaks from the initial single peak, i.e., we have 
\begin{align}
V^n\left(\ket{x}\otimes\ket{+}\right)\propto \big(\ket{2^nx}+\ket{2^nx+1}+\dots+\ket{{2^n}x+2^n-1}\big)\otimes\ket{+}\quad\textrm{ for }\quad x\in\mathbb{Z}\ .\label{eq: integer action V}
\end{align}
This brief calculation illustrates the effect of repeatedly applying~$V$. 

We note that --- according to Eq.~\eqref{eq: integer action V} --- repeated application of~$V$ to $\ket{0}\otimes\ket{+}$, where~$\ket{0}$ is the position-eigenstate with eigenvalue~$x=0$, results in a comb-like state shifted to the right. An approximately centered state can be obtained by  application of an (extensive) phase-shift unitary of the form~$e^{i2^{n-1}P}$. We use an alternative approach,  applying a phase shift~$e^{iP}$ after the first application of~$V$, see Step~\eqref{it:prot comb breeding start} of the protocol.

The detailed analysis of Protocol~\ref{prot: comb state prep} follows similar reasoning and is presented in the following section. The protocol uses a squeezed vacuum state~$\ket{\Psi_{2^{-n}\Delta}}$
in place of the position-eigenstate~$\ket{0}$. 
 Here a comb state results from the  repeated application of the unitary~$V$. The corresponding process is illustrated in Fig.~\ref{fig: comb-state preparation progression}. 
\begin{figure}[!hbtp]
\enlargethispage{\baselineskip}
    \centering
     \begin{subfigure}[b]{0.47\textwidth}
        \centering
        \includegraphics[width=\textwidth]{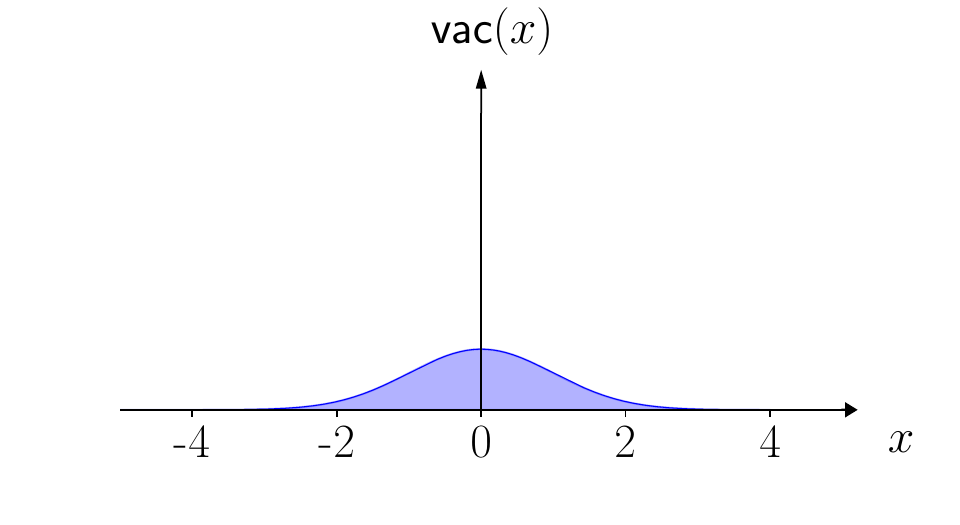}
    \caption{Vacuum state $\ket{\vac}$}
    \end{subfigure}
    \centering
    \begin{subfigure}[b]{0.47\textwidth}
        \centering
        \includegraphics[width= \textwidth]{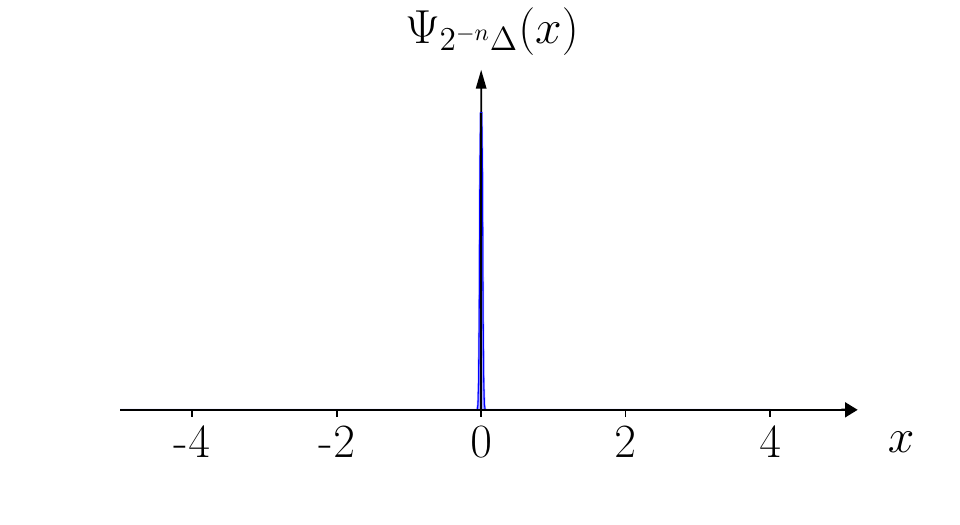}
    \caption{Squeezed state $\ket{\Psi_{2^{-n}\Delta}}$}
    \end{subfigure}\\
     \vspace*{0.5cm}
    \centering
    \begin{subfigure}[b]{0.47\textwidth}
        \centering
        \includegraphics[width=\textwidth]{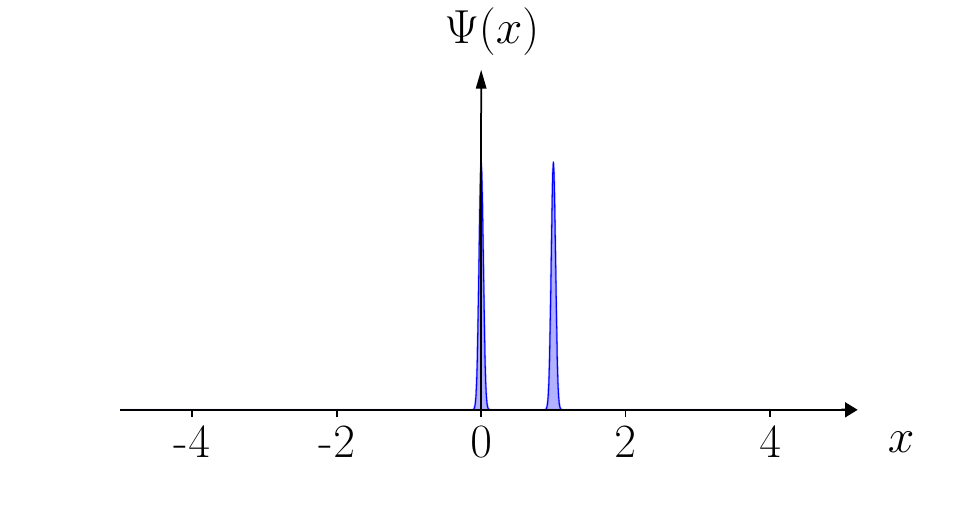}
    \caption{State 
    $\ket{\Psi}\propto\!\ket{\chi_{2^{-n+1}\Delta}(0)}+\ket{\chi_{2^{-n+1}\Delta}(1)}$}
    \end{subfigure}
    \centering
    \begin{subfigure}[b]{0.47\textwidth}
        \centering
        \includegraphics[width=\textwidth]{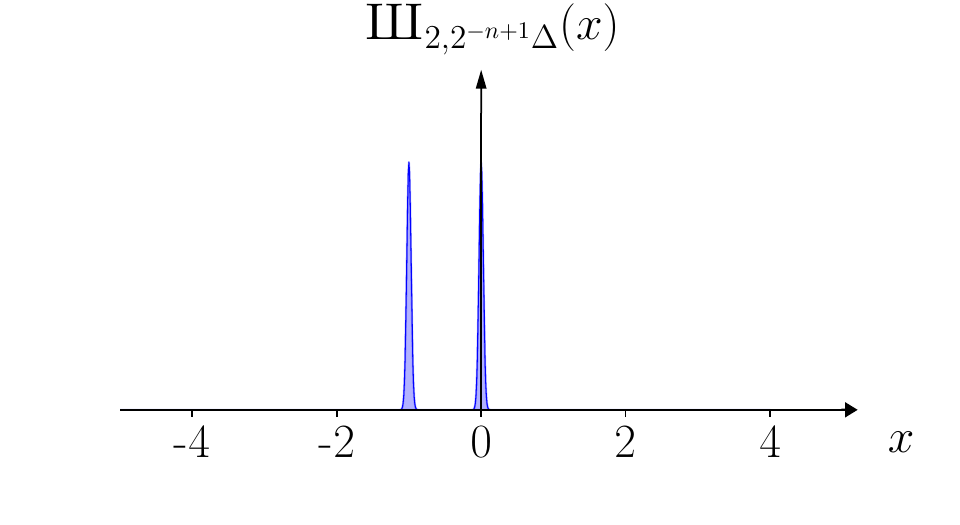}
    \caption{State $\ket{\Sha_{2,2^{-n+1}\Delta}}$\\}
    \end{subfigure}\\
    
     \vspace*{0.3cm}
    \centering
    \begin{subfigure}[b]{0.47\textwidth}
        \centering
        \includegraphics[width= \textwidth]{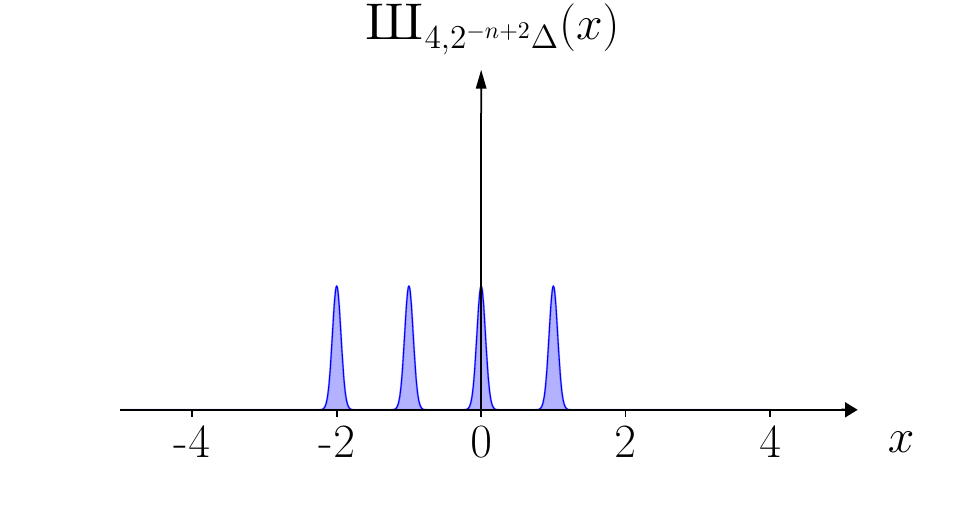}
    \caption{State $\ket{\Sha_{4,2^{-n+2}\Delta}}$}
    
    \end{subfigure}
    \centering
    \begin{subfigure}[b]{0.47\textwidth}
        \centering
        \includegraphics[width=\textwidth]{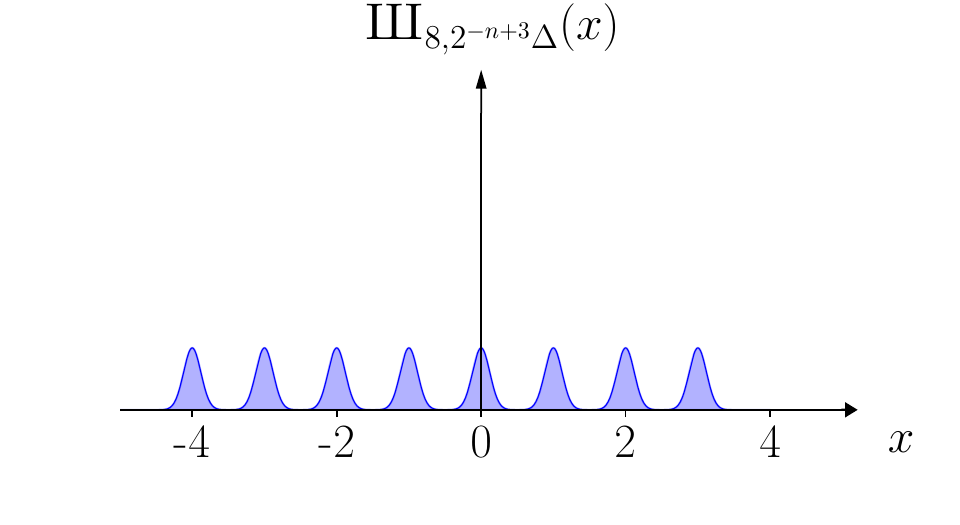}
    \caption{State $\ket{\Sha_{8,2^{-n+3}\Delta}}$}
    \end{subfigure}\\
    \vspace*{0.3cm}
    \centering
    \begin{subfigure}[b]{0.9\textwidth}
        \centering
        \includegraphics[width= \textwidth]{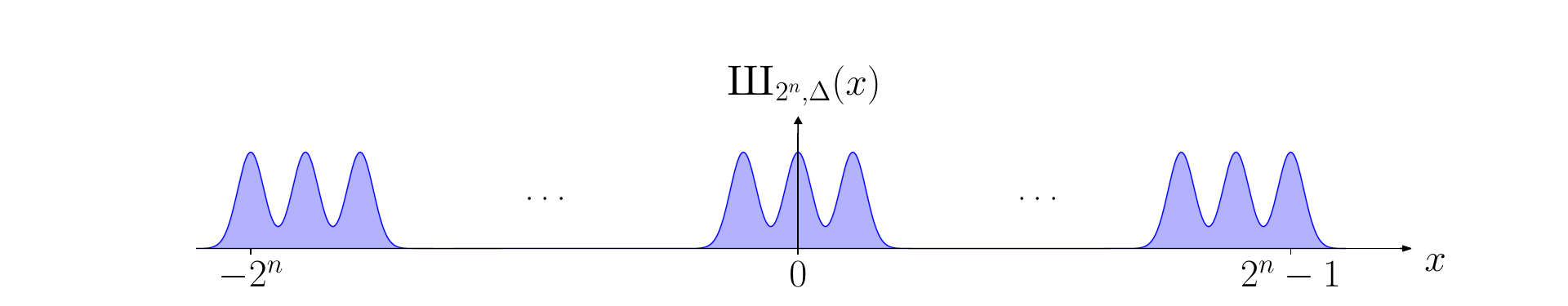}
    \caption{State $\ket{\Sha_{2^n,\Delta}}$ (peaks not in scale)}
    \end{subfigure}
    \caption{Representation of the sequence of states generated in the comb-state-preparation protocol (Protocol~\ref{prot: comb state prep}). \label{fig: comb-state preparation progression}}
\end{figure}

\subsection{Proof of Theorem~\ref{thm:comb state prep}}\label{sec:comb state prep error}

The proof of Theorem~\ref{thm:comb state prep} uses approximate comb states which we introduce in Section~\ref{sec:approximatecombstates}.  In Section~\ref{sec:peakdoublingV}, we analyze a single application of the unitary~$V$. In Section~\ref{sec:proofcompletioncombstate}, we complete the proof of Theorem~\ref{thm:comb state prep}.

\subsubsection{Definition of approximate comb states\label{sec:approximatecombstates}}
Our analysis involves truncated comb states.
A truncated comb state is obtained by taking a comb state $\ket{\Sha_{L,\Delta}}$ and truncating each Gaussian peak to have support only on an interval of width~$2\varepsilon$. Concretely, for $\varepsilon\in (0,1/2)$, we define
\begin{align}
\Sha^{\varepsilon}_{L,\Delta}=\frac{\Pi_{\mathbb{Z}(\varepsilon)}\Sha_{L,\Delta}}{\left\|\Pi_{\mathbb{Z}(\varepsilon)}\Sha_{L,\Delta}\right\|}\ ,\label{eq:Sha L ep}
\end{align}
where $\mathbb{Z}(\varepsilon)=\mathbb{Z}+[-\varepsilon,\varepsilon]$ (where addition is understood as Minkowski sum), and where for a subset~$S\subseteq\mathbb{R}$, we denote by $\Pi_S$  the orthogonal projection  onto functions having support contained in~$S$.

\begin{figure}[!ht]
\centering
       \includegraphics[width=0.45\textwidth]{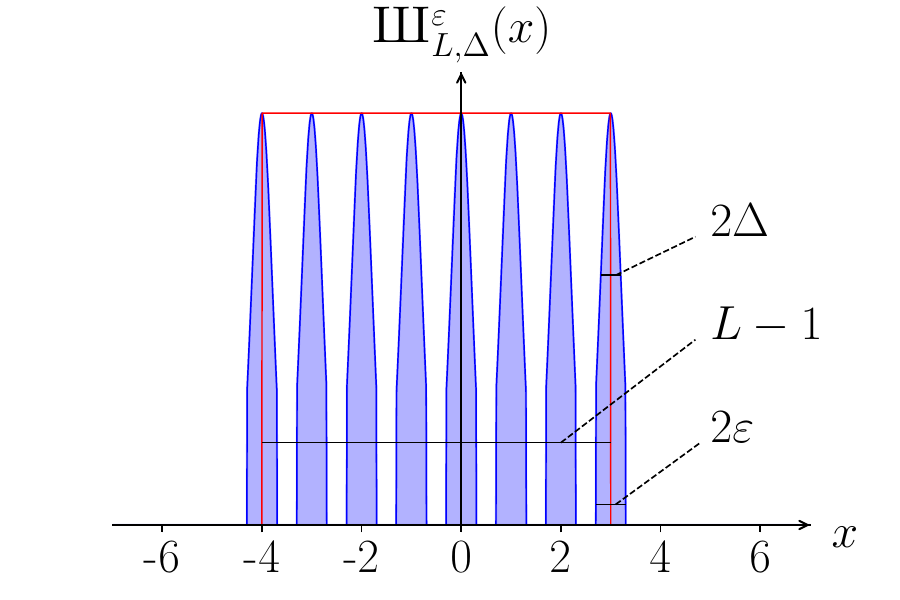}
    \caption{An illustration  of the truncated comb state $|\Sha_{L,\Delta}^\varepsilon\rangle$ with $L=8$ and~$\varepsilon=0.3$. \label{fig:approximatecombstate} }
    \label{fig:comb state ep}
\end{figure}

The state
$\ket{\Sha^{\varepsilon}_{L,\Delta}}$ illustrated in Fig.~\ref{fig:comb state ep}. 
Observe that for an even integer $L\in 2\mathbb{N}$, this state has the form
\begin{align}
    \ket{\Sha^{\varepsilon}_{L,\Delta}}&=\frac{1}{\sqrt{L}}\sum_{z=-L/2}^{L/2-1}\ket{\chi^\varepsilon_\Delta(z)}\ ,\label{eq:Sha L ep sum}
\end{align}
where we used
\begin{align}
(\chi^\varepsilon_\Delta(z))(x)&=\Psi^\varepsilon_\Delta(x-z)
\qquad \textrm{ and }\qquad \Psi^\varepsilon_\Delta=
\frac{\Pi_{[-\varepsilon,\varepsilon]}\Psi_{\Delta}}{\|\Pi_{[-\varepsilon,\varepsilon]}\Psi_{\Delta}\|}\ .
\label{eq:chiDelta ep definition}
\end{align}
Here~$\Psi_\Delta\in L^2(\mathbb{R})$ is the squeezed vacuum state defined in Eq.~\eqref{eq:chiDeltadefinition}.
Clearly, for suitably chosen~$(\varepsilon,\Delta)$, the truncated state~$\ket{\Psi^\varepsilon_\Delta}$ is close to~$\ket{\Psi_\Delta}$, see Lemma~\ref{lem: gaussian gaussian epsilon} in the appendix for a quantitative statement. 

Similarly, the state~$\ket{\Sha_{L,\Delta}^\varepsilon}$ 
is close to the state~$\ket{\Sha_{L,\Delta}}$
for suitably chosen parameters~$(L,\Delta,\varepsilon)$. We refer to Lemma~\ref{lem: Sha - Sha epsilon overlap}  for a quantitative statement.

\subsubsection{Analysis of the peak-doubling unitary~$V$\label{sec:peakdoublingV}}
Our analysis of Protocol~\ref{prot: comb state prep} 
starts with a quantitative description of the effect of a single application of the unitary~$V$ introduced in Eq.~\eqref{eq: V comb prep one round} (and illustrated in Fig.~\ref{fig: circuit V}).

We first consider the application of $V$ to a single (truncated) squeezed vacuum state~$\ket{\Psi^\varepsilon_\Delta}$. 
\begin{lemma}\label{lem: doubling sha state base case}
    Let $\varepsilon\in (0,1/2)$ and $\Delta>0$. Then
    \begin{align}
        \left\| e^{-iP}\proj{\Sha^{2\varepsilon}_{2,2\Delta} }e^{iP} \otimes \proj{+} - V\left(\proj{\Psi^\varepsilon_\Delta}\otimes \proj{+}\right) V^\dagger \right\|_1 \le 9\varepsilon\ .
    \end{align}
\end{lemma}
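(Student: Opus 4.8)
The plan is to track the state $\ket{\Psi^\varepsilon_\Delta}\otimes\ket{+}$ through the four factors of $V=V^{(4)}V^{(3)}V^{(2)}V^{(1)}$ and compare the result with the ideal single-peak doubling identity illustrated in Section~\ref{sec:combstatepreparationexplanation}, paying careful attention to the error incurred because $\Psi^\varepsilon_\Delta$ is a \emph{truncated} Gaussian of finite width rather than a position eigenstate. First I would record the exact action on the idealized input: $V^{(1)}=S(-\log 2)\otimes I$ rescales positions by a factor $2$, so it maps $\ket{\Psi^\varepsilon_\Delta}$ to $\ket{\Psi^{2\varepsilon}_{2\Delta}}$ (the squeezer $S(-\log 2)$ stretches the wavefunction by $2$, doubling both the peak width $\Delta\mapsto 2\Delta$ and the truncation radius $\varepsilon\mapsto 2\varepsilon$). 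Then $V^{(2)}=\mathsf{ctrl}e^{-iP}$ entangles the oscillator with the qubit, $V^{(3)}=I\otimes H$ rotates the qubit, and $V^{(4)}=\mathsf{ctrl}e^{i\pi Q}$ applies the phase $e^{i\pi x}$ on the branch where the oscillator has been shifted. The key point is that $e^{i\pi Q}$ acts as the scalar $(-1)^x$ \emph{only} on integer-supported states; on a peak of width $2\varepsilon$ centered at an integer $z$, the phase $e^{i\pi x}$ deviates from $(-1)^z$ by $O(\varepsilon)$ pointwise, and this is the sole source of the discrepancy. I would make this quantitative by writing, on the support $[-\varepsilon,\varepsilon]+\mathbb Z$, $e^{i\pi x}=e^{i\pi z}e^{i\pi(x-z)}=(-1)^z(1+O(\varepsilon))$, so that $V^{(4)}$ applied to the relevant superposition differs from the idealized $(-1)^z$-phased version by a vector of norm $O(\varepsilon)$.

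Concretely, after $V^{(1)}$ the state is $\ket{\Psi^{2\varepsilon}_{2\Delta}}\otimes\ket{+}$. After $V^{(2)}V^{(1)}$ it becomes $\tfrac{1}{\sqrt2}\big(\ket{\Psi^{2\varepsilon}_{2\Delta}}\otimes\ket{0}+\ket{\chi^{2\varepsilon}_{2\Delta}(1)}\otimes\ket{1}\big)$ where $\ket{\chi^{2\varepsilon}_{2\Delta}(1)}=e^{-iP}\ket{\Psi^{2\varepsilon}_{2\Delta}}$ is the same truncated Gaussian shifted to be centered at $1$ (support in $[1-2\varepsilon,1+2\varepsilon]$). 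Applying $V^{(3)}=I\otimes H$ and regrouping gives $\tfrac12\big[(\ket{\Psi^{2\varepsilon}_{2\Delta}}+\ket{\chi^{2\varepsilon}_{2\Delta}(1)})\otimes\ket{0}+(\ket{\Psi^{2\varepsilon}_{2\Delta}}-\ket{\chi^{2\varepsilon}_{2\Delta}(1)})\otimes\ket{1}\big]$. Now I apply $V^{(4)}=\mathsf{ctrl}e^{i\pi Q}$: on the $\ket{0}$ branch nothing happens; on the $\ket{1}$ branch the oscillator gets multiplied by $e^{i\pi x}$. On the peak at $0$ (support $[-2\varepsilon,2\varepsilon]$) this is $\approx+1$ up to $O(\varepsilon)$, and on the peak at $1$ (support $[1-2\varepsilon,1+2\varepsilon]$) it is $\approx-1$ up to $O(\varepsilon)$; hence the $\ket{1}$ branch's oscillator part $\ket{\Psi^{2\varepsilon}_{2\Delta}}-\ket{\chi^{2\varepsilon}_{2\Delta}(1)}$ is sent to $\ket{\Psi^{2\varepsilon}_{2\Delta}}+\ket{\chi^{2\varepsilon}_{2\Delta}(1)}+\ket{E}$ with $\|\ket{E}\|\le c\,\varepsilon$ for an explicit small constant $c$. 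Tracking normalizations (the unnormalized vector $\ket{\Psi^{2\varepsilon}_{2\Delta}}+\ket{\chi^{2\varepsilon}_{2\Delta}(1)}$ has norm $\sqrt2$ up to a tiny overlap correction — the two peaks are essentially disjointly supported for $\varepsilon<1/2$), one identifies $\tfrac{1}{\sqrt2}(\ket{\Psi^{2\varepsilon}_{2\Delta}}+\ket{\chi^{2\varepsilon}_{2\Delta}(1)})$ with $e^{-iP}\ket{\Sha^{2\varepsilon}_{2,2\Delta}}$ (using the definition~\eqref{eq:Sha L ep sum} with $L=2$: the peaks of $\ket{\Sha^{2\varepsilon}_{2,2\Delta}}$ sit at $z\in\{-1,0\}$, so after the $e^{iP}$-conjugation in the statement they sit at $\{0,1\}$, matching). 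This shows $V(\ket{\Psi^\varepsilon_\Delta}\otimes\ket{+})$ equals $e^{-iP}\ket{\Sha^{2\varepsilon}_{2,2\Delta}}\otimes\ket{+}$ up to a vector of norm $O(\varepsilon)$.

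Finally I would convert the vector-norm bound into the trace-norm bound on rank-one projectors using the standard inequality $\big\|\proj{\psi}-\proj{\phi}\big\|_1\le 2\|\ket{\psi}-\ket{\phi}\|$ for unit vectors, and chase the constants so that the total comes out to $\le 9\varepsilon$. I expect the main obstacle to be purely bookkeeping: getting the constant $9$ rather than some larger number requires being somewhat careful about (i) the pointwise bound $|e^{i\pi(x-z)}-1|\le \pi|x-z|\le 2\pi\varepsilon$ versus a cruder estimate, possibly using $|e^{i\theta}-1|\le|\theta|$ together with the fact that only the $x$-range actually carrying probability mass matters, (ii) the tiny overlap $\langle\Psi^{2\varepsilon}_{2\Delta}\mid\chi^{2\varepsilon}_{2\Delta}(1)\rangle$, which is exactly zero once $2\varepsilon<1/2$, i.e.\ $\varepsilon<1/4$ — and for $\varepsilon\in[1/4,1/2)$ one must either argue it is still negligibly small or absorb it, and (iii) the renormalization of the truncated building blocks. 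None of these is deep; the squeezer's action on $\ket{\Psi_\Delta}$ (that $S(-\log2)$ implements $x\mapsto 2x$ dilation, hence $\Psi_\Delta\mapsto\Psi_{2\Delta}$ and commutes appropriately with the truncation projector $\Pi_{\mathbb Z(\varepsilon)}$ up to the rescaling $\varepsilon\mapsto 2\varepsilon$) is the one structural fact to state carefully, and everything else is the elementary $\varepsilon$-estimate on the phase $e^{i\pi Q}$.
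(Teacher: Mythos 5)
Your plan is correct and follows essentially the same route as the paper's proof: track $\ket{\Psi^\varepsilon_\Delta}\otimes\ket{+}$ exactly through $V^{(1)},\dots,V^{(4)}$ (with $S(-\log 2)\ket{\Psi^\varepsilon_\Delta}=\ket{\Psi^{2\varepsilon}_{2\Delta}}$), identify $e^{-iP}\ket{\Sha^{2\varepsilon}_{2,2\Delta}}=\tfrac{1}{\sqrt2}(\ket{\chi^{2\varepsilon}_{2\Delta}(0)}+\ket{\chi^{2\varepsilon}_{2\Delta}(1)})$ as the target, and locate the sole error in $e^{i\pi Q}$ deviating from $(-1)^z$ on width-$2\varepsilon$ peaks. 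The only divergence is the end-game: the paper computes the overlap with the target, using $\langle \chi_\Delta^\varepsilon(z), e^{-i\pi z}e^{i\pi Q}\chi_\Delta^\varepsilon(z)\rangle\ge 1-5\varepsilon^2$ (Lemma~\ref{lem: e^iQ vs e^iz}, applied with $2\varepsilon$) and the exact pure-state identity~\eqref{eq: trace distance overlap}, landing at $2\sqrt{20}\,\varepsilon\le 9\varepsilon$; you instead bound the difference vector and use $\|\proj{\psi}-\proj{\phi}\|_1\le 2\|\ket{\psi}-\ket{\phi}\|$. Two points on the bookkeeping you flag: (i) to get under $9\varepsilon$ your way, you must exploit that the two error pieces $(e^{i\pi Q}-I)\ket{\chi^{2\varepsilon}_{2\Delta}(0)}$ and $(e^{i\pi Q}+I)\ket{\chi^{2\varepsilon}_{2\Delta}(1)}$ have disjoint supports and add them in quadrature, giving $\|$difference$\|\le\sqrt2\,\pi\varepsilon$ and hence $2\sqrt2\,\pi\varepsilon\approx 8.89\,\varepsilon\le 9\varepsilon$; a plain triangle inequality yields $4\pi\varepsilon>9\varepsilon$, so this step is not optional. (ii) Your concern about $\varepsilon\in[1/4,1/2)$, where adjacent $2\varepsilon$-truncated peaks are no longer disjoint, can be dismissed trivially: there $9\varepsilon\ge 2$, which exceeds any trace distance, so the lemma holds vacuously, and for $\varepsilon<1/4$ all the orthogonality and normalization statements are exact (the same remark is in fact needed to make the paper's own "because $\varepsilon<1/2$" orthogonality justification airtight).
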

\begin{proof}
We use the factorization of $V= V^{(4)} V^{(3)} V^{(2)} V^{(1)}$ of the unitary~$V$ from~\eqref{eq: V comb prep one round}. 
Since the squeezing operator~$S(z)$ acts on an element~$\Psi\in L^2(\mathbb{R})$ as $(S(z)\Psi)(x)=e^{z/2}\Psi(e^{z} x)$,
we have 
$S(-\log 2)\ket{\Psi^\varepsilon_\Delta}=\ket{\Psi^{2\varepsilon}_{2\Delta}}$. It follows that
\begin{align}
V^{(1)}(\ket{\Psi^\varepsilon_\Delta}\otimes\ket{+})&=\ket{\Psi^{2\varepsilon}_{2\Delta}}\otimes\ket{+}\ .
\end{align}
Writing~$\ket{\Psi^{2\varepsilon}_{2\Delta}}=\ket{\chi^{2\varepsilon}_{2\Delta}(0)}$ and using that $ e^{-iP}\ket{\Psi^{2\varepsilon}_{2\Delta}}=\ket{\chi^{2\varepsilon}_{2\Delta}(1)}$, we obtain
\begin{align}
V^{(2)}V^{(1)}(\ket{\Psi^\varepsilon_\Delta}\otimes\ket{+})&=
\frac{1}{\sqrt{2}}(\ket{\chi^{2\varepsilon}_{2\Delta}(0)}\otimes\ket{0}+\ket{\chi^{2\varepsilon}_{2\Delta}(1)}\otimes\ket{1})\ .
\end{align}
In particular, applying a Hadamard gate to the qubit results in the state
\begin{align}
V^{(3)}V^{(2)}V^{(1)}(\ket{\Psi^\varepsilon_\Delta}\otimes\ket{+})&=
\frac{1}{\sqrt{2}}(\ket{\chi^{2\varepsilon}_{2\Delta}(0)}\otimes\ket{+}+\ket{\chi^{2\varepsilon}_{2\Delta}(1)}\otimes\ket{-})\\
&=\frac{1}{\sqrt{2}}(\frac{1}{\sqrt{2}}(\ket{\chi^{2\varepsilon}_{2\Delta}(0)}+\ket{\chi^{2\varepsilon}_{2\Delta}(1)})\otimes\ket{0}+\frac{1}{\sqrt{2}}(\ket{\chi^{2\varepsilon}_{2\Delta}(0)}-\ket{\chi^{2\varepsilon}_{2\Delta}(1)})\otimes\ket{1})\ .
\end{align}
The final unitary~$V^{(4)}=\mathsf{ctrl}e^{i\pi Q}$ has the effect of approximately eliminating the phase~$(-1)$ in the second  term. In more detail, the final state is 
\begin{align}
V(\ket{\Psi^\varepsilon_\Delta}\otimes\ket{+})
&=\frac{1}{\sqrt{2}}(\frac{1}{\sqrt{2}}(\ket{\chi^{2\varepsilon}_{2\Delta}(0)}+\ket{\chi^{2\varepsilon}_{2\Delta}(1)})\otimes\ket{0}+\frac{1}{\sqrt{2}}(e^{i\pi Q}\ket{\chi^{2\varepsilon}_{2\Delta}(0)}-e^{i\pi Q}\ket{\chi^{2\varepsilon}_{2\Delta}(1)})\otimes\ket{1})\\
&=
\frac{1}{\sqrt{2}}
\left(\frac{1}{2}(I+e^{i\pi Q})\ket{\chi^{2\varepsilon}_{2\Delta}(0)}+\frac{1}{2}(I-e^{i\pi Q})\ket{\chi^{2\varepsilon}_{2\Delta}(1)}\right)\otimes\ket{+}\\
&\quad +
\frac{1}{\sqrt{2}}
\left(\frac{1}{2}(I-e^{i\pi Q})\ket{\chi^{2\varepsilon}_{2\Delta}(0)}+\frac{1}{2}(I+e^{i\pi Q})\ket{\chi^{2\varepsilon}_{2\Delta}(1)}\right)\otimes\ket{-}\ .\label{eq:targetstateapproxproduced}
\end{align}
Now consider the state
\begin{align}
e^{-iP}\ket{\Sha^{2\varepsilon}_{2,2\Delta}}&=\frac{1}{\sqrt{2}}e^{-iP}\left(\ket{\chi^{2\varepsilon}_{2\Delta}(-1)}+\ket{\chi^{2\varepsilon}_{2\Delta}(0)}\right)\\
&=\frac{1}{\sqrt{2}}\left(\ket{\chi^{2\varepsilon}_{2\Delta}(0)}+\ket{\chi^{2\varepsilon}_{2\Delta}(1)}\right)\ .\label{eq:targetstatenewm}
\end{align}
The overlap of~\eqref{eq:targetstateapproxproduced} and~\eqref{eq:targetstatenewm}
is 
\begin{align}
\Big\langle e^{-iP}\ket{\Sha^{2\varepsilon}_{2,2\Delta}}\otimes \ket{+},V(\ket{\Psi^\varepsilon_\Delta}\otimes\ket{+})\Big\rangle&=\quad\frac{1}{2}
\cdot
\Big\langle \chi^{2\varepsilon}_{2\Delta}(0),\frac{1}{2}(I+e^{i\pi Q})\chi^{2\varepsilon}_{2\Delta}(0)\Big\rangle\\
&\quad+ \frac{1}{2}\cdot
\Big\langle \chi^{2\varepsilon}_{2\Delta}(1),\frac{1}{2}(I-e^{i\pi Q})\chi^{2\varepsilon}_{2\Delta}(1)\Big\rangle
\end{align}
where 
we used that $\ket{\chi^{2\varepsilon}_{2\Delta}(0)}$ and $\ket{\chi^{2\varepsilon}_{2\Delta}(1)}$
(and thus also $e^{i\pi Q}\ket{\chi^{2\varepsilon}_{2\Delta}(0)}$ and $\ket{\chi^{2\varepsilon}_{2\Delta}(1)}$ etc.) have non-overlapping support (because~$\varepsilon<1/2$)  and are thus orthogonal, i.e.,
\begin{align}
\langle\chi^{2\varepsilon}_{2\Delta}(0),\chi^{2\varepsilon}_{2\Delta}(1)\rangle=\langle e^{i\pi Q}\chi^{2\varepsilon}_{2\Delta}(0),\chi^{2\varepsilon}_{2\Delta}(1)\rangle=\langle\chi^{2\varepsilon}_{2\Delta}(0),e^{i\pi Q}\chi^{2\varepsilon}_{2\Delta}(1)\rangle=0\ .
\end{align}
Using that 
\begin{align}
\left\langle \chi_\Delta^\varepsilon(z), e^{-i\pi z} e^{i\pi Q} \chi_\Delta^\varepsilon(z)\right\rangle\ge 1 - 5\varepsilon^2\qquad\textrm{ for every }z\in\mathbb{R}\ ,
\end{align}
see Lemma~\ref{lem: e^iQ vs e^iz} in the appendix, we obtain (for $z\in \{0,1\}$, and for $2\varepsilon$ instead of~$\varepsilon$)
\begin{align}
\langle \chi^{2\varepsilon}_{2\Delta}(z),\frac{1}{2}(I+(-1)^ze^{i\pi Q})\chi^{2\varepsilon}_{2\Delta}(z)\rangle
&\geq 1-10\varepsilon^2\qquad\textrm{ for }\qquad z\in \{0,1\}
\end{align}
and thus
\begin{align}
\Big\langle e^{-iP}\ket{\Sha^{2\varepsilon}_{2,2\Delta}}\otimes \ket{+},V(\ket{\Psi^\varepsilon_\Delta}\otimes\ket{+})\Big\rangle\geq 1-10\varepsilon^2\ .
\end{align}
In particular, 
\begin{align}
\left|\Big\langle e^{-iP}\ket{\Sha^{2\varepsilon}_{2,2\Delta}}\otimes \ket{+},V(\ket{\Psi^\varepsilon_\Delta}\otimes\ket{+})\Big\rangle\right|^2\geq 1-20\varepsilon^2\ 
\end{align}
and the claim follows from the identity
\begin{align}
\left\|\proj{\Psi}-\proj{\Phi}\right\|_1=2\sqrt{1-|\langle \Psi,\Phi\rangle|^2}\ .  \label{eq: trace distance overlap}
\end{align}
relating the trace distance and the overlap for two pure states~$\ket{\Phi},\ket{\Psi}$ and the inequality~$2\sqrt{20}\leq 9$. 

\end{proof}

Given the state $\ket{\Sha^{\varepsilon}_{L,\Delta}}$, the unitary $V$ generates an approximation of the state $\ket{\Sha^{2\varepsilon}_{2L,2\Delta}}$ when applied to a product state with the qubit in the state~$\ket{+}$.
(The qubit approximately acts as a catalyst, i.e., 
the state of the qubit after application of~$V$ is approximately equal to~$\ket{+}$.)  A detailed description of this ``peak-doubling'' effect is the following:

\begin{lemma}\label{lem: doubling sha state}
    Let $\varepsilon\in (0,1/2)$, $\Delta>0$ and $L\in 2\mathbb{N}$.  
    Then
    \begin{align}
        \left\| \proj{\Sha^{2\varepsilon}_{2L,2\Delta} } \otimes \proj{+} - V\left(\proj{\Sha^{\varepsilon}_{L,\Delta}}\otimes \proj{+}\right) V^\dagger \right\|_1 \le 9\varepsilon\ .
    \end{align}
\end{lemma}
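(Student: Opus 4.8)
The plan is to reduce Lemma~\ref{lem: doubling sha state} to the single-peak (base-case) statement of Lemma~\ref{lem: doubling sha state base case} by exploiting the fact that the truncated comb state $\ket{\Sha^\varepsilon_{L,\Delta}}$ is an equal-weight superposition of well-separated translated peaks, and that the unitary $V$ acts ``locally'' on each peak. Concretely, I would first write $\ket{\Sha^\varepsilon_{L,\Delta}} = \frac{1}{\sqrt L}\sum_{z=-L/2}^{L/2-1}\ket{\chi^\varepsilon_\Delta(z)}$ as in Eq.~\eqref{eq:Sha L ep sum}, and track what $V$ does to each $\ket{\chi^\varepsilon_\Delta(z)}\otimes\ket{+}$. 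Following the same chain of elementary unitaries as in the proof of the base case: $V^{(1)}=S(-\log 2)\otimes I$ maps $\ket{\chi^\varepsilon_\Delta(z)}$ to $\ket{\chi^{2\varepsilon}_{2\Delta}(2z)}$ (squeezing by $1/2$ doubles both the position and the width); then $V^{(2)}=\mathsf{ctrl}e^{-iP}$, $V^{(3)}=I\otimes H$, $V^{(4)}=\mathsf{ctrl}e^{i\pi Q}$ act exactly as in the base case but now centered around $2z$ instead of $0$, producing a superposition of the two peaks $\ket{\chi^{2\varepsilon}_{2\Delta}(2z)}$ and $\ket{\chi^{2\varepsilon}_{2\Delta}(2z+1)}$ together with $\ket{+}$, up to the same error coming from $e^{i\pi Q}$ not being exactly a $\pm1$ phase on a peak of width $2\Delta$ truncated to width $4\varepsilon$.

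The key structural point to establish is that $V$ is, in the relevant subspace, a direct sum of copies of the base-case action. I would make this precise by introducing the target state $\ket{\Sha^{2\varepsilon}_{2L,2\Delta}} = \frac{1}{\sqrt{2L}}\sum_{w=-L}^{L-1}\ket{\chi^{2\varepsilon}_{2\Delta}(w)}$ and computing the overlap $\langle \Sha^{2\varepsilon}_{2L,2\Delta}\otimes + ,\, V(\Sha^\varepsilon_{L,\Delta}\otimes +)\rangle$ directly. Because the peaks $\ket{\chi^\varepsilon_\Delta(z)}$ for distinct integers $z$ have disjoint support (as $\varepsilon<1/2$), and likewise the peaks $\ket{\chi^{2\varepsilon}_{2\Delta}(w)}$ for distinct integers $w$ (again $2\varepsilon<1$), the cross terms between different ``blocks'' (different $z$) vanish, and the overlap decomposes as $\frac{1}{L}\sum_{z=-L/2}^{L/2-1}(\text{per-peak overlap at }z)$. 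Each per-peak overlap equals the base-case overlap, which by Lemma~\ref{lem: e^iQ vs e^iz} (applied with $2\varepsilon$, $2\Delta$ at $w\in\{2z,2z+1\}$) is at least $1-10\varepsilon^2$. Averaging preserves this bound, so $\langle \Sha^{2\varepsilon}_{2L,2\Delta}\otimes + ,\, V(\Sha^\varepsilon_{L,\Delta}\otimes +)\rangle\ge 1-10\varepsilon^2$, hence $|\cdot|^2\ge 1-20\varepsilon^2$, and the trace-distance identity~\eqref{eq: trace distance overlap} together with $2\sqrt{20}\le 9$ yields the claimed bound $9\varepsilon$. Alternatively, and perhaps more cleanly, I could note that the output of $V$ on $\ket{\Sha^\varepsilon_{L,\Delta}}\otimes\ket{+}$ is literally the normalized sum over $z$ of the (not-quite-normalized) per-peak outputs, each of which is exactly the expression in Eq.~\eqref{eq:targetstateapproxproduced} with $0,1$ replaced by $2z,2z+1$; then the overlap with the target is a sum of mutually orthogonal contributions and the same arithmetic applies.

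The main obstacle I anticipate is purely bookkeeping: making sure the normalization constants match when passing from $\ket{\Sha^\varepsilon_{L,\Delta}}$ (normalized sum of $L$ unit-norm orthogonal peaks, so prefactor $1/\sqrt L$) through $V$ to a candidate approximation of $\ket{\Sha^{2\varepsilon}_{2L,2\Delta}}$ (normalized sum of $2L$ peaks, prefactor $1/\sqrt{2L}$). The factor-of-$2$ in peak count is absorbed because $V$ sends each single unit peak to a (normalized) superposition of two unit peaks, so no spurious normalization error appears; I would double-check this by verifying that $\|V(\ket{\Sha^\varepsilon_{L,\Delta}}\otimes\ket{+})\|=1$ automatically (it is, since $V$ is unitary) and that the squeezing $S(-\log 2)$ correctly sends the integer grid $\mathbb{Z}$ to $\tfrac12\mathbb{Z}$, so that after the controlled displacement by $1$ in $Q$ one recovers peaks at all integers of the doubled grid. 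A second, minor subtlety is that for the approximately-catalytic qubit one must confirm the qubit really does return (approximately) to $\ket{+}$ for every block simultaneously; but since the per-peak output already has the qubit in $\ket{+}$ up to the $O(\varepsilon^2)$ error, and these errors are coherent across blocks in the same way, the global trace-distance bound follows immediately from the single overlap computation without needing a union bound. Given these checks, the proof is essentially a transcription of the proof of Lemma~\ref{lem: doubling sha state base case} with $z\mapsto 2z,2z+1$ summed over $z$.
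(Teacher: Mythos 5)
Your proposal is correct and follows essentially the same route as the paper's proof: decompose $V=V^{(4)}V^{(3)}V^{(2)}V^{(1)}$, track its action peak-by-peak on $\ket{\Sha^\varepsilon_{L,\Delta}}\otimes\ket{+}$, use the disjoint supports of the $\ket{\chi^{2\varepsilon}_{2\Delta}(w)}$ to reduce the overlap with $\ket{\Sha^{2\varepsilon}_{2L,2\Delta}}\otimes\ket{+}$ to a sum of per-peak terms controlled by Lemma~\ref{lem: e^iQ vs e^iz} (with $2\varepsilon$ and phases $e^{-i\pi(2z)}=1$, $e^{-i\pi(2z+1)}=-1$), obtaining overlap at least $1-10\varepsilon^2$, and conclude via Eq.~\eqref{eq: trace distance overlap} and $2\sqrt{20}\le 9$. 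No substantive differences from the paper's argument.
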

\begin{proof}
Let us analyze the action of $V$ on the state $\ket{\Sha^{\varepsilon}_{L,\Delta}} \otimes\ket{+}$.
To lighten the notation, we first define the states
\newcommand{\psieven}{\ket{\Psi^{\rm even}_{z}}}
\newcommand{\psiodd}{\ket{\Psi^{\rm odd}_{z}}}
\newcommand{\psievenz}{\ket{\Psi^{\rm even}_{z'}}}
\newcommand{\psioddz}{\ket{\Psi^{\rm odd}_{z'}}}
\begin{align}
\psieven=\ket{\chi^{2\varepsilon}_{2\Delta}(2z)}\qquad\textrm{ and }\qquad 
\psiodd=\ket{\chi^{2\varepsilon}_{2\Delta}(2z+1)}\qquad\textrm{ for }\qquad z\in\mathbb{Z}\ .
\end{align}
Since~$\varepsilon<1/2$, these states are pairwise orthogonal, i.e., we have
\begin{align}
\renewcommand*{\arraystretch}{1.225}
\begin{matrix}
\langle \Psi^{\rm odd}_{z},\Psi^{\rm even}_{z'}\rangle &=&0\\
\langle \Psi^{\rm even}_{z},\Psi^{\rm even}_{z'}\rangle &=&\langle \Psi^{\rm odd}_{z},\Psi^{\rm odd}_{z'}\rangle =\delta_{z,z'}
\end{matrix}\qquad\textrm{ for }\qquad z,z'\in\mathbb{Z}\ .\label{eq:orthogonalityfirstpsievenodd}
\end{align}
For later reference, we note that --- since these wavefunctions have pairwise orthogonal support (in the position-basis), these orthogonality relations also hold when an additional phase (in the position-basis) is introduced. In particular, we have 
\begin{align}
\renewcommand*{\arraystretch}{1.225}
\begin{matrix}
\langle \Psi^{\rm odd}_{z},e^{i\pi Q}\Psi^{\rm even}_{z'}\rangle &=&0\\
\langle \Psi^{\rm even}_{z},e^{i\pi Q}\Psi^{\rm even}_{z'}\rangle &=&\langle \Psi^{\rm odd}_{z},e^{i\pi Q}\Psi^{\rm odd}_{z'}\rangle=0
\end{matrix}\qquad\textrm{ for }\qquad z\neq z'\in\mathbb{Z}\ .\label{eq:orthogonalitysecondpsievenodd}
\end{align}
With the factorization $V= V^{(4)} V^{(3)} V^{(2)} V^{(1)}$ of the unitary~$V$ from~\eqref{eq: V comb prep one round}, we can analyze the action of~$V$ as follows.  It is easy to check that 
\begin{align}
     V^{(1)} \left(\ket{\Sha^{\varepsilon}_{L,\Delta}} \otimes \ket{+}\right) = \frac{1}{\sqrt{L}}\sum_{z=-L/2}^{L/2-1}\psieven  \ket{+}\ .
\end{align}
Hence, we obtain   
\begin{align}
    V^{(2)} V^{(1)}\left(\ket{\Sha^{\varepsilon}_{L,\Delta}} \otimes\ket{+} \right) = \frac{1}{\sqrt{L}}\sum_{z=-L/2}^{L/2-1}\frac{1}{\sqrt{2}}\left(\psieven  \otimes\ket{0} + \psiodd \otimes \ket{1}\right)\, 
\end{align}
using  $\mathsf{ctrl}e^{-iP}\psievenz=\psioddz$.
 Applying the single-qubit Hadamard gate~$V^{(3)}=I\otimes H$ to this state yields
\begin{align}
     V^{(3)} V^{(2)} V^{(1)}\left( \ket{\Sha^{\varepsilon}_{L,\Delta}} \otimes\ket{+}\right)  &= \frac{1}{\sqrt{L}}\sum_{z=-L/2}^{L/2-1}\frac{1}{\sqrt{2}}\left(\psieven  \otimes\ket{+} + \psiodd \otimes \ket{-}\right)\\
     &= \frac{1}{2\sqrt{L}}\sum_{z=-L/2}^{L/2-1}\Big(\  \left(  
    \psieven + \psiodd
     \right)\otimes \ket{0}\\
     &\qquad\qquad\quad\qquad +\left(  
    \psieven - \psiodd
     \right)\otimes \ket{1} \Big)\ .
\end{align}
The final state after application of~$V^{(4)}=\mathsf{ctrl}e^{i\pi Q}$ is thus
\begin{align}
    V \left(\ket{\Sha^{\varepsilon}_{L,\Delta}}\otimes \ket{+}\right) &=        \frac{1}{2\sqrt{L}}\sum_{z=-L/2}^{L/2-1}\Big(\left(  
    \psieven + \psiodd
     \right)\otimes \ket{0} +e^{i\pi Q}\left(  
    \psieven - \psiodd
     \right)\otimes \ket{1}\Big)\\[-0.5cm]
     &=  \frac{1}{2\sqrt{2L}}\sum_{z=-L/2}^{L/2-1}\Big(\left(  
     (I+e^{i\pi Q})\psieven + (I-e^{i\pi Q})\psiodd  
     \right)\otimes \ket{+} \\
     &\hspace{6.65em}+\left(  
     (I-e^{i\pi Q})\psieven + (I+e^{i\pi Q})\psiodd  
     \right)\otimes \ket{-} \Big)\ .
\end{align}
\renewcommand{\psieven}{{\Psi^{\rm even}_{z}}}
\renewcommand{\psiodd}{{\Psi^{\rm odd}_{z}}}
\renewcommand{\psievenz}{{\Psi^{\rm even}_{z'}}}
\renewcommand{\psioddz}{{\Psi^{\rm odd}_{z'}}}

We compute the overlap between the target state $\ket{\Sha^{2\varepsilon}_{2L,2\Delta}} \otimes \ket{+}$ and the state $V\left(\ket{\Sha^{\varepsilon}_{L,\Delta}}\otimes \ket{+}\right)$ prepared by the protocol.
For convenience, we rewrite the target state in a form that resembles the form of the latter. We have
\begin{align}
    &\ket{\Sha^{2\varepsilon}_{2L,2\Delta}} = \frac{1}{\sqrt{2L}}\sum_{z = -L}^{L-1} \ket{\chi^{2\varepsilon}_{2\Delta}(z)} = \frac{1}{\sqrt{2L}} \sum_{z = -L/2}^{L/2-1} \left(\ket{\psieven} + \ket{\psiodd} \right)\ . 
\end{align}
Therefore, the overlap between $\ket{\Sha^{2\varepsilon}_{2L,2\Delta}} \otimes\ket{+}$ and $V\left(\ket{\Sha^{\varepsilon}_{L,\Delta}} \otimes\ket{+}\right)$ is 
\begin{align}
    &\big(\langle \Sha^{2\varepsilon}_{2L,2\Delta}|\otimes \langle+|\big) V\big( |\Sha^{\varepsilon}_{L,\Delta}\rangle \otimes \ket{+}\big)\\
    &=\frac{1}{4L} \sum_{z = -L/2}^{L/2-1} \sum_{z' = -L/2}^{L/2-1} \left\langle \psievenz + \psioddz\ , (I+e^{i\pi Q})\psieven + (1-e^{I\pi Q})\psiodd \right\rangle \label{eq:comb double sum}\\
    &=\frac{1}{4L} \sum_{z = -L/2}^{L/2-1} \Big(2 + \left\langle \psieven, e^{i\pi Q} \psieven    \right\rangle  -   \left\langle \psiodd\ , e^{i\pi Q} \psiodd    \right\rangle   \Big)\ 
\end{align}
where we used the orthogonality relations~\eqref{eq:orthogonalityfirstpsievenodd} and~\eqref{eq:orthogonalitysecondpsievenodd}. 
Using that $e^{-i\pi (2z)}=1$ and  $e^{-i\pi (2z+1)}=-1$ for every integer~$z\in\mathbb{Z}$, this can be rewritten as 
\begin{align}
    &\big(\langle \Sha^{2\varepsilon}_{2L,2\Delta}|\otimes \langle+|\big) V\big( |\Sha^{\varepsilon}_{L,\Delta}\rangle \otimes \ket{+}\big)\\
    &=\frac{1}{4L} \sum_{z = -L/2}^{L/2-1} \Big(2 + \left\langle \psieven\ , e^{-i\pi(2z)} e^{i\pi Q} \psieven    \right\rangle  +   \left\langle \psiodd , e^{-i\pi (2z+1)} e^{i\pi Q} \psiodd    \right\rangle   \Big) \\
    & \ge\frac{1}{4L} \sum_{z= -L/2}^{L/2-1} \Big(2 + 2(1 - 5(2\varepsilon)^2) \Big)\\
    &= 1 - 10\varepsilon^2\ . \label{eq: V overlap to target}
\end{align}
Here, we used the fact (see Lemma~\ref{lem: e^iQ vs e^iz}) that $\left\langle \chi_\Delta^\varepsilon(z), e^{-i\pi z} e^{i\pi Q} \chi_\Delta^\varepsilon(z)\right\rangle\ge 1 - 5\varepsilon^2$  to obtain the last inequality. We prove this fact in appendix. Since Eq.~\eqref{eq: V overlap to target} implies
\begin{align}
    \abs{\left(\langle \Sha^{2\varepsilon}_{2L,2\Delta}|\otimes\langle+|\right)V(|\Sha^{\varepsilon}_{L,\Delta}\rangle\otimes\ket{+})}^2\ge1-20\varepsilon^2\ , 
\end{align}
By using the relation between the overlap of two states and their trace distance (cf.\ Eq.~\eqref{eq: trace distance overlap}), we conclude that
\begin{align}
    \left\| \proj{\Sha^{2\varepsilon}_{2L,2\Delta} } \otimes \proj{+} - V\left(\proj{\Sha^{\varepsilon}_{L,\Delta}}\otimes \proj{+}\right) V^\dagger \right\|_1 \le 2\sqrt{20}\varepsilon\ .
\end{align}
This implies the claim since~$2\sqrt{20}\leq 9$.
\end{proof}

\subsubsection{Completing the proof of Theorem~\ref{thm:comb state prep}\label{sec:proofcompletioncombstate}}
In this section, we complete the proof of Theorem~\ref{thm:comb state prep}.  We have already argued that 
Protocol~\ref{prot: comb state prep} can be realized using at most~$5n+\ceil{\log 1/\Delta} + 4$ allowed elementary operations, see the discussion following the statement of the Theorem. It thus remains to show Eq.~\eqref{eq: thm comb no eps}, i.e., that the output state~$\rho$ of the protocol is close to the state~$\ket{\Sha_{2^n,\Delta}}$.

To do so, let us consider the repeated action of~$V$.
The following combines  Lemma~\ref{lem: doubling sha state base case} and Lemma~\ref{lem: doubling sha state}.
\begin{lemma}\label{lem:iteratedlemmaV}
For $n\in\mathbb{N}$, let us define the unitary
\begin{align}
U_n:&=V^{n-1}(e^{iP}\otimes I)V\ .
\end{align}
Suppose $\varepsilon\in (0,2^{-(n+1)})$ and $\Delta>0$. Then, 
\begin{align}
    \left\| 
\proj{\Sha^{2^n\varepsilon}_{2^n,2^n\Delta}}\otimes\proj{+}-    U_n\left(\proj{\Psi^\varepsilon_\Delta}\otimes\proj{+}\right)U_n^\dagger\right\|_1 &\leq 9\varepsilon\cdot (2^{n}-1)\ .
\end{align}
\end{lemma}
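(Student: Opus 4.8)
The plan is to prove this by induction on $n$, using the two preceding lemmas (Lemma~\ref{lem: doubling sha state base case} and Lemma~\ref{lem: doubling sha state}) as the base case and the inductive step, together with the fact that unitary conjugation preserves the trace norm and the triangle inequality.

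\textbf{Setup and base case.} For $n=1$, we have $U_1 = (e^{iP}\otimes I)V$, and the claim reads
\begin{align}
\left\|\proj{\Sha^{2\varepsilon}_{2,2\Delta}}\otimes\proj{+}-U_1\left(\proj{\Psi^\varepsilon_\Delta}\otimes\proj{+}\right)U_1^\dagger\right\|_1\le 9\varepsilon\ .
\end{align}
Expanding $U_1(\cdot)U_1^\dagger = (e^{iP}\otimes I)V(\cdot)V^\dagger(e^{-iP}\otimes I)$ and using the unitary invariance of the trace norm to conjugate both terms by $(e^{-iP}\otimes I)$, this is exactly the statement of Lemma~\ref{lem: doubling sha state base case} (note $e^{-iP}\proj{\Sha^{2\varepsilon}_{2,2\Delta}}e^{iP}$ on the left becomes $\proj{\Sha^{2\varepsilon}_{2,2\Delta}}$ after conjugating by $e^{-iP}$). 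The hypothesis $\varepsilon\in(0,2^{-2})$ is exactly what Lemma~\ref{lem: doubling sha state base case} needs.

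\textbf{Inductive step.} Assume the statement holds for some $n\ge1$ under the hypothesis $\varepsilon\in(0,2^{-(n+1)})$, and let $\varepsilon\in(0,2^{-(n+2)})$. Write $U_{n+1} = V\cdot U_n$ (since $U_{n+1}=V^n(e^{iP}\otimes I)V = V\cdot V^{n-1}(e^{iP}\otimes I)V = V U_n$). Then
\begin{align}
U_{n+1}\left(\proj{\Psi^\varepsilon_\Delta}\otimes\proj{+}\right)U_{n+1}^\dagger = V\, U_n\left(\proj{\Psi^\varepsilon_\Delta}\otimes\proj{+}\right)U_n^\dagger\, V^\dagger\ .
\end{align}
Apply the triangle inequality by inserting the intermediate state $V\left(\proj{\Sha^{2^n\varepsilon}_{2^n,2^n\Delta}}\otimes\proj{+}\right)V^\dagger$:
\begin{align}
&\left\|\proj{\Sha^{2^{n+1}\varepsilon}_{2^{n+1},2^{n+1}\Delta}}\otimes\proj{+}-U_{n+1}\left(\proj{\Psi^\varepsilon_\Delta}\otimes\proj{+}\right)U_{n+1}^\dagger\right\|_1 \nonumber\\
&\quad\le \left\|\proj{\Sha^{2^{n+1}\varepsilon}_{2^{n+1},2^{n+1}\Delta}}\otimes\proj{+}-V\left(\proj{\Sha^{2^n\varepsilon}_{2^n,2^n\Delta}}\otimes\proj{+}\right)V^\dagger\right\|_1 \nonumber\\
&\qquad + \left\|V\left(\proj{\Sha^{2^n\varepsilon}_{2^n,2^n\Delta}}\otimes\proj{+}\right)V^\dagger - V\,U_n\left(\proj{\Psi^\varepsilon_\Delta}\otimes\proj{+}\right)U_n^\dagger\,V^\dagger\right\|_1\ .
\end{align}
The first term is bounded by $9\cdot(2^n\varepsilon)$ by Lemma~\ref{lem: doubling sha state}, applied with $L=2^n$, squeezing parameter $2^n\Delta$, and truncation parameter $2^n\varepsilon$; the hypothesis there is $2^n\varepsilon\in(0,1/2)$, which follows from $\varepsilon<2^{-(n+2)}<2^{-(n+1)}$. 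The second term equals $\left\|\proj{\Sha^{2^n\varepsilon}_{2^n,2^n\Delta}}\otimes\proj{+}-U_n\left(\proj{\Psi^\varepsilon_\Delta}\otimes\proj{+}\right)U_n^\dagger\right\|_1$ by unitary invariance of the trace norm, which is bounded by $9\varepsilon(2^n-1)$ by the induction hypothesis (applicable since $\varepsilon<2^{-(n+2)}<2^{-(n+1)}$). Adding gives $9\cdot 2^n\varepsilon + 9\varepsilon(2^n-1) = 9\varepsilon(2^{n+1}-1)$, completing the induction.

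\textbf{Main obstacle.} The argument is essentially bookkeeping, so there is no deep obstacle; the only points requiring care are (i) checking that the peak-count/squeezing/truncation parameters at step $n$ match those required by Lemma~\ref{lem: doubling sha state} (in particular that $2^n\Delta$ is the squeezing parameter and $2^n\varepsilon$ the truncation width, which is precisely how the two lemmas compose: each application of $V$ doubles all three of $L$, $\Delta$, $\varepsilon$), and (ii) verifying the hypothesis $2^k\varepsilon<1/2$ remains valid at every intermediate step $k\le n$, which is guaranteed by the uniform hypothesis $\varepsilon<2^{-(n+1)}$. One should also double-check the placement of the $e^{iP}$ phase shift: it is applied only after the \emph{first} $V$, so it appears inside $U_n$ in the fixed position shown, and commuting it past later $V$'s is never needed because the induction peels off $V$'s from the \emph{left} (outermost), leaving $U_n$ intact as the inner unitary.
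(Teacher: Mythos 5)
Your proof is correct and follows essentially the same route as the paper: an induction that peels off the outermost $V$, inserts the intermediate state $V\left(\proj{\Sha^{2^{k}\varepsilon}_{2^{k},2^{k}\Delta}}\otimes\proj{+}\right)V^\dagger$, and combines Lemma~\ref{lem: doubling sha state base case} (base case, after conjugating away the $e^{iP}$ shift), Lemma~\ref{lem: doubling sha state}, unitary invariance of $\|\cdot\|_1$, and the triangle inequality, with the same parameter bookkeeping $9\cdot 2^{k}\varepsilon+9\varepsilon(2^{k}-1)=9\varepsilon(2^{k+1}-1)$. The only cosmetic difference is that you induct on the lemma's $n$ itself while the paper fixes $n$ and inducts on an intermediate index $k\le n$; the two are equivalent.
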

\begin{proof}
Define
\begin{align}    \ket{\Phi^{(0)}}&=\ket{\Psi^\varepsilon_\Delta}\otimes\ket{+}\\
    \ket{\Phi^{(1)}}&=(e^{iP}\otimes I) V\ket{\Phi^{0}}\\
    \ket{\Phi^{(k)}}&=V\ket{\Phi^{(k-1)}}\qquad\textrm{ for  }k\in \{2,\ldots,n\}\ .    
\end{align}
We show inductively that 
\begin{align}
\left\|
\proj{\Phi^{(k)}}-\proj{\Sha^{2^k\varepsilon}_{2^k,2^k\Delta}}\otimes\proj{+}
\right\|_1&\leq 9\varepsilon\cdot (2^{k}-1)\ \textrm{ for every } k\in\{1,\ldots,n\}\ .\label{eq:inducationclaimone}
\end{align}
By Lemma~\ref{lem: doubling sha state base case} and the invariance of the norm~$\|\cdot\|_1$ under unitaries we have
\begin{align}
    \left\|\proj{\Phi^{(1)}}-\proj{\Sha_{2,2\Delta}^{2\varepsilon}}\otimes\proj{+}
    \right\|_1\leq 9\varepsilon\ .
\end{align}
This establishes~\eqref{eq:inducationclaimone} for $k=1$.

Suppose that we have shown the claim~\eqref{eq:inducationclaimone} for $k-1$.  Then
 we have by definition and by the triangle inequality that
\begin{align}
&&\left\|
\proj{\Phi^{(k)}}-\proj{\Sha^{2^k\varepsilon}_{2^k,2^k\Delta}}\otimes\proj{+}
\right\|_1 \\
 &&  =\left\|
V\proj{\Phi^{(k-1)}}V^\dagger-\proj{\Sha^{2^k\varepsilon}_{2^k,2^k\Delta}}\otimes\proj{+}
\right\|_1\\
&&\leq \left\| V\proj{\Phi^{(k-1)}}V^\dagger-V\left(\proj{\Sha^{2^{k-1}\varepsilon}_{2^{k-1},2^{k-1}\Delta}}\otimes\proj{+}\right)V^\dagger\right\|_1\\
&&+\left\|V\left(\proj{\Sha^{2^{k-1}\varepsilon}_{2^{k-1},2^{k-1}\Delta}}\otimes\proj{+}\right)V^\dagger-\proj{\Sha^{2^k\varepsilon}_{2^k,2^k\Delta}}\otimes\proj{+}
\right\|_1\ . \hspace{-0.65em}
\end{align}
By the invariance of the norm under unitaries and by the induction hypothesis, we have
\begin{align}
    \left\| V\proj{\Phi^{(k-1)}}V^\dagger-V\left(\proj{\Sha^{2^{k-1}\varepsilon}_{2^{k-1},2^{k-1}\Delta}}\right)V^\dagger\right\|_1&\leq
    9\varepsilon\cdot  (2^{k-1}-1)\ .
\end{align}
Furthermore, we have 
\begin{align}
\left\|V\left(\proj{\Sha^{2^{k-1}\varepsilon}_{2^{k-1},2^{k-1}\Delta}}\otimes\proj{+}\right)V^\dagger-\proj{\Sha^{2^k\varepsilon}_{2^k,2^k\Delta}}\otimes\proj{+}
\right\|_1\leq 9\cdot 2^{k-1}\varepsilon
\end{align}
by Lemma~\ref{lem: doubling sha state}.
The latter can be applied since $2^k\varepsilon\in (0,1/2)$ by the assumption~$\varepsilon\in (0,2^{-(n+1)})$. Since $2^{k-1}-1+2^{k-1}=2^k-1$, this implies Eq.~\eqref{eq:inducationclaimone} for $k$.

Because  $\ket{\Phi^{(n)}}=U_n\left(\ket{\Psi^\varepsilon_\Delta}\otimes\ket{+}\right)$ by definition, Eq.~\eqref{eq:inducationclaimone} with $k=n$ implies the claim.
\end{proof}

With Lemma~\ref{lem:iteratedlemmaV}, we can complete the proof of Theorem~\ref{thm:comb state prep} as follows. Let $\Delta\in (0,1/4)$ and $\varepsilon\in (0,1/2)$. Then, Lemma~\ref{lem:iteratedlemmaV}
(with $(2^{-n}\varepsilon,2^{-n}\Delta)$ in place of~$(\varepsilon,\Delta)$) implies that
\begin{align}
\left\|
\proj{\Sha^\varepsilon_{2^n,\Delta}}\otimes\proj{+}-
U_n\left(\proj{\Psi^{2^{-n}\varepsilon}_{2^{-n}\Delta}}\otimes\proj{+}\right)U_n^\dagger
\right\|_1&\leq 9(2^{-n}\varepsilon)\cdot (2^n-1)\\
&\leq 9\varepsilon\  .    \label{eq:upperboundunpsimv}
\end{align}
By Corollary~\ref{cor: gaussian - gaussian epsilon trace distance}, we have that for any $\varepsilon\in(\sqrt{\Delta},1/2)$ the truncated squeezed vacuum state~$\ket{\Psi^{2^{-n}\varepsilon}_{2^{-n}\Delta}}$ is close to the squeezed vacuum state $\ket{\Psi_{2^{-n}\Delta}}$, i.e.,
\begin{align}
    \left\| \proj{\Psi_{2^{-n}\Delta}} - \proj{\Psi_{2^{-n}\Delta}^{2^{-n}\varepsilon}}\right\|_1
    \le 
    3 \sqrt{\Delta} 
    \ . \label{eq:comb state thm gaussian ep} 
\end{align}
Combining~\eqref{eq:upperboundunpsimv} and~\eqref{eq:comb state thm gaussian ep} with the triangle inequality and using the invariance of the norm under unitaries, we conclude that 
\begin{align}
\left\|
\proj{\Sha^\varepsilon_{2^n,\Delta}}\otimes\proj{+}-
U_n(\proj{\Psi_{2^{-n}\Delta}}\otimes\proj{+})U_n^\dagger
\right\|_1&\le 3 \sqrt{\Delta}+9\varepsilon\ .
    \end{align}
Corollary~\ref{cor: Sha - Sha epsilon trace distance} 
states that for $\Delta\in (0,1/4)$ and any $\varepsilon\in[\sqrt{\Delta},1/2)$, we have
\begin{align}
\left\|\proj{\Sha_{2^n,\Delta}}-\proj{\Sha^\varepsilon_{2^n,\Delta}}\right|_1&\leq 5\sqrt{\Delta}\ ,
    \end{align}
    hence we obtain with the choice~$\varepsilon=\sqrt{\Delta}$
    \begin{align}
\left\|
\proj{\Sha_{2^n,\Delta}}\otimes\proj{+}-
U_n(\proj{\Psi_{2^{-n}\Delta}}\otimes\proj{+})U_n^\dagger
\right\|_1&\le 
3\sqrt{\Delta} + 9\varepsilon+5\sqrt{\Delta}\\
&=17\sqrt{\Delta}
\end{align}
by the triangle inequality. Using the fact that  $L^1$-norm is contractive under CPTP maps (in particular, under tracing out the qubit system),
the claim follows, since the output state of the protocol is
    \begin{align}        \rho=\tr_{\mathsf{qubit}}U_n(\proj{\Psi_{2^{-n}\Delta}}\otimes\proj{+})U_n^\dagger
    \end{align}
    by definition. 

\section{The envelope-Gaussification protocol} \label{sec: Gaussian envelope shaping}
In the following we
explain how to turn a  comb state (a state with a rectangular envelope) into a state with a Gaussian envelope. In Section~\ref{sec: env models}, we introduce an alternative notion of 
approximate GKP states (where the envelope is defined differently). This proves helpful for our analysis. In Section~\ref{sec: adaptive Gaussian envelope shaping}, we introduce our heralded envelope-Gaussification protocol, and establish its main properties. Given an input comb state~$\ket{\Sha_{L,\Delta}}$ and a parameter $\kappa$ (specifying the width of the desired Gaussian envelope),
the protocol 
either rejects or accepts. Conditioned on acceptane, the output state is a quantum state close to the approximate GKP state $\ket{\gkp_{\kappa,\Delta}}$. Finally, in Section~\ref{sec:envelope Gaussification error}, we present the proof to the main result of this section: We show that applying the Gaussification protocol to a comb state produces a state with a Gaussian envelope.

\subsection{An alternative type of approximate GKP state}\label{sec: env models}
The protocol considered in this section (Protocol~\ref{prot:envelopeshaping}) takes a comb state and applies a Gaussian envelope to it. 
The protocol does not produce approximate GKP state with ``peak-wise'' Gaussian envelope $\ket{\gkp_{\kappa,\Delta}}$ but (states close to) approximate GKP states with ``point-wise'' Gaussian envelope $\ket{\tGKP_{\kappa,\Delta}}$ that we define here.  

To define the state  $\ket{\tGKP_{\kappa,\Delta}}$
and to highlight the difference to the 
the ``peak-wise'' GKP state~$\ket{\gkp_{\kappa,\Delta}}$ (cf.\ Eq.~\eqref{eq:gkpkappdeltageneraldefinition}), let us rewrite the latter  wavefunction as
\begin{align}
    \gkp_{\kappa, \Delta}(x)&:= C_{\kappa,\Delta} \sum_{z\in\mathbb{Z}}\eta_\kappa(z)\chi_\Delta(z)(x) \ , \label{eq:approximategkpstate eta chi}
\end{align}
where $\eta_\kappa\in L^2(\mathbb{R})$ is the Gaussian envelope  
\begin{align}
    \eta_\kappa(z)&=\frac{\sqrt{\kappa}}{\pi^{1/4}} e^{-\kappa^2z^2/2}\,\label{eq:etakappadefinition}
\end{align}
with parameter $\kappa>0$, 
and $\chi_\Delta(z)\in L^2(\mathbb{R})$ is 
a  Gaussian with variance~$\Delta^2$ centered at~$z\in\mathbb{R}$, see Eq.~\eqref{eq:chiDeltadefinition}. 

We define a ``point-wise'' GKP state $\ket{\tGKP_{\kappa,\Delta}}\in L^2(\mathbb{R})$ by
\begin{align}
	\tGKP_{\kappa,\Delta}(x):=D_{\kappa,\Delta}
	\sum_{z\in\mathbb{Z}} \eta_\kappa(x)\chi_\Delta(z)(x)\ ,\label{eq:pointwisegkp}
\end{align}
where $D_{\kappa,\Delta}$ is normalization factor. We illustrate the difference between the
states~$\ket{\tGKP_{\kappa,\Delta}}$ and $\ket{\gkp_{\kappa,\Delta}}$ in Fig.~\ref{fig: pointwise peakwise approx GKP}.

\begin{figure}[!htb]
	\centering
		\includegraphics[width= 0.75\textwidth]{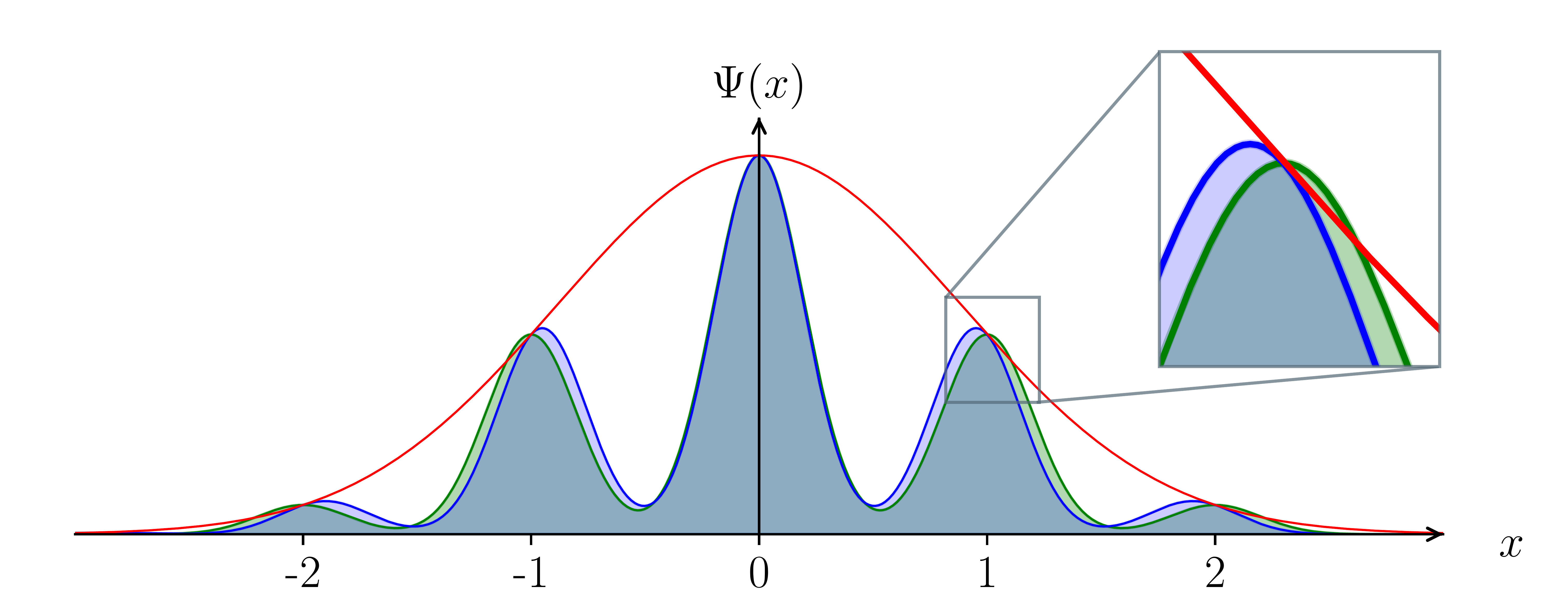}
	\caption{\label{fig: pointwise peakwise approx GKP} Comparison of the ``point-wise'' and ``peak-wise'' envelope models. The ``point-wise'' GKP state $\ket{\tGKP_{\kappa,\Delta}}$ is depicted in blue and the ``peak-wise'' GKP state $\ket{\gkp_{\kappa,\Delta}}$ is depicted in green.}
\end{figure}
We note that ``peak-wise'' and ``point-wise'' approximate GKP states (with an appropriate choice of parameters~$(\kappa,\Delta)$) are close to each other in $L^1$-distance (see~Corollary~\ref{cor: approximate tgkpL GKP trace dist} in the appendix for further details).

\subsection{Envelope shaping
by an adaptive ``measure-then-correct'' protocol
}\label{sec: adaptive Gaussian envelope shaping}
Here, we describe our envelope-Gaussification protocol (cf.\ Protocol~\ref{prot:envelopeshaping}). The protocol takes as input (a state close to) a comb state~$\ket{\Sha_{L,\Delta}}$, as well as a parameter $\kappa>0$  specifying  the targeted Gaussian envelope~$\eta_\kappa$.
 It either accepts or rejects, and outputs a one-mode state when it accepts. We will show that the acceptance probability is lower bounded by a constant. Furthermore, we will prove that the (average) output state conditioned on acceptance is close to the state~$\ket{\gkp_{\kappa,\Delta}}$.

Protocol~\ref{prot:envelopeshaping} is implemented by the adaptive circuit in Fig.~\ref{fig:envelope-protocol}. This circuit is adaptive in the sense that it involves a unitary (displacement) that is classically controlled by (a function of) the measurement result. 

\begin{algorithm}[H]
	\caption{Envelope-Gaussification protocol}
	\label{prot:envelopeshaping} 
	\begin{flushleft}
		\textbf{Input:} 
		A state $\rho \in \cB(L^2(\mathbb{R}))$, a parameter $\kappa\in (0,1/4)$ and a parameter $L\in 8\mathbb{N}$.\\% 
		\textbf{Output:} Either \textsf{accept} or \textsf{reject}, and in the case of acceptance a state of a single mode. Conditioned on acceptance, this output state is close to $\ket{\gkp_{\kappa,\Delta}}$, see Theorem~\ref{thm:envelope gaussification}.   
		\begin{algorithmic}[1]
            \State Prepare the squeezed vacuum state $\ket{\eta_\kappa}= S(\log \kappa)\ket{\vac}$ in the first register.
            \State  Apply the unitary $e^{-iP_1Q_2}$.
            \State Perform a homodyne position measurement on the first mode, resulting in an outcome~$x\in\mathbb{R}$
            and a post-measurement state 
            of the second mode.
            \If{$x\in \Omega_L= [-L/8-1/2, L/8+1/2]$}\label{prot: gaussification acceptance}
            \State Round the result $x$ to the nearest integer, yielding $\round{x}\in\mathbb{Z}$.
            \State Apply the classically controlled correction unitary $e^{i\round{x}P}$ on the second mode.

                \State\Return $\mathsf{accept}$ and the state of the second mode.
                \Else
                  \State \Return $\mathsf{reject}$.
                \EndIf
		\end{algorithmic}
	\end{flushleft}
\end{algorithm}

\begin{figure}[!ht]
	\begin{center}
 \includegraphics{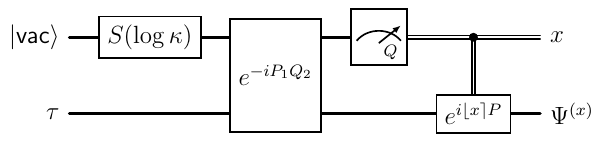}
	\end{center}
 	\caption{Circuit used in the envelope-Gaussification protocol (Protocol~\ref{prot:envelopeshaping}). For measurement outcomes~$x\in \mathbb{R}$ with $x\not\in [-L/8-1/2,L/8+1/2]$, the protocol returns $\mathsf{reject}$, and there is no output state in this case. For values $x\in [-L/8-1/2,L/8+1/2]$ (a case illustrated in the figure), 
   the unitary $\e^{i\round{x}P}$ is applied to the second mode.  This is a classically controlled displacement gate, i.e., it involves the parameter $\round{x}\in\mathbb{Z}$ which is computed classically. Note that both the squeezing gate $S(\log\kappa)$ and $\e^{i\round{x}P}$ need to be decomposed in terms of constant-strength squeezing and displacements gates respectively to obtain operations from the set~$\cG$ (see Section~\ref{sec: allowed operations}). 
	}\label{fig:envelope-protocol}
\end{figure}
The main result of this section is the following.

\begin{restatable}{theorem}{thmenvelopeshaping}
\label{thm:envelope gaussification}
There are constants $b_1, b_2>0$ such that the following holds.
Assume $\xi >0$, $\kappa\in(0,1/4)$, $\Delta\in(0,1/4)$ and $L \in 8\mathbb{N}$.
Let $\tau\in\cB(L^2(\mathbb{R}))$ be a state close to $\ket{\Sha_{L,\Delta}}$, i.e.,
\begin{align}
	\left\|\tau - \proj{\Sha_{L,\Delta}}\right\|_1\le \xi\ .
\end{align}
Given the comb state parameter $L$,  the squeezing parameter $\kappa$ (specifying a Gaussian envelope)
and the input state~$\tau$,  Protocol~\ref{prot:envelopeshaping}
accepts with probability at least
\begin{align}
    \Pr\Big[\textnormal{Protocol~\ref{prot:envelopeshaping} } \accepts \mid \tau
    \Big]\ge \frac18\left(1 - 2e^{-\kappa^2L^2/256}\right)-\frac52\sqrt{\Delta}-\frac\xi 2\ .\label{eq:acceptanceprobabilityclaims}
\end{align}
Conditioned on acceptance, the output state $\tau_\acc\in\cB(L^2(\mathbb{R}))$ on the second mode is close to the state $\ket{\gkp_{\kappa,\Delta}}$, i.e., 
\begin{align}
    \|\tau_\acc-\proj{\gkp_{\kappa,\Delta}}\|_1\le \frac{5\sqrt{\Delta}+\xi}{\frac14(1 - 2e^{-\kappa^2L^2/16})} +6\sqrt{\Delta} +6\kappa\sqrt{L} + 7e^{-\kappa^2 L^2/128} \ .\label{eq:closenessclaimm}
\end{align}
The protocol can be realized using at most $b_1 \log L + b_2 \log1/\kappa$ elementary operations.
\end{restatable}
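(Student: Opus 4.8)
The plan is to analyze Protocol~\ref{prot:envelopeshaping} in three stages: first on the idealized input $\ket{\Sha_{L,\Delta}}$ with \emph{no peak truncation}, then track the effect of peak-width $\Delta$, and finally transfer everything to the approximate input $\tau$ via the contractivity of $\|\cdot\|_1$ and the hypothesis $\|\tau-\proj{\Sha_{L,\Delta}}\|_1\le\xi$. For the idealized analysis, I would first replace the comb peaks by true position eigenstates, i.e., work with the formal state $\propto\sum_{z\in\cL_L}\ket{z}$, so that the squeezed ancilla $\ket{\eta_\kappa}$ in the first register, followed by $e^{-iP_1Q_2}$, acts as a ``controlled shift'': on the branch $\ket{z}$ of the second mode the first-register Gaussian $\eta_\kappa$ gets displaced to be centered at $z$. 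A homodyne measurement of $Q_1$ with outcome $x$ then collapses the second mode, with the branch $\ket{z}$ receiving amplitude $\propto\eta_\kappa(x-z)$; rounding $x$ to $\round{x}$ and applying $e^{i\round{x}P}$ re-centers the grid. One checks that for $x\in\Omega_L$ the post-correction conditional state is exactly (up to the truncation effects) the ``point-wise'' GKP state $\ket{\tGKP_{\kappa,\Delta}}$ built from the \emph{available} grid points, and that summing $|\eta_\kappa(x-z)|^2$ over $x\in\Omega_L$ and $z\in\cL_L$ gives the claimed acceptance probability $\tfrac18(1-2e^{-\kappa^2L^2/256})$ up to standard Gaussian tail estimates (the factor $1/8$ comes from $|\Omega_L|\approx L/4$ against the grid length $L$, and the exponential corrects for envelope mass lying outside $\Omega_L$ or outside $\cL_L$).

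The second stage is to reinstate the finite peak width $\Delta$. Here I would invoke Lemma~\ref{lem: doubling sha state base case}-style truncation bounds and the appendix estimates (Corollary~\ref{cor: gaussian - gaussian epsilon trace distance}, Corollary~\ref{cor: approximate tgkpL GKP trace dist}) to argue: (i) replacing $\ket{\Sha_{L,\Delta}}$ by its $\sqrt\Delta$-truncation costs $\le 5\sqrt\Delta$ in $L^1$; (ii) with truncated peaks of width $\Delta\le 1/4$ the branches for distinct integers have disjoint support, so the controlled-shift/measure/correct analysis goes through verbatim with position eigenstates replaced by the narrow Gaussians $\chi_\Delta(z)$, at the cost of the rounding $x\mapsto\round{x}$ being correct only up to the tail of $\chi_\Delta$ — contributing another $O(\sqrt\Delta)$; and (iii) the resulting conditional state is $O(\sqrt\Delta)$-close to the truncated point-wise GKP state, which is in turn $O(\sqrt\Delta+\kappa\sqrt L+e^{-\kappa^2L^2/128})$-close to $\ket{\gkp_{\kappa,\Delta}}$ by the peak-wise/point-wise comparison. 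Collecting these gives the $6\sqrt\Delta+6\kappa\sqrt L+7e^{-\kappa^2L^2/128}$ terms in~\eqref{eq:closenessclaimm}.

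The third stage handles the approximate input. Write $p_\acc(\tau)$ and $\tau_\acc$ for the acceptance probability and conditional output on input $\tau$. Since the pre-measurement map (prepare $\ket{\eta_\kappa}$, apply $e^{-iP_1Q_2}$) is an isometry and the homodyne POVM followed by the conditional correction is a quantum instrument, both $\tau\mapsto p_\acc(\tau)$ and $\tau\mapsto p_\acc(\tau)\tau_\acc$ are affine and $1$-Lipschitz (resp. trace-nonincreasing) in $\|\cdot\|_1$; hence $|p_\acc(\tau)-p_\acc(\proj{\Sha_{L,\Delta}})|\le\tfrac\xi2$, giving~\eqref{eq:acceptanceprobabilityclaims}, and the conditional states differ by at most $\xi/p_\acc$ after normalization, which produces the leading term $\tfrac{5\sqrt\Delta+\xi}{\frac14(1-2e^{-\kappa^2L^2/16})}$ in~\eqref{eq:closenessclaimm} (the $5\sqrt\Delta$ accounting for the truncation step done on $\proj{\Sha_{L,\Delta}}$ itself before the instrument is applied). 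Finally, the operation count: the squeezer $S(\log\kappa)$ decomposes into $O(\log1/\kappa)$ constant-strength squeezers as in~\eqref{eq:squeezingdecomposition}; the two-mode gate $e^{-iP_1Q_2}$ is a single bounded Gaussian two-mode unitary (its generator has bounded norm); the homodyne measurement and acceptance test cost $O(1)$; and the correction $e^{i\round{x}P}$ with $|\round{x}|\le L/8+1/2$ decomposes, by Lemma~\ref{lem:upperboundcoherentstatecomplexity} (the factorization~\eqref{eq:factorizationddgreat}), into $O(\log L)$ gates from $\cG$ — so the total is $b_1\log L+b_2\log1/\kappa$ for suitable constants.

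The main obstacle I anticipate is the careful bookkeeping in the second stage: making the ``controlled shift then measure then round then correct'' argument rigorous when the peaks have nonzero width $\Delta$, so that the outcome $x$ does not deterministically encode the correct integer shift. One must condition on the high-probability event that $x$ lands within $O(\sqrt\Delta)$ of an integer, show that on this event $\round{x}$ identifies the peak correctly, bound the contribution of the complementary event, and simultaneously control how the finite width interacts with the envelope-imprinting (the first-register Gaussian is centered not at the integer $z$ but smeared over the $\Delta$-wide peak). Keeping all these $O(\sqrt\Delta)$ and $O(\kappa\sqrt L)$ and $O(e^{-\kappa^2 L^2/\cdot})$ error terms separate and matching them to the stated constants is where the real work lies; the appendix lemmas on Gaussian overlaps and truncations are designed precisely to absorb these estimates.
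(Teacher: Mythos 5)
Your overall architecture matches the paper's: analyze the protocol on a ($\sqrt{\Delta}$-truncated) comb input, compare the accepted output to a ``point-wise'' GKP state, pass to $\ket{\gkp_{\kappa,\Delta}}$ via the appendix closeness lemmas (Corollary~\ref{cor: approximate tgkpL GKP trace dist}), transfer to the approximate input $\tau$ by a stability-of-heralding argument (the paper's Lemma~\ref{lem: heralded channel appx states dist pacc}, whose normalization step does require the careful variational argument rather than being automatic from $1$-Lipschitzness), and count gates exactly as the paper does. However, there is a genuine gap at the central quantitative step: your treatment of the non-integer part of the homodyne outcome. The measured value is distributed as (comb-peak position) $+$ (ancilla position), and the ancilla Gaussian $\eta_\kappa$ has width $\sim 1/\kappa \gg 1$; hence the fractional part $\delta(x)=x-\round{x}$ is essentially uniform on $[-1/2,1/2)$, and the event ``$x$ lands within $O(\sqrt{\Delta})$ of an integer'' has probability $O(\sqrt{\Delta})$, not $1-O(\sqrt{\Delta})$. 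So the strategy you outline for the hard part --- condition on that event, argue $\round{x}$ ``identifies the peak correctly,'' and bound the complement --- fails: the complementary event is the typical one, and the protocol's correctness does not rest on resolving which peak contributed.

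Concretely, for outcome $x=m+\delta$ the corrected conditional state is (in the idealized picture) $\propto\sum_{z\in\cL_L-m}\eta_\kappa(z-\delta)\ket{z}$: its envelope is centered at $\delta$, not $0$, and its window is shifted by $m$. This deviation from the fixed target $\ket{\tGKP^\varepsilon_{L,\kappa,\Delta}}$ does \emph{not} vanish as $\Delta\to0$, so your claim that the accepted state is ``$O(\sqrt{\Delta})$-close to the truncated point-wise GKP state'' is false as stated; the correct statement (the paper's Lemma~\ref{lem:translationcorrectedpost} via Lemmas~\ref{lem:convolutioneffect} and~\ref{lem:lowerboundmxpx}) is a bound uniform over $\delta\in[-1/2,1/2]$ and $|m|\le L/8$, losing $O(\kappa^2L)$ in overlap (because $\eta_\kappa(z-\delta)=\eta_\kappa(z)e^{\kappa^2 z\delta-\kappa^2\delta^2/2}$ over the window $|z|\lesssim L$) plus exponentially small window tails --- and this is precisely the origin of the $6\kappa\sqrt{L}+O(e^{-\kappa^2L^2/64})$ contribution, which you instead attribute entirely to the point-wise/peak-wise comparison. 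Without this uniform-in-$\delta$ argument your error budget for the accepted state has no valid derivation. (A smaller point: your heuristic for the acceptance constant gives $|\Omega_L|/L\approx1/4$, whereas the paper's Lemma~\ref{lem: prot 2 accept prob} obtains $1/8$ by counting only the $\ge L/8$ peaks whose shifted acceptance window contains $[-L/16,L/16]$.)
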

Before proving Eq.~\eqref{eq:acceptanceprobabilityclaims} and~\eqref{eq:closenessclaimm},
let us compute the number of elementary operations that are needed to implement Protocol~\ref{prot:envelopeshaping}. 

The protocol first prepares the Gaussian state $\ket{\eta_\kappa}=S(\log \kappa)\ket{\vac}$.  As we only allow for bounded strength operations, we will decompose the unitary~$S(\log\kappa)$ into consecutive bounded strength squeezing operators $S(z)$ with $z \in [-2\pi, 2\pi]$. An analysis similar to~\eqref{eq:squeezingdecomposition} shows that the unitary~$S(\log \kappa)$ can be realized by $\ceil{\log1/\kappa}$ gates of this form.

Subsequently, the unitary $e^{-iP_1Q_2}$ is applied. 
This is a Gaussian unitary of constant strength, hence it can be written as a product of a constant number of Gaussian unitaries from the set~$\cG$, (see the discussion in Section~\ref{sec: allowed operations}).

Then, the protocol performs a homodyne measurement of the first register that results in an outcome $x\in\mathbb{R}$. The classical outcome~$x$ is rounded to the next integer~$\round{x}\in\mathbb{Z}$. Then, the protocol applies a classically controlled shift (displacement) unitary $e^{i\round{x}P_2}$ depending on $\round{x}$. This operation again does not have bounded strength in general (we have $|x|\leq L/8+1/2$, meaning that $\round{x}$ can scale with~$L$), and needs to be decomposed into gates from $\cG$. From Lemma~\eqref{lem:upperboundcoherentstatecomplexity}, we know that we can decompose the unitary $e^{i\round{x}P_2}$ using at most $2\ceil{|\log\round{x}|} + 3$ unitaries from $\cG$. As 
the acceptance region is $\Omega_L = [-L/8-1/2, L/8+1/2]$, we conclude that the shift correction after acceptance needs at most $ 2\ceil{\log (L/8 +1/2)} + 3$ gates from $\cG$ to be implemented. As the state initialization of the vacuum $\ket{\vac}$ and the homodyne measurement requires a constant amount (two) elementary operations, choosing $b_1, b_2>0$ large enough shows the claim.

\subsection{Analysis of envelope-Gaussification with~$\ket{\Sha^\varepsilon_L}$ as input\label{sec:shastateinputanalysis}}
In this section, we analyze Protocol~\ref{prot:envelopeshaping}
in the case where the input state is~$\ket{\Sha^\varepsilon_L}$, i.e., a
 truncated comb state.

Specifically, we proceed as follows: In Section~\ref{sec: non adaptive prot}, we translate the circuit defined by Protocol~\ref{prot:envelopeshaping} to an equivalent circuit that 
is non-adaptive (to simplify the analysis). 
In Section~\ref{sec: implication trunc comb states}, we then show that upon acceptance, the output state is indeed a state close the desired approximate GKP state. In Section~\ref{sec: bound acc prob prot 2}, we show that the acceptance probability is lower bounded by a constant independent of the envelope parameters.

These results are subsequently used in Section~\ref{sec:envelope Gaussification error} to extend the analysis to input states that are close to a comb state.

\subsubsection{A non-adaptive  description of Protocol~\ref{prot:envelopeshaping} \label{sec: non adaptive prot}}
We note that Protocol~\ref{prot:envelopeshaping} (cf. Fig.~\ref{fig:envelope-protocol}) is adaptive, i.e., it involves a unitary  (the unitary $e^{i\round{x}R}$) whose parameter is classically controlled and determined by the measurement result~$x\in\mathbb{R}$. To analyse 
Protocol~\ref{prot:envelopeshaping}, it will be convenient to consider a non-adaptive version of the circuit depicted in Fig.~\ref{fig:envelope-protocol}. The non-adaptive circuit contains a unitary that is outside our allowed gate set~$\cG$: This is the (non-Gaussian) unitary~$e^{i\round{Q_1}P_2}$. 
Here, the  operator $\round{Q}$
acts in position-space on functions $\Psi$ as a multiplication operator, i.e.,
\begin{align}
 (\round{Q}\Psi)(x)&=\round{x}\Psi(x)\qquad\textrm{ for }\qquad x\in\mathbb{R}\ .
\end{align}
That is, we consider the non-adaptive circuit in Fig.~\ref{fig:envelope-shaping nonadaptive} which is equivalent to the circuit in Fig.~\ref{fig:envelope-protocol adaptive}.
\begin{figure}[!ht]
\centering
 \begin{subfigure}{0.9\textwidth}
 \begin{center}
 \includegraphics{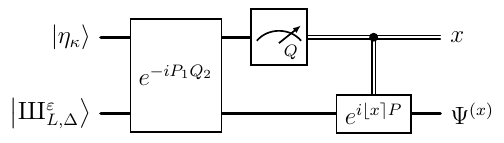}
	 	\caption{Adaptive circuit implementing Protocol~\ref{prot:envelopeshaping}. The last gate in the circuit is classically controlled by the parameter $\lfloor x\rceil\in\mathbb{Z}$, a function of the measurement result~$x\in\mathbb{R}$.
	}\label{fig:envelope-protocol adaptive}
 \end{center}
\end{subfigure}

\hfill

\begin{subfigure}{0.9\textwidth}
\begin{center}
\includegraphics{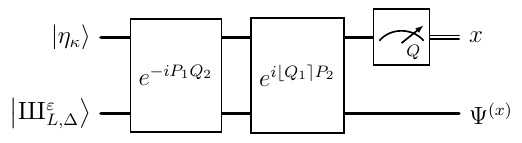}
        \caption{Non-adaptive circuit implementing Protocol~~\ref{prot:envelopeshaping}\label{fig:envelope-shaping nonadaptive}.
        Here the unitary before the measurement does not belong to the set of operations~$\cG$ as it is non-Gaussian. All unitaries are non-adaptive (i.e., they are not controlled by measurement results). 
        }
 \end{center}
 \
 \end{subfigure}
\caption{\label{}Two equivalent circuits realizing the envelope-Gaussification protocol (Protocol~\ref{prot:envelopeshaping}) on input~$\ket{\Sha^\varepsilon_{L,\Delta}}$. 
The first one is identical to that shown in Fig.~\ref{fig:envelope-protocol}, up to the fact that the squeezed vacuum state~$\ket{\eta_\kappa}=S(\log\kappa)\ket{\vac}$ is drawn as an input. We use the second circuit to analyze the behavior of the first.}
\end{figure}

We will use the notion of a quantum instrument to describe the homodyne position-measurement and the associated post-measurement state of the circuit in Fig.~\ref{fig:envelope-shaping nonadaptive}. Quantum instruments are based on completely positive trace-non-increasing maps (CPTNIM)~$\cK:\cB(\cH)\rightarrow\cB(\cH')$ defined on two Hilbert spaces~$\cH,\cH'$, see e.g.,~\cite{Holevo+2013} for further details. An instrument is a CPTNIM-valued measure on a suitable measure space. In the case of homodyne position-measurement, the measure space is given by the Borel-$\sigma$-algebra of~$\mathbb{R}$. If the position of the first mode of a bipartite system of two oscillators is measured, then $\cH\cong L^2(\mathbb{R})^{\otimes 2}=:\cH_1\otimes\cH_2$ and $\cH'\cong L^2(\mathbb{R})$, and the instrument~$\cK$ associated with homodyne position-measurement  is defined by
\begin{align}
\begin{matrix*}[c]
\cK[A]:&\cB(L^2(\mathbb{R})^{\otimes 2}) & \rightarrow & \cB(L^2(\mathbb{R}))\\
&\rho & \mapsto &\cK[A](\rho)=\tr_{\cH_1}\left((\Pi_A\otimes \mathsf{id})\rho\right)\ ,
\end{matrix*} \label{eq: def cK map}
\end{align}
for any Borel-set $A\subseteq\mathbb{R}$. In this expression,  $\Pi_A$  denotes the orthogonal projection onto the subspace~$L^2(\mathbb{R})$ of functions~$\Psi$ having support contained in~$A$. The interpretation is as follows: For a bipartite state~$\rho\in \cB(L^2(\mathbb{R})^{\otimes 2})$, the measurement outcome~$X$ is a random variable satisfying
\begin{align}
\Pr\left[X\in A\right]&=\tr\cK[A](\rho)\ .
\end{align}
Furthermore, if this expression is non-zero, then the conditional post-measurement state~$\rho_{|A}$ conditioned on the event~$X\in A$ is given by the expression
\begin{align}
\rho_{|A}&=\frac{\cK[A](\rho)}{\tr\cK[A](\rho)}\ . \label{eq:rhoA}
\end{align}

The following lemma gives an expression for the overlap of this conditional state when measuring the first mode of 
a state of the form~$e^{i\round{Q_1}P_2}e^{-iP_1Q_2}(\Psi_1\otimes \Psi_2)$. This will be useful to analyse the circuit in Fig.~\ref{fig:envelope-shaping nonadaptive}.

\begin{lemma}\label{lem:convolutioneffect}
Let $\Psi_1,\Psi_2\in L^2(\mathbb{R})$ be two states. Assume that~$\Psi_1$ is even, i.e., 
\begin{align}
\Psi_1(x)&=\Psi_1(-x)\qquad\textrm{ for all }x\in\mathbb{R}\ .\label{eq:evenconditionpsi2}
\end{align}
Define 
\begin{align}
\Phi(y)= \frac{1}{\sqrt{p(0)}}\Psi_1(y)\Psi_2(y)\qquad\textrm{ for all }y\in\mathbb{R}\  \textrm{ where }\qquad p(0)=\int_\mathbb{R} |\Psi_1(z)\Psi_2(z)|^2dz\ .
\end{align}
(That is, $\Phi=M_{\Psi_1} \Psi_2/\|M_{\Psi_1} \Psi_2\|$ where $M_{\Psi_1}:L^2(\mathbb{R})\rightarrow L^2(\mathbb{R})$ is the multiplication operator
which outputs the pointwise product of~$\Psi_1$ and the function it is applied to.) 
Suppose we apply a position-measurement to the first mode of the state
\begin{align}
    \ket{\Psi}=e^{i\round{Q_1}P_2}e^{-iP_1Q_2}(\ket{\Psi_1}\otimes \ket{\Psi_2})\ .
\end{align}
For $x\in\mathbb{R}$ and its associated non-integer part $\delta(x)=x-\round{x}$, we define  
\begin{align}
    p(x)&=\int_{\mathbb{R}} |\Psi_1(x-y)|^2|\Psi_2(y)|^2 dy \label{eq:tgkp p}\\
    m(x)&=\int_{\mathbb{R}}  \overline{\Psi_1(y - \delta(x))} \Psi_1(y)  \overline{\Psi_2(\round{x} + y)} \Psi_2(y)dy\ .
\end{align}
Let $A\subseteq \mathbb{R}$ be a Borel set. Assume that 
$p(A)=\Pr\left[X\in A\right]>0$ for the measurement outcome~$X$.  Let $\rho|_{A}$ denote the corresponding post-measurement state on the second mode conditioned on the event~$X\in A$ that the measurement outcome belongs to~$A$.
 
 Then, the probability of the event~$X\in A$ is equal to
\begin{align}
    p(A)&=\int_A p(x) dx
    \end{align}
    and the conditional state~$\rho_{|A}$
conditioned on the event~$X\in A$ satisfies
\begin{align}
    \bra{\Phi}\rho_{|A}\ket{\Phi}
    &=\frac{1}{p(0) p(A)}\cdot \int_A |m(x)|^2 dx\label{eq:PhirhocondaPhi}\ .
\end{align}
\end{lemma}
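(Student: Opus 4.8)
The plan is to work entirely in the position representation: first compute the wavefunction of $\ket{\Psi}=e^{i\round{Q_1}P_2}e^{-iP_1Q_2}(\ket{\Psi_1}\otimes\ket{\Psi_2})$, then read off the outcome statistics and the conditional state via the instrument formalism of~\eqref{eq: def cK map}--\eqref{eq:rhoA}. For the first step, note that $e^{-iP_1Q_2}$, acting in the eigenbasis of $Q_2$, translates the first mode's position by the value of $Q_2$, so $\Psi_1\otimes\Psi_2\mapsto\bigl((x_1,x_2)\mapsto\Psi_1(x_1-x_2)\Psi_2(x_2)\bigr)$; and $e^{i\round{Q_1}P_2}$, acting in the eigenbasis of $Q_1$, translates the second mode's position by $\round{Q_1}$. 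Composing the two and writing $\delta(x_1)=x_1-\round{x_1}$ yields
\begin{align}
\Psi(x_1,x_2)=\Psi_1\bigl(x_1-x_2-\round{x_1}\bigr)\,\Psi_2\bigl(x_2+\round{x_1}\bigr)=\Psi_1\bigl(\delta(x_1)-x_2\bigr)\,\Psi_2\bigl(x_2+\round{x_1}\bigr)\ .
\end{align}

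Next I would decompose $\ket{\Psi}=\int_{\mathbb{R}}\ket{x_1}\otimes\ket{\psi_{x_1}}\,dx_1$, where $\psi_{x_1}$ denotes the function $x_2\mapsto\Psi(x_1,x_2)$. Then the instrument~\eqref{eq: def cK map} gives $\cK[A](\proj{\Psi})=\int_A\proj{\psi_{x_1}}\,dx_1$, so that $\Pr[X\in A]=\tr\cK[A](\proj{\Psi})=\int_A\|\psi_{x_1}\|^2\,dx_1$. Applying the change of variables $y=x_2+\round{x}$ to $\|\psi_x\|^2=\int|\Psi_1(\delta(x)-x_2)|^2|\Psi_2(x_2+\round{x})|^2\,dx_2$ turns the integrand into $|\Psi_1(x-y)|^2|\Psi_2(y)|^2$, i.e.\ into $p(x)$ as defined in~\eqref{eq:tgkp p}; hence $p(A)=\int_A p(x)\,dx$, which is the first claim. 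The identity~\eqref{eq:rhoA} then gives $\bra{\Phi}\rho_{|A}\ket{\Phi}=\tfrac{1}{p(A)}\int_A|\langle\Phi,\psi_x\rangle|^2\,dx$.

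It remains to identify $\langle\Phi,\psi_x\rangle$. Substituting the definitions of $\Phi$ and $\psi_x$,
\begin{align}
\langle\Phi,\psi_x\rangle=\frac{1}{\sqrt{p(0)}}\int_{\mathbb{R}}\overline{\Psi_1(x_2)\Psi_2(x_2)}\;\Psi_1\bigl(\delta(x)-x_2\bigr)\,\Psi_2\bigl(x_2+\round{x}\bigr)\,dx_2\ .
\end{align}
This is the step where the hypothesis that $\Psi_1$ is even (Eq.~\eqref{eq:evenconditionpsi2}) is used: it lets me replace $\Psi_1(\delta(x)-x_2)$ by $\Psi_1(x_2-\delta(x))$. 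Renaming $x_2=y$ and comparing with the definition of $m(x)$, one recognizes $\langle\Phi,\psi_x\rangle=\overline{m(x)}/\sqrt{p(0)}$, whence $|\langle\Phi,\psi_x\rangle|^2=|m(x)|^2/p(0)$, and plugging this into the expression for $\bra{\Phi}\rho_{|A}\ket{\Phi}$ above produces exactly~\eqref{eq:PhirhocondaPhi}.

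I do not expect a genuine obstacle in the form of a hard estimate — every algebraic step is short — but rather the usual care demanded by the continuous-variable formalism: justifying the direct-integral (fibre) decomposition of $\ket{\Psi}$, the action of the subspace projection $\Pi_A$ on improper position eigenstates, and the interchanges of integration order (Fubini) needed to pass between $p$, $m$ and the trace/overlap of $\cK[A](\proj{\Psi})$. One should also be alert to the fact that the change of variables $y=x_2+\round{x}$ is used consistently in both the probability computation and the overlap computation, and that the evenness of $\Psi_1$ is invoked at precisely the point where $\Psi_1(\delta(x)-x_2)$ must be matched against the factor $\overline{\Psi_1(y-\delta(x))}$ appearing in $m(x)$ (the formula for $p(A)$, by contrast, does not require it).
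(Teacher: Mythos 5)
Your proposal is correct and follows essentially the same route as the paper's proof: compute the position-space wavefunction $\Psi(x,y)=\Psi_1(\delta(x)-y)\Psi_2(\round{x}+y)$, apply the homodyne instrument, and identify the fibre overlap with $m(x)/\sqrt{p(0)}$ via the evenness of $\Psi_1$. The only cosmetic differences are that the paper invokes evenness already when writing the wavefunction (and leaves the $p(A)=\int_A p(x)\,dx$ computation implicit), whereas you defer it to the overlap step and spell out the probability calculation explicitly.
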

\begin{proof}
Let $\ket{\Psi'}=e^{-iP_1Q_2}(\ket{\Psi_1}\otimes\ket{\Psi_2})$. Then, we have 
\begin{align}
\Psi'(x,y)&=(e^{-iyP_1}\Psi_1)(x) \Psi_2(y)=\Psi_1(x-y)\Psi_2(y)\qquad\textrm{ for all }\qquad (x,y)\in\mathbb{R}^2\ .
\end{align}
Since $\ket{\Psi}=e^{i\round{Q_1}P_2}\ket{\Psi'}$, it follows that 
\begin{align}
\Psi(x,y)=\Psi'(x,y+\round{x})=\Psi_1(x-(y+\round{x}))\Psi_2(x+y)
&=\Psi_1(y-\delta(x))\Psi_2(\round{x}+y)\ ,
\end{align}
for all $(x,y)\in\mathbb{R}^2$, where we used that~$\Psi_1$ is even in the last step.
By definition of the conditional post-measurement state~$\rho_{|A}=p(A)^{-1}\cK[A](\rho)$, we have 
\begin{align}
\bra{\Phi}\rho_{|A}\ket{\Phi}&=\frac{1}{p(A)}\bra{\Psi}(\Pi_A\otimes \proj{\Phi})\ket{\Psi}\\
&=\frac{1}{ p(A)} 
\int_{A}dx 
\int_{\mathbb{R}^2}dydy' 
\overline{\Psi(x,y)} \Phi(y)
\overline{\Phi(y')} \Psi(x,y')\\
&=\frac{1}{ p(A)} 
\int_{A}dx 
\int_{\mathbb{R}^2}dydy' 
\overline{\Psi_1(y-\delta(x))\Psi_2(\round{x}+y)} \Phi(y)\\ &\qquad\qquad\qquad\qquad\qquad  \cdot\overline{\Phi(y')} \Psi_1(y'- \delta)\Psi_2(\round{x} + y')\\
&=\frac{1}{p(0) p(A)} 
\int_{A}dx 
\int_{\mathbb{R}}dy
\overline{\Psi_1(y - \delta(x))}  \Psi_1(y) \overline{\Psi_2(\round{x} + y)} \Psi_2(y) \\
&\qquad\qquad\qquad\qquad \cdot \int_{\mathbb{R}}dy' \overline{\Psi_1(y')}  \Psi_1(y' - \delta(x)) \overline{\Psi_2(y')} \Psi_2(\round{x}+y')\\
&=\frac{1}{p(0) p(A)} 
\int_{A}dx 
\left|\int_{\mathbb{R}} dy \overline{\Psi_1(y -\delta(x)})  \Psi_1(y) \overline{\Psi_2(\round{x} + y)} \Psi_2(y) \right|^2\ .
\end{align}
The claim follows from this. 
\end{proof}

\subsubsection{Implications for approximate comb states \label{sec: implication trunc comb states}}
We use Lemma~\ref{lem:convolutioneffect} to analyse the circuit in Fig.~\ref{fig:envelope-shaping nonadaptive}. To do so we apply 
it to the state~$\ket{\Psi_1}=\ket{\eta_\kappa}$ and the state~$\ket{\Psi_2}=\ket{\Sha^{\varepsilon}_{L,\Delta}}$.
 In this case, the state~$\ket{\Phi}$ of Lemma~\ref{lem:convolutioneffect}  is equal to
\begin{align}
    \Phi(y)
    &=\frac{1}{\sqrt{p(0)}} \eta_\kappa(y)\Sha^{\varepsilon}_{L,\Delta}(y)\ .
\end{align}
That is, the state $\ket{\Phi}=\ket{\tGKP^\varepsilon_{L,\kappa,\Delta}}$, where 
\begin{align}
    \tGKP^\varepsilon_{L,\kappa,\Delta}(x)
    &=\frac{1}{\sqrt{p(0)}}
    \frac{1}{\sqrt{L}}\sum_{z=-L/2}^{L/2-1} 
    \eta_\kappa(x) \chi^\varepsilon_\Delta(z)(x)\, ,\label{eq:postmeasurementstateyoutcomedef}
\end{align}
is a truncated version of the state~$\ket{\tGKP_{L,\kappa,\Delta}}$ defined by~\eqref{eq:pointwisegkp}. 
For later reference, we note that the norm
of this state is equal to
\begin{align}
1&=\left\|\tGKP^\varepsilon_{L,\kappa,\Delta} \right\|^2=\frac{1}{L p(0)}\sum_{k=-L/2}^{L/2-1}I_k(0)\ ,\label{eq:postmeasurementgkpkappamv}
\end{align}
where we use the expression
\begin{align}
I_k(\delta)&=\int\chi^\varepsilon_\Delta(k)(y)^2\eta_\kappa(y-\delta)^2 dy\ .\label{eq:ikdeltadefinitionuse}
\end{align}
(Eq.~\eqref{eq:postmeasurementgkpkappamv}
follows because the functions~$\{\chi_\Delta^\varepsilon(z)\}_{z\in \{-L/2,\ldots,L/2-1\}}$ have pairwise-disjoint supports for~$\varepsilon<1/2$.)

The corresponding probability density functions of outcomes is equal to
\begin{align}
    p(x)&=
\int \eta_\kappa(x-y)^2 |\Sha^{\varepsilon}_{L,\Delta}(y)|^2 dy\\
   &= \int \eta_\kappa(y-\delta(x))^2 |\Sha^{\varepsilon}_{L,\Delta}(y+\round{x})|^2dy\ ,
    \end{align}
where we used that $x=\round{x}+\delta(x)$, that $\eta_\kappa(z)=\eta_\kappa(-z)$ for $z\in\mathbb{R}$, and where we substituted~$y-\round{x}$ for $y$. Inserting the definition of~$\Sha^\varepsilon_{L,\Delta}$ into the above, we obtain
    \begin{align}
  p(x)  &=\frac1L \int \eta_\kappa(y-\delta(x))^2 \left(\sum_{z=-L/2}^{L/2-1} \chi^\varepsilon_\Delta(z)(y+\round{x})\right)^2 dy\\
    &=\frac1L \int \eta_\kappa(y-\delta(x))^2 \left(\sum_{z=-L/2}^{L/2-1} \chi^\varepsilon_\Delta(z-\round{x})(y)\right)^2 dy\\
    &=\frac1L \int\eta_\kappa(y-\delta(x))^2\left(\sum_{k=-L/2-
    \round{x}}^{L/2-\round{x}-1} \chi^\varepsilon_\Delta(k)(y)\right)^2 dy\ ,
\end{align}
where we first used that
\begin{align}
    \chi_\Delta^\varepsilon(z)(y+\round{x})&=\Psi^\varepsilon_\Delta(y+\round{x}-z)=
    \Psi^\varepsilon_\Delta(y-(z-\round{x}))=\chi^\varepsilon_\Delta(z-\round{x})(y)\label{eq:integralsquaredetaydetla}
\end{align}
by the symmetry of the truncated centered Gaussian~$\Psi^\varepsilon_\Delta$, and then changed the index of summation.
We can write~\eqref{eq:integralsquaredetaydetla}  as
\begin{align}
p(x)   &=\frac1L \sum_{k_1=-L/2-\round{x}}^{L/2-\round{x}-1} \sum_{k_2=-L/2-\round{x}}^{L/2-\round{x}-1} M_{k_1,k_2}(x)\ ,\label{eq:pdfoutcomex}
\end{align}
where 
\begin{align}
    M_{k_1,k_2}(x)&:=\int \eta_\kappa(y-\delta(x))^2 \chi^\varepsilon_\Delta(k_1)(y)\chi^\varepsilon_\Delta(k_2)(y) dy\ . \label{eq: Mkk}
\end{align}
Because $\varepsilon<1/2$,  the expression~$\chi^\varepsilon_\Delta(k_1)(y)\chi^\varepsilon_\Delta(k_2)(y)$ can be non-zero only if $k_1=k_2$. Thus, the integral $M_{k_1,k_2}(x)$ vanishes unless $k_1=k_2$, and we have $M_{k,k}(x)=I_k(\delta(x))$
where $I_k(\delta)$ is defined by~\eqref{eq:ikdeltadefinitionuse}.  
We conclude that 
\begin{align}
    p(x)&=\frac{1}{L}\sum_{k=-L/2-\round{x}}^{L/2-1-\round{x}}I_k(\delta(x)) \\
    &= \frac{1}{L}\sum_{k=-L/2}^{L/2-1}I_{k-\round{x}}(\delta(x))\\
    &=\frac{1}{L} \sum_{k=-L/2}^{L/2-1} I_{k} (\round{x}+\delta(x))\\
    &=\frac{1}{L}\sum_{k=-L/2}^{L/2-1} I_{k}(x) \qquad\textrm{ since $x = \round{x} + \delta(x)$}\label{eq:pdf outcomes x Ik}\ ,
\end{align}
where we relabelled the index of the sum in the second step and used that $I_{k-\round{x}}(\delta(x))=I_k(\round{x}+\delta(x))$ in the third line (cf.~\eqref{eq:ikmdeltarelation} in the Appendix for a proof of this relation) .
For later reference, we observe that we can rephrase~\eqref{eq:pdf outcomes x Ik} as follows. For an integer~$m\in\mathbb{Z}$ and $\delta\in[-1/2,1/2)$, the probability density function $p(m+\delta)$ is equal to the following two expressions:
\begin{align}
p(m+\delta) &= 
\frac{1}{L}\sum_{k=-L/2-m}^{L/2-m-1}I_{k}(\delta)=
\frac{1}{L}\sum_{k=-L/2}^{L/2-1}I_k(m+\delta)\ 
\label{eq:pdeltabound}\ .
\end{align}

We will show that the conditional post-measurement state~$\rho_{|\Omega_L}$ in the second register, which is conditioned on the measurement result~$X$ in the first register being  in the acceptance region
\begin{align}
    \Omega_L =[-L/8-1/2,L/8+1/2]\label{eq:acceptancesetomegaL}
\end{align}
of Protocol~\ref{prot:envelopeshaping} (line~\ref{prot: gaussification acceptance}), i.e., conditioned on the event~$X\in \Omega_L$, 
is close to the state $\ket{\tGKP^\varepsilon_{L,\kappa,\Delta}}$. 
\begin{lemma}\label{lem:translationcorrectedpost}
Let $L\in 8\mathbb{N}$ be an integer multiple of~$8$, let $\Delta>0$, let $\kappa >0$, let $\varepsilon \in (0,1/2)$, and let $\Omega_L\subset\mathbb{R}$ be the set~\eqref{eq:acceptancesetomegaL} of measurement results~$x\in\mathbb{R}$ for which Protocol~\ref{prot:envelopeshaping}  accepts. 
Let  us denote by
$\rho_{|\Omega_L}(\varepsilon,\kappa,\Delta,L)\in \cB(L^2(\mathbb{R}))$ the output state conditioned on the event that the protocol accepts 
 on input~$(\ket{\Sha^\varepsilon_{L,\Delta}},\kappa,L)$. Then, 
\begin{align}
    \bra{\tGKP^\varepsilon_{L,\kappa,\Delta}}\rho_{|\Omega_L}(\varepsilon,\kappa,\Delta,L)\ket{\tGKP^\varepsilon_{L,\kappa,\Delta}}&\ge 1-3\kappa^2 L/2 -4e^{-\kappa^2 L^2/32}\ .
\end{align}
\end{lemma}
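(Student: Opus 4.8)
The plan is to apply Lemma~\ref{lem:convolutioneffect} with $\ket{\Psi_1}=\ket{\eta_\kappa}$ (which is even) and $\ket{\Psi_2}=\ket{\Sha^\varepsilon_{L,\Delta}}$, and with $A=\Omega_L$. In this case the state $\ket{\Phi}$ of Lemma~\ref{lem:convolutioneffect} equals $\ket{\tGKP^\varepsilon_{L,\kappa,\Delta}}$ (cf.\ Eq.~\eqref{eq:postmeasurementstateyoutcomedef}), so the lemma gives
\begin{align}
    \bra{\tGKP^\varepsilon_{L,\kappa,\Delta}}\rho_{|\Omega_L}(\varepsilon,\kappa,\Delta,L)\ket{\tGKP^\varepsilon_{L,\kappa,\Delta}}=\frac{\int_{\Omega_L}m(x)^2\,dx}{p(0)\,\int_{\Omega_L}p(x)\,dx}\ ,
\end{align}
with $m(x),p(x)$ as in Lemma~\ref{lem:convolutioneffect} (and $m(x)\ge 0$ since $\eta_\kappa$ and the $\chi^\varepsilon_\Delta(k)$ are real and nonnegative). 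It therefore suffices to prove, for every $x\in\Omega_L$, the pointwise bound $m(x)^2\ge\big(1-\tfrac32\kappa^2L-4e^{-\kappa^2L^2/32}\big)\,p(0)\,p(x)$: integrating this over $\Omega_L$ and dividing by $p(0)\int_{\Omega_L}p(x)\,dx$ yields the claim.

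To prove the pointwise bound, fix $x\in\Omega_L$ and set $\delta=\delta(x)=x-\round{x}\in[-\tfrac12,\tfrac12)$. Since the truncated peaks $\chi^\varepsilon_\Delta(k)$ have pairwise disjoint supports ($\varepsilon<1/2$), the double sums defining $m(x)$ and $p(x)$ collapse to diagonal sums. With $I_k(\delta)=\int\chi^\varepsilon_\Delta(k)(y)^2\,\eta_\kappa(y-\delta)^2\,dy$ as in Eq.~\eqref{eq:ikdeltadefinitionuse}, one obtains $L\,p(0)=\sum_{k\in B}I_k(0)$ with $B=\{-L/2,\dots,L/2-1\}$ (cf.\ Eq.~\eqref{eq:postmeasurementgkpkappamv}), $L\,p(x)=\sum_{k\in A}I_k(\delta)$ with $A=\{-L/2-\round{x},\dots,L/2-1-\round{x}\}$ (cf.\ Eq.~\eqref{eq:pdfoutcomex}), and
\begin{align}
    L\,m(x)=\sum_{k\in S_x}\int\eta_\kappa(y-\delta)\,\eta_\kappa(y)\,\chi^\varepsilon_\Delta(k)(y)^2\,dy\ ,\qquad S_x=A\cap B\ ,\quad |S_x|=L-|\round{x}|\ .
\end{align}
I would then control two error mechanisms. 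First, the \emph{envelope shift}: on the support of $\chi^\varepsilon_\Delta(k)$ one has $|y|\le|k|+\tfrac12$, together with the identities $\eta_\kappa(y-\delta)\eta_\kappa(y)=\eta_\kappa(y)^2 e^{\kappa^2 y\delta-\kappa^2\delta^2/2}$ and $\eta_\kappa(y-\delta)^2=\eta_\kappa(y)^2 e^{2\kappa^2 y\delta-\kappa^2\delta^2}$; since $|k|\le L/2$ and $|\delta|\le 1/2$, the exponents are $O(\kappa^2L)$, which gives $\int\eta_\kappa(y-\delta)\eta_\kappa(y)\chi^\varepsilon_\Delta(k)(y)^2\,dy\ge I_k(0)(1-O(\kappa^2L))$ and $I_k(\delta)\le I_k(0)(1+O(\kappa^2L))$ for all relevant $k$, with explicit constants that sum to $\tfrac32$ with room to spare. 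Second, the \emph{edge peaks}: the indices in $B\setminus S_x$ and in $A\setminus S_x$ (at most $|\round{x}|$ of each) obey $|k|\ge L/2-|\round{x}|\ge 3L/8$ because $x\in\Omega_L=[-L/8-\tfrac12,L/8+\tfrac12]$; hence $I_k(0)\le\eta_\kappa(3L/8)^2=\tfrac{\kappa}{\sqrt\pi}e^{-9\kappa^2L^2/64}$, so the total weight carried by these indices, divided by $L\,p(0)$, is $O(e^{-\kappa^2L^2/32})$ --- here one also uses that $L\,p(0)=\sum_{k\in B}I_k(0)\ge\sum_{|k|\le 1/(2\kappa)}I_k(0)$ is bounded below by a universal constant as soon as $\kappa L\ge1$, whereas for $\kappa L<1$ the asserted bound is vacuous.

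Putting the pieces together, $L\,m(x)\ge(1-O(\kappa^2L))\big(L\,p(0)-(\text{edge term})\big)$, while $\sqrt{L\,p(0)}\sqrt{L\,p(x)}\le\sqrt{L\,p(0)}\sqrt{(1+O(\kappa^2L))L\,p(0)+(\text{edge term})}\le L\,p(0)\big(1+O(\kappa^2L)+O(e^{-\kappa^2L^2/32})\big)$; dividing and squaring (using $(1-t)^2\ge 1-2t$) yields $m(x)^2\ge\big(1-\tfrac32\kappa^2L-4e^{-\kappa^2L^2/32}\big)p(0)p(x)$, and integration over $\Omega_L$ finishes the proof. I expect the main obstacle to be the bookkeeping in the second mechanism: one must use the acceptance region $\Omega_L$ essentially, so that every index belonging to one of $A,B,S_x$ but not to another is forced far into the Gaussian tail of the envelope (distance at least $3L/8$ from the origin), and then check that the resulting exponentially small \emph{absolute} errors remain small \emph{relative} to $p(0)\sim 1/L$. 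This is precisely where the slack between the available exponent $e^{-9\kappa^2L^2/64}$ and the claimed $e^{-\kappa^2L^2/32}$ is spent to absorb polynomial-in-$(\kappa,L)$ prefactors, with the complementary observation that the bound is vacuous unless $\kappa L$ is large.
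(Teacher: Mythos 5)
Your proposal is correct and follows the same skeleton as the paper's proof: apply Lemma~\ref{lem:convolutioneffect} with $\Psi_1=\eta_\kappa$, $\Psi_2=\Sha^\varepsilon_{L,\Delta}$, reduce the claim to a pointwise lower bound on $m(x)^2/\bigl(p(0)p(x)\bigr)$ for $x\in\Omega_L$ (the paper phrases this as an infimum over $\Omega_L$, you integrate the pointwise bound --- same thing), collapse the double sums to diagonal sums of the integrals $I_k$ using the disjoint supports of the truncated peaks, and compare shifted and unshifted envelopes via the multiplicative identities for $\eta_\kappa(y-\delta)\eta_\kappa(y)$ and $\eta_\kappa(y-\delta)^2$. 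Where you genuinely deviate is the tail bookkeeping for the mismatch between the window $\{-L/2,\ldots,L/2-1\}$ and its shift by $\round{x}$: the paper drops the outer peaks, restricts the $m(x)$-sum to $\{-L/4,\ldots,L/4-1\}$, and invokes the nested-window ratio estimates of Lemma~\ref{lem:fractionalboundmnv} and Lemma~\ref{lem:deltazeroonehalfintegermtwo}, which deliver the factors $(1-e^{-\kappa^2L^2/8})$ and $(1-e^{-\kappa^2L^2/32})$ directly and never require any information about the size of $p(0)$; you instead keep the full intersection $S_x$, bound each of the at most $L/8$ edge peaks by an absolute Gaussian-tail estimate (strictly it should be $\eta_\kappa(3L/8-\varepsilon)^2$ rather than $\eta_\kappa(3L/8)^2$, but the slack between $9/64$ and $1/32$ in the exponent absorbs this), and turn the absolute error into a relative one via a universal-constant lower bound on $Lp(0)$, which forces your (correct) observation that the inequality is vacuous unless $\kappa L$ is large and hence $\kappa<1$. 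Both routes close: the paper's buys a case-free argument with the constants falling out of already-proven appendix lemmas; yours is more elementary and self-contained, at the cost of a vacuousness case split and some constant-chasing (envelope-shift exponents summing to at most $3\kappa^2L/4$ before squaring, edge contributions fitting inside $4e^{-\kappa^2L^2/32}$) that you assert ``with room to spare'' rather than carry out --- I verified the slack is indeed there, so this is a matter of exposition rather than a gap.
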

\begin{proof}
Because the state~$\ket{\Phi}$ defined in Lemma~\ref{lem:convolutioneffect} is equal to~$\ket{\Phi}=\ket{\tGKP^\varepsilon_{L,\kappa,\Delta}}$, the claim follows from Eq.~\eqref{eq:PhirhocondaPhi},  which can be rewritten as
\begin{align}
\bra{\tGKP^\varepsilon_{L,\kappa,\Delta}}\rho_{|\Omega_L}(\varepsilon,\kappa,\Delta,L)\ket{\tGKP^\varepsilon_{L,\kappa,\Delta}}&=
\int_{\Omega_L} \frac{p(x)}{p(\Omega_L)} \cdot \frac{|m(x)|^2}{p(0)p(x)} dx\ 
\end{align}
because $p(x)>0$ for all $x\in \Omega_L$ (in fact,  we even have $p(x)>0$ for all $x\in\mathbb{R}$).
Observing that $p(x)/p(\Omega_L)=p(x|\Omega_L)$ 
 is the conditional distribution given that the measurement result satisfies~$X\in \Omega_L$,  it follows that
\begin{align}
\bra{\tGKP^\varepsilon_{L,\kappa,\Delta}}\rho_{|\Omega_L}(\varepsilon,\kappa,\Delta, L)\ket{\tGKP^\varepsilon_{L,\kappa,\Delta}}&= 
\int_{\Omega_L}p(x|\Omega_L)
\frac{|m(x)|^2}{p(0)p(x)} dx\geq \inf_{x\in \Omega_L} \frac{m(x)^2}{p(0)p(x)} \ .
\end{align}
 Here, we used that  $m(x)\geq 0$ for all $x\in\mathbb{R}$. 
  The claim is a consequence of this inequality, the definition~\eqref{eq:acceptancesetomegaL} of~$\Omega_L$ and Lemma~\ref{lem:lowerboundmxpx} below.  
\end{proof}

\begin{lemma}\label{lem:lowerboundmxpx}
Suppose that $\kappa>0$, $\Delta>0$, $\varepsilon\in(0,1/2)$ and $L\in 8\mathbb{N}$. 
Let $x=m+\delta$ with $m=\round{x}$, $|m|\leq L/8$ and $|\delta|\leq 1/2$.
Consider the quantities $p(x)\in\mathbb{R}$, $m(x)\in\mathbb{C}$
defined by Lemma~\ref{lem:convolutioneffect} applied to~$\Psi_1=\eta_\kappa$ and~$\Psi_2=\Sha^\varepsilon_{L,\Delta}$.
 Then
\begin{align}
\frac{m(x)^2}{p(0)p(x)}\ge   
1-3\kappa^2 L/2-4e^{-\kappa^2 L^2/32}\ .
\end{align}
\end{lemma}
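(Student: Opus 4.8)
The plan is to rewrite $p(0)$, $p(x)$, $m(x)$ (for $x=m+\delta$, $m=\round x$, $|m|\le L/8$, $|\delta|\le 1/2$) as $L^2$-inner products of two explicit functions and to show the angle between those functions is small. Feeding $\Psi_1=\eta_\kappa$ (which is even) and $\Psi_2=\Sha^\varepsilon_{L,\Delta}$ into Lemma~\ref{lem:convolutioneffect}, using that distinct truncated peaks $\chi^\varepsilon_\Delta(j)$ have disjoint supports---so that $\Sha^\varepsilon_{L,\Delta}(m+y)\,\Sha^\varepsilon_{L,\Delta}(y)=\tfrac1L\sum_{k\in S_m}\chi^\varepsilon_\Delta(k)(y)^2$ with $S_m:=\{-L/2,\dots,L/2-1\}\cap\big(\{-L/2,\dots,L/2-1\}-m\big)$ of size $L-|m|$---and the identity $\eta_\kappa(y-\delta)\,\eta_\kappa(y)=e^{-\kappa^2\delta^2/4}\,\eta_\kappa(y-\delta/2)^2$, I obtain
\[ m(x)=\langle\psi_0,\psi_x\rangle,\qquad p(0)=\|\psi_0\|^2=\tfrac1L\sum_{k\in W}I_k(0),\qquad p(x)=\|\psi_x\|^2=\tfrac1L\sum_{k\in W_m}I_k(\delta), \]
where $\psi_0(y):=\eta_\kappa(y)\Sha^\varepsilon_{L,\Delta}(y)$, $\psi_x(y):=\eta_\kappa(y-\delta)\Sha^\varepsilon_{L,\Delta}(y+m)$, $W:=\{-L/2,\dots,L/2-1\}$, $W_m:=W-m$ (the last two identities are just \eqref{eq:postmeasurementgkpkappamv} and \eqref{eq:pdeltabound}). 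All integrands are $\ge 0$, so $m(x)\ge 0$ and $m(x)^2/(p(0)p(x))$ is exactly the squared cosine of the angle between $\psi_0$ and $\psi_x$; in particular it is $\le 1$. I also note that there is nothing to prove unless the claimed right-hand side is positive, which---as $\kappa<1/4$---forces $L\ge 32$; I use this freely below.

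The load-bearing step is a support decomposition. Put $S:=S_m=W\cap W_m$, so $\{-3L/8,\dots,3L/8-1\}\subseteq S$ because $|m|\le L/8$. Split $\psi_0=\psi_0^{\rm in}+\psi_0^{\rm out}$, keeping in $\psi_0^{\rm in}$ the peaks with index in $S$ and in $\psi_0^{\rm out}$ those with index in $W\setminus S$; split $\psi_x=\psi_x^{\rm in}+\psi_x^{\rm out}$ the same way relative to $W_m$. Since distinct peaks have disjoint support, all cross terms vanish, giving $\langle\psi_0,\psi_x\rangle=\langle\psi_0^{\rm in},\psi_x^{\rm in}\rangle$, $\|\psi_0\|^2=\|\psi_0^{\rm in}\|^2+\|\psi_0^{\rm out}\|^2$ and $\|\psi_x\|^2=\|\psi_x^{\rm in}\|^2+\|\psi_x^{\rm out}\|^2$, hence
\[ \frac{m(x)^2}{p(0)p(x)}\ \ge\ \frac{\langle\psi_0^{\rm in},\psi_x^{\rm in}\rangle^2}{\|\psi_0^{\rm in}\|^2\,\|\psi_x^{\rm in}\|^2}\cdot\Big(1-\frac{\|\psi_0^{\rm out}\|^2}{\|\psi_0\|^2}\Big)\Big(1-\frac{\|\psi_x^{\rm out}\|^2}{\|\psi_x\|^2}\Big). \]
This decouples a ``nearly parallel'' claim on the shared peaks from a ``negligible leakage'' claim for the tails.

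For the first factor I use $\langle\psi_0^{\rm in},\psi_x^{\rm in}\rangle=\tfrac{e^{-\kappa^2\delta^2/4}}{L}\sum_{k\in S}I_k(\delta/2)$, $\|\psi_0^{\rm in}\|^2=\tfrac1L\sum_{k\in S}I_k(0)$, $\|\psi_x^{\rm in}\|^2=\tfrac1L\sum_{k\in S}I_k(\delta)$, together with $I_k(\delta')=\int\Psi^\varepsilon_\Delta(t)^2\,\eta_\kappa(t+k-\delta')^2\,dt$ (the substituted form of \eqref{eq:ikdeltadefinitionuse}) and the pointwise bound that, for $|t|<\varepsilon<1/2$, $|k|\le L/2$, $|\delta|\le 1/2$, the Gaussian ratios $\eta_\kappa(t+k)^2/\eta_\kappa(t+k-\delta/2)^2$ and $\eta_\kappa(t+k-\delta)^2/\eta_\kappa(t+k-\delta/2)^2$ lie in $[e^{-\kappa^2(L+2)/4},e^{\kappa^2(L+2)/4}]$; this yields $I_k(0),I_k(\delta)\le e^{\kappa^2(L+2)/4}I_k(\delta/2)$ for every $k\in S$, so the first factor is $\ge e^{-\kappa^2\delta^2/2}e^{-\kappa^2(L+2)/2}\ge e^{-\kappa^2(L/2+9/8)}\ge 1-\tfrac{\kappa^2 L}{2}-\tfrac{9\kappa^2}{8}$. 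For the tails, $W\setminus S$ consists of integers with $|k|\ge 3L/8$ and $W_m\setminus S$ of integers with $|k|\ge L/2$; bounding the corresponding sums of $I_k(\cdot)$ via a sum-to-integral comparison and $\int_a^\infty\eta_\kappa(u)^2\,du\le\tfrac12 e^{-\kappa^2 a^2}$ gives $\|\psi_0^{\rm out}\|^2,\|\psi_x^{\rm out}\|^2\le\tfrac1L e^{-\kappa^2 L^2/32}$ once $L\ge 32$, while Poisson summation, $\sum_{k\in\mathbb{Z}}\eta_\kappa(t+k)^2=\sum_{n\in\mathbb{Z}}e^{-\pi^2 n^2/\kappa^2}e^{2\pi int}$, together with the same tail estimate for the complement of the relevant window (using $|m|\le L/8$), shows that $\sum_{k\in W}\eta_\kappa(t+k)^2$ and $\sum_{k\in W_m}\eta_\kappa(t+k-\delta)^2$ exceed $1/2$ on $\supp\Psi^\varepsilon_\Delta$, hence $\|\psi_0\|^2,\|\psi_x\|^2\ge\tfrac1{2L}$. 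Thus each correction factor is $\ge 1-2e^{-\kappa^2 L^2/32}$, and combining,
\[ \frac{m(x)^2}{p(0)p(x)}\ \ge\ \Big(1-\tfrac{\kappa^2 L}{2}-\tfrac{9\kappa^2}{8}\Big)\big(1-2e^{-\kappa^2 L^2/32}\big)^2\ \ge\ 1-\tfrac{3\kappa^2 L}{2}-4e^{-\kappa^2 L^2/32}, \]
the last step using $\tfrac{9\kappa^2}{8}\le\kappa^2 L$ (valid since $L\ge 2$).

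The routine parts are the Gaussian-tail and Riemann-sum bookkeeping and the elementary ratio estimates; the main obstacle is making the constants line up---in particular, several tail estimates and the bound $\|\psi_0\|^2,\|\psi_x\|^2\ge\tfrac1{2L}$ only hold comfortably once $L$ is moderately large, which is exactly why the first step discards the vacuous small-$\kappa^2 L^2$ regime (there the claimed bound is negative and $m(x)^2/(p(0)p(x))\ge 0$ finishes the argument). Conceptually, the real content is the support decomposition, which separates the $\kappa^2 L/2$-type loss (slow variation of the envelope over a window of width $\le L/2$) from the $e^{-\kappa^2 L^2/32}$-type loss (envelope mass leaking past that window).
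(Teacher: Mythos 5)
Your argument is correct in substance but takes a genuinely different route from the paper's. The paper works directly with the sums: it lower-bounds $m(x)$ by simply discarding all peaks outside the central window $\{-L/4,\dots,L/4-1\}$, compares $M'_k(\delta)$ to $I_k(0)$ through the one-sided pointwise bound $\eta_\kappa(y-\delta)\ge e^{-\kappa^2 L/4}\eta_\kappa(y)$ on that window, and then controls the two normalization sums via the auxiliary Lemmas~\ref{lem:deltazeroonehalfintegermtwo} and~\ref{lem:fractionalboundmnv}; as a result the stated inequality is proved unconditionally, for every $\kappa>0$, with no case distinction. You instead read $m(x)^2/(p(0)p(x))$ as a squared cosine between $\psi_0$ and $\psi_x$, split the peaks into the shared index set $S=W\cap(W-m)$ and its complement (all cross terms vanish by disjointness of supports, so this decoupling identity is exact), treat the aligned part via the midpoint identity $\eta_\kappa(y-\delta)\eta_\kappa(y)=e^{-\kappa^2\delta^2/4}\eta_\kappa(y-\delta/2)^2$ and two-sided ratio bounds against $I_k(\delta/2)$, and control the leakage through Gaussian tails together with a Poisson-summation lower bound $\|\psi_0\|^2,\|\psi_x\|^2\ge 1/(2L)$, after first discarding the regime where the claimed right-hand side is nonpositive. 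Your decomposition isolates the two loss mechanisms more transparently (slow envelope variation over the shared window versus envelope mass outside it), and the midpoint identity is sharper than the paper's one-sided comparison; the price is extra bookkeeping (the normalization lower bound and the non-vacuity reduction) that the paper's route avoids entirely.

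One justification needs repair, though it is easily fixed: you invoke $\kappa<1/4$, which is not among the hypotheses of this lemma (the lemma only assumes $\kappa>0$), to conclude $L\ge 32$ in the non-vacuous regime. Fortunately, positivity of $1-3\kappa^2L/2-4e^{-\kappa^2L^2/32}$ by itself forces $\kappa^2L<2/3$ and $\kappa^2L^2>32\log 4$, hence $L\ge 72$ and $\kappa<0.1$; this is stronger than what you claimed and is genuinely needed, since your tail estimate $\|\psi_0^{\rm out}\|^2,\|\psi_x^{\rm out}\|^2\le L^{-1}e^{-\kappa^2L^2/32}$ (absorbing the $\kappa$-dependent prefactors) and the bound $\|\psi_0\|^2,\|\psi_x\|^2\ge 1/(2L)$ do not follow from $L\ge 32$ alone but also require $\kappa$ to be controlled. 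With the regime restriction derived from positivity of the right-hand side rather than from an unstated bound on $\kappa$, your estimates go through with room to spare and the final combination $(1-\kappa^2L/2-9\kappa^2/8)(1-2e^{-\kappa^2L^2/32})^2\ge 1-3\kappa^2L/2-4e^{-\kappa^2L^2/32}$ is valid.
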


\begin{proof}
By definition of $x=m+\delta$, we have $\round{x} = m$ and $\delta=x-\round{x}$ and thus
\begin{align}
m(x)&=\int_{\mathbb{R}} \eta_\kappa(y - \delta) \eta_\kappa(y)
\Sha^{\varepsilon}_{L,\Delta}(y)
\Sha^{\varepsilon}_{L,\Delta}(m+y) dy\\
&=\frac{1}{L}\sum_{k_1=-L/2}^{L/2-1}\sum_{k_2=-L/2}^{L/2-1} \int \eta_\kappa(y-\delta)\eta_\kappa(y) \chi^\varepsilon_\Delta(k_1)(y) \chi^\varepsilon_\Delta(k_2)(m + y)  dy
\end{align}
Similarly as before (see Eq.~\eqref{eq:integralsquaredetaydetla}), we can use that $\chi^\varepsilon_\Delta(k_2)(m+y)=\chi^\varepsilon_\Delta(k_2-m)(y)$ to obtain
\begin{align}
m(x)&= \frac{1}{L}\sum_{k_1=-L/2}^{L/2-1}\sum_{k_2=-L/2}^{L/2-1} \int \eta_\kappa(y-\delta)\eta_\kappa(y) \chi^\varepsilon_\Delta(k_1)(y) \chi^\varepsilon_\Delta(k_2- m)( y)  dy\\
&=\frac{1}{L}\sum_{k_1=-L/2}^{L/2-1}\sum_{k_2=-L/2-m}^{L/2-m-1} \int \eta_\kappa(y-\delta)\eta_\kappa(y) \chi^\varepsilon_\Delta(k_1)(y) \chi^\varepsilon_\Delta(k_2)( y)  dy\\
&=\frac{1}{L}\sum_{k_1=-L/2}^{L/2-1}\sum_{k_2=-L/2-m}^{L/2-m-1} M_{k_1,k_2}'(\delta)\label{eq:mk1k2sumk1}
\end{align}
where we shifted the summation index~$k_2$ and introduced the scalars
\begin{align}
 M'_{k_1,k_2}(\delta)=\int \eta_\kappa(y-\delta)\eta_\kappa(y) \chi^\varepsilon_\Delta(k_1)(y) \chi^\varepsilon_\Delta(k_2)(y) dy \ . \label{eq: Mkk prime}
\end{align}
Because  the expression~$\chi^\varepsilon_\Delta(k_1)(y)\chi^\varepsilon_\Delta(k_2)(y)$ can only be non-zero when $k_1=k_2$ (since $\varepsilon<1/2$), the integral $M'_{k_1,k_2}(\delta)$ vanishes unless $k_1=k_2$. 
We conclude that
\begin{align}
    m(x)
    &=\begin{cases}
        \frac{1}{L}
    \sum_{k=-L/2}^{L/2-m-1}
    M'_{k}(\delta)\qquad &\textrm{ if } m\geq 0\\
    \frac{1}{L}\sum_{k=-L/2+|m|}^{L/2-1}M_k'(\delta)\qquad &\textrm{ if }m<0\ ,
    \end{cases}
\end{align}
where we set $M'_{k}=M'_{k,k}$. Recalling that we are considering $m\in\mathbb{Z}$ with $|m|\leq L/8$ and using that each term \begin{align}
  M'_k(\delta)=\int \eta_\kappa(y-\delta)\eta_\kappa(y) \chi^\varepsilon_\Delta(k)(y)^2dy  \label{eq:mkprimedeltav}
\end{align} is non-negative, we obtain the lower bound
\begin{align}
    m(x)&\geq \frac{1}{L}\sum_{k=-L/2+L/8}^{L/2-L/8-1}M'_k(\delta)\\
    &\geq\frac{1}{L}\sum_{k=-L/4}^{L/4-1}M'_k(\delta)\label{eq:mxlkdeltalow}
\end{align}
where we used that $L/2-L/8-1\geq L/4-1$ and $-L/2+L/8<-L/4$ since $L\geq 8$ by the assumption that $L$ is an integer multiple of~$8$.

In the expression~\eqref{eq:mkprimedeltav} defining~$M_k'(\delta)$, we can restrict the domain of integration to~$[k-\varepsilon,k+\varepsilon]$
 as the expression~$\chi^\varepsilon_\Delta(k)(y)$ vanishes for $y\in\mathbb{R}$ outside this interval, i.e., we have 
\begin{align}
    M'_{k}(\delta)=\int_{k-\varepsilon}^{k+\varepsilon} \eta_\kappa(y-\delta)\eta_\kappa(y) \chi^\varepsilon_\Delta(k)(y)^2  dy \ .\label{eq:mprimekdeltaex}
\end{align}
We will show that
\begin{align}
    M'_{k}(\delta)\ge e^{-\kappa^2 L/4} \int_{k-\varepsilon}^{k+\varepsilon} \eta_\kappa(y)^2 \chi^\varepsilon_\Delta(k)(y)^2  dy\quad\textrm{ for all }k\in \{-L/4,\ldots,L/4-1\}\ . \label{eq: M prime k lower bound}
\end{align}
We note that
using the definition~\eqref{eq:ikdeltadefinitionuse}
of the integral~$I_k(\delta)$, Eq.~\eqref{eq: M prime k lower bound} can be expressed as 
\begin{align}
     M'_{k}(\delta)\ge e^{-\kappa^2 L/4} I_k(0)\quad\textrm{ for all }k\in \{-L/4,\ldots,L/4-1\}\ . \label{eq: M prime k lower boundcompact} 
\end{align}
We  use the identity
\begin{align}
    \eta_\kappa(y-\delta)
    &=\frac{\sqrt{\kappa}}{\pi^{1/4}} e^{-\kappa^2 (y-\delta)^2/2}\\
    &=\frac{\sqrt{\kappa}}{\pi^{1/4}} e^{-\kappa^2 y^2/2}\cdot e^{\kappa^2 y\delta}\cdot e^{-\kappa^2 \delta^2/2}\\
    &=\eta_\kappa(y)\cdot e^{\kappa^2 y\delta }\cdot e^{-\kappa^2 \delta^2/2}
\end{align}
which implies that
\begin{align}
    \eta_\kappa(y-\delta)&\geq \eta_\kappa(y)\cdot e^{-\kappa^2 (|y|\cdot |\delta|+\delta^2/2)}\ .
\end{align}
For $y\in [k-\varepsilon,k+\varepsilon]$
we have~$|y|\leq |k|+\varepsilon\leq L/4+\varepsilon$ by the assumption $k\in \{-L/4,\ldots,L/4-1\}$, hence we have for $\delta \in [-1/2,1/2]$
\begin{align}
\eta_\kappa(y-\delta)&\geq \eta_\kappa(y)\cdot 
e^{-\kappa^2 \left((L/4+\varepsilon)/2+1/8\right)}\\
&\geq 
\eta_\kappa(y)\cdot e^{-\kappa^2 L/4}
\quad\textrm{ for }y\in [k-\varepsilon,k+\varepsilon]\ .\label{eq:basiinequaltm}
\end{align}
Here, we used that $(L/4+\varepsilon)/2+1/8\leq L/4$ for $L\in 8\mathbb{N}$ and $\varepsilon<1/2$.
Bounding the term~$\eta_\kappa(y-\delta)$ in Eq.~\eqref{eq:mprimekdeltaex} using~\eqref{eq:basiinequaltm} and using the monotonicity of the integral implies the claim in Eq.~\eqref{eq: M prime k lower bound}

Applying 
Eq.~\eqref{eq: M prime k lower boundcompact}
to each term in~\eqref{eq:mxlkdeltalow}, we obtain the bound
\begin{align}
   \frac{m(x)}{\sqrt{p(x)p(0)}}
    &\ge\frac{1}{L\sqrt{p(x)p(0)}}
    \sum_{k=-L/4}^{L/4-1}
    M'_{k}(\delta)\\
    &\ge e^{-\kappa^2L/4} \frac{1}{L\sqrt{p(x)p(0)}}
    \sum_{k=-L/4}^{L/4-1}
    I_{k}(0)\label{eq:lowerbndqinterest}
\end{align}
on the quantity of interest. 
Dividing~\eqref{eq:lowerbndqinterest} by the norm~\eqref{eq:postmeasurementgkpkappamv} 
gives the  expression
\begin{align}
\frac{m(x)}{\sqrt{p(x)p(0)}}
&\ge e^{-\kappa^2L/4}\left(\frac{p(0)}{p(x)}\right)^{1/2}\cdot\,
\frac{\sum_{k=-L/4}^{L/4-1}I_k(0)}{\sum_{k=-L/2}^{L/2-1}I_k(0)}\\
&=e^{-\kappa^2L/4}\left(\frac{\sum_{k=-L/2}^{L/2-1}I_{k}(0)}{\sum_{k=-L/2-m}^{L/2-1-m}I_k(\delta)}\right)^{1/2}\cdot\,
\frac{\sum_{k=-L/4}^{L/4-1}I_k(0)}{\sum_{k=-L/2}^{L/2-1}I_k(0)}\ ,\label{eq:mxpxp0 lowerbound}
\end{align}
where in the last step, we used Eq.~\eqref{eq:pdeltabound} and the assumption~$x=m+\delta$. 

We will lower bound each factor in~\eqref{eq:mxpxp0 lowerbound} separately.
By Lemma~\ref{lem:deltazeroonehalfintegermtwo} and the assumption $\abs{m}\le L/8$, we have 
\begin{align}
    \sum_{k=-L/2-m}^{L/2-1-m}I_k(\delta)  &\leq     e^{\kappa^2 L}    \sum_{k=-L/2-m}^{L/2-1-m}I_k(0)\\
    &\leq    e^{\kappa^2 L}    \sum_{k=-L}^{L-1}I_k(0)\ ,\label{eq:upperboundikdeltam}
\end{align}
where we used that each term~$I_k(0)$ is non-negative in the second inequality.
Eq.~\eqref{eq:upperboundikdeltam}
implies that
\begin{align}
\frac{\sum_{k=-L/2}^{L/2-1}I_{k}(0)}{\sum_{k=-L/2-m}^{L/2-1-m}I_k(\delta)}
     &\ge e^{-\kappa^2 L}
     \frac{\sum_{k=-L/2}^{L/2-1}I_k(0)}{\sum_{k=-L}^{L-1}I_k(0)}\\
     &\geq e^{-\kappa^2 L}\cdot (1-e^{-\kappa^2 L^2/8})\ \label{eq:firstfactoreqprodm}
     \end{align}
by Lemma~\ref{lem:fractionalboundmnv}. 
Similarly, Lemma~\ref{lem:fractionalboundmnv}
applied with $L/2$ instead of~$L$ gives the lower bound
\begin{align}
    \frac{\sum_{k=-L/4}^{L/4-1}I_{k}(0)}{\sum_{k=-L/2}^{L/2-1}I_k(0)}
    &\geq 1-e^{-\kappa^2 L^2/32}\ .\label{eq:combinedeqam}
     \end{align}
Combining~\eqref{eq:firstfactoreqprodm} and~\eqref{eq:combinedeqam} with \eqref{eq:mxpxp0 lowerbound}
gives the lower bound
\begin{align}
\frac{m(x)}{\sqrt{p(x)p(0)}}
&\ge e^{-\kappa^2 L/4} e^{-\kappa^2 L/2}
(1-e^{-\kappa^2 L^2/8})^{1/2} (1-e^{-\kappa^2 L^2/32})\\
&\ge e^{-\kappa^2 L/4} e^{-\kappa^2 L/2}
(1-e^{-\kappa^2 L^2/8}) (1-e^{-\kappa^2 L^2/32})\\
&\ge e^{-3\kappa^2 L/4} (1-e^{-\kappa^2 L^2/8}-e^{-\kappa^2 L^2/32})\\
&\geq (1-3\kappa^2 L/4)(1-2e^{-\kappa^2 L^2/32})\\
&\geq 1-3\kappa^2 L/4-2e^{-\kappa^2 L^2/32}
\end{align}
where we used inequality $\sqrt{1-x}\ge 1-x$ for $x\in(0,1)$  in the second step,  the inequality $(1-x)(1-y) \ge 1-x-y$ for $x,y\geq 0$ in the third and last step, and the inequality~$e^{-x}\geq 1-x$ in the fourth step. The claim follows by inequality $(1-x)^2\ge 1-2x$ for $x\in\mathbb{R}$.
\end{proof}

We will show that the conditional output state~$\rho_{|\Omega_L}(\varepsilon, \kappa,\Delta,L) \in \cB(L^2(\mathbb{R}))$ of the second mode --- conditioned on 
the measurement result~$x$ (when applying a position-measurement to the first mode)
belonging to the 
``acceptance region'' $\Omega_L$ (cf.\ Protocol~\ref{prot:envelopeshaping}, line~\ref{prot: gaussification acceptance}) --- is close to an approximate GKP state (for suitably chosen parameters~$(\kappa,\Delta)$ and~$L$). 

\begin{lemma} \label{lem: env shaping rhoacc gkp trace dist}
    Assume $\kappa\in(0,1/4)$, $\Delta\in(0, 1/4)$, $\varepsilon\in [\sqrt{\Delta},1/2)$, and $L\in\mathbb{N}$.
     Let $\rho_{|\Omega_L}(\varepsilon,\kappa,\Delta,L)\in L^2(\mathbb{R})$ 
 be the output state 
 of Protocol~\ref{prot:envelopeshaping} conditioned on acceptance 
 (see Lemma~\ref{lem:convolutioneffect}), on input $(\ket{\Sha^\varepsilon_{L,\Delta}},\kappa,L)$.
 Then, \begin{align}
        \left\|\rho_{|\Omega_L}(\varepsilon,\kappa,\Delta,L)-\proj{\gkp_{\kappa,\Delta}}\right\|_1 \le 6\kappa\sqrt{L} +6\sqrt{\Delta}+ 7e^{-\kappa^2 L^2/64}\ .
    \end{align}
\end{lemma}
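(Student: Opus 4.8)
The plan is to route the argument through the truncated ``point-wise'' GKP state $\ket{\tGKP^\varepsilon_{L,\kappa,\Delta}}$ defined in Eq.~\eqref{eq:postmeasurementstateyoutcomedef}, and to split
\begin{align}
\left\|\rho_{|\Omega_L}(\varepsilon,\kappa,\Delta,L)-\proj{\gkp_{\kappa,\Delta}}\right\|_1
&\le \left\|\rho_{|\Omega_L}(\varepsilon,\kappa,\Delta,L)-\proj{\tGKP^\varepsilon_{L,\kappa,\Delta}}\right\|_1+\left\|\proj{\tGKP^\varepsilon_{L,\kappa,\Delta}}-\proj{\gkp_{\kappa,\Delta}}\right\|_1
\end{align}
by the triangle inequality, bounding each summand separately.

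For the first summand I would use Lemma~\ref{lem:translationcorrectedpost}, which gives the overlap (fidelity) bound $\bra{\tGKP^\varepsilon_{L,\kappa,\Delta}}\rho_{|\Omega_L}(\varepsilon,\kappa,\Delta,L)\ket{\tGKP^\varepsilon_{L,\kappa,\Delta}}\ge 1-\tfrac32\kappa^2L-4e^{-\kappa^2L^2/32}$, together with the elementary inequality $\left\|\rho-\proj{\Phi}\right\|_1\le 2\sqrt{1-\bra{\Phi}\rho\ket{\Phi}}$ valid for any state $\rho$ and any unit vector $\ket{\Phi}$ (which follows from the pure-state identity~\eqref{eq: trace distance overlap} applied to an optimal purification of $\rho$, Uhlmann's theorem, and the contractivity of $\left\|\cdot\right\|_1$ under partial trace). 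Combined with subadditivity of the square root, $\sqrt{a+b}\le\sqrt a+\sqrt b$, this yields
\begin{align}
\left\|\rho_{|\Omega_L}(\varepsilon,\kappa,\Delta,L)-\proj{\tGKP^\varepsilon_{L,\kappa,\Delta}}\right\|_1
&\le 2\sqrt{\tfrac32\kappa^2L+4e^{-\kappa^2L^2/32}}\\
&\le 2\sqrt{\tfrac32\kappa^2L}+2\sqrt{4e^{-\kappa^2L^2/32}}=\sqrt 6\,\kappa\sqrt L+4e^{-\kappa^2L^2/64}\le 3\kappa\sqrt L+4e^{-\kappa^2L^2/64}\ .
\end{align}

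For the second summand I would invoke the comparison of the ``point-wise'' and ``peak-wise'' envelope models established in the appendix (Corollary~\ref{cor: approximate tgkpL GKP trace dist}), together with the peak-truncation estimate already used for comb states (Corollary~\ref{cor: Sha - Sha epsilon trace distance}). The distance $\left\|\proj{\tGKP^\varepsilon_{L,\kappa,\Delta}}-\proj{\gkp_{\kappa,\Delta}}\right\|_1$ splits into three contributions: the truncation of each Gaussian peak to width $\varepsilon\ge\sqrt\Delta$ (of order $\sqrt\Delta$), the restriction to finitely many peaks (exponentially small in $\kappa^2L^2$, since the Gaussian envelope carries only weight $O(e^{-\kappa^2L^2})$ outside $[-L/2,L/2]$), and the replacement of the envelope factor $\eta_\kappa(z)$ at the $z$-th peak centre by $\eta_\kappa(x)$ evaluated point-wise (controlled using that over the support of a peak $\eta_\kappa$ varies only by a factor $e^{O(\kappa^2|z|\varepsilon)}$, while peaks with $|z|\gg 1/\kappa$ carry negligible envelope weight). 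Collecting these, the cited results give $\left\|\proj{\tGKP^\varepsilon_{L,\kappa,\Delta}}-\proj{\gkp_{\kappa,\Delta}}\right\|_1\le 3\kappa\sqrt L+6\sqrt\Delta+3e^{-\kappa^2L^2/64}$, and adding the two bounds (using $\sqrt6\le3$ and $4+3=7$) gives exactly $\left\|\rho_{|\Omega_L}(\varepsilon,\kappa,\Delta,L)-\proj{\gkp_{\kappa,\Delta}}\right\|_1\le 6\kappa\sqrt L+6\sqrt\Delta+7e^{-\kappa^2L^2/64}$, as claimed.

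Most of the analytic content is already packaged elsewhere: the moment and tail estimates for the Gaussian envelope and the orthogonality of the disjointly supported truncated peaks are handled in Lemma~\ref{lem:translationcorrectedpost} and the appendix corollaries. The only genuinely new step is the fidelity-to-trace-distance conversion in the first summand, so the \emph{main obstacle} is essentially bookkeeping: ensuring that the several loose-but-convenient $O(\sqrt\Delta)$- and $O(\kappa\sqrt L)$-type bounds from the appendix are each valid throughout the stated regime $\kappa,\Delta\in(0,1/4)$, $\varepsilon\in[\sqrt\Delta,1/2)$, $L\in\mathbb{N}$, and that the constants combine to the asserted $6\kappa\sqrt L+6\sqrt\Delta+7e^{-\kappa^2L^2/64}$.
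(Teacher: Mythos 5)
Your proposal is correct and follows essentially the same route as the paper: split via the triangle inequality through $\ket{\tGKP^\varepsilon_{L,\kappa,\Delta}}$, convert the overlap bound of Lemma~\ref{lem:translationcorrectedpost} into a trace-distance bound $3\kappa\sqrt{L}+4e^{-\kappa^2L^2/64}$, and add the bound $3\kappa\sqrt{L}+6\sqrt{\Delta}+3e^{-\kappa^2L^2/8}\le 3\kappa\sqrt{L}+6\sqrt{\Delta}+3e^{-\kappa^2L^2/64}$ from Corollary~\ref{cor: approximate tgkpL GKP trace dist}. The only (harmless) inaccuracy is your citation of Corollary~\ref{cor: Sha - Sha epsilon trace distance} for the peak-truncation contribution; that estimate is already folded into Corollary~\ref{cor: approximate tgkpL GKP trace dist} (via Corollary~\ref{cor: gkp gkp eps trace distance}), so no separate comb-state bound is needed.
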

\noindent We note that  Lemma~\ref{lem: env shaping rhoacc gkp trace dist} implies that in the limit~$(\kappa,\Delta)\rightarrow(0,0)$
a choice of $L$~scaling, e.g., as $L=\Theta((1/\kappa)^{4/3})$ ensures that the approximation error scales polynomially in~$(\kappa, \Delta)$.

\begin{proof}
Lemma~\ref{lem:translationcorrectedpost} together with relation 
 $\|\rho-\proj{\Psi}\|_1= 2\sqrt{1-\langle \Psi,\rho\Psi\rangle}$
between the overlap and the $L^1$-distance
for a state~$\rho\in\cB(L^2(\mathbb{R}))$ and a pure state~$\ket{\Psi}\in L^2(\mathbb{R})$ implies 
\begin{align}     
    \left\|\proj{\tGKP^\varepsilon_{L,\kappa,\Delta}}-\rho_{|\Omega_L}(\varepsilon,\kappa,\Delta, L)\right\|_1 
    &\le 2\sqrt{3\kappa^2 L/2+4e^{-\kappa^2 L^2/32}}\\
    &\le 3 \kappa\sqrt{L} + 4 e^{-\kappa^2 L^2/64}\label{eq:gkprhoclone}
\end{align}
where we used the inequality $\sqrt{x+y} \le \sqrt{x} + \sqrt{y}$ for all $x,y \ge 0$ on the last line.
By assumption, we have $\varepsilon\in[\sqrt{\Delta}, 1/2)$, thus Corollary~\ref{cor: approximate tgkpL GKP trace dist} yields
\begin{align}
    \left\| \proj{\tGKP^\varepsilon_{L,\kappa,\Delta}}-\proj{\gkp_{\kappa,\Delta}}\right\|_1
    &\le 3\kappa\sqrt{L} +6\sqrt{\Delta}+ 3e^{-\kappa^2L^2/8}\ . \label{eq:hmonex}
\end{align}
Eqs.~\eqref{eq:gkprhoclone},~\eqref{eq:hmonex} combined with the triangle inequality  imply the claim.
\end{proof}

\subsubsection{Bounding the acceptance probability of Protocol~\ref{prot:envelopeshaping} \label{sec: bound acc prob prot 2}}
We prove a lower bound on the acceptance probability of Protocol~\ref{prot:envelopeshaping} given the input state $\ket{\Sha_{L,\Delta}^\varepsilon}$ and the input parameters~$(\kappa,L)$.
\begin{lemma}\label{lem: prot 2 accept prob}
Assume $\kappa>0$, $\Delta >0$, $\varepsilon \in (0,1/2)$ and $L\in 16\mathbb{N}$. Given as input the state $\ket{\Sha_{L,\Delta}^\varepsilon}$ and the parameters~$\kappa$ and $L$, 
Protocol~\ref{prot:envelopeshaping} accepts with probability at least
\begin{align}
    \Pr\Big[\textnormal{Protocol~\ref{prot:envelopeshaping} } \accepts \mid \ket{\Sha_{L,\Delta}^\varepsilon}
    \Big]&\ge \frac{1}{8}\left(1 - 2e^{-\kappa^2L^2/256}\right) \ .
\end{align}
\end{lemma}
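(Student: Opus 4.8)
The plan is to read the acceptance probability straight off the density of the homodyne outcome. By Lemma~\ref{lem:convolutioneffect} applied to $\Psi_1=\eta_\kappa$ (which is even) and $\Psi_2=\Sha^\varepsilon_{L,\Delta}$, the protocol accepts precisely when the outcome~$x$ lands in $\Omega_L=[-L/8-1/2,\,L/8+1/2]$, so that
\[
\Pr\big[\text{Protocol~\ref{prot:envelopeshaping} accepts}\mid \ket{\Sha^\varepsilon_{L,\Delta}}\big]=\int_{\Omega_L}p(x)\,dx,\qquad p(x)=\frac1L\sum_{k=-L/2}^{L/2-1}I_k(x),
\]
where $I_k(\delta)=\int_{\mathbb R}\chi^\varepsilon_\Delta(k)(y)^2\eta_\kappa(y-\delta)^2\,dy$ by Eq.~\eqref{eq:ikdeltadefinitionuse} and Eq.~\eqref{eq:pdf outcomes x Ik}. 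The sum is finite and all integrands are nonnegative, so Tonelli's theorem lets me exchange the sum with the $x$- and $y$-integrals to get
\[
\Pr\big[\text{accept}\big]=\frac1L\sum_{k=-L/2}^{L/2-1}\int_{\mathbb R}\chi^\varepsilon_\Delta(k)(y)^2\left(\int_{\Omega_L}\eta_\kappa(y-x)^2\,dx\right)dy\ .
\]

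Next I would discard all but the central terms: restricting the sum to $|k|\le L/16$ can only decrease the right-hand side, and since $L\in 16\mathbb N$ there are exactly $L/8+1$ such integers, all lying in $\{-L/2,\dots,L/2-1\}$. For such a $k$ the function $\chi^\varepsilon_\Delta(k)$ is supported in $[k-\varepsilon,k+\varepsilon]$, so on its support $|y|\le L/16+\varepsilon<L/16+1/2$; substituting $u=y-x$ and using this bound gives
\[
\int_{\Omega_L}\eta_\kappa(y-x)^2\,dx=\int_{y-L/8-1/2}^{\,y+L/8+1/2}\eta_\kappa(u)^2\,du\ \ge\ \int_{-L/16}^{L/16}\eta_\kappa(u)^2\,du\ ,
\]
since $[-L/16,L/16]\subseteq[y-L/8-1/2,\,y+L/8+1/2]$ whenever $|y|<L/16+1/2$. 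As $\eta_\kappa^2$ is the density of the $N(0,1/(2\kappa^2))$ distribution, the elementary Gaussian tail bound $\Pr[Z>b]\le\tfrac12e^{-b^2/2}$ ($Z$ standard normal, $b\ge0$) yields $\int_{L/16}^{\infty}\eta_\kappa(u)^2\,du\le\tfrac12e^{-\kappa^2L^2/256}$, hence the last integral is at least $1-e^{-\kappa^2L^2/256}$. Using $\int_{\mathbb R}\chi^\varepsilon_\Delta(k)(y)^2\,dy=1$ I then obtain
\[
\Pr\big[\text{accept}\big]\ \ge\ \frac1L\Big(\frac L8+1\Big)\bigl(1-e^{-\kappa^2L^2/256}\bigr)\ \ge\ \frac18\bigl(1-2e^{-\kappa^2L^2/256}\bigr)\ ,
\]
which is the claim.

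I do not expect a genuine obstacle here; the only analytic inputs are the Gaussian tail estimate and the pairwise-disjointness of the truncated peaks (coming from $\varepsilon<1/2$), both already used throughout the section. The one point requiring a little care is balancing the two truncations: the $k$-window has to be small enough that the displaced acceptance interval $[y-L/8-1/2,\,y+L/8+1/2]$ still contains a fixed symmetric window around the origin for every relevant~$y$, yet large enough that at least $L/8$ lattice points survive — and $|k|\le L/16$ is exactly the choice that makes both happen and produces the stated constant $1/8$.
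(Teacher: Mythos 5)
Your argument is correct and follows essentially the same route as the paper's proof: express the acceptance probability as $\int_{\Omega_L}p(x)\,dx$ with $p(x)=\tfrac1L\sum_k I_k(x)$, swap the order of integration, keep only the roughly $L/8$ central peaks $|k|\le L/16$ so that the shifted acceptance window always contains $[-L/16,L/16]$, and finish with a Gaussian tail bound. The only differences are cosmetic (you truncate the $k$-sum directly and use a one-sided tail estimate, giving a marginally sharper intermediate constant before relaxing to the stated bound).
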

\begin{proof}
The probability that Protocol~\ref{prot:envelopeshaping} accepts is the probability of obtaining a measurement outcome $x$ belonging to the acceptance region (cf.\ line~\ref{prot: gaussification acceptance} of Protocol~\ref{prot:envelopeshaping})
\begin{align}
    \Omega_{L}=[-L/8 - 1/2, L/8+ 1/2]\ .
\end{align}
That is,
\begin{align}
    \Pr\left[\textnormal{Protocol~\ref{prot:envelopeshaping} } \accepts \mid \ket{\Sha_{L,\Delta}^\varepsilon} \right] &=p(\Omega_{L})=\int_{\Omega_{L}}p(x) dx\ ,
\end{align}
where $p(x)$ is the probability density of obtaining outcome $x$ (defined in Eq.~\eqref{eq:pdfoutcomex}). By Eq.~\eqref{eq:pdf outcomes x Ik} it satisfies 
\begin{align}
    p(\Omega_{L})&=\int_{-L/8-1/2}^{L/8 +1/2}
    p(x) dx\\
    &=  
    \frac{1}{L} 
    \sum_{k=-L/2}^{L/2-1}
    \int_{-L/8-1/2}^{L/8+1/2}
    I_k(x)  dx \ ,
\end{align}
where we recall (cf.~\eqref{eq:ikdeltadefinitionuse}) that
\begin{align}
    I_k(\delta)
    &:=\int\eta_\kappa(u-\delta)^2\chi^\varepsilon_\Delta(k)(u)^2 du\ .
\end{align}
By using Fubini's Theorem, we obtain
\begin{align}
    p(\Omega_{L})
    &= \frac{1}{L} \sum_{k=-L/2}^{L/2-1} \int \left(\int_{-L/8-1/2}^{L/8 +1/2} \eta_\kappa(u-x)^2 dx\right) \chi^\varepsilon_\Delta(k)(u)^2 du \\
    &= \frac{1}{L}\sum_{k=-L/2}^{L/2-1} \int \left(\int_{-L/8-1/2-u}^{L/8+1/2-u} \eta_\kappa(z)^2 dz\right) \chi^\varepsilon_\Delta(k)(u)^2 du \ ,
\end{align}
where we substituted $z = u-x$. Inserting the definition of $\chi_\Delta^\varepsilon(k)$ and defining the inner integral as $\Theta(u)$, i.e.,
\begin{align}
    \Theta(u) = \int_{-L/8-1/2-u}^{L/8+1/2-u} \eta_\kappa(z)^2 dz
\end{align}
we can write
\begin{align}
     p(\Omega_{L}) &= \frac{1}{L}\sum_{k=-L/2}^{L/2-1} \int \Theta(u) \Psi_\Delta^\varepsilon(u-k)^2 du \\
     &=  \frac{1}{L} \sum_{k=-L/2}^{L/2-1} \int  \Theta(v+k) \Psi_\Delta^\varepsilon(v)^2 dv\ ,
\end{align} where we substituted $v = u-k$. Moreover, as $\textrm{supp}(\Psi_\Delta^\varepsilon) \subseteq [-\varepsilon, \varepsilon]$, we can restrict us to $|v|\le \varepsilon < 1/2$. Due to the positivity of the integrand, we have
\begin{align}
    \Theta(v+k)= \int_{-L/8-1/2-v -k }^{L/8 +1/2 - v-  k}  \eta_\kappa(z)^2 dz \ge \int_{-L/8 - k}^{L/8 - k} \eta_\kappa(z)^2 dz \qquad \textrm{ for $-L/2 \le k \le L/2-1$}\ . 
\end{align}
    Using that $\|\Psi_\Delta^\varepsilon\|=1$, we can bound
    \begin{align}
         p(\Omega_{L}) &\ge \frac{1}{L} \sum_{k=-L/2}^{L/2-1}\int \left( \int_{-L/8 - k}^{L/8 - k} \eta_\kappa(z)^2 dz\right) \Psi_\Delta^\varepsilon(v)^2 dv \\
         &= \frac{1}{L} \sum_{k=-L/2}^{L/2-1} \int_{-L/8 - k}^{L/8 - k} \eta_\kappa(z)^2 dz \int \Psi_\Delta^\varepsilon(v)^2 dv\\
         & =  \frac{1}{L} \sum_{k=-L/2}^{L/2-1} \int_{-L/8 - k}^{L/8 - k} \eta_\kappa(z)^2 dz\, . \label{eq: double sum eta}
    \end{align}
Using that the integrand is non-negative, and that the interval $[-L/8-k,L/8-k]$ contains the interval $[-L/16,L/16]$ for all  $k\in \{-L/16, \dots L/16\}$, and that there are at least $L/8$ such values of~$k\in\{-L/2,\ldots,L/2-1\}$ (by the assumption $L\in 8\mathbb{N}$), we obtain the lower bound
\begin{align}
    p(\Omega_{L}) &\ge \frac{1}{8}\int_{-L/16}^{L/16} \eta_\kappa(z)^2 dz\ .
\end{align}
Inserting this into~\eqref{eq: double sum eta} yields
    \begin{align}
        p(\Omega_L) \ge \frac{1}{8} \int_{-L/16}^{L/16} \eta_\kappa(z)^2 dz  \label{eq: prob bound gaussian}\ .
    \end{align}
    Since $\eta_\kappa(\cdot)^2$ is the probability density function of a centered normal random variable $X \sim \cN(0, 1/(2\kappa^2))$
    we obtain by the Chernoff bound (see Ref.~\cite{Vershynin_2018}):
    \begin{align}
        \Pr[\abs{X}\ge L/16]&\le 2e^{-(L/16)^2\kappa^2}\\
        &= 2e^{-L^2\kappa^2/256}\, .
    \end{align}
  Inserting this tail bound into Eq.~\eqref{eq: prob bound gaussian} yields the claim.  
\end{proof}

\subsection{Proof of Theorem~\ref{thm:envelope gaussification}} \label{sec:envelope Gaussification error}
In this section, we prove Theorem~\ref{thm:envelope gaussification}. We rely on
the analysis of Protocol~\ref{prot:envelopeshaping} 
in the case where the input is a truncated comb state~$\ket{\Sha^\varepsilon_{L,\Delta}}$, see Section~\ref{sec:shastateinputanalysis}.  To extend to arbitrary input states that are close to~$\ket{\Sha_{L,\Delta}}$, we first analyze the effect of heralding, see Section~\ref{sec: heralded channels}.
We show that  --- assuming a constant lower bound on the acceptance probability --- a heralding channel
is stable under 
deviations in the input state (quantified in terms of the $L^1$-distance). Building on this stability result, we then complete the proof of Theorem~\ref{thm:envelope gaussification} in Section~\ref{sec: complete proof env gaussification}

\subsubsection{Heralding channels and approximate input states}\label{sec: heralded channels}
In this section, we examine what conclusions can be drawn if we know that a heralding channel prepares, with known probability, an output that is close to a target state. More formally, we consider a quantum channel, i.e., a completely positive map (CPTPM)
\begin{align}
\begin{matrix}
\cE: &\cB(\cH) & \rightarrow & \cB(\cH') \otimes \mathbb{C}^2\\
&\rho & \mapsto &\cE_\acc (\rho) \otimes \proj{\acc} + \cE_\rej(\rho) \otimes \proj{\rej}\ ,
\end{matrix}
\end{align}
where 
$\cE_\acc$ and $\cE_\rej$ are completely positive trace-non-increasing maps (CPTNIM), $\cH, \cH'$ are Hilbert spaces, and the second register is a qubit representing classical information spanned by the two orthonormal basis states~$\{\ket{\acc},\ket{\rej}\}$.

For any input state~$\rho$, let us defined the heralded states
\begin{align}
   \rho_\acc = \frac{1}{\Pr[\acc|\rho]} \cE_\acc(\rho) \qquad\textrm{ and }\qquad \rho_\rej = \frac{1}{\Pr[\rej|\rho]} \cE_\rej(\rho)\ ,
\end{align}
where 
\begin{align}
    \Pr[\acc|\rho] = \tr \cE_\acc(\rho) \qquad\textrm{ and }\qquad \Pr[\rej|\rho] = \tr \cE_\rej(\rho)
\end{align}
are, respectively, the success and failure probabilities of the heralding. We note that the acceptance probability can be written as 
\begin{align}
    \Pr[\acc|\rho] &=\tr (\Pi\rho) \qquad \textrm{ where } \qquad \Pi=\cE_\acc^\dagger(I)\ ,\label{eq:povmpiaccrho}
\end{align}
and $\cE^\dagger_\acc$ denotes the adjoint map of $\cE_\acc$. The operator~$\Pi$ satisfies $0\leq \Pi\leq I$. We will call an operator~$\Pi$ with this property a POVM element in the following. It corresponds to the binary-outcome POVM~$\{\Pi,I-\Pi\}$.

The main result of this section is the following statement expressed by Lemma~\ref{lem: heralded channel appx states dist pacc}. Assume that for an input state~$\rho$,  application of~$\cE$ results in acceptance with probability close to~$1$, and that the heralded state~$\rho_\acc$  is close to a desired target state~$\sigma_{\mathsf{targ}}$.
Then the same is true for a different input state~$\tau$, assuming that~$\tau$ is close to~$\rho$.

\begin{lemma}\label{lem: heralded channel appx states dist pacc}
Let $\delta, \gamma>0$ and $\rho, \tau, \sigma_{\mathsf{targ}}\in\cB(\cH)$ be states. Consider the  heralding quantum channel $\cE:\cB(\cH)\to \cB(\cH')\otimes\mathbb{C}^2$ as introduced above and assume the following conditions are satisfied:
\begin{enumerate}[(i)]
    \item $ \left\| \rho - \tau \right\|_1 \le \delta$\,,\label{it:input closeness}
    \item $\|\rho_\acc - \sigma_{\mathsf{targ}}\|_1 \le \gamma $\,,\label{it:pointwise closeness}
    \item $\Pr[\acc|\rho] \neq 0$\,.
\end{enumerate}
Then the
probability of acceptance on input $\tau$ is at least
\begin{align}
    \Pr[\acc|\tau]\ge \Pr[\acc|\rho]-\delta/2 \ , \label{eq: heralded prob}
\end{align}
and upon acceptance the heralded state $\tau_\acc$ and the target state $\sigma_{\mathsf{targ}}$ satisfy
\begin{align}
    \left\| \tau_{\acc}-\sigma_{\mathsf{targ}}\right\|_1\le \frac{\delta}{\Pr[\acc|\rho]}+\gamma\ .\label{eq: min tr eq}
\end{align}
\end{lemma}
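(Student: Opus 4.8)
The plan is to prove the two inequalities~\eqref{eq: heralded prob} and~\eqref{eq: min tr eq} separately, using in both cases the fact that the relevant operations (taking $\tr\cE_\acc(\cdot)$, applying $\cE_\acc$) are linear and contractive for the $L^1$-norm, together with the operator~$\Pi = \cE_\acc^\dagger(I)$ from~\eqref{eq:povmpiaccrho}. For~\eqref{eq: heralded prob}, I would write
\begin{align}
\Pr[\acc|\tau]-\Pr[\acc|\rho] &= \tr\bigl(\Pi(\tau-\rho)\bigr)\ ,
\end{align}
and bound the right-hand side in absolute value by $\tfrac12\|\tau-\rho\|_1\le \delta/2$, using $0\le\Pi\le I$ (so $\|\Pi\|\le 1$) and the variational characterization of the trace norm $\abs{\tr(M X)}\le \|M\|\cdot\tfrac12\|X\|_1$ for $\tr X = 0$ — or more simply, since $\tau-\rho$ is traceless and Hermitian, $\abs{\tr(\Pi(\tau-\rho))}\le \tfrac12\|\tau-\rho\|_1$. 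This immediately gives $\Pr[\acc|\tau]\ge \Pr[\acc|\rho]-\delta/2$. (One should note $\Pr[\acc|\tau]\ge 0$ automatically, and that condition~(iii) together with $\delta < 2\Pr[\acc|\rho]$ would be needed to ensure $\tau_\acc$ is well-defined; if $\delta$ is large this is vacuous, so I would either assume it implicitly or just state the bound for the regime where it is meaningful.)

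For~\eqref{eq: min tr eq}, the key step is to compare $\tau_\acc = \cE_\acc(\tau)/\Pr[\acc|\tau]$ with $\rho_\acc = \cE_\acc(\rho)/\Pr[\acc|\rho]$. I would insert an intermediate operator $\cE_\acc(\tau)/\Pr[\acc|\rho]$ (note: normalized by the $\rho$-probability, not the $\tau$-probability) and use the triangle inequality:
\begin{align}
\left\|\tau_\acc - \rho_\acc\right\|_1 &\le \left\|\frac{\cE_\acc(\tau)}{\Pr[\acc|\tau]}-\frac{\cE_\acc(\tau)}{\Pr[\acc|\rho]}\right\|_1 + \left\|\frac{\cE_\acc(\tau)-\cE_\acc(\rho)}{\Pr[\acc|\rho]}\right\|_1\ .
\end{align}
The second term is at most $\|\cE_\acc(\tau-\rho)\|_1/\Pr[\acc|\rho]\le \|\tau-\rho\|_1/\Pr[\acc|\rho]\le \delta/\Pr[\acc|\rho]$ by contractivity of the CPTNIM $\cE_\acc$ under the trace norm. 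For the first term, factor out $\|\cE_\acc(\tau)\|_1 = \Pr[\acc|\tau]$ to get $\bigl|1 - \Pr[\acc|\tau]/\Pr[\acc|\rho]\bigr| = \bigl|\Pr[\acc|\rho]-\Pr[\acc|\tau]\bigr|/\Pr[\acc|\rho]\le (\delta/2)/\Pr[\acc|\rho]$ by the bound just proved. Summing gives $\|\tau_\acc-\rho_\acc\|_1\le \tfrac{3\delta}{2\Pr[\acc|\rho]}$, which is slightly worse than the claimed $\delta/\Pr[\acc|\rho]$; a tighter accounting is needed.

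The main obstacle, then, is getting the constant right: to reach exactly $\delta/\Pr[\acc|\rho]$ one should not split as above but instead bound $\|\tau_\acc - \rho_\acc\|_1$ in one shot. I would write $\Pr[\acc|\tau]\,\tau_\acc - \Pr[\acc|\rho]\,\rho_\acc = \cE_\acc(\tau-\rho)$, hence
\begin{align}
\tau_\acc - \rho_\acc &= \frac{1}{\Pr[\acc|\tau]}\Bigl(\cE_\acc(\tau-\rho) + \bigl(\Pr[\acc|\rho]-\Pr[\acc|\tau]\bigr)\rho_\acc\Bigr)\ ,
\end{align}
and observe that $\Pr[\acc|\rho]-\Pr[\acc|\tau] = -\tr\cE_\acc(\tau-\rho)$. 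Writing $X=\cE_\acc(\tau-\rho)$, this is $\tau_\acc-\rho_\acc = \bigl(X - (\tr X)\rho_\acc\bigr)/\Pr[\acc|\tau]$; since $X - (\tr X)\rho_\acc$ is traceless, and using $\|X\|_1\le\delta$ and $\abs{\tr X}\le\tfrac12\|\tau-\rho\|_1\le\delta/2$ plus $\|\rho_\acc\|_1=1$... this still gives $\tfrac{3\delta/2}{\Pr[\acc|\tau]}$, and with $\Pr[\acc|\tau]\ge\Pr[\acc|\rho]-\delta/2$ in the denominator it is not obviously $\le\delta/\Pr[\acc|\rho]$. I would therefore revisit whether the paper's claimed bound uses a sharper inequality (e.g. that $X$ is in fact positive semidefinite, being the image of $\tau-\rho$ — which it is not, since $\tau-\rho$ is not positive) or whether the intended statement carries an implicit factor; failing that, I would simply prove the clean bound $\|\tau_\acc-\sigma_{\mathsf{targ}}\|_1\le \tfrac{3\delta}{2\Pr[\acc|\rho]}+\gamma$ (or the version with the weaker denominator) and propagate that constant through the downstream application in Section~\ref{sec: complete proof env gaussification}, since only the order of magnitude matters there. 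Finally, I would add $\gamma$ via the triangle inequality $\|\tau_\acc-\sigma_{\mathsf{targ}}\|_1\le\|\tau_\acc-\rho_\acc\|_1+\|\rho_\acc-\sigma_{\mathsf{targ}}\|_1$ using hypothesis~\eqref{it:pointwise closeness}.
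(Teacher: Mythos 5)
Your first bound, $\Pr[\acc|\tau]\ge\Pr[\acc|\rho]-\delta/2$, is proved exactly as in the paper. The gap is in the second bound: every variant you try controls the cross term $\Delta_\acc=\Pr[\acc|\rho]-\Pr[\acc|\tau]$ by its absolute value \emph{in addition to} paying $\|\cE_\acc(\rho-\tau)\|_1\le\delta$, so you land at $\tfrac{3\delta}{2\Pr[\acc|\rho]}+\gamma$ (or with $\Pr[\acc|\tau]$ in the denominator), and you concede you cannot reach the stated constant and propose weakening the lemma instead. As written, the proposal therefore does not prove the statement; this is a genuine gap rather than a cosmetic one, because using contractivity of $\cE_\acc$ alone really cannot give the constant $1$ --- the extra information comes from the reject branch.

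The missing idea is to apply contractivity to the \emph{full} flag-augmented channel $\cE$, not just $\cE_\acc$. The paper does this via the variational formula: $\cE(\rho)-\cE(\tau)$ is traceless, so $\|\cE(\rho)-\cE(\tau)\|_1=2\max_{0\le\Pi\le I}\tr(\Pi(\pm(\cE(\rho)-\cE(\tau))))$ holds exactly; restricting to projectors of block form $\Pi'\otimes\proj{\acc}$ and choosing the overall sign so that $(-1)^\sigma\Delta_\acc\ge0$ makes the term $(-1)^\sigma\Delta_\acc\tr(\Pi'\tau_\acc)$ non-negative, so it can be dropped for free, yielding $\delta\ge\Pr[\acc|\rho]\,\|\rho_\acc-\tau_\acc\|_1$ with no $\delta/2$ penalty. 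Equivalently, and closest to repairing your one-shot computation: since $\cE$ is trace preserving, $\Delta_\acc=\tr\cE_\acc(\rho-\tau)=-\tr\cE_\rej(\rho-\tau)$, hence $|\Delta_\acc|\le\|\cE_\rej(\rho-\tau)\|_1$, while the block structure gives
\begin{align}
\|\cE_\acc(\rho-\tau)\|_1+\|\cE_\rej(\rho-\tau)\|_1=\|\cE(\rho)-\cE(\tau)\|_1\le\delta\ ;
\end{align}
combining this with your identity $\Pr[\acc|\rho](\rho_\acc-\tau_\acc)=\cE_\acc(\rho-\tau)-\Delta_\acc\,\tau_\acc$ gives $\Pr[\acc|\rho]\|\rho_\acc-\tau_\acc\|_1\le\|\cE_\acc(\rho-\tau)\|_1+|\Delta_\acc|\le\delta$, i.e.\ the $\delta$ budget is shared between the two branches rather than charged twice. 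With that in hand, the final triangle inequality against $\sigma_{\mathsf{targ}}$ proceeds as you intended.
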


\begin{proof}
Our proof  heavily exploits that the trace distance between two states $\rho,\tau\in\cB(\cH)$ can be written in variational form as
\begin{align}
    \left\| \rho  - \tau \right\|_1= 2 \max_{0\le \Pi\le I} \tr\left(\Pi(\rho-\tau) \right)\ ,\label{eq:variational trace dist}
\end{align}
where $\Pi\in\cB(\cH)$ satisfies $0\leq \Pi\leq I$. 
The assumption \eqref{it:input closeness} and the specialization of the expression~\eqref{eq:variational trace dist} to $\Pi=\cE^\dagger(I)$ ,see Eq.~\eqref{eq:povmpiaccrho}, give us the bound
\begin{align}
    \Pr[\acc|\rho]-\Pr[\acc|\tau]\le \max_{0\le \Pi\le I} \tr(\Pi(\rho-\tau))
    =\frac12 \|\rho-\tau\|_1
\le \frac\delta 2 \ ,\label{eq:bound abs delta prob}
\end{align}
cf.~\eqref{eq:povmpiaccrho}. 
This implies the claimed lower bound~\eqref{eq: heralded prob} on $\Pr[\acc|\tau]$.

We now prove that on input $\tau$ and conditioned on acceptance, the output $\tau_\acc$ is close to the target state~$\sigma_{\mathsf{targ}}$ (see Eq.~\eqref{eq: min tr eq}).
Because of the contractivity of the trace distance under quantum channels, we have by the assumption~\eqref{it:input closeness} that
\begin{align}
\left\| \cE(\rho) - \cE(\tau) \right\|_1 \le \left\| \rho - \tau \right\|_1 \le \delta\,\label{contractivity_step}\ .
\end{align}
\noindent For simplicity, we abbreviate the difference in acceptance probability as 
\begin{align}
    \Delta_{\acc}:=\Pr[\acc|\rho]-\Pr[\acc|\tau]\,
\end{align}
and use the compact notation
\begin{align}
\alpha_{\acc}&:=\big(\Pr[\acc|\rho]
\left(\rho_{\acc} - \tau_{\rm acc}\right)+\Delta_{\acc}\tau_{\acc}
\big)\otimes\proj{\acc}\,\\
    \alpha_{\rej}&:=
    \big(\Pr[\rej|\rho]\rho_{\rej}-\Pr[\rej|\tau]\tau_{\rej}
    \big)\otimes \proj{\rej}\ ,
\end{align}
such that $\cE(\rho)- \cE(\tau) = \alpha_\acc + \alpha_\rej$.
By expressing the norm $\|\cE(\rho)-\cE(\tau)\|_1$ in Eq.~\eqref{contractivity_step} in variational form (cf.\ Eq.~\eqref{eq:variational trace dist}) and using that $\|\!-\!A\|_1=\|A\|_1$, we have
\begin{align}
    \delta&\ge \left\|(-1)^\sigma( \alpha_{\acc} +\alpha_{\rej}) \right\|_1\\&=2\max_{0\le\Pi\le I} \tr \big(\Pi (-1)^\sigma\left(\alpha_{\acc} +\alpha_{\rej}   \right) \big)\qquad\textrm{ for any $\sigma\in \{0,1\}$}\ .
    \end{align}
Let us restrict to POVM elements of the form $\Pi=\Pi'\otimes \proj{\acc}+\Pi''\otimes \proj{\rej}$ where~$\Pi', \Pi''\in \cB(\cH')$ are arbitrary POVM elements. Then,
\begin{align}
   2\max_{0\le\Pi\le I} \tr \big(\Pi(-1)^\sigma \left(\alpha_{\acc} +\alpha_{\rej}   \right) \big)&\ge 2\max_{\substack{0\le\Pi'\le I\\0\le\Pi''\le I}} \tr \Big(\big(\Pi'\otimes \proj{\acc}+\Pi''\otimes \proj{\rej}\big) (-1)^\sigma\big(\alpha_{\acc} +\alpha_{\rej}   \big) \Big)\notag\\
    &= 2\max_{\substack{0\le\Pi'\le I\\0\le\Pi''\le I} }(-1)^\sigma\tr \big((\Pi'\otimes I) \alpha_{\acc}+(\Pi''\otimes I)\alpha_{\rej} \big) \label{eq:split proj orthogonal}\\
    &\ge 2\max_{0\le\Pi'\le I} (-1)^\sigma\tr \big((\Pi'\otimes I)\alpha_{\acc}\big) \label{eq: restrict to acc}\\
    &=2\max_{0\le\Pi'\le I} (-1)^\sigma\tr \Big(\Pi'\big({\Pr[\acc|\rho]}(\rho_{\acc} - \tau_{\acc})+\Delta_{\acc}\tau_{\acc}\big)\Big)
\label{there exists one}\ ,
\end{align}
where Eq.~\eqref{eq:split proj orthogonal} is a consequence of the orthogonality of the classical states $\proj{\acc}$ and $\proj{\rej}$, Eq.~\eqref{eq: restrict to acc} can be seen setting $\Pi'' = 0$, and Eq.~\eqref{there exists one} is obtained by tracing out the classical register.

Choose $\sigma\in \{0,1\}$ such that
$ (-1)^\sigma\Delta_{\acc}\ge0$, i.e., $(-1)^\sigma$ is the sign of $\Delta_\acc$. Then positivity of $\tr(\Pi' \tau_{\acc})$ and Eq.~\eqref{there exists one} implies that
\begin{align}
\delta&\ge2\max_{0\le\Pi'\le I}(-1)^\sigma  \tr \Big({\Pr[\acc|\rho]}\cdot\Pi'(\rho_{\acc} - \tau_{\acc}) \Big) \\
&=2\Pr[\acc|\rho]\cdot  \max_{0\le\Pi'\le I} \tr \Big(\Pi'(-1)^\sigma(\rho_{\acc} - \tau_{\acc}) \Big)\\
&=\Pr[\acc|\rho]\cdot  \|(-1)^\sigma( \rho_{\acc} - \tau_{\acc}) \|_1\\
&=\Pr[\acc|\rho]\cdot  \| \rho_{\acc} - \tau_{\acc} \|_1
\ ,
\end{align}
where we used the linearity of the trace in the first identity, the variational form of the norm~$\|\cdot\|_1$ to obtain the second identity, and the fact $\|\!-\!A\|_1=\|A\|_1$ to reach the last identity.  

By the triangle inequality and by the assumption~\eqref{it:pointwise closeness}, we conclude that
\begin{align}
\left\| \tau_{\acc}-\sigma_{\mathsf{targ}}\right\|_1&\le
\left\| \tau_{\acc}-\rho_{\acc}\right\|_1+\left\| \rho_{\acc}-\sigma_{\mathsf{targ}}\right\|_1
\\
&\le\frac{\delta}{\Pr[\acc|\rho]}+\gamma\ ,\label{eq: eq: min tr eq}
\end{align}
as claimed.
\end{proof}

\subsubsection{Completing the proof of Theorem~\ref{thm:envelope gaussification} \label{sec: complete proof env gaussification}}
We now combine the
results on 
the Protocol~\ref{prot:envelopeshaping} with the truncated comb state~$\ket{\Sha^\varepsilon_L}$ as input (obtained in Section~\ref{sec:shastateinputanalysis})
with Lemma~\ref{lem: heralded channel appx states dist pacc} from Section~\ref{sec: heralded channels} about heralding channels. This gives our main result, Theorem~\ref{thm:envelope gaussification}, establishing that 
the envelope-shaping protocol also produces an approximate GKP state if the input state is only close to a comb state.

\begin{proof}[Proof of Theorem~\ref{thm:envelope gaussification}]:
We consider Protocol~\ref{prot:envelopeshaping} with input state~$\rho=\proj{\Sha^\varepsilon_{L,\Delta}}$. Let us denote the output state conditioned on acceptance by $\rho_\acc=\rho_{|\Omega_L}(\varepsilon,\kappa,\Delta,L)$.
By Lemma~\ref{lem: env shaping rhoacc gkp trace dist} applied with $\varepsilon = \sqrt{\Delta}$, this output state is close to the target state~$\sigma_{\mathsf{targ}}=\proj{\gkp_{\kappa,\Delta}}$, that is, we have 
\begin{align}
        \left\|\rho_\acc-\proj{\gkp_{\kappa,\Delta}}\right|_1 \le 6\kappa\sqrt{L} +6\sqrt{\Delta}+ 7e^{-\kappa^2 L^2/64}\ .\label{eq:dstbnd}
    \end{align}
Moreover, by Lemma~\ref{lem: prot 2 accept prob}, we have
\begin{align}
    \Pr\Big[\textnormal{Protocol~\ref{prot:envelopeshaping} } \accepts \mid \rho
    \Big]&\ge \frac18\left(1 - 2e^{-\kappa^2L^2/256}\right)\ .
\end{align}
    
By assumption, $\tau\in\cB(L^2(\mathbb{R}))$ is a state that is $\xi$-close to~$\proj{\Sha_{L, \Delta}}$. By Corollary~\ref{cor: Sha - Sha epsilon trace distance}, we have
    \begin{align}
        \left\| \proj{ \Sha_{L,\Delta}} - \proj{\Sha_{L,\Delta}^{\varepsilon}}\right\|_1
        &\le 5 \sqrt{\Delta}\ 
    \end{align}
   and hence the triangle inequality implies that
    \begin{align}
        \left\| \tau - \proj{\Sha_{L,\Delta}^{\varepsilon}}\right\|_1
        &\le 5 \sqrt{\Delta} + \xi\ .\label{eq:deltadefinitiondeltaxi}
    \end{align}

We apply Lemma~\ref{lem: heralded channel appx states dist pacc} with parameters
\begin{align}
    \delta&=5 \sqrt{\Delta}+\xi\qquad\textrm{ (cf.~\eqref{eq:deltadefinitiondeltaxi})}\\
    \gamma&=6\kappa\sqrt{L} + 6\sqrt{\Delta}+ 7e^{-\kappa^2 L^2/64}\ ,\qquad\textrm{(cf.~\eqref{eq:dstbnd})}.
\end{align}
We 
conclude that on input~$\tau$, 
Protocol~\ref{prot:envelopeshaping} accepts 
with probability at least 
\begin{align}
    \Pr \Big[\textnormal{Protocol~\ref{prot:envelopeshaping} } \accepts \mid \tau
    \Big]\ge \frac18\left(1 - 2e^{-\kappa^2L^2/256}\right)-\frac52\sqrt{\Delta}-\frac\xi 2\ .
\end{align}
Furthermore, conditioned on acceptance, the output state~$\tau_{\acc}$
is close to the target state $\proj{\Sha_{L,\Delta}}$, i.e.,
\begin{align}
    \left\|\tau_\acc-\proj{\gkp_{\kappa,\Delta}}\right\|_1
    &\le \frac{5\sqrt{\Delta}+\xi}{\frac18(1 - 2e^{-\kappa^2L^2/256})}+6\kappa\sqrt{L} + 6\sqrt{\Delta} + 7e^{-\kappa^2 L^2/64}\ .
\end{align}
This is the claim. (We have already discussed the complexity, i.e., number of operators from the set~$\cG$ of the circuit implementing this protocol.)
\end{proof}

\section{Approximate GKP-state preparation}\label{sec: gkp state prep}

In this section, we present our main protocol for preparing approximate GKP states. We show that given parameters $(\kappa, \Delta)$, this protocol accepts with probability at least
\begin{align}
   \Pr[\acc]\ge \frac{1}{10}\ ,
\end{align} 
in which case it prepares a quantum state $\tau_\acc\in \cB(L^2(\mathbb{R}))$ satisfying
\begin{align}
         \left\| \tau_\acc - \proj{\gkp_{\kappa,\Delta}} \right\|_1 \le   O(\sqrt{\Delta}) + O(\kappa^{1/3})\ ;
\end{align}
or it rejects. The protocol is efficient --- it uses only a linear number of  operations in $(\log1/\kappa,1/\Delta)$ from the set $\cG$ (cf.~\ref{subsec: complexity} \eqref{it:constantstrengthgaussianunitaries} to~\eqref{it:constriesantqubitcontrolled}).

The protocol works in two stages. First, it creates a comb state (using the comb-state-preparation protocol in Section~\ref{sec: comb state prep}). Second, it shapes the prepared comb state by a Gaussian envelope (using the envelope-Gaussification protocol in Section~\ref{sec: Gaussian envelope shaping}).

\begin{algorithm}[H] 
	\caption{Approximate GKP-state-preparation protocol}
     \label{prot: appx GKP state prep}
	\begin{flushleft}
		\textbf{Input:} Parameters $\kappa\in (0,1/4)$ and $\Delta\in (0,1/4)$.\\
		\textbf{Output:} Either \textsf{accept} or \textsf{reject}; and in the case of acceptance, a state of a single mode that is close to the state $\ket{\gkp_{\kappa,\Delta}}$ state, see Theorem~\ref{thm:main}.
		\begin{algorithmic}[1]
            \State Apply the comb-state-preparation Protocol~\ref{prot: comb state prep} with input $\Delta$
            and $n  = \lfloor \frac{4}{3} \log_2 1/\kappa\rfloor$. This results in a state close to the approximate comb state $\proj{\Sha_{L,\Delta}}$ where $L=2^n$.
            \State Apply the envelope-shaping Protocol~\ref{prot:envelopeshaping} with input state $ 
            \tau 
            $ and with parameters $\kappa$ and $L$.
            If the Protocol~\ref{prot:envelopeshaping} accepts, \Return $\mathsf{accept}$ and the single-mode state that it produced.
            \State \Return \textsf{reject} otherwise.
		\end{algorithmic}
	\end{flushleft}
\end{algorithm}

\begin{restatable}{theorem}{thmmain}\label{thm:main} 
There are constants~$c_1,c_2>0$ such that the following holds. 
Given inputs $\kappa, \Delta \in(0,10^{-6})$, the output state of Protocol~\ref{prot: appx GKP state prep}
conditioned on acceptance is close to $\ket{\gkp_{\kappa,\Delta}}$. The protocol accepts with probability at least
\begin{align}
    \Pr[\mathsf{acc}]\ge \frac{1}{10}\ .
\end{align}
The output state $\rho \in \cB(L^2(\mathbb{R}))$ conditioned on acceptance satisfies
\begin{align}
    \left\| \rho - \proj{\gkp_{\kappa,\Delta}} \right\|_1 \le  190 \sqrt{\Delta} + 24\kappa^{1/3}\ .
\end{align}
Furthermore, the protocol requires fewer than $c_1\log1/\kappa+c_2\log1/\Delta$ elementary operations (see Section~\ref{sec: allowed operations}).
\end{restatable}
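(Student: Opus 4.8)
The plan is to compose Protocol~\ref{prot: comb state prep} and Protocol~\ref{prot:envelopeshaping} and to propagate the guarantees of Theorems~\ref{thm:comb state prep} and~\ref{thm:envelope gaussification}. First I would apply Theorem~\ref{thm:comb state prep}: with squeezing parameter $\Delta$ and $n=\lfloor\frac43\log_2 1/\kappa\rfloor$ rounds, Protocol~\ref{prot: comb state prep} outputs a single-mode state $\tau$ with $\|\tau-\proj{\Sha_{L,\Delta}}\|_1\le 17\sqrt\Delta$ where $L=2^n$, using $5n+\ceil{\log 1/\Delta}+4$ elementary operations. The hypotheses are met since $\Delta<10^{-6}<1/4$ and $n\ge 1$; moreover $\kappa<10^{-6}$ forces $n\ge 4$, so $L=2^n\in 16\mathbb N\subseteq 8\mathbb N$, which is what Theorem~\ref{thm:envelope gaussification} requires. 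I would then feed $\tau$ into Protocol~\ref{prot:envelopeshaping} with parameters $\kappa$ and $L$ and apply Theorem~\ref{thm:envelope gaussification} with $\xi=17\sqrt\Delta$; note that this theorem already handles arbitrary inputs close to a comb state, so no further heralding-stability argument is needed at this point.

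Next I would substitute $n=\lfloor\frac43\log_2 1/\kappa\rfloor$ into the bounds of Theorem~\ref{thm:envelope gaussification}. The arithmetic facts I need are $\tfrac12\kappa^{-4/3}\le L\le\kappa^{-4/3}$ (from flooring), hence $6\kappa\sqrt L\le 6\kappa^{1/3}$, and $\kappa^2L^2\ge\tfrac14\kappa^{-2/3}$, so for $\kappa<10^{-6}$ every term of the form $e^{-\kappa^2L^2/c}$ with $c\in\{16,64,128,256\}$ is at most $\kappa^{1/3}$ and, as a multiplicative correction to the prefactors $\tfrac18$ and $\tfrac14$, is negligibly small. Plugging $\xi=17\sqrt\Delta$ into~\eqref{eq:acceptanceprobabilityclaims} gives $\Pr[\acc]\ge\tfrac18(1-2e^{-\kappa^2L^2/256})-\tfrac52\sqrt\Delta-\tfrac{17}2\sqrt\Delta$, and since $\Delta<10^{-6}$ this is at least $\tfrac18-o(1)-11\sqrt\Delta\ge\tfrac1{10}$, proving the acceptance bound. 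Plugging the same $\xi$ into~\eqref{eq:closenessclaimm} yields a numerator $22\sqrt\Delta$ over a denominator at least $\tfrac14(1-o(1))$ (or $\tfrac18(1-o(1))$ if one uses the version obtained inside the proof of Theorem~\ref{thm:envelope gaussification}); adding the $6\sqrt\Delta$, the $6\kappa\sqrt L\le 6\kappa^{1/3}$ and the exponential terms, and using the crude bounds above, gives $\|\rho-\proj{\gkp_{\kappa,\Delta}}\|_1\le 190\sqrt\Delta+24\kappa^{1/3}$, with slack in the constants.

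For the operation count I would add the $5n+\ceil{\log1/\Delta}+4$ operations of Protocol~\ref{prot: comb state prep} to the at most $b_1\log L+b_2\log1/\kappa$ operations of Protocol~\ref{prot:envelopeshaping}. Since $\log L=n\log 2$ and $n=O(\log1/\kappa)$, the total is $O(\log1/\kappa)+O(\log1/\Delta)$, so there exist constants $c_1,c_2$ (absorbing the $b_i$ and the constant overheads) with the claimed bound.

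I expect the only genuine work to be arithmetic bookkeeping: checking that the single threshold $10^{-6}$ is simultaneously small enough that (i) $n=\lfloor\frac43\log_2 1/\kappa\rfloor\ge 4$ so that $L=2^n$ is a multiple of $8$ (indeed $16$), (ii) the tails $e^{-\Theta(\kappa^{-2/3})}$ fall below $\kappa^{1/3}$, and (iii) the residual $O(\sqrt\Delta)$ terms and the multiplicative corrections do not push $\Pr[\acc]$ below $\tfrac1{10}$ or the error above $190\sqrt\Delta+24\kappa^{1/3}$. No conceptual ingredient is needed beyond the two already-proved theorems (Theorem~\ref{thm:envelope gaussification} itself invokes Lemma~\ref{lem: heralded channel appx states dist pacc} to absorb the $17\sqrt\Delta$ input error).
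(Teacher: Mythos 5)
Your proposal is correct and follows essentially the same route as the paper's proof: compose Protocol~\ref{prot: comb state prep} (via Theorem~\ref{thm:comb state prep}) with Protocol~\ref{prot:envelopeshaping} (via Theorem~\ref{thm:envelope gaussification} with $\xi=17\sqrt\Delta$), choose $n=\lfloor\tfrac43\log_2 1/\kappa\rfloor$ so that $\kappa\sqrt{L}\le\kappa^{1/3}$ and $\kappa^2L^2\ge\tfrac14\kappa^{-2/3}$, and then do the same bookkeeping on the tails and constants to reach $190\sqrt\Delta+24\kappa^{1/3}$, the $1/10$ acceptance bound, and the $c_1\log 1/\kappa+c_2\log 1/\Delta$ operation count. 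Your remark about the $\tfrac14$ versus $\tfrac18$ prefactor in the denominator is also consistent with what the paper actually uses in its proof.
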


\begin{figure}[!ht]
	\begin{center}
  \includegraphics{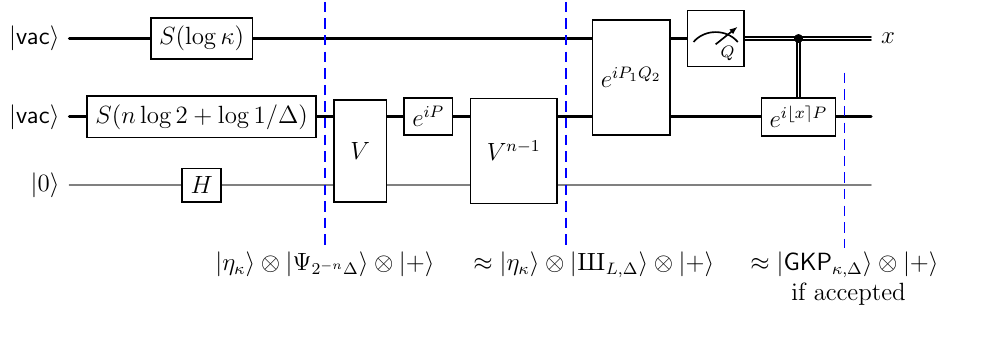}
  \vspace*{-1.2cm}
\end{center}
\caption{Circuit for heralded preparation of approximate GKP state. 
 The unitary $V$ is defined in Fig.~\ref{fig: circuit V}. The exponent above the gate
 $V$ indicates the number of repetitions.
 Again, both the squeezing operations and the classically controlled displacements are implemented as a product of their constant-strength equivalents;
see the discussion surrounding Eq.~\eqref{eq:squeezingdecomposition}.
}\label{fig:GKP-preparation-protocol}
\end{figure}

\begin{proof}
    As described in Protocol~\ref{prot: appx GKP state prep}  we choose parameters 
    \begin{align}
        n=\left\lfloor\frac43 \log_2 1/\kappa\right\rfloor \qquad \textrm{ and }\qquad
        L= 2^n\ . \label{eq: fix parameters}
    \end{align}

    First, Protocol~\ref{prot: appx GKP state prep} 
    prepares a state close to the comb state $\ket{\Sha_{L,\Delta}}$ using the comb-state-preparation protocol (Protocol~\ref{prot: comb state prep}). 
    The comb-state-preparation protocol is run with input parameters $\Delta$ and number of rounds $n=\log_2 L$.
    By Theorem~\ref{thm:comb state prep} this protocol (deterministically) prepares a quantum state $\tau \in \cB(L^2(\mathbb{R}))$ such that
    \begin{align}
         \left\| \tau - \proj{\Sha_{L,\Delta}} \right\|_1 \le 17\sqrt{\Delta}\ .
    \end{align}

Subsequently, Protocol~\ref{prot: appx GKP state prep} applies the envelope-Gaussification protocol (Protocol~\ref{prot:envelopeshaping}) with parameters $\kappa,L$ to the state $\tau$.
By Theorem~\ref{thm:envelope gaussification} (setting $\xi = 17\sqrt{\Delta})$ the corresponding output state $\tau_\acc$ conditioned on acceptance in Protocol~\ref{prot:envelopeshaping} satisfies 
\begin{align}
    \left\|\tau_\acc-\proj{\gkp_{\kappa,\Delta}}\right\|_1\le \left(\frac{22}{\frac18\left(1 - 2e^{-\kappa^2L^2/256}\right)}+6
    \right)\sqrt{\Delta}+6\kappa\sqrt{L} + 7e^{-\kappa^2 L^2/64}\ ,
    \label{eq: rho gkp closeness}
\end{align}
With the chosen parameters, we have  $\kappa L \ge \kappa \cdot 2^{\frac{4}{3} (\log_2 1/\kappa)-1}=\kappa/(2{\kappa^{\frac{4}{3}}})=\kappa^{-1/3}/2$ and by the assumption $0<\kappa< 10^{-6}$, we have
    \begin{align}
        \frac18 \left(1-2e^{-\kappa^2L^2/256}\right)\ge \frac{12}{100} \ . \label{eq: bound prob frac}
    \end{align}
Therefore, using this bound in Eq.~\eqref{eq: rho gkp closeness} gives
\begin{align}
    \left\|\tau_\acc-\proj{\gkp_{\kappa,\Delta}}\right\|_1 &\le 190\sqrt{\Delta} +6\kappa\sqrt{L} + 7e^{-\kappa^2 L^2/64}\\ 
    &\le  190 \sqrt{\Delta} + 6 \kappa^{1/3} + 7e^{-\kappa^2 L^2/64} \qquad \textrm{ by $\kappa \sqrt{L}\le \kappa^{1/3}$\,,}\\
    &\le  190 \sqrt{\Delta} + 6 \kappa^{1/3} + 7e^{-\kappa^{-2/3}/256} \qquad \textrm{ by $\kappa L \ge  \kappa^{-1/3}/2$\,,}\\
    &\le 190 \sqrt{\Delta} + 6 \kappa^{1/3}+ 1792\kappa^{2/3} \qquad \textrm{ by $e^{-x} \le x^{-1}$ for $x>0$\,,}\\
    &\le 190 \sqrt{\Delta} + 6 \kappa^{1/3} + 18 \kappa^{1/3} \quad \textrm{by the assumption $\kappa < 10^{-6}$\,,}\\
    &= 190 \sqrt{\Delta} + 24 \kappa^{1/3}\ .
\end{align}

Furthermore, the Protocol~\ref{prot:envelopeshaping} accepts with probability at least
\begin{align}
\Pr\Big[\textnormal{Protocol~\ref{prot:envelopeshaping} } \accepts \mid \tau
    \Big] \ge \frac1 8\left(1 - 2e^{-\kappa^2L^2/256}\right)- 11\sqrt{\Delta}\ . \label{eq: rho accept prob}
\end{align}
By inserting the bound~\eqref{eq: bound prob frac} into~\eqref{eq: rho accept prob} and using the assumption $0<\Delta<10^{-6}$, we obtain
\begin{align}
     \Pr\Big[\textnormal{Protocol~\ref{prot:envelopeshaping} } \accepts \mid \tau
    \Big]\ge  \frac{12}{100}-11\sqrt{\Delta} \ge \frac{1}{10}\ .
\end{align}
Finally, we analyse the circuit complexity of Protocol~\ref{prot: appx GKP state prep}. Since this protocol is simply the composition of Protocol~\ref{prot: comb state prep} and Protocol~\ref{prot:envelopeshaping}, the 
number of operations used is the sum of the corresponding numbers for these protocols.
By Theorem~\ref{thm:comb state prep}, we have that Protocol~\ref{prot: comb state prep} uses $5n + \ceil{\log 1/\Delta}+5$ elementary operations and by Theorem~\ref{thm:envelope gaussification}, we have that Protocol~\ref{prot:envelopeshaping} uses at most $b_1 \log L + b_2\log 1/\kappa$ operations for some constants $b_1, b_2>0$. Since the parameters in Protocol~\ref{prot: appx GKP state prep} are fixed as in Eq.~\eqref{eq: fix parameters}, i.e., $n = \ceil{(4\log_2 \kappa)/3}$ and $\log L = 2^n$, we can find constants $c_1$ and $c_2$ such that in Protocol~\ref{prot: appx GKP state prep} the total number of elementary operations used is upper bounded by $c_1 \log 1/\kappa + c_2 \log1/\Delta$.
\end{proof} 
Theorem~\ref{thm:main} directly implies the following asymptotic statement about the heralded complexity of approximate GKP states. 
\begin{corollary}\label{cor: upper bound her state complexity}
    There is a polynomial $q(\kappa,\Delta)$ with $q(0,0)=0$ such that for all functions~$\varepsilon(\kappa,\Delta)$ and $p(\kappa,\Delta)$ satisfying
    $p(\kappa,\Delta) \in [0,1/10]$ and $\varepsilon(\kappa,\Delta) \ge q(\kappa,\Delta)$ for sufficiently small $(\kappa,\Delta)$, we have
    \begin{align}
        C_{p(\kappa,\Delta), \varepsilon(\kappa,\Delta)}^{*,\mathsf{her}}(\ket{\gkp_{\kappa,\Delta}}) \le O(\log1/\kappa + \log1/\Delta) \qquad\textrm{for}\qquad(\kappa,\Delta)\rightarrow(0,0)\, .
    \end{align}
\end{corollary}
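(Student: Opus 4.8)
The plan is to obtain the corollary directly from Theorem~\ref{thm:main}, using Protocol~\ref{prot: appx GKP state prep} itself as the witnessing heralded state preparation protocol in the sense of Section~\ref{sec: heralded state complexity}. Concretely, I would fix the constants $c_1,c_2>0$ supplied by Theorem~\ref{thm:main} and set $q(\kappa,\Delta):=190\sqrt{\Delta}+24\kappa^{1/3}$, which is a polynomial (in $\sqrt{\Delta}$ and $\kappa^{1/3}$) with $q(0,0)=0$. Given functions $p(\kappa,\Delta)\in[0,1/10]$ and $\varepsilon(\kappa,\Delta)\ge q(\kappa,\Delta)$ valid for all sufficiently small $(\kappa,\Delta)$, I would then shrink the ``sufficiently small'' threshold, if needed, so that in addition $\kappa,\Delta<10^{-6}$, which is the regime covered by Theorem~\ref{thm:main}.

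For $(\kappa,\Delta)$ below this threshold, Theorem~\ref{thm:main} guarantees that Protocol~\ref{prot: appx GKP state prep} accepts with probability at least $1/10\ge p(\kappa,\Delta)$ and that, conditioned on acceptance, the output $\rho$ obeys $\|\rho-\proj{\gkp_{\kappa,\Delta}}\|_1\le 190\sqrt{\Delta}+24\kappa^{1/3}=q(\kappa,\Delta)\le\varepsilon(\kappa,\Delta)$. Hence Protocol~\ref{prot: appx GKP state prep} is a heralded state preparation protocol producing an $\varepsilon(\kappa,\Delta)$-approximation to $\ket{\gkp_{\kappa,\Delta}}$ with acceptance probability at least $p(\kappa,\Delta)$, so by the very definition of the heralded state complexity, $\cC_{p(\kappa,\Delta),\varepsilon(\kappa,\Delta)}^{*,\mathsf{her}}(\ket{\gkp_{\kappa,\Delta}})$ is at most the number of elementary operations used by this protocol.

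It then remains to check that this operation count --- the quantity $T_1+T_2+(2m+1)+2m'$ of Eq.~\eqref{eq:sumexpressionmtotaloperations}, \emph{including} the worst-case correction cost $T_2$ together with the mode and qubit preparations and measurements --- is $O(\log1/\kappa+\log1/\Delta)$. This is exactly the last assertion of Theorem~\ref{thm:main}: the bound $c_1\log1/\kappa+c_2\log1/\Delta$ there is already phrased in terms of the elementary operations of Section~\ref{sec: allowed operations}, the proof of that theorem explicitly adds in the worst-case correction from Theorem~\ref{thm:envelope gaussification}, and Protocol~\ref{prot: appx GKP state prep} uses only a constant number of auxiliary systems (one extra oscillator and one qubit), so the $(2m+1)+2m'$ overhead is absorbed into the $O(\cdot)$. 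Letting $(\kappa,\Delta)\to(0,0)$ then gives $\cC_{p(\kappa,\Delta),\varepsilon(\kappa,\Delta)}^{*,\mathsf{her}}(\ket{\gkp_{\kappa,\Delta}})\le O(\log1/\kappa+\log1/\Delta)$. The only point requiring care is this last piece of bookkeeping --- confirming that the count quoted in Theorem~\ref{thm:main} genuinely coincides with the quantity $T_1+T_2+(2m+1)+2m'$ entering the definition of $\cC^{*,\mathsf{her}}$, in particular that no correction cost has been undercounted; since the proof of Theorem~\ref{thm:main} adds the bound of Theorem~\ref{thm:envelope gaussification}, which itself already accounts for the $2\lceil\log(L/8+1/2)\rceil+3$ gates realizing the worst-case shift, this holds and no genuine obstacle arises.
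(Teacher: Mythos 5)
Your proposal is correct and matches the paper's intent exactly: the paper offers no separate argument for this corollary, stating it as a direct consequence of Theorem~\ref{thm:main}, which is precisely the reduction you carry out (with the same choice of witness protocol, the same bookkeeping of $T_1+T_2+(2m+1)+2m'$, and $q(\kappa,\Delta)=190\sqrt{\Delta}+24\kappa^{1/3}$ read as a polynomially vanishing bound, in line with the paper's loose use of ``polynomial'').
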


It is useful to phrase Theorem~\ref{thm:main} in terms of a single parameter. 
\begin{corollary}
    Let $N \in \mathbb{N}$ be sufficiently large and assume $\kappa = \mathsf{poly}(1/N)$ and $\Delta = \mathsf{poly}(1/N)$. Then Protocol~\ref{prot: appx GKP state prep} prepares  a state $\rho\in\cB(L^2(\mathbb{R}))$ such that
     \begin{align}
         \left\lVert \rho - \proj{\gkp_{\kappa,\Delta}} \right\rVert_1 \le  \mathsf{poly}(1/N)\,, 
    \end{align}
    with probability at least $1/10$.
    The protocol uses $O(\log N)$ elementary operations.
\end{corollary}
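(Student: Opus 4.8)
The plan is to obtain this as an immediate specialization of Theorem~\ref{thm:main}, with the two parameters $(\kappa,\Delta)$ collapsed into a single asymptotic parameter~$N$. Write $\kappa=\kappa(N)$ and $\Delta=\Delta(N)$ for the given functions. Unpacking the hypothesis $\kappa,\Delta=\poly(1/N)$, there are constants $a,b>0$ and $\alpha,\beta\geq \alpha',\beta'>0$ with $a'N^{-\alpha'}\leq \kappa(N)\leq aN^{-\alpha}$ and $b'N^{-\beta'}\leq \Delta(N)\leq bN^{-\beta}$ for all large~$N$. First I would observe that since $\kappa(N),\Delta(N)\to 0$ as $N\to\infty$, there is an $N_0$ such that $\kappa(N),\Delta(N)\in(0,10^{-6})$ for all $N\geq N_0$; this is precisely the ``sufficiently large'' hypothesis, and it is exactly what is needed to invoke Theorem~\ref{thm:main}.

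Next, apply Theorem~\ref{thm:main} with these parameters. It yields $\Pr[\acc]\geq 1/10$ directly. For the error, the theorem gives $\|\rho-\proj{\gkp_{\kappa,\Delta}}\|_1\leq 190\sqrt{\Delta}+24\kappa^{1/3}$; substituting $\Delta\leq bN^{-\beta}$ and $\kappa\leq aN^{-\alpha}$ produces $\|\rho-\proj{\gkp_{\kappa,\Delta}}\|_1\leq 190\sqrt{b}\,N^{-\beta/2}+24a^{1/3}N^{-\alpha/3}$, which is again of the form $\poly(1/N)$ since a fixed positive power of a polynomially small quantity is polynomially small. For the operation count, Theorem~\ref{thm:main} bounds the number of elementary operations by $c_1\log 1/\kappa+c_2\log 1/\Delta$; using $\kappa\geq a'N^{-\alpha'}$ and $\Delta\geq b'N^{-\beta'}$ gives $\log 1/\kappa\leq \alpha'\log N+\log(1/a')=O(\log N)$ and likewise $\log 1/\Delta=O(\log N)$, so the total is $O(\log N)$.

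I do not expect a genuine obstacle here: the corollary is a reformulation of Theorem~\ref{thm:main}. The only points deserving a word of care are (i) verifying that the range restriction $\kappa,\Delta<10^{-6}$ holds for large $N$, which follows from $\kappa,\Delta\to 0$; and (ii) reading $\poly(1/N)$ as ``bounded above and below by (possibly distinct) positive powers of $1/N$'', so that both the error estimate (which involves the fractional powers $\sqrt{\Delta}$ and $\kappa^{1/3}$) and the logarithmic operation count come out as claimed. Everything else is just monotonicity of the bounds in Theorem~\ref{thm:main} in the parameters $(\kappa,\Delta)$.
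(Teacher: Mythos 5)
Your proposal is correct and matches the paper's intent exactly: the corollary is stated as an immediate specialization of Theorem~\ref{thm:main}, and your argument (checking $\kappa,\Delta<10^{-6}$ for large $N$, substituting the polynomial upper bounds into $190\sqrt{\Delta}+24\kappa^{1/3}$, and using the polynomial lower bounds to get $\log 1/\kappa,\log 1/\Delta=O(\log N)$) is precisely the routine verification the paper leaves implicit. Your remark that $\poly(1/N)$ must be read as bounded above \emph{and} below by powers of $1/N$ is the right reading, since the $O(\log N)$ gate count genuinely needs the lower bound.
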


\section{Lower bounds on the complexity of approximate GKP states\label{sec: converse bound}}
In this section, we establish lower bounds on the unitary and heralded complexity of the state~$\ket{\gkp_{\kappa,\Delta}}$. We proceed as follows. In Section~\ref{sec:energylowerbound}
we establish an upper bound on
the energy of a state~$U\ket{\Psi}$ produced by 
applying a circuit~$U=U_T\cdots U_1$ 
with a limited number~$T$ of gates from the set~$\cG$. Here~$\ket{\Psi}=\ket{\vac}\otimes\ket{\vac}^{\otimes m}\otimes\ket{0}^{\otimes m'}$. Concretely, we show that 
\begin{align}
\langle U\Psi,H U\Psi\rangle\leq e^{8\pi T}(m+2)=:E_m(T)\ \label{eq:etupperboundexplicit}
\end{align}
where $H=\sum_{k=1}^{m+1} (Q_k^2+P_k^2)$, see Lemma~\ref{lem: moment limits unitary state prep}.  This is an immediate consequence of the fact that the unitary operations constituting the set~$\cG$ are moment-limited. 

Eq.~\eqref{eq:etupperboundexplicit} implies that the reduced density operator~$\rho=\tr_{m,m'} U\proj{\Psi}U^\dagger$ on the first mode has most of its support on the subspace spanned by functions with support on a bounded interval~$[-R,R]$ in position-space, i.e., we have
\begin{align}
\tr(\Pi_{[-R,R]}\rho)&\geq 1-E_m(T)/R^2\qquad\textrm{ for }\qquad R>0\ .\label{eq:lowerboundpir}
    \end{align}
Here $\Pi_{[-R,R]}$ denotes the orthogonal projection onto the subspace of~$L^2(\mathbb{R})$ of functions whose support is contained in the interval~$[-R,R]$.  Eq.~\eqref{eq:lowerboundpir} is a direct consequence of Markov's inequality, see Lemma~\ref{lem:markovinequalityprojection}. A bound analogous to~\eqref{eq:lowerboundpir} applies to the orthogonal projection~$\widehat{\Pi}_{[-R,R]}$ onto the subspace of functions whose Fourier transform has support on~$[-R,R]$.

Finally, we show that a bound
of the form~\eqref{eq:lowerboundpir}
immediately implies a lower bound on the distance of~$\rho$ to the state~$\proj{\gkp_{\kappa,\Delta}}$, 
 see Lemma~\ref{lem:distancetogkpprojection}. This is because the norm $\|\Pi_{[-R,R]} \ket{\gkp_{\kappa,\Delta}}\|$ of the projected state is bounded (for suitably chosen $R$), and the same holds for the projection~$\widehat{\Pi}_{[-R,R]}$ in momentum space. We establish corresponding tail bounds in 
 Section~\ref{sec:concentrationboundgkp}. 
 Combined, these results yield a lower bound on the unitary complexity of~$\ket{\gkp_{\kappa,\Delta}}$, see Section~\ref{sec:unitarycomplexitygkp}. We then extend these arguments to the heralded state complexity in Section~\ref{sec: lower bound heralded state complexity}.

\subsection{Moment limits on the gate set~$\cG$}
Here, we argue that each unitary~$U\in\cG$ in our gate set~$\cG$ is moment-limited, i.e., it cannot significantly increase the norm of the displacement vector (i.e., the first moments) of a state, nor the energy (a sum of second moments). Such moment bounds are well-known~\cite{winter2017energyconstraineddiamondnormapplications, 2018_Shirokov} and widely used. For example, denoting
the Hamiltonian of $n$ independent harmonic oscillators by~$H=\sum_{j=1}^{2n}R_j^2$ where $R= (Q_1, P_1,\dots, Q_{n},P_{n})$,
phase-space displacements, rotations, beamsplitters and single-mode squeezers satisfy (see e.g. \cite[p. 30]{dias2024classicalsimulationnongaussianbosonic})
\begin{align}
\begin{aligned}
\operatorname{tr}\left(H e^{-ia R_i} \rho \e^{i a R_i} \right)&\le \operatorname{tr}(H \rho)+ (2\pi)^2 +4\pi \|s(\rho)\| \label{eq: energy bound displacement}  & &  \text { for all } & & a \in [-2\pi,2\pi] \\
\operatorname{tr}\left(H P_j(\phi)^{\dagger} \rho P_j(\phi)\right) & =\operatorname{tr}(H \rho) & & \text { for all } & & \phi \in [-2\pi, 2\pi] \\
\operatorname{tr}\left(H B_{j, k}(\omega)^{\dagger} \rho B_{j, k}(\omega)\right) & =\operatorname{tr}(H \rho) & & \text { for all } & & \omega \in [-2\pi,2\pi] \\
\operatorname{tr}\left(H S_j(z)^{\dagger} \rho S_j(z)\right) & \leq e^{4\pi} \operatorname{tr}(H \rho) & & \text { for all } & & z\in[-2\pi,2\pi]\,
\end{aligned}
\end{align}
when applied to a state~$\rho\in\cB(L^2(\mathbb{R})^{\otimes n})$ with finite first and second moments. Here $s(\rho) \in \mathbb{R}^{2n}$ is defined by its entries $s(\rho)_j = \tr(R_j \rho)$.

For completeness, we establish analogous bounds for our gate set~$\cG$, whose unitaries act on a Hilbert space of the form~$L^2(\mathbb{R})^{\otimes n}\otimes (\mathbb{C}^2)^{\otimes n'}$.

\begin{lemma}[Moment-limit on phase-space displacements]\label{lem:momentlimitphasespacedisplacements}
Consider an $n$-mode bosonic system with (vector of) mode operators~$R=(Q_1,P_1,\ldots,Q_n,P_n)$. 
For $d=(d^Q,d^P)\in\mathbb{R}^n\times\mathbb{R}^n\cong \mathbb{R}^{2n}$, let 
$D(d)=e^{i\sum_{j=1}^n (d^Q_j Q_j-d^P_jP_j)}$ be the displacement operator in the direction~$d$. 
Let $\rho\in\cB(L^2(\mathbb{R}))$ be a state with finite first and second moments. Then, 
\begin{align}
\tr(H D(d)\rho D(d)^\dagger) &\leq \tr(H\rho) + 2 \|d\|\cdot \|s(\rho)\|+\|d\|^2\label{eq:secondmomentsHeffect}
\end{align}
where $\|d\|=\sqrt{\sum_{j=1}^{2n} d_j^2}$ denotes the Euclidean norm of~$d$ and where $s(\rho)\in\mathbb{R}^{2n}$ is the displacement vector of~$\rho$ defined by its entries~$s(\rho)_j=\tr(R_j\rho)$.  Furthermore,
 the Euclidean norm of the displacement of the resulting state is bounded as
 \begin{align}
     \left\|s(D(d)\rho D(d)^\dagger)\right\|&\leq 
     \|s(\rho)\|+\|d\|\ .\label{eq:upperboundnormdisplacmeentm}
 \end{align}
\end{lemma}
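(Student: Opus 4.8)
The statement is a purely kinematical fact about how the Heisenberg action of a displacement unitary transforms the first and second moments, so the plan is to reduce everything to the conjugation identity $D(d)^\dagger R_j D(d) = R_j + d_j I$ (with the sign conventions fixed in the excerpt, so that $d=(d^Q,d^P)$ enters $H=\sum_j R_j^2$ through $R_j\mapsto R_j+d_j$). First I would record that, for a state $\rho$ with finite first and second moments, conjugation by $D(d)$ preserves finiteness of these moments, so all traces below are well defined and the cyclicity of the trace applies.

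For the energy bound~\eqref{eq:secondmomentsHeffect}, I would write $\tr(H D(d)\rho D(d)^\dagger) = \tr\big(D(d)^\dagger H D(d)\,\rho\big)$ and expand $D(d)^\dagger H D(d) = \sum_{j=1}^{2n}(R_j+d_jI)^2 = H + 2\sum_j d_j R_j + \|d\|^2 I$. Taking the trace against $\rho$ gives $\tr(H\rho) + 2\sum_j d_j \tr(R_j\rho) + \|d\|^2 = \tr(H\rho) + 2\,\langle d, s(\rho)\rangle + \|d\|^2$, and Cauchy--Schwarz $|\langle d,s(\rho)\rangle|\le \|d\|\cdot\|s(\rho)\|$ yields exactly~\eqref{eq:secondmomentsHeffect}.

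For the displacement bound~\eqref{eq:upperboundnormdisplacmeentm}, I would compute the entries of $s(D(d)\rho D(d)^\dagger)$ directly: $s(D(d)\rho D(d)^\dagger)_j = \tr\big(R_j D(d)\rho D(d)^\dagger\big) = \tr\big(D(d)^\dagger R_j D(d)\,\rho\big) = \tr\big((R_j+d_jI)\rho\big) = s(\rho)_j + d_j$, i.e., $s(D(d)\rho D(d)^\dagger) = s(\rho) + d$ as a vector in $\mathbb{R}^{2n}$. The triangle inequality for the Euclidean norm then gives $\|s(\rho)+d\|\le \|s(\rho)\|+\|d\|$, which is~\eqref{eq:upperboundnormdisplacmeentm}.

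There is essentially no obstacle here; the only point requiring a little care is the bookkeeping of which entries of $d$ attach to which quadrature operator (the $d^Q$ versus $-d^P$ convention in $D(d)=e^{i\sum_j(d^Q_jQ_j-d^P_jP_j)}$), but since the final bounds are stated in terms of $\|d\|$ only, these signs are immaterial to the estimates. A secondary technical point is justifying that $D(d)\rho D(d)^\dagger$ indeed has finite moments so that the manipulations with unbounded operators are legitimate; this follows because $\tr((R_j+d_jI)^2\rho)<\infty$ whenever $\tr(R_j^2\rho)<\infty$ and $\tr(R_j\rho)$ is finite, which is part of the hypothesis on $\rho$.
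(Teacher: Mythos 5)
Your proposal is correct and follows essentially the same route as the paper's proof: conjugating $H$ by $D(d)$ to get $H+2\sum_j d_jR_j+\|d\|^2 I$ and applying Cauchy--Schwarz, and computing $s(D(d)\rho D(d)^\dagger)=s(\rho)+d$ followed by the triangle inequality. No gaps; the brief remarks on moment finiteness and sign conventions are fine but not needed beyond what the paper already assumes.
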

By definition of~$\cG$, Lemma~\ref{lem:momentlimitphasespacedisplacements} implies that for any single-mode displacement $D(d)\in \cG$, we have 
\begin{align}
\renewcommand*{\arraystretch}{1.225}
\begin{matrix*}[l]
\left\|s(D(d)\rho D(d)^\dagger)\right\|&\leq & \|s(\rho)\| + 2\pi\\  
\tr(H D(d) \rho D(d)^\dagger) &\leq& \tr(H\rho)+ 4\pi \|s(\rho)\|+(2\pi)^2\ .
\end{matrix*}\label{eq:displacementsingboundmoment}
\end{align}

\begin{proof}
Since the displacement operator~$D(d)$ acts on mode operators as
\begin{align}
\renewcommand*{\arraystretch}{1.225}
\begin{matrix*}[l]
D(d)^\dagger Q_j D(d)&=& Q_j+d^Q_j I\\
D(d)^\dagger P_j D(d)&=& P_j+d^P_j I
\end{matrix*}\qquad\textrm{ for all }\qquad j\in [n]\ ,\label{eq:displacementactionqrpj}
\end{align}
we conclude that the displacement vector of the state~$D(d)\rho D(d)^\dagger$ is equal to
\begin{align}
    s(D(d)\rho D(d)^\dagger)&=s(\rho)+d\ .
\end{align}
Eq.\eqref{eq:upperboundnormdisplacmeentm} immediately follows from this using the triangle inequality for the Euclidean norm. 

Again using~\eqref{eq:displacementactionqrpj}, we also have 
\begin{align}
D(d)^\dagger H D(d)&=\sum_{j=1}^{2n} (R_j+d_jI)^2\\
&=\sum_{j=1}^{2n} R_j^2 +2\sum_{j=1}^{2n} d_j R_j+\left(\sum_{j=1}^{2n} d_j^2\right)\cdot I\\
&=H+2\sum_{j=1}^{2n} d_j R_j+\|d\|^2\cdot I\ .
\end{align}
This implies that
\begin{align}
\tr( HD(d)\rho D(d)^\dagger)&=\tr(D(d)^\dagger H D(d)\rho)\\
&=\tr(H\rho)+2 s(\rho)^T d+\|d\|^2\\
&\leq \tr(H\rho) +2 \|s(\rho)\|\cdot \|d\|+\|d\|^2
\end{align}
by the Cauchy-Schwarz inequality. This is the claim in Eq.~\eqref{eq:secondmomentsHeffect}.
\end{proof}

\begin{lemma}[Moment-limit on quadratic Gaussian unitaries]\label{lem:momentlimitquadraticunitaries}
Consider an $n$-mode bosonic system with (vector of) mode operators~$R=(Q_1,P_1,\ldots,Q_n,P_n)$. 
Let $H=\sum_{j=1}^n (Q_j^2+P_j^2)=\sum_{k=1}^{2n} R_k^2$. 
Let $A=A^T\in\mathsf{Mat}_{2n\times 2n}(\mathbb{R})$ be a symmetric matrix, and let $U(A)=e^{iH(A)}$ be the Gaussian unitary defined in terms of the Hamiltonian~$H(A)=\frac{1}{2}R^TA R$.
Let $\rho\in\cB(L^2(\mathbb{R}))$ be a state with finite first and second moments. Then 
the norms of the displacement vectors~$s(\rho),s(U(A)\rho U(A)^\dagger)\in\mathbb{R}^{2n}$ are related by 
\begin{align}
    \left\|s(U(A)\rho U(A)^\dagger)\right\|&\leq 
e^{\|A\|}\cdot \|s(\rho)\|\ . \label{eq: limitdisplacement}
\end{align} 
Furthermore, we have 
\begin{align}
\tr(H U(A) \rho U(A)^\dagger) &\leq e^{2\|A\|}\tr(H\rho)\ .\label{eq:momentlimithua}
\end{align}
\end{lemma}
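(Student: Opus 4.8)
\textbf{Strategy.} The plan is to reduce both claims to the Heisenberg-picture action of $U(A)$ on the quadratures, which by Eq.~\eqref{eq:symplecticactionunitarygaussian} is multiplication by the symplectic matrix $S(A)=e^{AJ}$, and then to control $\|S(A)\|$ in terms of $\|A\|$. The key linear-algebra fact I would isolate first is that the symplectic form $J$ is orthogonal ($J^TJ=I$, since $J^T=-J$ and $J^2=-I$), hence $\|AJ\|=\|A\|$, and therefore
\begin{align}
\|S(A)\| = \|e^{AJ}\| \leq e^{\|AJ\|} = e^{\|A\|}\ ,
\end{align}
using the standard bound $\|e^M\|\leq e^{\|M\|}$ for the operator norm of a matrix exponential. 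This single estimate will drive both inequalities.

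\textbf{Step 1: the displacement bound~\eqref{eq: limitdisplacement}.} By~\eqref{eq:symplecticactionunitarygaussian}, for each $j$ we have $U(A)^\dagger R_j U(A)=\sum_k S(A)_{j,k}R_k$, so taking expectations in the state $\rho$ gives $s(U(A)\rho U(A)^\dagger)_j = \tr(R_j\, U(A)\rho U(A)^\dagger) = \sum_k S(A)_{j,k}\,\tr(R_k\rho)$, i.e.\ $s(U(A)\rho U(A)^\dagger) = S(A)\, s(\rho)$ (possibly up to transposing $S(A)$, which does not matter for the norm since $\|S(A)^T\|=\|S(A)\|$). Then $\|s(U(A)\rho U(A)^\dagger)\| \leq \|S(A)\|\cdot\|s(\rho)\| \leq e^{\|A\|}\|s(\rho)\|$, which is~\eqref{eq: limitdisplacement}.

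\textbf{Step 2: the energy bound~\eqref{eq:momentlimithua}.} Here I would write $H=\sum_{k=1}^{2n}R_k^2 = R^T R$ as a ``vector of operators'' contracted with itself, and compute
\begin{align}
U(A)^\dagger H\, U(A) = \sum_k \Big(\sum_j S(A)_{k,j}R_j\Big)^2 = \sum_{j,j'} (S(A)^T S(A))_{j,j'}\, R_j R_{j'}\ .
\end{align}
Since $S(A)^TS(A)$ is a positive semidefinite matrix with $\|S(A)^TS(A)\|=\|S(A)\|^2\leq e^{2\|A\|}$, one has the operator inequality $S(A)^TS(A)\preceq e^{2\|A\|}\, I_{2n}$, and because the Gram-type form $\sum_{j,j'}M_{j,j'}R_jR_{j'}$ is monotone in $M\succeq 0$ (it equals $\sum_\ell (\sum_j (\sqrt{M})_{\ell,j}R_j)^2 \geq 0$ for $M\succeq 0$, applied to $e^{2\|A\|}I - S(A)^TS(A)$), we get $U(A)^\dagger H\, U(A) \preceq e^{2\|A\|}H$ as operators. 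Taking the trace against $\rho\geq 0$ yields $\tr(H\,U(A)\rho U(A)^\dagger)=\tr(U(A)^\dagger H\,U(A)\,\rho)\leq e^{2\|A\|}\tr(H\rho)$, which is~\eqref{eq:momentlimithua}. Throughout, the assumption that $\rho$ has finite first and second moments is what makes all traces well-defined and the manipulations legitimate (one may argue on a core of nice states and pass to the limit if one wants to be careful about unbounded-operator domain issues).

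\textbf{Main obstacle.} The genuinely nontrivial content is just the estimate $\|e^{AJ}\|\leq e^{\|A\|}$ together with the observation that $J$ is orthogonal; everything else is bookkeeping with the symplectic Heisenberg action already recorded in~\eqref{eq:symplecticactionunitarygaussian}. The one point requiring a little care is the passage from the matrix inequality $S^TS\preceq e^{2\|A\|}I$ to the operator inequality for the quadratic form in the (unbounded) $R_j$'s; I would handle this by writing $e^{2\|A\|}I - S^TS = B^TB$ for some real matrix $B$ and noting $\sum_{j,j'}(B^TB)_{j,j'}R_jR_{j'}=\sum_\ell\big(\sum_j B_{\ell,j}R_j\big)^2\succeq 0$, which is a sum of squares of (essentially) self-adjoint operators and hence manifestly positive on the relevant domain.
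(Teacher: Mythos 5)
Your proposal is correct and follows essentially the same route as the paper: the Heisenberg action via $S(A)=e^{AJ}$, the bound $\|e^{AJ}\|\leq e^{\|AJ\|}\leq e^{\|A\|}$, the resulting formula $s(U(A)\rho U(A)^\dagger)=S(A)s(\rho)$, and the matrix inequality $S(A)^TS(A)\preceq e^{2\|A\|}I_{2n}$ for the energy bound. The only cosmetic difference is that you establish the operator inequality $U(A)^\dagger H U(A)\preceq e^{2\|A\|}H$ directly via a sum-of-squares factorization, whereas the paper reduces to pure states and phrases the same positivity through the Gram matrix $G_{j,k}=\langle R_j\Psi,R_k\Psi\rangle$; both are equivalent bookkeeping of the same fact.
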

By definition of~$\cG$, Lemma~\ref{lem:momentlimitquadraticunitaries} means that for any Gaussian unitary~$U=U(A)\in\cG$, we have the inequalities
\begin{align}
\left\|s(U(A)\rho U(A)^\dagger)\right\|&\leq 
e^{2\pi}\|s(\rho)\|\\
\tr(H U(A) \rho U(A)^\dagger) &\leq e^{4\pi}\tr(H\rho)\ .
\label{eq:momentlimitgaussianquadratic}
\end{align}

\begin{proof}
Let $J\in \mathsf{Mat}_{2n\times 2n}(\mathbb{R})$ be the symplectic form associated with the $n$-mode system, and let $S(A)=e^{A J}\in \mathrm{Sp}(2 n)$
be the symplectic matrix describing the action of~$U(A)$, see Eq.~\eqref{eq:symplecticactionunitarygaussian}.
Since
\begin{align}
U(A)^\dagger R_j U(A)&=\sum_{k=1}^{2n} S(A)_{j,k}R_k\ ,
    \end{align}
 the displacement vector of the state~$U(A)\rho U(A)^\dagger$ is given by matrix-vector multiplication, i.e.,
\begin{align}
s(U(A)\rho U(A)^\dagger)&=S(A)s(\rho)\ .
    \end{align}
    It follows that 
    \begin{align}
        \|s(U(A)\rho U(A)^\dagger)\|&\leq 
        \|S(A)\|\cdot \|s(\rho)\|\ ,\label{eq:intermediatevectorinequalt}
    \end{align}
    where $\|S(A)\|$ denotes the operator norm of~$S(A)$.
By submultiplicativity of the operator norm, we have
\begin{align}
    \|S(A)\|&=\|e^{AJ}\|\leq e^{\|AJ\|}\leq e^{\|A\|}\ ,
\end{align}
where we use the fact that $\|J\|=1$. 
It follows from~\eqref{eq:intermediatevectorinequalt}
that
\begin{align}
    \|s(U(A)\rho U(A)^\dagger)\|&\leq e^{\|A\|}\|s(\rho)\|\ ,
\end{align}
as claimed in~\eqref{eq: limitdisplacement}. 
    
In the following, we show~\eqref{eq:momentlimithua}. Without loss of generality we assume that $\rho=\proj{\Psi}$ is a pure state. The general case follows by spectrally decomposing $\rho$.
For any symmetric matrix~$M=M^T\in\mathsf{Mat}_{2n\times 2n}(\mathbb{R})$, we also have 
\begin{align}
U(A)^\dagger H(M) U(A)&=H(S(A)^TMS(A))
\end{align}
by Eq.~\eqref{eq:symplecticactionunitarygaussian}. In particular, because $H=2H(I_{2n\times 2n})$, this means that
\begin{align}
U(A)^\dagger H U(A)&=2H(S(A)^TS(A))\ .
\end{align}
Observe that for any symmetric matrix~$M=M^T\in\mathsf{Mat}_{2n\times 2n}(\mathbb{R})$, we have
\begin{align}
\langle \Psi,H(M)\Psi\rangle &=\frac{1}{2}\sum_{j,k=1}^{2n} M_{j,k} G_{j,k}=\frac{1}{2}\tr(MG)\ ,
\end{align}
where  the Hermitian matrix~$G=G^\dagger\in\mathsf{Mat}_{2n\times 2n}(\mathbb{C})$ is defined by its entries
\begin{align}
G_{j,k}&= \langle R_j\Psi,R_k\Psi\rangle\qquad\textrm{ for }\quad j,k\in [2n]\ ,
\end{align}
and where we used the symmetry of~$M$ in the second identity.

Applied to the expression of interest with $H=2H(I_{2n\times 2n})$, we have 
\begin{align}
\tr(H\rho)&= \tr(G)\ \label{eq:traceexpressionhrho}
\end{align}
and
\begin{align}
\tr(H U(A) \rho U(A)^\dagger)&=\tr(U(A)^\dagger H U(A) \rho )\\
&=\tr(S(A)^TS(A)G)\ .\label{eq:operatorxmnm}
\end{align}
Since the operator norm is submultiplicative, we have 
\begin{align}
\|S(A)^T S(A)\|&= \|(e^{AJ})^T e^{AJ}\|\leq \|(e^{AJ})^T\|\cdot \|e^{AJ}\|\leq e^{\|(AJ)^T\|}\cdot e^{\|AJ\|}=e^{2\|AJ\|}\leq e^{2\|A\|}\ ,
\end{align}
where we used that $\|AJ\|\leq \|A\|\cdot\|J\|=\|A\|$ since $\|J\|=1$ 
for any symmetric matrix~$A=A^T\in\mathsf{Mat}_{2n\times 2n}(\mathbb{R})$. In particular, this means that we have the operator inequality
\begin{align}
S(A)^TS(A) \leq e^{2\|A\|}I_{2n\times 2n}\qquad\textrm{ for any symmetric matrix }A=A^T\in\mathsf{Mat}_{2n\times 2n}(\mathbb{R})\ .
\end{align}
Inserting this into~\eqref{eq:operatorxmnm} and combining with~\eqref{eq:traceexpressionhrho} imply the claim~\eqref{eq:momentlimithua}, because $G$ is a Gram matrix and thus positive semidefinite. 
\end{proof}

\begin{lemma}[Moment-limit on qubit-controlled displacements]\label{lem: mom limit qubit ctrl displacment}
Consider a system with Hilbert space~$L^2(\mathbb{R})^{\otimes n}\otimes (\mathbb{C}^2)^{\otimes n'}$.
Consider the result~$(\mathsf{ctrl}_jD(d))\rho(\mathsf{ctrl}_jD(d))^\dagger$ of applying a qubit-$j$-controlled displacement~$D(d)$, $d\in\mathbb{R}^{2n}$ to a state~$\rho\in\cB(L^2(\mathbb{R})\otimes (\mathbb{C}^2)^{\otimes n'})$ with finite first moments.
The  Euclidean norm of the displacement vector of this state is  bounded by
\begin{align}
\left\|s\left((\mathsf{ctrl}_jD(d))\rho(\mathsf{ctrl}_jD(d))^\dagger\right)\right\|&
    \leq \|s(\rho)\|+\|d\|\ .\label{eq:firstclaimcontrolleddisplacement} 
\end{align}
Furthermore, for any state~$\rho\in\cB(L^2(\mathbb{R})^{\otimes n}\otimes (\mathbb{C}^2)^{\otimes n'})$ with finite first and second moments, we have 
\begin{align}
\tr\left(H(\mathsf{ctrl}_jD(d))\rho (\mathsf{ctrl}_jD(d))^\dagger\right) &\leq 
2\tr(H\rho)+2\|d\|\cdot \|s(\rho)\|+\|d\|^2\ .
\end{align}
\end{lemma}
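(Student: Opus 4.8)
The plan is to reduce the statement to the single-displacement case (Lemma~\ref{lem:momentlimitphasespacedisplacements}) by splitting on the value of the control qubit. Write $\mathsf{ctrl}_jD(d)=P_0+D(d)P_1$, where $P_0:=I_{L^2(\mathbb{R})^{\otimes n}}\otimes\proj{0}_j$ and $P_1:=I_{L^2(\mathbb{R})^{\otimes n}}\otimes\proj{1}_j$ are orthogonal projections with $P_0+P_1=I$. Since $P_0,P_1$ act on the qubit register only, they commute with every quadrature operator $R_k$, with $D(d)$, and with $H=\sum_{k=1}^{2n}R_k^2$. Expanding,
\begin{align}
(\mathsf{ctrl}_jD(d))\,\rho\,(\mathsf{ctrl}_jD(d))^\dagger=P_0\rho P_0+D(d)P_1\rho P_1D(d)^\dagger+D(d)P_1\rho P_0+P_0\rho P_1D(d)^\dagger\ .
\end{align}
When traced against any operator $X$ commuting with $P_0$ and $P_1$ — in our case $X=R_k$ and $X=H$ — the two cross terms vanish, since $\tr(XD(d)P_1\rho P_0)=\tr(XD(d)P_0P_1\rho)=0$ and likewise $\tr(XP_0\rho P_1D(d)^\dagger)=0$. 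Hence only the diagonal pieces $P_0\rho P_0$ and $D(d)P_1\rho P_1D(d)^\dagger$ contribute, and I will handle these using $D(d)^\dagger R_kD(d)=R_k+d_kI$ and $D(d)^\dagger HD(d)=H+2\sum_kd_kR_k+\|d\|^2I$ (the latter obtained by squaring the former and summing, exactly as in the proof of Lemma~\ref{lem:momentlimitphasespacedisplacements}).

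For the first-moment bound I take $X=R_k$. Using $\tr(R_kP_0\rho P_0)+\tr(R_kP_1\rho P_1)=\tr(R_k\rho)=s(\rho)_k$ and $D(d)^\dagger R_kD(d)=R_k+d_kI$ on the diagonal pieces, one gets $\tr\big(R_k(\mathsf{ctrl}_jD(d))\rho(\mathsf{ctrl}_jD(d))^\dagger\big)=s(\rho)_k+q_1d_k$ with $q_1:=\tr(P_1\rho)\in[0,1]$. Thus $s\big((\mathsf{ctrl}_jD(d))\rho(\mathsf{ctrl}_jD(d))^\dagger\big)=s(\rho)+q_1d$, and \eqref{eq:firstclaimcontrolleddisplacement} follows from the triangle inequality together with $q_1\le1$.

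For the energy bound I take $X=H$. Set $q_0:=\tr(P_0\rho)=1-q_1$, $u:=s(P_0\rho P_0)$ and $w:=s(P_1\rho P_1)$, so that $u+w=s(\rho)$ by the same computation as above; also $\tr(HP_0\rho P_0)+\tr(HP_1\rho P_1)=\tr(H\rho)$. The diagonal computation then gives
\begin{align}
\tr\big(H(\mathsf{ctrl}_jD(d))\rho(\mathsf{ctrl}_jD(d))^\dagger\big)=\tr(H\rho)+2\langle d,w\rangle+q_1\|d\|^2\ .
\end{align}
I estimate $2\langle d,w\rangle=2\langle d,s(\rho)\rangle-2\langle d,u\rangle\le 2\|d\|\,\|s(\rho)\|+2\|d\|\,\|u\|$, reducing the claim to $2\|d\|\,\|u\|\le\tr(H\rho)+q_0\|d\|^2$. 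The key estimate is $\|u\|^2\le q_0\tr(H\rho)$: when $q_0>0$ write $P_0\rho P_0=q_0\sigma_0$ for a state $\sigma_0$, whence $\|u\|^2=q_0^2\|s(\sigma_0)\|^2\le q_0^2\tr(H\sigma_0)=q_0\tr(HP_0\rho P_0)\le q_0\tr(H\rho)$, where the middle inequality is the coordinatewise variance bound $(\tr(R_k\sigma_0))^2\le\tr(R_k^2\sigma_0)$ summed over $k$, and the last holds because $\tr(HP_0\rho P_0)$ and $\tr(HP_1\rho P_1)$ are non-negative and sum to $\tr(H\rho)$ (the case $q_0=0$ is trivial, since then $u=0$). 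Feeding this into the weighted arithmetic--geometric mean inequality $2\|u\|\,\|d\|\le\|u\|^2/q_0+q_0\|d\|^2$ gives $2\|d\|\,\|u\|\le\tr(H\rho)+q_0\|d\|^2$, and combining with $q_0\|d\|^2+q_1\|d\|^2=\|d\|^2$ yields exactly $2\tr(H\rho)+2\|d\|\,\|s(\rho)\|+\|d\|^2$.

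The one step that is not mechanical is this last estimate: a crude bound such as $\|u\|\le\sqrt{\tr(H\rho)}$ leaves a stray additive $\|d\|^2$ and fails to recover the stated inequality, so one must keep the weight $q_0$ in both $\|u\|^2\le q_0\tr(H\rho)$ and in the residual $q_0\|d\|^2=(1-q_1)\|d\|^2$ and pair them through the weighted AM--GM inequality. Everything else — commuting the qubit projectors past $R_k$, $D(d)$ and $H$, and the two diagonal computations — is routine and parallels the proofs of Lemmas~\ref{lem:momentlimitphasespacedisplacements} and~\ref{lem:momentlimitquadraticunitaries}.
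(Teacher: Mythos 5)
Your proof is correct, but it takes a genuinely different route from the paper for the second-moment bound. The paper proves this part by factoring through Lemma~\ref{lem:generalunitarysecondmoments}: it uses the crude operator inequalities $\proj{0}_j\leq I$, $\proj{1}_j\leq I$ to get $(\mathsf{ctrl}_jU)^\dagger H(\mathsf{ctrl}_jU)\leq I\otimes H+I\otimes U^\dagger HU$, reduces to the bosonic marginal, and then invokes the single-displacement bound of Lemma~\ref{lem:momentlimitphasespacedisplacements} with $(a,b,c)=(1,2\|d\|,\|d\|^2)$; this is shorter and applies to any controlled unitary satisfying a moment bound, which is why the paper isolates it as a separate lemma. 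You instead expand $\mathsf{ctrl}_jD(d)=P_0+D(d)P_1$, kill the cross terms by commuting the qubit projectors past $R_k$, $H$ and $D(d)$, and obtain the exact identity $\tr\big(H(\mathsf{ctrl}_jD(d))\rho(\mathsf{ctrl}_jD(d))^\dagger\big)=\tr(H\rho)+2\langle d,w\rangle+q_1\|d\|^2$; the claimed bound then follows from $\|u\|^2\leq q_0\tr(H\rho)$ (the coordinatewise variance bound) combined with the weighted AM--GM inequality, and you correctly note that an unweighted estimate would not close. Your computation also yields the sharper first-moment statement $s\big((\mathsf{ctrl}_jD(d))\rho(\mathsf{ctrl}_jD(d))^\dagger\big)=s(\rho)+\tr(\proj{1}_j\rho)\,d$; the paper's proof asserts $s(\rho)+d$ at this point (its simplification of $\proj{0}\otimes R_k+\proj{1}\otimes(R_k+d_kI)$ to $I\otimes R_k+d_kI\otimes I$ drops the projector), which is a harmless slip since the stated norm bound is unaffected, but your version is the exact one. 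In short: the paper buys brevity and reusability via the generic controlled-unitary lemma at the price of the factor-$2$ operator bound; you buy an exact intermediate identity and a slightly sharper first-moment statement at the price of a more delicate final estimate, and both land on the same constants.
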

\begin{proof}
    We give the proof for $j=n'=1$ (the general case is analogous). Let $U$ be an arbitrary unitary on~$L^2(\mathbb{R})^{\otimes n}$, i.e., on the bosonic modes. Then, we have 
\begin{align}
(\mathsf{ctrl} U)^\dagger R_k(\mathsf{ctrl} U) &=
\proj{0}\otimes R_k+\proj{1}\otimes U^\dagger R_k U\qquad\textrm{ for }k\in [2n]\ .\label{eq:ctrlurkexpression}
\end{align}
In the case where $U=D(d)$ is a displacement, Eq.~\eqref{eq:ctrlurkexpression} specializes to
\begin{align}
    (\mathsf{ctrl} D(d))^\dagger R_k(\mathsf{ctrl} D(d)) &=\proj{0}\otimes R_k+\proj{1}\otimes (R_k+ d_kI)\\
&=I\otimes R_k+d_k I\otimes I\ ,
\end{align}
and we conclude that
\begin{align}
s\left((\mathsf{ctrl}_jD(d))\rho(\mathsf{ctrl}_jD(d))^\dagger\right)&=s(\rho)+d\ .
\end{align}
The claim~\eqref{eq:firstclaimcontrolleddisplacement} follows from the triangle inequality.
The second claim follows immediately from Lemma~\ref{lem:generalunitarysecondmoments}, which follows, applied to the unitary~$U=D(d)$, together with~ the moment bound~\eqref{eq:secondmomentsHeffect} for the unitary~$D(d)$.
\end{proof}
\begin{lemma}[Second moment limit on controlled-unitaries]\label{lem:generalunitarysecondmoments}
Suppose $U$ is a unitary on $L^2(\mathbb{R})^{\otimes n}$ with the following property: There are constants $a,b,c>0$ such that we have 
\begin{align}
\tr(H U\rho U^\dagger) &\leq a\tr(H\rho)+b \|s(\rho)\|+c \label{eq:assumptionrhobounded}
\end{align}
for any state~$\rho\in\cB(L^2(\mathbb{R})^{\otimes n})$ with finite first and second moments.
Consider a system with Hilbert space~$L^2(\mathbb{R})^{\otimes n}\otimes (\mathbb{C}^2)^{\otimes n'}$, and the controlled unitary
\begin{align}
\mathsf{ctrl}_{j}U&=\proj{0}_j\otimes I_{L^2(\mathbb{R})^{\otimes n}}+\proj{1}_j\otimes U\ 
\end{align}
where the bosonic unitary~$U$ is controlled by the $j$-th qubit, with $j\in [n']$.  (Here, we write $\proj{0}_j$ for the~$n'$-qubit operator~$I^{\otimes j-1}\otimes\proj{0}\otimes I^{\otimes n-j}$, and similarly for~$\proj{1}_j$).
Then, we have
\begin{align}
\tr(H (\mathsf{ctrl}_jU)\rho (\mathsf{ctrl}_jU)^\dagger) &\leq (a+1)\tr(H\rho)+b \|s(\rho)\|+c
\end{align}
for any state~$\rho\in\cB(L^2(\mathbb{R})^{\otimes n}\otimes (\mathbb{C}^2)^{\otimes n'})$ with finite first and second moments.
\end{lemma}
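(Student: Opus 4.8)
The plan is to pass to the Heisenberg picture and then reduce the desired bound to the hypothesized bound~\eqref{eq:assumptionrhobounded} for $U$ by means of an operator-order argument rather than by rescaling a sub-normalized state. Write $\Pi_0=\proj{0}_j$ and $\Pi_1=\proj{1}_j$ for the qubit projectors, so that $\mathsf{ctrl}_jU=\Pi_0+\Pi_1U$; recall that $H$ and $U$ act only on the bosonic modes and hence commute with $\Pi_0,\Pi_1$, while $\Pi_0\Pi_1=0$ and $\Pi_0+\Pi_1=I$. Expanding $(\mathsf{ctrl}_jU)^\dagger H\,(\mathsf{ctrl}_jU)=(\Pi_0+U^\dagger\Pi_1)\,H\,(\Pi_0+\Pi_1U)$ and simplifying with these relations, a short computation gives
\begin{align}
(\mathsf{ctrl}_jU)^\dagger H\,(\mathsf{ctrl}_jU)\;=\;H+\big(U^\dagger HU-H\big)\,\Pi_1\ .
\end{align}
Taking the expectation in $\rho$ and tracing out the qubit register then yields
\begin{align}
\tr\!\big(H\,(\mathsf{ctrl}_jU)\,\rho\,(\mathsf{ctrl}_jU)^\dagger\big)\;=\;\tr(H\rho)-\tr(H\sigma)+\tr\!\big(HU\sigma U^\dagger\big)\ ,
\end{align}
where $\sigma:=\tr_Q(\Pi_1\rho)\in\cB(L^2(\mathbb{R})^{\otimes n})$; note that $\sigma=\tr_Q(\Pi_1\rho\,\Pi_1)$ (the cross term $\tr_Q(\Pi_1\rho\,\Pi_0)$ vanishes, as one sees by expanding $\tr_Q$ in a qubit basis adapted to $\Pi_1$), so $\sigma$ is positive semidefinite with $0\le\tr\sigma\le1$.

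The remaining task is to control $\tr(HU\sigma U^\dagger)=\tr(U^\dagger HU\,\sigma)$. The key observation is that $\sigma$ is dominated, in the positive-semidefinite order, by the \emph{normalized} reduced bosonic state $\rho_B:=\tr_Q\rho$, since $\rho_B-\sigma=\tr_Q(\Pi_0\rho\,\Pi_0)\ge0$. As $U^\dagger HU\ge0$, this gives $\tr(U^\dagger HU\,\sigma)\le\tr(U^\dagger HU\,\rho_B)=\tr(HU\rho_BU^\dagger)$, and now the hypothesis~\eqref{eq:assumptionrhobounded} is applicable to $\rho_B$, which is a genuine state with finite first and second moments satisfying $\tr(H\rho_B)=\tr(H\rho)$ and $s(\rho_B)=s(\rho)$ (because $H$ and the quadratures act on the bosonic modes only). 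Hence $\tr(HU\sigma U^\dagger)\le a\,\tr(H\rho)+b\,\|s(\rho)\|+c$. Substituting this into the previous identity and discarding the non-negative term $\tr(H\sigma)$ gives
\begin{align}
\tr\!\big(H\,(\mathsf{ctrl}_jU)\,\rho\,(\mathsf{ctrl}_jU)^\dagger\big)\;\le\;(a+1)\,\tr(H\rho)+b\,\|s(\rho)\|+c\ ,
\end{align}
which is the claim; note that this works for arbitrary $a,b,c>0$, with no extra assumption such as $a\ge1$.

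Since the computation is short, the only real obstacle is conceptual. The naive alternative --- normalizing $\sigma$ to $\sigma/\tr\sigma$ and applying the hypothesis directly to it --- does not work: $\|s(\sigma/\tr\sigma)\|$ can exceed $\|s(\rho)\|$ (for instance when $\mathsf{ctrl}_jD(d)$ acts on an equal qubit-conditioned mixture of two coherent states with opposite displacement vectors, one has $\|s(\rho)\|=0$ while $\|s(\sigma/\tr\sigma)\|=\|d\|/2$), and one checks that the resulting estimate genuinely fails for large $\|d\|$. Passing through $\rho_B$ via the operator inequality $0\le\sigma\le\rho_B$ avoids any rescaling of the first-moment term and is exactly what makes the stated constants come out. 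A minor technical point is to justify $\tr(U^\dagger HU\,(\rho_B-\sigma))\ge0$ for the unbounded operator $U^\dagger HU$; since $H=\sum_{k}R_k^2$ has discrete spectrum, this follows from $U^\dagger HU\ge0$, $\rho_B-\sigma\ge0$ and monotone convergence applied to spectral truncations of $H$.
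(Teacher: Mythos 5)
Your proof is correct and is essentially the paper's argument: the exact identity $(\mathsf{ctrl}_jU)^\dagger H(\mathsf{ctrl}_jU)=\proj{0}_j\otimes H+\proj{1}_j\otimes U^\dagger HU$ is what you derive, and your domination $0\le\sigma\le\rho_B$ is just the paper's bound $\proj{0}_j,\proj{1}_j\le I$ in disguise, after which both arguments apply the hypothesis to the normalized reduced bosonic state $\rho_B$ (the paper's $\rho'$). The only differences are cosmetic refinements (retaining then discarding $-\tr(H\sigma)$, and the spectral-truncation remark for the unbounded $U^\dagger HU$), which do not change the route.
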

\begin{proof}
We give the proof for $j=n'=1$ (the general case is analogous). Then
we have
\begin{align}
(\mathsf{ctrl} U)^\dagger H(\mathsf{ctrl} U) &=\proj{0}\otimes H
+\proj{1}\otimes U^\dagger HU\\
&\leq I\otimes H+I\otimes U^\dagger HU\ ,
\end{align}
where we used the operator inequalities~$\proj{0}\leq I$ and $\proj{1}\leq I$ for a single qubit, and the fact that $H\geq 0$. 
 It follows that
 \begin{align}
 \tr \left(H(\mathsf{ctrl} U)\rho (\mathsf{ctrl} U)^\dagger)\right)
 &\leq \tr(H\rho')+\tr(U^\dagger H U\rho')\\
 &=\tr(H\rho')+\tr(H U\rho' U^\dagger)\ .
 \end{align}
where $\rho'=\tr_{n'}\rho$ denotes the reduced density operator of the bosonic modes after tracing out the qubits. The claim now follows from
the assumption~\eqref{eq:assumptionrhobounded} applied to the state~$\rho'$.
\end{proof}

Combining~\eqref{eq:displacementsingboundmoment},~\eqref{eq:momentlimitgaussianquadratic}, and Lemma~\ref{lem: mom limit qubit ctrl displacment}
  gives the following corollary.
  \begin{corollary}[Moment limits on gates from~$\cG$]\label{cor:momentlimitggate}
  Let $\rho\in\cB(L^2(\mathbb{R})^{\otimes n}\otimes (\mathbb{C}^2)^{\otimes n'})$ be a state of $n$~bosons and $n'$~qubits. Assume that~$\rho$ has finite first and second moments. Let $s(\rho)\in \mathbb{R}^{2n}$ denote its displacement vector. Then, 
  \begin{align}
\renewcommand*{\arraystretch}{1.225}
  \begin{matrix*}[l]
  \left\|s(U\rho U^\dagger)\right\|&\leq& e^{2\pi}\|s(\rho)\|+ 2\pi  
  \\
  \tr(HU\rho U^\dagger)&\leq & e^{4\pi} \tr(H\rho) + 4\pi \|s(\rho)\| + (2\pi)^2
  \end{matrix*}\qquad\textrm{ for every }U\in\cG\ .
  \end{align}
  \end{corollary}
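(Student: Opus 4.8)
The statement to prove (Corollary~\ref{cor:momentlimitggate}) asserts that every unitary $U\in\cG$ is simultaneously moment-limited in the sense of (a) not increasing the displacement norm $\|s(\rho)\|$ beyond $e^{2\pi}\|s(\rho)\|+2\pi$, and (b) not increasing the energy $\tr(H\rho)$ beyond $e^{4\pi}\tr(H\rho)+4\pi\|s(\rho)\|+(2\pi)^2$. This is just a case analysis over the four types of generators defining $\cG$: single- and two-qubit unitaries, bounded-strength quadratic Gaussian unitaries, bounded-strength displacements, and bounded-strength qubit-controlled displacements.

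The plan is to simply combine the lemmas already proved. For a bosonic gate one only has to note that $\cG$ consists of: (i) arbitrary single- and two-qubit unitaries, which act trivially on the bosonic mode operators $R_j$ and hence leave both $s(\rho)$ and $\tr(H\rho)$ unchanged — so both bounds hold with room to spare since $e^{2\pi}\ge 1$ and $e^{4\pi}\ge 1$; (ii) bounded-strength quadratic Gaussian unitaries $U(A)$ with $\|A\|\le 2\pi$, for which Lemma~\ref{lem:momentlimitquadraticunitaries} (specialized in~\eqref{eq:momentlimitgaussianquadratic}) gives $\|s(U(A)\rho U(A)^\dagger)\|\le e^{2\pi}\|s(\rho)\|$ and $\tr(HU(A)\rho U(A)^\dagger)\le e^{4\pi}\tr(H\rho)$, both of which are dominated by the claimed bounds; (iii) bounded-strength displacements $D(d)$ with $\|d\|\le 2\pi$, for which Lemma~\ref{lem:momentlimitphasespacedisplacements} (specialized in~\eqref{eq:displacementsingboundmoment}) gives $\|s(D(d)\rho D(d)^\dagger)\|\le\|s(\rho)\|+2\pi\le e^{2\pi}\|s(\rho)\|+2\pi$ and $\tr(HD(d)\rho D(d)^\dagger)\le \tr(H\rho)+4\pi\|s(\rho)\|+(2\pi)^2\le e^{4\pi}\tr(H\rho)+4\pi\|s(\rho)\|+(2\pi)^2$; and (iv) bounded-strength qubit-controlled displacements $\mathsf{ctrl}_jD(d)$ with $\|d\|\le 2\pi$, for which Lemma~\ref{lem: mom limit qubit ctrl displacment} gives $\|s(\cdot)\|\le\|s(\rho)\|+2\pi$ and $\tr(H\cdot)\le 2\tr(H\rho)+4\pi\|s(\rho)\|+(2\pi)^2$; since $2\le e^{4\pi}$, this is again dominated by the claimed bound.

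Concretely, I would structure the proof as four short paragraphs, one per generator type, in each case quoting the relevant specialized inequality and observing the elementary numerical domination (using $1\le e^{2\pi}$, $2\le e^{4\pi}$, and nonnegativity of all terms). Since every $U\in\cG$ is of one of these four kinds by the definition of $\cG$ in Section~\ref{sec: allowed operations}, taking the worst of the four pairs of bounds — which is exactly the pair $\bigl(e^{2\pi}\|s(\rho)\|+2\pi,\ e^{4\pi}\tr(H\rho)+4\pi\|s(\rho)\|+(2\pi)^2\bigr)$ — yields the corollary.

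There is essentially no obstacle here: the content is entirely in the preceding lemmas, and the corollary is a bookkeeping consolidation. The only mild subtlety worth a sentence is checking that in case (iii) the displacement bound's additive term matches: Lemma~\ref{lem:momentlimitphasespacedisplacements} gives $\tr(HD(d)\rho D(d)^\dagger)\le\tr(H\rho)+2\|d\|\,\|s(\rho)\|+\|d\|^2$, and with $\|d\|\le 2\pi$ this is at most $\tr(H\rho)+4\pi\|s(\rho)\|+(2\pi)^2\le e^{4\pi}\tr(H\rho)+4\pi\|s(\rho)\|+(2\pi)^2$, so the constants $b=4\pi$ and $c=(2\pi)^2$ in the statement are tight for this generator and upper bounds for the others. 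Hence the proof is complete.
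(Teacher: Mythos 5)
Your proposal is correct and matches the paper's (one-line) justification, which simply combines~\eqref{eq:displacementsingboundmoment},~\eqref{eq:momentlimitgaussianquadratic} and Lemma~\ref{lem: mom limit qubit ctrl displacment} and absorbs the constants using $1\le e^{2\pi}$ and $2\le e^{4\pi}$. Your explicit four-case bookkeeping, including the trivial case of single- and two-qubit unitaries (which leave $s(\rho)$ and $\tr(H\rho)$ unchanged and which the paper leaves implicit), is exactly the intended argument.
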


\subsection{Moment limits on low-complexity states\label{sec:energylowerbound}}
Here, we argue that a state produced by a small circuit with  gates from~$\cG$ has small energy.

\begin{lemma} \label{lem: moment limits unitary state prep}
Consider a circuit~$U=U_T\cdots U_1$ 
 composed of $T$ unitaries $U_1,\ldots,U_T$ from  the set $\cG$
acting on the initial state $\ket{\Psi} = \ket{\vac}\otimes\ket{\vac}^{\otimes m}\otimes\ket{0}^{\otimes m'}$. We define the Hamiltonian 
\begin{align}
    H = \sum_{k=1}^{m+1} \left(Q_k^2 + P_k^2\right)\ .
\end{align}
Then, 
\begin{align}
\bra{\Psi}U^\dagger HU\ket{\Psi}&\leq 
e^{8\pi  T}(m+2):= E_m(T)\ . 
\end{align}
\end{lemma}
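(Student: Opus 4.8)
The plan is to prove the energy bound by tracking how the quantity $\tr(H\rho)$ evolves under successive applications of gates from $\cG$, using the moment limits established in Corollary~\ref{cor:momentlimitggate}. First I would set up notation: let $\rho_0 = \proj{\Psi}$ and $\rho_t = U_t\cdots U_1\proj{\Psi}U_1^\dagger\cdots U_t^\dagger$ for $t\in\{0,\ldots,T\}$, so that $\rho_T = U\proj{\Psi}U^\dagger$. Note that each $U_t$ acts on the full Hilbert space $L^2(\mathbb{R})^{\otimes(m+1)}\otimes(\mathbb{C}^2)^{\otimes m'}$, and the Hamiltonian $H = \sum_{k=1}^{m+1}(Q_k^2+P_k^2)$ is the total oscillator energy on all $m+1$ modes. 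The key point is that Corollary~\ref{cor:momentlimitggate} applies with $n=m+1$ bosonic modes and $n'=m'$ qubits, giving for every $U\in\cG$ the pair of bounds $\|s(U\rho U^\dagger)\| \le e^{2\pi}\|s(\rho)\| + 2\pi$ and $\tr(HU\rho U^\dagger) \le e^{4\pi}\tr(H\rho) + 4\pi\|s(\rho)\| + (2\pi)^2$.

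The main step is to control the displacement vectors $\|s(\rho_t)\|$ first, since they feed into the energy recursion. Starting from $\|s(\rho_0)\| = 0$ (the vacuum and $\ket{0}$ states are centered), the recursion $\|s(\rho_{t})\| \le e^{2\pi}\|s(\rho_{t-1})\| + 2\pi$ unrolls to $\|s(\rho_t)\| \le 2\pi\sum_{j=0}^{t-1}e^{2\pi j} = 2\pi\frac{e^{2\pi t}-1}{e^{2\pi}-1} \le e^{2\pi t}$ after a crude simplification (using $2\pi/(e^{2\pi}-1) \le 1$). Then I would plug this into the energy recursion: with $e_t := \tr(H\rho_t)$ and $e_0 = \tr(H\proj{\Psi}) = m+1$ (each of the $m+1$ oscillators in the vacuum contributes $\bra{\vac}Q^2+P^2\ket{\vac} = 1$), we get $e_t \le e^{4\pi}e_{t-1} + 4\pi e^{2\pi(t-1)} + (2\pi)^2$. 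Unrolling this linear recursion and bounding all terms crudely — absorbing the polynomial-in-$t$ and constant contributions into the exponential — should yield $e_T \le e^{8\pi T}(m+2)$. The exponent $8\pi$ (rather than $4\pi$) gives plenty of slack to absorb the $e^{2\pi t}$ and constant terms, which is presumably why the authors chose it.

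The main obstacle is purely bookkeeping: carefully choosing the crude inequalities so that the messy sum $\sum_{j} e^{4\pi j}(4\pi e^{2\pi(T-1-j)} + 4\pi^2)$ collapses cleanly into $e^{8\pi T}(m+2)$ without the constants getting out of hand. One clean way to organize this: prove by induction on $t$ that $e_t \le e^{8\pi t}(m+2)$ simultaneously with $\|s(\rho_t)\| \le e^{2\pi t}$. For the inductive step, $e_t \le e^{4\pi}\cdot e^{8\pi(t-1)}(m+2) + 4\pi e^{2\pi(t-1)} + 4\pi^2 = e^{8\pi t}(m+2)e^{-4\pi} + 4\pi e^{2\pi(t-1)} + 4\pi^2$, and one checks that the last two terms are at most $e^{8\pi t}(m+2)(1 - e^{-4\pi})$ for all $t\ge 1$ and $m\ge 0$ — this holds because $4\pi e^{2\pi(t-1)} + 4\pi^2 \le 4\pi e^{2\pi t} + 4\pi^2 \le e^{8\pi t}\cdot 2(1-e^{-4\pi})$ using $e^{8\pi t} \ge e^{8\pi}$ dominates. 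Since $\ket{\Psi}$ has $\|s(\proj{\Psi})\|=0 \le e^0$ and $\tr(H\proj{\Psi}) = m+1 \le m+2$, the base case $t=0$ holds, completing the induction and hence the lemma.
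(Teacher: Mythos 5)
Your proposal is correct and follows essentially the same route as the paper: bound the growth of the displacement norm $\|s(\rho_t)\|$ gate-by-gate using the moment limits of Corollary~\ref{cor:momentlimitggate}, feed this into the linear recursion for $\tr(H\rho_t)$, and absorb all constants into the slack of the exponent $8\pi$ (the paper unrolls the recursion with the $t$-dependent term crudely bounded by its value at $T$, whereas you close an induction directly — a purely organizational difference). One small presentational caveat: state the displacement bound $\|s(\rho_t)\|\le e^{2\pi t}$ as already established for all $t$ by your geometric-sum unrolling rather than as part of a simultaneous induction, since the naive inductive step $\|s(\rho_t)\|\le e^{2\pi}e^{2\pi(t-1)}+2\pi=e^{2\pi t}+2\pi$ does not by itself reproduce the hypothesis (the sharper unrolled form $\le e^{2\pi t}-1$ does, and is what you actually proved).
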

\begin{proof}
Let $R=(R_1,..., R_{2(m+1)}):=(Q_1,P_1,\ldots,Q_{(m+1)},P_{(m+1)})$ be the vector of mode operators, and let 
\begin{align}
s(\rho)&=\tr(R\rho)\in \mathbb{R}^{2(m+1)}
\end{align}
the displacement vector of a state $\rho$ on $L^2(\mathbb{R})^{\otimes (m+1)}\otimes (\mathbb{C}^2)^{\otimes m'}$. 
Denoting by~$\|\cdot\|$ the Euclidean norm in 
$\mathbb{R}^{2m}$, we have by Corollary~\ref{cor:momentlimitggate} that
\begin{align}
\|s(U_t\rho U_t^\dagger)\|&\leq \gamma \|s(\rho)\|+ \delta \label{eq:lowerboundm}
\end{align} with $\gamma =e^{2\pi}$ and $\delta = 2\pi$
for any unitary~$U_t\in\cG$ belonging to our gate set~$\cG$
and any state~$\rho$.
Setting $\rho^{(0)}=\proj{\Psi}$
and $\rho^{(t)}=U_t\rho^{(t-1)}U_t^\dagger$ for $t\geq 1$,
we can deduce from~\eqref{eq:lowerboundm} that
\begin{align}
\|s(\rho^{(t)})\|&\leq f(\|s(\rho^{(t-1)}\|)
\end{align}
where $f(s)=\gamma s+\delta$. In particular, defining $u_0:=\|s(\rho^{(0)})\|=0 $ and
$u_t:=f(u_{t-1})$ for $t\geq 1$, we get the upper bound
\begin{align}
\|s(\rho^{(t)})\|&\le u_t\\
&=f^{\circ t}(u_0)\\
&=\gamma^t u_0+\delta (1+\gamma+\cdots +\gamma^{t-1})\\
&= \delta \cdot \frac{\gamma^t-1}{\gamma-1}\qquad\textrm{ since } u_0 = 0\,,\\
&\le \delta \cdot \gamma^t\qquad\textrm{ since }  \gamma>2 \,,\\
&=2\pi \cdot e^{2\pi t}\ .
\label{eq: bound norm s(rho)}
\end{align}
Again using Corollary~\ref{cor:momentlimitggate}, we have for $1\le t \le T$ that 
\begin{align}
    \tr\left(H \rho^{(t)}\right) &\le  e^{4\pi} \tr\left( H \rho^{(t-1)} \right) +4\pi\|s(\rho^{(t)})\| + (2\pi)^2\\ 
    &\le e^{4\pi} \tr\left( H \rho^{(t-1)} \right) + 8\pi^2\cdot e^{2\pi t} +(2\pi)^2\\
    &\le e^{4\pi} \tr\left( H \rho^{(t-1)} \right) + 8\pi^2\cdot e^{2\pi T} +(2\pi)^2\ . 
\end{align}
Therefore, setting $A=e^{4\pi}$, $B= 8\pi^2\cdot e^{2\pi T} +(2\pi)^2$ and $x_t= \tr\left(H \rho^{(t)}\right) $, we can rephrase the previous bound as
\begin{align}
    x_t \le A x_{t-1} + B\ .
\end{align}
Proceeding similarly as in~\eqref{eq: bound norm s(rho)}, we have 
\begin{align}
    x_t &\le A^t x_0 + B(1 + A + \ldots +A^{t-1})\\
        &\le A^t x_0 + B A^t \qquad\textrm{as}\quad A >2\ .
\end{align}
With $t=T$, we have
\begin{align}
    \tr\left(H \rho^{(T)} \right) &\le e^{4\pi T}  \tr\left(H \rho^{(0)} \right) + (8\pi^2\cdot  e^{2\pi T} +(2\pi)^2)\cdot e^{4\pi T}\\
    &\le  e^{8\pi T}  \left(\tr\left(H \rho^{(0)} \right) +1\right)\ ,\label{eq: gate_lower_bound}
\end{align}where we used that $(8\pi^2\cdot  e^{2\pi T} +(2\pi)^2) \le e^{4\pi T}$ for $T\ge 1$. Obviously the bound~\eqref{eq: gate_lower_bound} is also valid in the case $T=0$.
The claim follows from $\bra{\mathsf{vac}}(Q^2 +P^2)\ket{\mathsf{vac}} = 1$, which gives $\tr\left(H \rho^{(0)}\right)= m+1$\ .
\end{proof}

The following lemma shows that an energy-limited state has most of its support on a subspace spanned by  functions with support in a bounded interval, both in position space and momentum space. Given a set $I\subseteq \mathbb{R}$, we define the operator $\Pi_I$ as the projection onto the subspace of $L^2(\mathbb{R)}$ of functions with support contained in $I$ in position space. 
In the following, we define the Fourier transformation as the operator
\begin{align}
    \begin{matrix}
        \cF & L^2(\mathbb{R}) & \rightarrow &  L^2(\mathbb{R}) \\
        & f &\mapsto& \cF(f)
    \end{matrix}
\end{align}
acting on a function $f\in L^1(\mathbb{R})\cap L^2(\mathbb{R})$ as
\begin{align}
    \cF(f)(p) = \widehat{f}(p) = \frac{1}{\sqrt{2\pi}} \int f(x) e^{-ipx} dx\ .\label{eq: def fourier}
\end{align}
Note that we have $\|f\|_2 =\|\widehat{f}\|_2$. Hence, the Fourier transformation is isometric on a dense subspace ($L^1(\mathbb{R})\cap L^2(\mathbb{R}$), and thus it uniquely extends to a unitary operator on $L^2(\mathbb{R})$.
In this context we define the operator $\widehat{\Pi}_{I} := \cF^\dagger \Pi_I \cF$, which is just the orthogonal projection onto the subspace of $L^2(\mathbb{R})$ of functions~$f$ whose Fourier transform~$\cF(f)$ has support contained in $I$.
\begin{lemma}\label{lem:markovinequalityprojection}
Let $\rho\in\cB(L^2(\mathbb{R}))$ be a state. Then, 
\begin{align}
\min\left\{\tr(\Pi_{[-R,R]}\rho),\tr(\widehat{\Pi}_{[-R,R]}\rho)\right\}&\geq 1-\frac{\tr((Q^2+P^2)\rho)}{R^2}
    \end{align}
    for any $R>0$. 
\end{lemma}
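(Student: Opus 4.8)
The statement is a direct application of Markov's inequality to the position and momentum quadratures separately, so the plan is to handle the two projections symmetrically. First I would focus on the position-space projection $\Pi_{[-R,R]}$. Writing $\rho$ in its spectral decomposition, it suffices to treat a pure state $\ket{\psi}$ with wavefunction $\psi\in L^2(\mathbb{R})$ normalized to $\|\psi\|_2=1$; the general case follows by convexity (taking the convex combination over eigenvalues of $\rho$). For a pure state, $\tr(\Pi_{[-R,R]}\proj{\psi}) = \int_{-R}^{R} |\psi(x)|^2\,dx = 1 - \int_{|x|>R} |\psi(x)|^2\,dx$, and on the complementary region $|x|>R$ we have $1 \le x^2/R^2$, so $\int_{|x|>R}|\psi(x)|^2\,dx \le \frac{1}{R^2}\int_{|x|>R} x^2 |\psi(x)|^2\,dx \le \frac{1}{R^2}\langle\psi, Q^2\psi\rangle$. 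Since $Q^2\ge 0$, this gives $\tr(\Pi_{[-R,R]}\proj{\psi}) \ge 1 - \frac{1}{R^2}\langle\psi,Q^2\psi\rangle \ge 1 - \frac{1}{R^2}\langle\psi,(Q^2+P^2)\psi\rangle$, using $P^2\ge 0$ to add the extra nonnegative term. Passing back to $\rho$ by linearity yields $\tr(\Pi_{[-R,R]}\rho)\ge 1 - \tr((Q^2+P^2)\rho)/R^2$.

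For the momentum-space projection, I would use the unitarity of the Fourier transform $\cF$ defined in Eq.~\eqref{eq: def fourier}. By definition $\widehat{\Pi}_{[-R,R]} = \cF^\dagger \Pi_{[-R,R]}\cF$, so $\tr(\widehat{\Pi}_{[-R,R]}\rho) = \tr(\Pi_{[-R,R]}\cF\rho\cF^\dagger)$. Applying the position-space bound just established to the state $\cF\rho\cF^\dagger$ gives $\tr(\widehat{\Pi}_{[-R,R]}\rho)\ge 1 - \tr((Q^2+P^2)\cF\rho\cF^\dagger)/R^2$. Now I use the standard intertwining relations of the Fourier transform with the quadratures: $\cF^\dagger Q\cF = -P$ and $\cF^\dagger P\cF = Q$ (up to the sign convention fixed by Eq.~\eqref{eq: def fourier}), hence $\cF^\dagger(Q^2+P^2)\cF = P^2 + Q^2 = Q^2+P^2$, i.e., the harmonic-oscillator Hamiltonian is Fourier-invariant. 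Therefore $\tr((Q^2+P^2)\cF\rho\cF^\dagger) = \tr(\cF^\dagger(Q^2+P^2)\cF\,\rho) = \tr((Q^2+P^2)\rho)$, and the momentum-space bound reads $\tr(\widehat{\Pi}_{[-R,R]}\rho)\ge 1 - \tr((Q^2+P^2)\rho)/R^2$, matching the position-space one.

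Taking the minimum of the two estimates gives the claim. The only mild subtlety — and the one point I would be careful about — is the interchange of the integral with the spectral sum/integral defining $\rho$ when $\rho$ has infinitely many eigenvalues; this is justified by the monotone convergence theorem since all integrands are nonnegative, or alternatively one notes that both $\Pi_{[-R,R]}$ and $Q^2+P^2$ are positive operators and the inequality $\Pi_{[-R,R]} \ge I - (Q^2+P^2)/R^2$ can be taken as an operator inequality on the form domain of $Q^2+P^2$, after which tracing against $\rho$ is immediate. I would present the operator-inequality version as it is cleanest: establish $R^2(I - \Pi_{[-R,R]}) \le Q^2 \le Q^2 + P^2$ as quadratic forms (the first inequality because $x^2 \ge R^2$ pointwise on $\{|x|>R\}$ and $x^2\ge 0 = R^2\cdot 0$ on $\{|x|\le R\}$), then take expectations in $\rho$, and repeat in the Fourier picture using Fourier-invariance of $Q^2+P^2$.
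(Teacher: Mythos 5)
Your proposal is correct and follows essentially the same route as the paper: Markov/Chebyshev applied to $Q^2$ for the position projection (dropping the nonnegative $P^2$ term), and then the same bound with the roles of $Q$ and $P$ interchanged -- which is exactly your Fourier-invariance argument for $\widehat{\Pi}_{[-R,R]}$, since $\cF^\dagger(Q^2+P^2)\cF=Q^2+P^2$. Your extra care about the spectral decomposition and the operator-inequality formulation is fine but not needed beyond what the paper's two-line proof already contains.
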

\begin{proof}
 Markov's inequality in the form~
 \begin{align}
     \Pr[Q^2>R^2]\leq \frac{\ExpE[Q^2]}{R^2}
 \end{align} implies that
\begin{align}
\tr\left((I-\Pi_{[-R,R]})\rho\right)&\leq \frac{\tr\left(Q^2\rho\right)}{R^2} \le \frac{\tr\left((Q^2+P^2)\rho\right)}{R^2} \, ,
\end{align} 
that is
\begin{align}
   \tr\left(\Pi_{[-R,R]}\rho\right)&\ge 1 - \frac{\tr\left((Q^2+P^2)\rho\right)}{R^2}
\label{eq: bound E(T) psi unitary}\ .
\end{align}
Interchanging the roles of $P$ and $Q$ yields the analogous bound for $\tr(\widehat{\Pi}_{[-R,R]}\rho)$.
\end{proof}

\subsection{Tail bounds for approximate GKP states\label{sec:concentrationboundgkp}}
In this section, we focus on the decay properties of approximate GKP states $\ket{\gkp_{\kappa,\Delta}}$ in position and momentum space.
We will repeatedly use the variational characterization 
\begin{align}
\|\rho-\sigma\|_1&=2\sup_{\Pi} \tr(\Pi(\rho-\sigma))\ ,\label{eq: variational L1}
\end{align}
of the $L^1$-distance of any two states~$\rho,\sigma\in\cB(L^2(\mathbb{R}))$, where the supremum is taken over all orthogonal projections~$\Pi$ 
acting on~$L^2(\mathbb{R})$.
\begin{lemma} \label{lem: proj pos space}
    Let $\kappa\in (0,1/4)$, $\Delta\in (0,1/100)$ and $R>0$. Then, 
    \begin{align}
        &\left\|\Pi_{[-R,R]}\ket{\gkp_{\kappa,\Delta}}\right\|^2 \le  4 \kappa R + 5\sqrt{\kappa
        } +  7\sqrt{\Delta} 
        \label{eq: gkp concentration pos}\\
&\left\| \widehat{\Pi}_{[-R, R]}\ket{\gkp_{\kappa,\Delta}} \right\|^2 \le 2\Delta R + 5\sqrt{\kappa}  + 7\sqrt{\Delta} 
\label{eq: gkp concentration mom}\ .
    \end{align}
\end{lemma}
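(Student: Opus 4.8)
The plan is to exploit that both $\gkp_{\kappa,\Delta}$ and its Fourier transform have the form ``(normalized, slowly varying Gaussian envelope) $\times$ (comb of narrow, almost orthogonal Gaussian peaks)'', and that a normalized width-$w$ Gaussian envelope places only an $O(R/w)$ fraction of its mass on $[-R,R]$. In position space this is immediate from the representation $\gkp_{\kappa,\Delta}=C_{\kappa,\Delta}\sum_{z\in\mathbb{Z}}\eta_\kappa(z)\chi_\Delta(z)$: the envelope $\eta_\kappa$ has width $1/\kappa$, and the peaks $\chi_\Delta(z)$ sit at the integers and have width $\Delta$. In momentum space I would first compute, using $\widehat{\Psi_\Delta}=\Psi_{1/\Delta}$ and the Poisson summation formula, that $\widehat{\gkp_{\kappa,\Delta}}(p)=C_{\kappa,\Delta}\Psi_{1/\Delta}(p)\sum_{z}\eta_\kappa(z)e^{-ipz}=\sqrt{2\pi}\,C_{\kappa,\Delta}\Psi_{1/\Delta}(p)\sum_{k\in\mathbb{Z}}\Psi_\kappa(p-2\pi k)$, which has the same shape, now with envelope $\Psi_{1/\Delta}$ of width $1/\Delta$ and peaks at $2\pi\mathbb{Z}$ of width $\kappa$. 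This Fourier/Poisson computation is the only step that is not purely mechanical.

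For the position bound I would expand $\|\Pi_{[-R,R]}\ket{\gkp_{\kappa,\Delta}}\|^2=C_{\kappa,\Delta}^2\sum_{z,z'}\eta_\kappa(z)\eta_\kappa(z')\int_{-R}^R\chi_\Delta(z)(x)\chi_\Delta(z')(x)\,dx$. Since $\langle\chi_\Delta(z),\chi_\Delta(z')\rangle=e^{-(z-z')^2/(4\Delta^2)}$, summing first over the difference $d=z-z'$ and using $\sum_{z'}\eta_\kappa(z'{+}d)\eta_\kappa(z')\le\sum_z\eta_\kappa(z)^2$ bounds the off-diagonal part by $O(e^{-1/(4\Delta^2)})$, uniformly in $\kappa$; the part of the diagonal with $|z|>R$ is a summed Gaussian tail, again exponentially small in $1/\Delta^2$ (here I would quote the truncated-Gaussian tail estimates already available in the appendix). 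What remains is at most $\sum_{|z|\le R}\eta_\kappa(z)^2\le(2R+1)\eta_\kappa(0)^2=(2R+1)\kappa/\sqrt\pi$. The normalization satisfies $C_{\kappa,\Delta}^{-2}=\sum_z\eta_\kappa(z)^2+O(e^{-1/(4\Delta^2)})$ with $\sum_z\eta_\kappa(z)^2=1+O(e^{-\pi^2/\kappa^2})$ by Poisson summation, so $C_{\kappa,\Delta}^2\le 1+O(e^{-\pi^2/\kappa^2}+e^{-1/(4\Delta^2)})$. Since $2/\sqrt\pi<4$, collecting these estimates yields $\|\Pi_{[-R,R]}\ket{\gkp_{\kappa,\Delta}}\|^2\le 4\kappa R+5\sqrt\kappa+7\sqrt\Delta$, the $\sqrt\kappa$ and $\sqrt\Delta$ terms comfortably absorbing the residual $O(\kappa)$ and exponentially small contributions.

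For the momentum bound I would run the dual estimate on $\|\widehat{\Pi}_{[-R,R]}\ket{\gkp_{\kappa,\Delta}}\|^2=2\pi C_{\kappa,\Delta}^2\int_{-R}^R\Psi_{1/\Delta}(p)^2\big(\sum_k\Psi_\kappa(p-2\pi k)\big)^2\,dp$. Almost orthogonality of the width-$\kappa$ peaks (cross terms exponentially small in $1/\kappa^2$) reduces this to $2\pi C_{\kappa,\Delta}^2\sum_k\int_{-R}^R\Psi_{1/\Delta}(p)^2\Psi_\kappa(p-2\pi k)^2\,dp$; bounding $\Psi_{1/\Delta}(p)^2\le\Psi_{1/\Delta}(0)^2=\Delta/\sqrt\pi$, $\int_{-R}^R\Psi_\kappa(p-2\pi k)^2\,dp\le 1$ (a Gaussian tail, hence exponentially small in $1/\kappa^2$, once $|2\pi k|$ exceeds $R$ by more than $O(\kappa)$), and counting at most $R/\pi+1$ integers $k$ with $|2\pi k|\le R$, gives at most $2\pi C_{\kappa,\Delta}^2(\Delta/\sqrt\pi)(R/\pi+1)=C_{\kappa,\Delta}^2\big((2/\sqrt\pi)R+2\sqrt\pi\big)\Delta$. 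As $2/\sqrt\pi<2$, $2\sqrt\pi\,\Delta<7\sqrt\Delta$, and $C_{\kappa,\Delta}^2\le 1+O(e^{-\pi^2/\kappa^2}+e^{-1/(4\Delta^2)})$, this is at most $2\Delta R+5\sqrt\kappa+7\sqrt\Delta$.

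The only conceptual ingredient is the explicit Fourier transform together with the Poisson-summation identification of the momentum comb; everything else is bookkeeping. The main nuisance will be keeping the Poisson-summation, Gaussian-tail and almost-orthogonality estimates quantitative enough (and, crucially, uniform in $\kappa$ — one must sum over the peak-difference before summing over the peak index, rather than bound $\sum_{z\ne z'}\eta_\kappa(z)\eta_\kappa(z')$ crudely) so that the normalization constant $C_{\kappa,\Delta}$, the off-diagonal peak overlaps, and the tails beyond the window are all dominated by the advertised $5\sqrt\kappa+7\sqrt\Delta$ — in fact they are much smaller, so there is ample slack — and observing that the crude ``number of peaks in $[-R,R]$'' count (which produces $(2R+1)\kappa$ rather than $2\kappa R$, and $(R/\pi+1)\Delta$ rather than $(R/\pi)\Delta$) is absorbed into the constants $4$ and $2$ in front of $\kappa R$ and $\Delta R$.
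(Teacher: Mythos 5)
Your proposal is correct in outline but takes a genuinely different route from the paper. The paper never estimates the untruncated Gaussian peaks directly: it fixes $\varepsilon=\sqrt{\Delta}$ and passes to auxiliary states with exactly disjoint peak supports --- $\ket{\gkp_{\kappa,\Delta}^{\varepsilon}}$ in position space and $\ket{\gkp_{\kappa,\Delta}^{\widehat{\varepsilon}}}$ (peaks truncated around $2\pi\mathbb{Z}$) in momentum space --- paying for the replacement via the variational characterization of the trace distance, $\bra{\gkp_{\kappa,\Delta}}\Pi\ket{\gkp_{\kappa,\Delta}}\le\bra{\gkp_{\kappa,\Delta}^{\varepsilon}}\Pi\ket{\gkp_{\kappa,\Delta}^{\varepsilon}}+\tfrac12\|\proj{\gkp_{\kappa,\Delta}}-\proj{\gkp_{\kappa,\Delta}^{\varepsilon}}\|_1$, together with Corollaries~\ref{cor: gkp gkp eps trace distance} and~\ref{cor: dist gkp omega eps}; this detour is precisely where the $5\sqrt{\kappa}+7\sqrt{\Delta}$ budget in the statement comes from. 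Once the peaks are disjoint, the projected norm is a trivial peak count (Lemmas~\ref{lem: overlap gkp vareps proj pos} and~\ref{lem: Omega proj mom}), producing the $4\kappa R$ resp.\ $2\Delta R$ terms. You instead keep the untruncated state and control the Gaussian cross terms by hand, summing over the peak difference; the Fourier/Poisson identification of the momentum-space comb is the same as the paper's Lemma~\ref{lem: hat gkp(p)}. Your route is more self-contained and in principle sharper (your error terms are exponentially small, so the $5\sqrt{\kappa}+7\sqrt{\Delta}$ slack is far from saturated), at the price of the uniformity bookkeeping you already flag; the paper trades that bookkeeping for reuse of its truncation lemmas.

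Two accounting points need repair when you write this up, though neither threatens the result. First, your claim that the diagonal contribution from peaks with $|z|>R$ is exponentially small in $1/\Delta^{2}$ fails for the (at most two) peaks lying just outside $[-R,R]$: if the distance from $z$ to the window edge is of order $\Delta$ or smaller, the window still captures an order-one fraction of that peak. These boundary peaks contribute at most $O(\kappa)$ in position space and $O(\Delta)$ in momentum space, so they are absorbed by your slack, but they must be counted (e.g.\ extend the count to $|z|\le R+1$, resp.\ $2\pi|k|\le R+2\pi$) rather than dismissed as tails. Second, in the momentum estimate you cannot first replace the envelope $\eta_\Delta(p)^{2}$ by its supremum and then sum the cross terms over all pairs $k\neq k'$, since infinitely many pairs share the same difference; you must either keep the restriction of the integration to $[-R,R]$ in each cross term (so only $O(R)$ values of $k+k'$ contribute non-negligibly) or organize the sum over differences first, exactly as you prescribe for position space. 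With these repairs your estimates close with ample room.
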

\begin{proof}    
In the following, we fix $\varepsilon =\sqrt{\Delta}$. Note that $\varepsilon\in (0,1/2)$ by the assumption on~$\Delta$.

We first prove claim~\eqref{eq: gkp concentration pos}. We have
\begin{align}
\left\|\Pi_{[-R,R]}\ket{\gkp_{\kappa,\Delta}}\right\|^2 
 &=\bra{\gkp_{\kappa,\Delta}}\Pi_{[-R,R]}\ket{\gkp_{\kappa,\Delta}}\\
 &\le \bra{\gkp_{\kappa,\Delta}^\varepsilon}\Pi_{[-R,R]}\ket{\gkp_{\kappa,\Delta}^\varepsilon} \\
&\quad+\frac{1}{2}\left\| \proj{\gkp_{ \kappa, \Delta}} - \proj{\gkp_{ \kappa, \Delta}^\varepsilon}\right\|_1\ ,\label{eq: split porj_R epsilon}
\end{align}
where we used~\eqref{eq: variational L1}.
By  Lemma~\ref{lem: overlap gkp vareps proj pos}, we can bound the first term as 
\begin{align}
    \bra{\gkp_{\kappa,\Delta}^\varepsilon}\Pi_{[-R,R]}\ket{\gkp_{\kappa,\Delta}^\varepsilon} \le 4\kappa R + 10 \kappa \ .
\end{align}
Using Corollary~\ref{cor: gkp gkp eps trace distance}, the second term in Eq.~\eqref{eq: split porj_R epsilon}  (by the assumption $\varepsilon= \sqrt{\Delta}$) is bounded by
 \begin{align}
    \left\| \proj{\gkp_{ \kappa, \Delta}} - \proj{\gkp_{ \kappa, \Delta}^\varepsilon}\right\|_1 \leq 6 \sqrt{\Delta} \, .
\end{align}
It follows that
\begin{align}
    \left\|\Pi_{[-R,R]}\ket{\gkp_{\kappa,\Delta}}\right\|^2 &\le 4\kappa R + 10\kappa + 3\sqrt{\Delta}\\
    &\le  4\kappa R + 5\sqrt{\kappa} + 3\sqrt{\Delta}\\
    &\le 4\kappa R + 5\sqrt{\kappa} + 7\sqrt{\Delta}\, ,
\end{align}
where we used that $\kappa < 1/4$ to obtain the second inequality.

To show the claim~\eqref{eq: gkp concentration mom}, we define 
\begin{align}
    \ket{\gkp_{\kappa,\Delta}^{\widehat{\varepsilon}}} := \frac{\widehat{\Pi}_{(2\pi\mathbb{Z})(\varepsilon)} \ket{\gkp_{\kappa,\Delta}}}{\|\widehat{\Pi}_{(2\pi\mathbb{Z})(\varepsilon)} \ket{\gkp_{\kappa,\Delta}}\|}\, , \label{eq: def gkp vareps hat first}
\end{align} where we define $(2\pi\mathbb{Z})(\varepsilon) := 2\pi\mathbb{Z} + [-\varepsilon, \varepsilon]$.
This is to say that in momentum space, we have
\begin{align}
    \gkp_{\kappa,\Delta}^{\widehat{\varepsilon}}(p) =  D_{\kappa,\Delta}^{\varepsilon} \sum_{z\in\mathbb{Z}} \eta_\Delta(p) \chi_\kappa^\varepsilon(2\pi z)(p)\qquad\textrm{for}\qquad p \in \mathbb{R}\ , \label{eq: def gkp varhat sec}
\end{align}
where $D_{\kappa,\Delta}^{\varepsilon}$ is a normalization constant such that $\left\| \ket{\gkp_{\kappa,\Delta}^{\widehat{\varepsilon}}}\right\|=1$. We conclude with~\eqref{eq: variational L1} that
\begin{align}
    \left\| \widehat{\Pi}_{[-R, R]}\ket{\gkp_{\kappa,\Delta}} \right\|^2 &=\langle \gkp_{\kappa,\Delta}|\widehat{\Pi}_{[-R,R]}|\gkp_{\kappa, \Delta}\rangle\\
    &\le \langle \gkp_{\kappa,\Delta}^{\widehat{\varepsilon}}|\widehat{\Pi}_{[-R, R]}|\gkp_{\kappa,\Delta}^{\widehat{\varepsilon}} \rangle \\
    &\quad+ \frac{1}{2} \left\| |\gkp_{\kappa,\Delta}\rangle\langle\gkp_{\kappa,\Delta}|  - |\gkp_{\kappa,\Delta}^{\widehat{\varepsilon}}\rangle\langle\gkp_{\kappa,\Delta}^{\widehat{\varepsilon}}| \right\|_1 \ .\label{eq: proj gkp ineq omega eps}
\end{align}
We bound separately the two terms on the \rhs of~\eqref{eq: proj gkp ineq omega eps}, starting with the first. By Lemma~\ref{lem: Omega proj mom}, we have
\begin{align}
     \langle \gkp_{\kappa,\Delta}^{\widehat{\varepsilon}}|\widehat{\Pi}_{[-R, R]}|\gkp_{\kappa,\Delta}^{\widehat{\varepsilon}}\rangle \le 2\Delta R+ 12\Delta\ .\label{eq: proj gkp eps widehat first}
\end{align}
Moreover, from Corollary~\ref{cor: dist gkp omega eps}, we know that
\begin{align}
    \left\| \proj{\gkp_{\kappa,\Delta}}- \proj{\gkp_{\kappa,\Delta}^{\widehat{\varepsilon}}}  \right\|_1 \le  4\sqrt{\kappa} + 13\sqrt{\Delta}\ . \label{eq: proj gkp wide eps sec}
\end{align}
Combining~\eqref{eq: proj gkp eps widehat first} with~\eqref{eq: proj gkp wide eps sec} and using that by assumption $\sqrt{\Delta}\le 1/10$, we have $\Delta \le \sqrt{\Delta}/10$ yields
\begin{align}
    \left\| \widehat{\Pi}_{[-R, R]}\ket{\gkp_{\kappa,\Delta}} \right\|^2 &\le 2\Delta R + 12\Delta + 2\sqrt{\kappa} + 13\sqrt{\Delta}/2 \\
    &\le 2\Delta R + 2\sqrt{\kappa} + 7\sqrt{\Delta}\\
    &\le 2\Delta R + 5\sqrt{\kappa} + 7\sqrt{\Delta}
\end{align}
This completes the proof.
\end{proof}

Given the established tail bounds for the state~$\ket{\gkp_{\kappa,\Delta}}$, we can  use the projections~$\Pi_{[-R,R]}$ and $\widehat{\Pi}_{[-R,R]}$ to establish lower bounds on the distance of a state~$\rho$ to $\proj{\gkp_{\kappa,\Delta}}$, as follows.

\begin{lemma}\label{lem:distancetogkpprojection}
Let $\rho\in\cB(L^2(\mathbb{R}))$ be a state, $\kappa\in(0,1/4)$ and $\Delta \in (0,1/100)$.
Then
\begin{align}
\renewcommand*{\arraystretch}{1.225}
\begin{matrix*}[l]
\left\|
\rho-\proj{\gkp_{\kappa,\Delta}}\right|_1
&\geq& 2\left(\tr(\Pi_{[-R,R]}\rho)-(4\kappa R+5\sqrt{\kappa}+7\sqrt{\Delta})\right)\\
    \left\|
\rho-\proj{\gkp_{\kappa,\Delta}}\right|_1
&\geq& 2\left(\tr(\widehat{\Pi}_{[-R,R]}\rho)-(2\Delta R + 5\sqrt{\kappa}+7\sqrt{\Delta})\right)
\end{matrix*}\qquad\textrm{ for all }R>0\ .\notag
    \end{align} 
 \end{lemma}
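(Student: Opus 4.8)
The plan is to use the variational characterization \eqref{eq: variational L1} of the $L^1$-distance together with the tail bounds from Lemma~\ref{lem: proj pos space}. The key observation is that $\Pi_{[-R,R]}$ is itself an orthogonal projection, so it is an admissible test operator in the supremum defining $\|\rho-\proj{\gkp_{\kappa,\Delta}}\|_1$. Concretely, I would start from
\begin{align}
\left\|\rho-\proj{\gkp_{\kappa,\Delta}}\right\|_1 &= 2\sup_{\Pi}\tr\big(\Pi(\rho-\proj{\gkp_{\kappa,\Delta}})\big)\geq 2\tr\big(\Pi_{[-R,R]}(\rho-\proj{\gkp_{\kappa,\Delta}})\big)\notag\\
&= 2\Big(\tr(\Pi_{[-R,R]}\rho)-\tr(\Pi_{[-R,R]}\proj{\gkp_{\kappa,\Delta}})\Big)\notag\\
&= 2\Big(\tr(\Pi_{[-R,R]}\rho)-\left\|\Pi_{[-R,R]}\ket{\gkp_{\kappa,\Delta}}\right\|^2\Big)\ ,\notag
\end{align}
where in the last step I use that $\Pi_{[-R,R]}$ is a projection, so $\tr(\Pi_{[-R,R]}\proj{\gkp_{\kappa,\Delta}}) = \bra{\gkp_{\kappa,\Delta}}\Pi_{[-R,R]}\ket{\gkp_{\kappa,\Delta}} = \|\Pi_{[-R,R]}\ket{\gkp_{\kappa,\Delta}}\|^2$.

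Next I would simply insert the tail bound \eqref{eq: gkp concentration pos} from Lemma~\ref{lem: proj pos space}, namely $\|\Pi_{[-R,R]}\ket{\gkp_{\kappa,\Delta}}\|^2 \le 4\kappa R + 5\sqrt{\kappa} + 7\sqrt{\Delta}$, which is valid under the hypotheses $\kappa\in(0,1/4)$, $\Delta\in(0,1/100)$. This yields the first claimed inequality directly. For the second inequality, I would repeat the identical argument with $\Pi_{[-R,R]}$ replaced by the momentum-space projection $\widehat{\Pi}_{[-R,R]} = \cF^\dagger \Pi_{[-R,R]}\cF$, which is likewise an orthogonal projection and hence also an admissible test operator in \eqref{eq: variational L1}; applying the bound \eqref{eq: gkp concentration mom}, i.e.\ $\|\widehat{\Pi}_{[-R,R]}\ket{\gkp_{\kappa,\Delta}}\|^2 \le 2\Delta R + 5\sqrt{\kappa} + 7\sqrt{\Delta}$, gives the second inequality.

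There is essentially no obstacle here — the statement is a one-line consequence of the variational formula for the trace distance plus the already-established concentration bounds, and both inequalities follow by the same template. The only point requiring a modicum of care is to make sure that the hypotheses on $(\kappa,\Delta)$ in the lemma match those needed to invoke Lemma~\ref{lem: proj pos space} (they do: $\kappa\in(0,1/4)$ and $\Delta\in(0,1/100)$ are exactly the conditions assumed there), and that the projections $\Pi_{[-R,R]}$ and $\widehat{\Pi}_{[-R,R]}$ are genuinely orthogonal projections on $L^2(\mathbb{R})$ so that they qualify as test operators in the supremum — which is immediate from their definitions.
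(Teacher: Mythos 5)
Your argument is correct and is exactly the paper's proof: apply the variational formula for the $L^1$-distance with the test projections $\Pi_{[-R,R]}$ and $\widehat{\Pi}_{[-R,R]}$, and insert the tail bounds of Lemma~\ref{lem: proj pos space}. Nothing is missing.
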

\begin{proof}The claim follows from~\eqref{eq: variational L1} with the choice~$\Pi=\Pi_{[-R,R]}$ respectively~$\Pi=\widehat{\Pi}_{[-R,R]}$ and Lemma~\ref{lem: proj pos space} giving an upper bound on the norms of the projected states~$\Pi_{[-R,R]}\ket{\gkp_{\kappa,\Delta}}$, $\widehat{\Pi}_{[-R,R]}\ket{\gkp_{\kappa,\Delta}}$. 
\end{proof}

\subsection{A lower bound on the unitary complexity of~$\ket{\gkp_{\kappa,\Delta}}$\label{sec:unitarycomplexitygkp}}

In this section, we establish a lower bound on the approximate unitary state complexity  $\cC^*_\varepsilon(\ket{\gkp_{\kappa,\Delta}})$ of the state~$\ket{\gkp_{\kappa,\Delta}}$ as introduced in Section~\ref{sec: unitary state complexity}. We prove the following.
\begin{theorem} \label{thm: unitary state complexity}
    Let $\kappa, \Delta>0$ be such that 
    \begin{align}
20\sqrt{\kappa} + 28\sqrt{\Delta}\le 1\, .\label{eq:kappadeltaassumption}
\end{align}
    Then, we have
    \begin{align}
        \cC_1^*(\ket{\gkp_{\kappa,\Delta}})\geq \frac{1}{8\pi}\left(\log 1/\kappa + \log 1/\Delta\right) -1\, .
    \end{align}
\end{theorem}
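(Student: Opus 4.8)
The plan is to combine the energy (moment) upper bound for low-complexity states (Lemma~\ref{lem: moment limits unitary state prep}) with the tail bounds for $\ket{\gkp_{\kappa,\Delta}}$ (Lemma~\ref{lem: proj pos space}, packaged as Lemma~\ref{lem:distancetogkpprojection}), exactly in the spirit of the coherent-state lower bound in Section~\ref{sec:coherentstatecomplexity}. Suppose a unitary state preparation protocol achieves error $\varepsilon=1$, i.e.\ it uses $m+1$ modes, $m'$ qubits, $T$ gates from $\cG$, and outputs $\rho=\tr_{m,m'}U\proj{\Psi}U^\dagger$ with $\|\rho-\proj{\gkp_{\kappa,\Delta}}\|_1\le 1$. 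By Lemma~\ref{lem: moment limits unitary state prep}, $\tr((Q^2+P^2)\rho)\le\tr(H\rho)\le E_m(T)=e^{8\pi T}(m+2)$, so by Markov's inequality (Lemma~\ref{lem:markovinequalityprojection}), $\tr(\Pi_{[-R,R]}\rho)\ge 1-E_m(T)/R^2$ and likewise for $\widehat{\Pi}_{[-R,R]}$.

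Next I would feed these into Lemma~\ref{lem:distancetogkpprojection}. Using the position-space bound,
\begin{align}
1\ge\left\|\rho-\proj{\gkp_{\kappa,\Delta}}\right\|_1\ge 2\left(1-\frac{E_m(T)}{R^2}-\bigl(4\kappa R+5\sqrt{\kappa}+7\sqrt{\Delta}\bigr)\right)
\end{align}
for every $R>0$, and symmetrically with $\widehat{\Pi}$ and the term $2\Delta R$ in place of $4\kappa R$. Rearranging the position-space inequality gives $E_m(T)/R^2+4\kappa R\ge \tfrac12-5\sqrt{\kappa}-7\sqrt{\Delta}$; under the hypothesis~\eqref{eq:kappadeltaassumption} the right side is at least, say, a fixed constant like $\tfrac14$ (since $20\sqrt{\kappa}+28\sqrt{\Delta}\le1$ forces $5\sqrt{\kappa}+7\sqrt{\Delta}\le 1/4$). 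Optimising the left side over $R$ (balancing $E_m(T)/R^2$ against $4\kappa R$, i.e.\ $R\sim (E_m(T)/\kappa)^{1/3}$) yields a bound of the form $\kappa^{2/3}E_m(T)^{1/3}\gtrsim 1$, hence $E_m(T)\gtrsim \kappa^{-2}$, i.e.\ $e^{8\pi T}(m+2)\ge c/\kappa^2$. Doing the same with the momentum-space projection gives $e^{8\pi T}(m+2)\ge c/\Delta^2$. Multiplying the two (or adding the logarithms) yields $e^{16\pi T}(m+2)^2\ge c^2/(\kappa^2\Delta^2)$, so $16\pi T+2\log(m+2)\ge \log(1/\kappa^2)+\log(1/\Delta^2)+O(1)$.

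Finally I would convert this into a bound on the complexity $\cC_1^*(\ket{\gkp_{\kappa,\Delta}})=T+(m+1)+m'$. As in the coherent-state argument, $\log(m+2)$ is dominated by $m+1$ (indeed $m+1-\tfrac{1}{8\pi}\log(m+2)\ge0$ for all $m\in\mathbb{N}\cup\{0\}$, and here with a factor $\tfrac{1}{16\pi}\cdot 2=\tfrac{1}{8\pi}$ the same inequality applies), so $2\log(m+2)$ can be absorbed into $(m+1)+m'$ after dividing by $16\pi$, leaving
\begin{align}
T+(m+1)+m'\ge\frac{1}{16\pi}\bigl(2\log(1/\kappa)+2\log(1/\Delta)\bigr)-O(1)=\frac{1}{8\pi}\bigl(\log1/\kappa+\log1/\Delta\bigr)-O(1)\ ,
\end{align}
and tracking the constants carefully gives the stated ``$-1$''. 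I expect the main obstacle to be purely bookkeeping: choosing the right threshold constant (here $1$ for $\varepsilon$, and the split of $\tfrac12$ into ``useful part'' plus tail slack) so that the $R$-optimisation cleanly produces a clean power of $\kappa$ and $\Delta$, and then chasing the numerical constants through the $e^{8\pi T}$, the cube-root optimisation, and the absorption of $\log(m+2)$ so that the final constant is exactly $\tfrac{1}{8\pi}$ with an additive $-1$. There is also a minor subtlety in that Lemma~\ref{lem: proj pos space} requires $\Delta\in(0,1/100)$ and $\kappa\in(0,1/4)$; I would note that~\eqref{eq:kappadeltaassumption} already forces $\kappa,\Delta$ small enough for those hypotheses to hold (or handle the remaining finitely-constrained range trivially, since the claimed bound is only asymptotic/nontrivial for small $\kappa,\Delta$ anyway).
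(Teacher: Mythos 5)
Your proposal is correct and follows essentially the same route as the paper's proof: the moment limit of Lemma~\ref{lem: moment limits unitary state prep}, Markov's inequality (Lemma~\ref{lem:markovinequalityprojection}) and the GKP tail bounds (Lemma~\ref{lem:distancetogkpprojection}), with the assumption~\eqref{eq:kappadeltaassumption} giving $5\sqrt{\kappa}+7\sqrt{\Delta}\le 1/4$, the optimization over $R$ performed separately in position and momentum, and $\log(m+2)$ absorbed into $m+1$. The only cosmetic difference is the final combination — you multiply the two energy bounds before taking logarithms, whereas the paper derives $\cC_1^*\ge\frac{1}{4\pi}\log 1/\kappa-1$ and $\cC_1^*\ge\frac{1}{4\pi}\log 1/\Delta-1$ separately and averages them — and both versions yield the stated $\frac{1}{8\pi}(\log 1/\kappa+\log 1/\Delta)-1$ since the numerical constants from the $R$-optimization contribute less than $1$ after dividing by $16\pi$ (resp.\ $8\pi$).
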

In particular, we infer the following from Theorem~\ref{thm: unitary state complexity}. 
\begin{corollary}\label{cor: unitary state compl gkp limit}
    \begin{align}
\cC_{1}^*(\ket{\gkp_{\kappa,\Delta}})
=\Omega(\log 1/\kappa+\log 1/\Delta)  \qquad\textrm{for}\qquad(\kappa,\Delta) \rightarrow (0,0)\ . 
    \end{align}
\end{corollary}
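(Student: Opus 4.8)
The plan is to read Corollary~\ref{cor: unitary state compl gkp limit} off directly from Theorem~\ref{thm: unitary state complexity}. First I would observe that the hypothesis $20\sqrt{\kappa}+28\sqrt{\Delta}\le 1$ of that theorem holds for all sufficiently small $(\kappa,\Delta)$ (e.g.\ once $\kappa,\Delta$ are below some fixed constant such as $10^{-4}$). In that regime the theorem gives
\begin{align}
\cC_1^*(\ket{\gkp_{\kappa,\Delta}})\ \ge\ \tfrac{1}{8\pi}\bigl(\log 1/\kappa+\log 1/\Delta\bigr)-1\ ,
\end{align}
and since both $\log 1/\kappa$ and $\log 1/\Delta$ diverge as $(\kappa,\Delta)\to(0,0)$, the additive constant $-1$ is absorbed and the right-hand side is $\Omega(\log 1/\kappa+\log 1/\Delta)$. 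This is exactly the asymptotic statement claimed, so for the corollary itself there is nothing further to do.

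For completeness, here is the strategy I would use for the underlying Theorem~\ref{thm: unitary state complexity}, which mirrors the coherent-state argument of Section~\ref{sec:coherentstatecomplexity}. Take any unitary state preparation protocol with $T$ gates from $\cG$, $m$ auxiliary modes and $m'$ auxiliary qubits, whose output $\rho=\tr_{m,m'}U\proj{\Psi}U^\dagger$ on the first mode satisfies $\|\rho-\proj{\gkp_{\kappa,\Delta}}\|_1\le 1$. Since $Q_1^2+P_1^2\le H$, Lemma~\ref{lem: moment limits unitary state prep} gives $\tr((Q^2+P^2)\rho)\le E_m(T)=e^{8\pi T}(m+2)$. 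Feeding this into Lemma~\ref{lem:markovinequalityprojection} yields, for every $R>0$,
\begin{align}
\tr\bigl(\Pi_{[-R,R]}\rho\bigr)\ \ge\ 1-\tfrac{E_m(T)}{R^2}\ ,\qquad \tr\bigl(\widehat{\Pi}_{[-R,R]}\rho\bigr)\ \ge\ 1-\tfrac{E_m(T)}{R^2}\ .
\end{align}
Combining with Lemma~\ref{lem:distancetogkpprojection} and the assumption $\|\rho-\proj{\gkp_{\kappa,\Delta}}\|_1\le 1$, and using $5\sqrt{\kappa}+7\sqrt{\Delta}\le\tfrac14$ under~\eqref{eq:kappadeltaassumption}, gives
\begin{align}
\tfrac{E_m(T)}{R^2}+4\kappa R\ \ge\ \tfrac14\ ,\qquad \tfrac{E_m(T)}{R^2}+2\Delta R\ \ge\ \tfrac14\ .
\end{align}
Choosing $R\sim 1/\kappa$ in the first inequality forces $E_m(T)\gtrsim 1/\kappa^2$, and $R\sim 1/\Delta$ in the second forces $E_m(T)\gtrsim 1/\Delta^2$; taking the geometric mean of the two bounds gives $E_m(T)\ge c/(\kappa\Delta)$ for an absolute constant $c>0$. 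Unfolding $E_m(T)=e^{8\pi T}(m+2)$ and taking logarithms yields $T\ge\tfrac{1}{8\pi}(\log 1/\kappa+\log 1/\Delta+\log c-\log(m+2))$, and finally $m+1-\tfrac{1}{8\pi}\log(m+2)\ge 0$ (as in~\eqref{eq:lowerboundfdm}) lets one absorb the $-\log(m+2)$ term into the width contribution $(m+1)$, producing $T+(m+1)+m'\ge\tfrac{1}{8\pi}(\log 1/\kappa+\log 1/\Delta)-O(1)$, hence $\cC_1^*(\ket{\gkp_{\kappa,\Delta}})\ge\tfrac{1}{8\pi}(\log 1/\kappa+\log 1/\Delta)-1$.

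The genuine work lies not in this balancing argument but in its two inputs: the moment limit of Lemma~\ref{lem: moment limits unitary state prep} (a routine iteration of the per-gate bounds in Corollary~\ref{cor:momentlimitggate}) and, more importantly, the GKP tail bounds of Lemma~\ref{lem: proj pos space} in \emph{both} position and momentum space — both of which are already established earlier in the paper. The only mild subtlety in assembling the theorem is the correct choice of radii $R$ and the merging of the two one-sided constraints $E_m(T)\gtrsim 1/\kappa^2$ and $E_m(T)\gtrsim 1/\Delta^2$ into the product bound $E_m(T)\gtrsim 1/(\kappa\Delta)$. For the corollary, given the theorem, there is no obstacle at all.
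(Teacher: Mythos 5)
Your proof of the corollary is correct and is exactly the paper's route: the corollary is read off from Theorem~\ref{thm: unitary state complexity}, whose hypothesis $20\sqrt{\kappa}+28\sqrt{\Delta}\le 1$ holds once $(\kappa,\Delta)$ are below fixed constants, and the additive $-1$ is absorbed asymptotically. Your sketch of the underlying theorem also matches the paper's argument (moment limits, Markov projection bounds, and the position/momentum tail bounds of Lemma~\ref{lem: proj pos space}); the only cosmetic difference is that you merge the two constraints via a geometric mean $E_m(T)\gtrsim 1/(\kappa\Delta)$, whereas the paper derives the two bounds $\tfrac{1}{4\pi}\log 1/\kappa-1$ and $\tfrac{1}{4\pi}\log 1/\Delta-1$ separately and averages them, which is equivalent.
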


\begin{proof}[Proof of Theorem~\ref{thm: unitary state complexity}]
    Consider a circuit $U$ 
    preparing a distance-$1$ approximation to~$\ket{\gkp_{\kappa,\Delta}}$ with minimal  complexity~$\cC_1^*(\ket{\gkp_{\kappa,\Delta}})$. In other words, the circuit~$U$ uses  $m+1$~bosonic modes and $m'$~qubits,  and is composed of $T$ unitaries from the set~$\cG$. These parameters satisfy
     \begin{align}
    \cC_1^*(\ket{\gkp_{\kappa,\Delta}})= T+(m+1)+m'\ .\label{eq:complexitygkpstatem}
\end{align}
Furthermore, denoting the state  of the first mode  after application of~$U$ to the initial state      $\ket{\Psi}=\ket{\vac}\otimes\ket{\vac}^{\otimes m}\otimes\ket{0}^{\otimes m'}$
     by  $\rho := \tr_{m,m'}U\proj{\Psi}U^\dagger$, we have
     \begin{align}
         \|\rho-\proj{\gkp_{\kappa,\Delta}}\|_1\leq 1\ .
     \end{align}
     
    We proceed in two steps. First, the assumption~\eqref{eq:kappadeltaassumption} implies that~$5\sqrt{\kappa}+7\sqrt{\Delta}<1/4$ and also that $\kappa <1/4$ and $\Delta < 1/100$.
Therefore, by the first inequality of Lemma~\ref{lem:markovinequalityprojection} 
and by
Lemma~\ref{lem:distancetogkpprojection}, we have
\begin{align}
    \|\rho-\proj{\gkp_{\kappa,\Delta}}\|_1 &\geq 2\left(1-\left(\frac{E_m(T)}{R^2}+4\kappa R\right)-1/4 \right)\qquad\textrm{for all}\qquad R>0\ ,
\end{align}
which implies 
\begin{align}
    E_m(T) \ge \left( 1/4 - 4\kappa R\right)R^2\qquad\textrm{for}\qquad R>0 \, .
\end{align}
We maximize the \rhs for $R>0$. The global maximum of the function $g(R) = (1/4 - 4\kappa R)R^2$ restricted to $R>0$ is at $R_0 = 1/(24\kappa)$ with $g(R_0) = 1/(12\cdot 24^2 \kappa^2)$.

Inserting the definition of~$E_m(T)$ from 
 Lemma~\ref{lem: moment limits unitary state prep} into the above gives the lower bound
 \begin{align}
 8\pi T  +\log(m+2) \geq   \log 1/\kappa^2 - \log(12\cdot24^{2}) \ ,
     \end{align}
     hence using that $\left(\log(m+2)\right)/(8\pi) \le m+1$ for all $m\ge 0$, and that $\log(12\cdot24^2)/(8\pi)<1$, we obain
     \begin{align}
     T+(m+1)+m'\geq \frac{1}{4\pi}\cdot \log1/\kappa  - 1 \ .
     \end{align}
     Inserting this into~\eqref{eq:complexitygkpstatem}
     gives the first lower bound
     \begin{align}
     \cC_1^*(\ket{\gkp_{\kappa,\Delta}})\geq  \frac{1}{4\pi}\cdot \log 1/\kappa -1 \label{eq: unitary complexity position}
     \end{align}
     on the complexity~$\cC_1^*(\ket{\gkp_{\kappa,\Delta}})$.

    Second, by a similar calculation using the bound involving~$\widehat{\Pi}_{[-R,R]}$ in Lemma~\ref{lem:markovinequalityprojection}, we have (by the assumption~\eqref{eq:kappadeltaassumption}, which is~$5\sqrt{\kappa} + 7\sqrt{\Delta} <1/4$) 
    \begin{align}
\|\rho-\proj{\gkp_{\kappa,\Delta}}\|_1 &\ge 2\left( 1-\left(\frac{E_m(T)}{R^2} + 2\Delta R \right)-1/4\right)\qquad\textrm{for all}\qquad R>0\, .
        \end{align}
    This implies 
    \begin{align}
        E_m(T) \ge \left(1/4 - 2\Delta R\right) R^2\qquad \textrm{for all}\qquad R>0\, .
    \end{align}
    Proceeding as we did in the calculation involving $\Pi_{[-R,R]}$, we find 
    \begin{align}
        E_m(T) \ge 1/(6\cdot 12^2) 1/\Delta^2\, .
    \end{align}
    
    Again inserting $E_m(T)$ from Lemma~\ref{lem: moment limits unitary state prep} yields 
    \begin{align}
        8\pi T + \log(m+2) \ge \log 1/\Delta^2 - \log(6 \cdot 12^2)
    \end{align}
    hence, by the same argument as before, along with $(\log(6 \cdot 12^2))/(8\pi) <1$, we get
    \begin{align}
        T + (m+1) + m' \ge \frac{1}{4\pi}\cdot \log 1/\Delta-1 \ .
    \end{align}
    Therefore, this second analysis shows that the unitary complexity is bounded as
    \begin{align}
        \cC_1^*(\ket{\gkp_{\kappa,\Delta}})\geq \frac{1}{4\pi}\cdot \log 1/\Delta-1\ . \label{eq: unitary complexity momentum}
    \end{align}
    Combining the bounds from~\eqref{eq: unitary complexity position} and~\eqref{eq: unitary complexity momentum}, we deduce
    \begin{align}
        \cC_1^*(\ket{\gkp_{\kappa,\Delta}})\geq \frac{1}{8\pi}\left(\log 1/\kappa + \log 1/\Delta\right) -1\ .
    \end{align}
    The claim follows.
\end{proof}

\subsection{A lower bound on the heralded complexity of~$\ket{\gkp_{\kappa,\Delta}}$ \label{sec: lower bound heralded state complexity}}

In this section, we prove a lower bound on the complexity of preparing the state $\ket{\gkp_{\kappa,\Delta}}$ with heralding protocols as introduced in Section~\ref{sec: heralded state complexity}.
We will argue that as long as the number of operations (cf. Section~\ref{sec: heralded state complexity}) is bounded, the resulting states will be far from a GKP state except with low probability. This immediately implies a lower bound on the heralded complexity of~$\ket{\gkp_{\kappa,\Delta}}$ states. The procedure is similar to bounding the unitary complexity. 
We first need to lower bound the terms of the form $\tr\left(\Pi_{[-R,R]}\rho_\acc\right)$ and $\tr\left(\widehat{\Pi}_{[-R,R]}\rho_\acc\right)$, where $\rho_\acc$ is the output state conditioned on acceptance. We show the following.

\begin{lemma}\label{lem: proj rho acc upper bound}
    Consider a heralding protocol described by the unitary $U$ on $L^2(\mathbb{R})^{\otimes (m+1)}\otimes (\mathbb{C}^2)^{\otimes m'}$ composed of $T_1$ gates from $\cG$ (before the measurements) and acceptance probability $p_\acc \in (0,1]$. Let $R>0$ and assume that the maximal norm of a displacement vector associated with a displacement operator applied after measurements is bounded by \begin{align}
        d_\acc := \sup_{\alpha\in F^{-1}(\{\acc\})} \|d(\alpha)\| \le R/2\ .
    \end{align}Then, 
    \begin{align}
       \min\left\{ \tr\left(\Pi_{[-R,R]}\rho_\acc\right), \tr\left(\widehat{\Pi}_{[-R,R]}\rho_\acc\right) \right\} &\ge p_\acc - \frac{4E(T_1)}{R^2}\ ,
    \end{align}
    with the function $E_m$ defined in Lemma~\ref{lem: moment limits unitary state prep}. 
\end{lemma}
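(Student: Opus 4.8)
The plan is to track the position- and momentum-quadrature "spread" of the state through the three stages of the heralded protocol --- the unitary part (before measurement), the measurement on the auxiliary systems, and the post-measurement displacement correction --- and argue that none of these can push much weight outside an interval of size $O(R)$, provided $T_1$ is small and $d_\acc\le R/2$.

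First I would apply Lemma~\ref{lem: moment limits unitary state prep} to the state $\ket{\Psi}=\ket{\vac}\otimes\ket{\vac}^{\otimes m}\otimes\ket{0}^{\otimes m'}$ and the circuit~$U$ consisting of $T_1$ gates from~$\cG$, to get $\bra{\Psi}U^\dagger H U\ket{\Psi}\le E_m(T_1)$ for $H=\sum_{k=1}^{m+1}(Q_k^2+P_k^2)$. In particular, $\tr((Q_1^2+P_1^2)\sigma)\le E_m(T_1)$ where $\sigma=\tr_{m,m'}U\proj{\Psi}U^\dagger$ is the reduced state on the first mode after the unitary but before measurement. The key point is that the measurement and the heralding happen on the \emph{auxiliary} modes and qubits, while the target mode is the first one; conditioning on acceptance produces an \emph{ensemble} of states $\{\rho^{(\alpha)}\}$ on the first mode with $\rho_\acc = \frac{1}{p_\acc}\int_{F^{-1}(\{\acc\})} p(\alpha)\, D(d(\alpha))\rho^{(\alpha)} D(d(\alpha))^\dagger\, d\alpha$ (before correction, the conditional mixture $\tilde\rho_\acc := \frac{1}{p_\acc}\int_{F^{-1}(\{\acc\})} p(\alpha)\rho^{(\alpha)}d\alpha$ satisfies $p_\acc\, \tilde\rho_\acc \le \sigma$ as operators, since the un-accepted branch contributes a positive operator). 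Hence $p_\acc \tr((Q^2+P^2)\tilde\rho_\acc) \le \tr((Q^2+P^2)\sigma) \le E_m(T_1)$, i.e. $\tr((Q^2+P^2)\tilde\rho_\acc)\le E_m(T_1)/p_\acc$.

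Next I would account for the displacement correction. Each $D(d(\alpha))$ has $\|d(\alpha)\|\le d_\acc\le R/2$, and by Lemma~\ref{lem:momentlimitphasespacedisplacements} (the second moment bound) $\tr(H D(d)\rho D(d)^\dagger)\le \tr(H\rho)+2\|d\|\|s(\rho)\|+\|d\|^2$. Averaging over the accepted branch and using $\|s(\rho^{(\alpha)})\|\le \sqrt{\tr((Q^2+P^2)\rho^{(\alpha)})}$ together with Jensen/Cauchy--Schwarz to pass the expectation inside the square root, one gets $\tr((Q^2+P^2)\rho_\acc) \le \tr((Q^2+P^2)\tilde\rho_\acc) + 2 d_\acc\sqrt{\tr((Q^2+P^2)\tilde\rho_\acc)} + d_\acc^2 \le \big(\sqrt{E_m(T_1)/p_\acc}+d_\acc\big)^2$. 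Since $p_\acc\le 1$ this is at most $\big(\sqrt{E_m(T_1)}+R/2\big)^2 \le 2E_m(T_1) + R^2/2$, and a cruder bound $\tr((Q^2+P^2)\rho_\acc)\le 4E_m(T_1)+R^2$ (after multiplying through by $p_\acc$ appropriately and using $p_\acc\le 1$) is what feeds the final inequality. Then Lemma~\ref{lem:markovinequalityprojection} applied to $\rho_\acc$ gives $\min\{\tr(\Pi_{[-R,R]}\rho_\acc),\tr(\widehat\Pi_{[-R,R]}\rho_\acc)\} \ge 1 - \tr((Q^2+P^2)\rho_\acc)/R^2$; one then needs to convert the "$1-\cdots$" into the claimed "$p_\acc - 4E(T_1)/R^2$" form, which is where the bookkeeping of the normalizing factor $p_\acc$ and the use of $d_\acc\le R/2$ to absorb the cross term must be done carefully.

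The main obstacle I expect is exactly this bookkeeping around the conditioning: one must be careful that "conditioned on acceptance" multiplies the relevant moment by $1/p_\acc$, and that the post-selection on the auxiliary registers really does leave $p_\acc\tilde\rho_\acc \le \sigma$ so that the moment bound on $\sigma$ transfers (losing only a factor $1/p_\acc$), and then that the extra $d_\acc^2 \le R^2/4$ and cross-term $2d_\acc\sqrt{\cdots}$ from the displacement can be folded into the stated constants ($4E(T_1)/R^2$ rather than something messier). The inequality $\min\{\ldots\}\ge p_\acc - 4E(T_1)/R^2$ (note $E$, presumably $E_m$) is exactly what one gets after bounding $\tr((Q^2+P^2)\rho_\acc) \le (\sqrt{E_m(T_1)}+d_\acc)^2/p_\acc$ and then using $1 - x/(R^2 p_\acc) \ge p_\acc(1 - x/(R^2 p_\acc^2))$... one has to choose the chain of estimates so the final form comes out clean; everything else (the moment bounds, Markov, the displacement second-moment bound) is already available in the excerpt.
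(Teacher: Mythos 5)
There is a genuine gap, and it lies in how you handle the post-measurement displacement correction. You propagate the energy bound through $D(d(\alpha))$ via the second-moment estimate of Lemma~\ref{lem:momentlimitphasespacedisplacements} and then apply Markov (Lemma~\ref{lem:markovinequalityprojection}) at radius $R$ to $\rho_\acc$. With $d_\acc$ allowed to be as large as $R/2$, this necessarily produces an additive loss of at least $d_\acc^2/R^2$ (up to $1/4$) plus a cross term $2d_\acc\sqrt{\cdot}/R$, neither of which can be absorbed into $4E_m(T_1)/R^2$. Concretely, take $p_\acc=1$, $d_\acc=R/2$ and $E_m(T_1)$ fixed with $R\to\infty$: your chain gives at best
\begin{align}
1-\frac{\bigl(\sqrt{E_m(T_1)/p_\acc}+d_\acc\bigr)^2}{R^2}\;\longrightarrow\;\frac34\ ,
\end{align}
whereas the lemma asserts a bound tending to $1$. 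So the route through the energy of the corrected state provably cannot reach the stated inequality; your cruder intermediate bound $\tr((Q^2+P^2)\rho_\acc)\le 4E_m(T_1)+R^2$ is even worse, since feeding it into Markov yields a nonpositive right-hand side. (The $1/p_\acc$ bookkeeping you flag is a secondary issue: one can check that $1-x/p_\acc\ge p_\acc-4x$ whenever $p_\acc-4x\ge 0$, so that part could be patched; the displacement handling is what breaks.)

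The paper avoids this loss by never touching the moments of the corrected state. Instead it conjugates the projector: since $|d_P(\alpha)|\le\|d(\alpha)\|\le R/2$, one has $D(d(\alpha))^\dagger\Pi_{[-R,R]}D(d(\alpha))=\Pi_{[-R+d_P(\alpha),R+d_P(\alpha)]}\ge\Pi_{[-R/2,R/2]}$, so
\begin{align}
\tr(\Pi_{[-R,R]}\rho_\acc)\ \ge\ \langle U\Psi,(\Pi_{[-R/2,R/2]}\otimes E_\acc)U\Psi\rangle\ ,
\end{align}
with $E_\acc=\int_{F^{-1}(\{\acc\})}E_\alpha\,d\alpha$. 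The acceptance conditioning is then handled additively rather than by dividing by $p_\acc$: writing $I=E_\acc+E_\rej$ and bounding the reject contribution by $1-p_\acc$ gives
\begin{align}
\langle U\Psi,(\Pi_{[-R/2,R/2]}\otimes E_\acc)U\Psi\rangle\ \ge\ \langle U\Psi,(\Pi_{[-R/2,R/2]}\otimes I)U\Psi\rangle-(1-p_\acc)\ ,
\end{align}
and Markov at radius $R/2$ applied to the pre-measurement reduced state (whose energy is at most $E_m(T_1)$ by Lemma~\ref{lem: moment limits unitary state prep}) yields $\langle U\Psi,(\Pi_{[-R/2,R/2]}\otimes I)U\Psi\rangle\ge 1-4E_m(T_1)/R^2$, hence the claimed $p_\acc-4E_m(T_1)/R^2$ with no additive constant and no $1/p_\acc$. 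Your proof would be repaired by replacing the moment-propagation step with this projector-conjugation argument (and the analogous one with $\widehat{\Pi}_{[-R,R]}$ and $d_Q(\alpha)$ for the momentum statement).
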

\begin{proof}
    We only show
    \begin{align}
        \tr\left(\Pi_{[-R,R]}\rho_\acc\right) \ge p_\acc - \frac{4E(T_1)}{R^2}\ ;
    \end{align} the inequality for $\widehat{\Pi}_{[-R,R]}$ follows by analogous arguments.
    As $|d_P(\alpha)| \le \|d(\alpha)\|\leq R/2$ for all $\alpha\in F^{-1}(\{\acc\})$, we have
    \begin{align}
  [-R/2,R/2]\subseteq [-R+d_P(\alpha),R+d_P(\alpha)]\ .
  \end{align}
  Moreover, we note that
  \begin{align}
      D(d(\alpha))^\dagger \Pi_{[-R, R]} D(d(\alpha)) = \Pi_{[-R +d_P(\alpha), R+d_P(\alpha)]}\ .
  \end{align}
Hence, we conclude
\begin{align}
\tr(\Pi_{[-R,R]}\rho_\acc)
&=
\int_{F^{-1}(\{\acc\})} 
\tr\left((\Pi_{[-R,R]}\otimes I)
(D(d(\alpha))\otimes E_\alpha)
U\proj{\Psi}U^\dagger (D(d(\alpha))^\dagger \otimes I)\right)d\alpha\notag\\
&
=
\int_{F^{-1}(\{\acc\})}
\tr\left((D(d(\alpha))^\dagger\Pi_{[-R,R]} D(d(\alpha)))\otimes E_\alpha)
U\proj{\Psi}U^\dagger \right)d\alpha\\
&
=
\int_{F^{-1}(\{\acc\})} 
\tr\left(\Pi_{[-R+d_P(\alpha),R+d_P(\alpha)]}
(I\otimes E_\alpha)
U\proj{\Psi}U^\dagger \right)d\alpha\\
&\geq 
\int_{F^{-1}(\{\acc\})}
\tr\left(\Pi_{[-R/2,R/2]}\otimes E_\alpha)
U\proj{\Psi}U^\dagger \right)d\alpha\\
&=
\langle U\Psi,(\Pi_{[-R/2,R/2]}\otimes E_\acc)U\Psi\rangle \label{eq: bound proj rho acc first}
\end{align}
where we defined $ E_\acc := \int_{F^{-1}(\{\acc\})}E_\alpha d\alpha$. For later reference, we also define the POVM element associated with the output flag $\rej$ by $E_\rej := \int_{F^{-1}(\{\rej\})}E_\alpha d\alpha$.
Let us consider the post-measurement state without shift corrections $D(d(\alpha))$, which we can decompose according to the flag outputs $\acc$ and $\rej$ as
\begin{align}
    \tr_{m,m'}
    U\proj{\Psi}U^\dagger&=
    p_\acc \rho'_{\acc}+ (1- p_\acc)\rho'_{\rej}\ ,
\end{align}
where 
\begin{align}
    \rho'_{\acc} = \frac{1}{p_\acc}\int_{F^{-1}(\{\acc\})} p(\alpha) \ \rho^{(\alpha)} d\alpha\ \qquad \textrm{and}\qquad \rho'_{\rej} = \frac{1}{1-p_\acc}\int_{F^{-1}(\{\rej\})} p(\alpha) \ \rho^{(\alpha)} d\alpha\ ,
\end{align} and where $p(\alpha)\rho^{(\alpha)}= \bra{\Psi}U^\dagger (I \otimes E_\alpha) U\ket{\Psi}$.
Clearly, we have
\begin{align}
\langle U\Psi,(\Pi_{[-R/2,R/2]}\otimes I)U\Psi
\rangle
&=\langle 
U\Psi,(\Pi_{[-R/2,R/2]}\otimes E_\acc) U\Psi\rangle\\
&\quad+
\langle 
U\Psi,(\Pi_{[-R/2,R/2]}\otimes E_\rej) U\Psi\rangle \label{eq: proj I decomposition}
\end{align}
and 
\begin{align}
\langle 
U\Psi,(\Pi_{[-R/2,R/2]}\otimes E_\rej) U\Psi\rangle\leq 
\langle 
U\Psi,(I\otimes E_\rej) U\Psi\rangle=1-p_{\acc}\ . \label{eq: bound proj E second}
\end{align}
Combining~\eqref{eq: bound proj rho acc first} and~\eqref{eq: proj I decomposition}, we obtain
\begin{align}
 \tr(\Pi_{[-R,R]}\rho_\acc) &\ge 
\langle 
U\Psi,(\Pi_{[-R/2,R/2]}\otimes E_\acc) U\Psi\rangle\\
&=\langle U\Psi,(\Pi_{[-R/2,R/2]}\otimes I)U\Psi
\rangle-\langle 
U\Psi,(\Pi_{[-R/2,R/2]}\otimes E_\rej) U\Psi\rangle\\
&\geq \langle U\Psi,(\Pi_{[-R/2,R/2]}\otimes I)U\Psi
\rangle-(1-p_\acc) \qquad \textrm{by~\eqref{eq: bound proj E second}}\\
&= p_\acc - \left(1 - \langle U\Psi,(\Pi_{[-R/2,R/2]}\otimes I)U\Psi
\rangle \right) \ .\label{eq: proj E third}
\end{align}
Moreover, we can bound
\begin{align}
     \langle U\Psi,(\Pi_{[-R/2,R/2]}\otimes I)U\Psi \rangle &= \tr\left(\Pi_{[-R/2,R/2]} \tr_{m,m'}U\proj{\psi}U^\dagger \right)\\
     &\ge 1- \frac{\tr\left( (Q^2 + P^2) \tr_{m,m'}U\proj{\psi}U^\dagger\right)}{(R/2)^2}\\
     &= 1 - \frac{\bra{\Psi}U^\dagger(Q_1^2 + P_1^2)U\ket{\Psi}}{(R/2)^2}\\
     &\ge 1 - \frac{\bra{\Psi}U^\dagger(\sum_{i=1}^{m+1} Q_i^2 + P_i^2)U\ket{\Psi}}{(R/2)^2} \, , \label{eq: proj R/2 first}
\end{align}
where we used Lemma~\ref{lem:markovinequalityprojection} applied to the reduced state $\tr_{m,m'}U\proj{\psi}U^\dagger$ to obtain the first inequality.
The claim follows from combining~\eqref{eq: proj E third} and~\eqref{eq: proj R/2 first} with the upper bound on the energy $E_m$ stated in Lemma~\ref{lem: moment limits unitary state prep}.
\end{proof}

\begin{theorem} \label{thm: lower bound heralded complexity}
    Let $\kappa$, $\Delta>0$. Then, for all $p\in(0,1]$ and $\varepsilon>0$ such that  
    \begin{align} 
       20\sqrt{\kappa} + 28 \sqrt{\Delta} \le p \label{eq: assum kappa delta heralded}\qquad\textrm{and}\qquad
       \varepsilon \le p \ ,
    \end{align} we have
    \begin{align}
        \cC_{p,\varepsilon}^{\mathsf{her},*}(\ket{\gkp_{\kappa,\Delta}}) \ge \frac{1}{200}\left(\log1/\kappa + \log1/\Delta\right) -1\ .
    \end{align}
\end{theorem}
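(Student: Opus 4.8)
The plan is to mirror the proof of Theorem~\ref{thm: unitary state complexity}, but applied to the heralded-complexity setting using Lemma~\ref{lem: proj rho acc upper bound} in place of Lemma~\ref{lem:markovinequalityprojection}. Let $U$ be the unitary part of an optimal heralded protocol preparing an $\varepsilon$-approximation to $\ket{\gkp_{\kappa,\Delta}}$ with acceptance probability $p_\acc\ge p$, using $T_1$ gates from~$\cG$ before the measurements, $m+1$ bosonic modes, $m'$ qubits, and $T_2=\sup_{\alpha\in F^{-1}(\{\acc\})}\cC_\cG(D(d(\alpha)))$ gates for the correction. So the complexity is $\cC_{p,\varepsilon}^{\mathsf{her},*}(\ket{\gkp_{\kappa,\Delta}})=T_1+T_2+(2m+1)+2m'$. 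First I would dispose of the regime where the correction displacements are large: if $d_\acc:=\sup_\alpha\|d(\alpha)\|$ is big, then by Lemma~\ref{lem:upperboundcomplexitycoherent} (or directly~\eqref{eq: lower bound complexiry displacement complexityd}) we have $T_2=\Omega(\log d_\acc)$, so I may assume $\log d_\acc$ is small compared to the target bound, and in particular I can choose the scale $R$ in Lemma~\ref{lem: proj rho acc upper bound} with $R=2d_\acc$ or simply restrict attention to the subcase $d_\acc\le R/2$ for the $R$ I will ultimately pick; if $d_\acc$ exceeds that, the $T_2$-term alone already gives the lower bound.

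Next, with the condition $d_\acc\le R/2$ in force, combine the lower bound $\tr(\Pi_{[-R,R]}\rho_\acc)\ge p_\acc-4E_m(T_1)/R^2$ from Lemma~\ref{lem: proj rho acc upper bound} with the distance lower bound of Lemma~\ref{lem:distancetogkpprojection}. Since $\|\rho_\acc-\proj{\gkp_{\kappa,\Delta}}\|_1\le\varepsilon\le p\le p_\acc$, this yields
\begin{align}
\varepsilon\ge 2\Big(p_\acc-\tfrac{4E_m(T_1)}{R^2}-4\kappa R-5\sqrt{\kappa}-7\sqrt{\Delta}\Big)\ ,
\end{align}
and the assumption~\eqref{eq: assum kappa delta heralded} (which gives $5\sqrt{\kappa}+7\sqrt{\Delta}\le p/2$, say, after adjusting constants, and also $\kappa<1/4$, $\Delta<1/100$) together with $\varepsilon\le p$ forces $E_m(T_1)\ge c\,(p-4\kappa R)\,R^2$ for an absolute constant $c$. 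Optimizing the right-hand side over $R>0$ — the function $(p-4\kappa R)R^2$ is maximized at $R_0=p/(6\kappa)$ with value $\propto p^3/\kappa^2$ — gives $E_m(T_1)=\Omega(p^3/\kappa^2)$; since $p\ge 20\sqrt{\kappa}$ by~\eqref{eq: assum kappa delta heralded}, one can bound $p^3\ge$ constant, so in fact $E_m(T_1)=\Omega(1/\kappa^2)$. Plugging in $E_m(T_1)=e^{8\pi T_1}(m+2)$ from Lemma~\ref{lem: moment limits unitary state prep} and taking logarithms, $8\pi T_1+\log(m+2)\ge 2\log 1/\kappa-O(1)$; using $\log(m+2)/(8\pi)\le m+1$ absorbs the ancilla count, giving $T_1+(2m+1)+2m'\ge \tfrac{1}{4\pi}\log 1/\kappa-O(1)$. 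Repeating the entire argument verbatim with $\widehat\Pi_{[-R,R]}$ in place of $\Pi_{[-R,R]}$ (Lemma~\ref{lem: proj rho acc upper bound} provides the momentum-space bound too, using $|d_Q(\alpha)|\le\|d(\alpha)\|$, and Lemma~\ref{lem:distancetogkpprojection} the matching tail bound $2\Delta R+5\sqrt\kappa+7\sqrt\Delta$) yields $T_1+(2m+1)+2m'\ge\tfrac{1}{4\pi}\log 1/\Delta-O(1)$; averaging the two bounds gives $\Omega(\log 1/\kappa+\log 1/\Delta)$, and the constant $1/200$ in the statement is just a safe absolute choice after carefully tracking the $O(1)$ terms and the case split. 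I would finish by noting that since $T_2\ge 0$ and $R=2d_\acc\le$ whatever, the full complexity $T_1+T_2+(2m+1)+2m'$ inherits this bound.

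The main obstacle, I expect, is the handling of the post-measurement correction displacement $D(d(\alpha))$ and the role of $d_\acc$: Lemma~\ref{lem: proj rho acc upper bound} only applies when $d_\acc\le R/2$, but the optimal $R_0\approx p/(6\kappa)$ grows as $\kappa\to 0$, so for small $\kappa$ this constraint is typically satisfied, yet one must argue this cleanly — either by the dichotomy ``$d_\acc$ large $\Rightarrow T_2$ large'' sketched above, or by observing that any $d_\acc>R_0/2$ already contributes $T_2\ge\tfrac{1}{4\pi}\log d_\acc-1\ge\tfrac{1}{4\pi}\log(R_0/2)-1=\Omega(\log 1/\kappa)$ to the complexity. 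Making this case analysis tight enough to land the stated constant (and to handle both the position- and momentum-space versions, which may demand different scales $R$) is the delicate bookkeeping; everything else is a direct transcription of the unitary-complexity argument with $p_\acc$ tracked through the inequalities.
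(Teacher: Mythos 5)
Your overall strategy is exactly the paper's: bound the pre-measurement energy via Lemma~\ref{lem: moment limits unitary state prep}, feed it into Lemma~\ref{lem: proj rho acc upper bound} together with the tail bounds of Lemma~\ref{lem:distancetogkpprojection}, optimize over the scale $R$, and split into the two cases according to whether the optimizing radius $R_0\sim p/\kappa$ (resp.\ $p/\Delta$) exceeds $2d_\acc$ or not, using the displacement-complexity lower bound~\eqref{eq: lower bound complexiry displacement complexityd} to make $T_2$ pay in the second case. This is the same case analysis and the same pair of position/momentum arguments as in the paper's proof, so the architecture is sound.

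There is, however, one step that is false as written: from $p\ge 20\sqrt{\kappa}$ you conclude ``$p^3\ge$ constant, so in fact $E_m(T_1)=\Omega(1/\kappa^2)$'' and then ``$8\pi T_1+\log(m+2)\ge 2\log 1/\kappa-O(1)$''. The assumption only gives $p\ge 20\sqrt{\kappa}+28\sqrt{\Delta}$, which tends to $0$ as $(\kappa,\Delta)\to(0,0)$; it does \emph{not} bound $p$ (hence $p^3$) below by a constant. The correct consequence of the optimization is $E_m(T_1)=\Omega\left(p^3/\kappa^2\right)$, and with $p\ge\sqrt{\kappa}$ this yields only $E_m(T_1)=\Omega\left(\kappa^{-1/2}\right)$, i.e.\ $8\pi T_1+\log(m+2)\ge \tfrac12\log 1/\kappa-O(1)$ rather than $2\log1/\kappa-O(1)$. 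This is precisely why the paper carries the term $\tfrac{3}{8\pi}\log p$ through to the end and only then invokes $p\ge\sqrt{\kappa}$ (resp.\ $p\ge\sqrt{\Delta}$), losing a factor in the leading constant (from $\tfrac1{4\pi}$ to $\tfrac1{16\pi}$ per variable); the same care is needed in your case (b), where $\log(R_0/2)=\log 1/\kappa+\log p-O(1)$ also requires $p\gtrsim\sqrt{\kappa}$ to remain $\Omega(\log1/\kappa)$. The slip is repairable entirely within your framework and the degraded constants still clear $\tfrac1{200}$ after combining the position and momentum bounds, but as stated the claim ``$E_m(T_1)=\Omega(1/\kappa^2)$'' and the ensuing $2\log1/\kappa$ bound do not follow from the hypotheses.
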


Theorem~\ref{thm: lower bound heralded complexity} implies the following corollary.
\begin{corollary} \label{cor: her state complexity}There exists a polynomial $s(\kappa,\Delta)$ with $s(0,0)=0$ such that for all functions $p(\kappa,\Delta)$ and $\varepsilon(\kappa,\Delta)$  satisfying $0\le \varepsilon(\kappa,\Delta) \le p(\kappa,\Delta)$ and $s(\kappa,\Delta) \le p(\kappa,\Delta)\le 1$ for sufficiently small $(\kappa,\Delta)$, we have
\begin{align}
\cC_{p(\kappa,\Delta),\varepsilon(\kappa,\Delta)}^{*,\mathsf{her}}(\ket{\gkp_{\kappa,\Delta}}) \ge \Omega\left(\log1/\kappa + \log 1/\Delta \right) \qquad\textrm{for}\qquad (\kappa,\Delta) \rightarrow (0,0)\ .
    \end{align}
\end{corollary}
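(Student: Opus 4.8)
The plan is to read Corollary~\ref{cor: her state complexity} off directly from the quantitative statement of Theorem~\ref{thm: lower bound heralded complexity}. I would take the ``polynomial'' to be
\[
 s(\kappa,\Delta):=20\sqrt{\kappa}+28\sqrt{\Delta}\ ,
\]
which vanishes at the origin and is a power law in $(\kappa,\Delta)$ (``polynomial'' being understood here in the same loose sense as the bound $q$ in Corollary~\ref{cor: upper bound her state complexity}). Given functions $p(\kappa,\Delta),\varepsilon(\kappa,\Delta)$ with $0\le \varepsilon(\kappa,\Delta)\le p(\kappa,\Delta)$ and $s(\kappa,\Delta)\le p(\kappa,\Delta)\le 1$ for all sufficiently small $(\kappa,\Delta)$, the pair $(p,\varepsilon)=(p(\kappa,\Delta),\varepsilon(\kappa,\Delta))$ then satisfies the hypotheses $20\sqrt{\kappa}+28\sqrt{\Delta}=s(\kappa,\Delta)\le p$ and $\varepsilon\le p$ of Theorem~\ref{thm: lower bound heralded complexity} (note $p\ge s(\kappa,\Delta)>0$ whenever $(\kappa,\Delta)\ne(0,0)$, so indeed $p\in(0,1]$). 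Applying that theorem yields $\cC_{p(\kappa,\Delta),\varepsilon(\kappa,\Delta)}^{*,\mathsf{her}}(\ket{\gkp_{\kappa,\Delta}})\ge \tfrac1{200}(\log1/\kappa+\log1/\Delta)-1=\Omega(\log1/\kappa+\log1/\Delta)$ as $(\kappa,\Delta)\to(0,0)$.

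The only boundary case to handle separately is $\varepsilon(\kappa,\Delta)=0$, which Theorem~\ref{thm: lower bound heralded complexity} does not literally cover since it requires $\varepsilon>0$. This follows from monotonicity of the heralded complexity in the error parameter: a protocol achieving error $0$ in particular achieves any error $\varepsilon'>0$, so $\cC_{p,0}^{*,\mathsf{her}}(\ket{\gkp_{\kappa,\Delta}})\ge \cC_{p,\varepsilon'}^{*,\mathsf{her}}(\ket{\gkp_{\kappa,\Delta}})$ for every $\varepsilon'\in(0,p]$, and the bound for such $\varepsilon'$ already gives the desired $\Omega$ estimate. Beyond this, and beyond spelling out the mild abuse in the word ``polynomial'', there is essentially no obstacle in deriving the corollary itself; all of the content resides in Theorem~\ref{thm: lower bound heralded complexity}.

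For completeness I would also recall the argument for that theorem, which mirrors the unitary lower bound of Theorem~\ref{thm: unitary state complexity} but additionally tracks the correction displacements. Consider a heralding protocol with $T_1$ gates from $\cG$ before the measurement, acceptance probability $p_\acc\ge p$, correction-displacement budget $d_\acc=\sup_{\alpha}\|d(\alpha)\|$, and output $\rho_\acc$ with $\|\rho_\acc-\proj{\gkp_{\kappa,\Delta}}\|_1\le \varepsilon\le p$. For a radius $R\ge 2d_\acc$, Lemma~\ref{lem: proj rho acc upper bound} lower-bounds $\tr(\Pi_{[-R,R]}\rho_\acc)$ (resp.\ $\tr(\widehat{\Pi}_{[-R,R]}\rho_\acc)$) by $p_\acc-4E_m(T_1)/R^2$, while Lemma~\ref{lem:distancetogkpprojection} lower-bounds $\|\rho_\acc-\proj{\gkp_{\kappa,\Delta}}\|_1$ by $2\bigl(\tr(\Pi_{[-R,R]}\rho_\acc)-(4\kappa R+5\sqrt{\kappa}+7\sqrt{\Delta})\bigr)$. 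Combining these with $5\sqrt{\kappa}+7\sqrt{\Delta}\le p/4$ gives $p/2\le 8E_m(T_1)/R^2+8\kappa R$ for every admissible $R$. One then splits on whether $R_*:=p/(32\kappa)$ obeys $R_*\ge 2d_\acc$: if so, taking $R=R_*$ forces $E_m(T_1)=e^{8\pi T_1}(m+2)\gtrsim p^3/\kappa^2\gtrsim \kappa^{-1/2}$ (using $p\ge 20\sqrt{\kappa}$), hence $T_1+(m+1)+m'=\Omega(\log1/\kappa)$; if not, then $d_\acc=\Omega(\kappa^{-1/2})$, and since $\cC_\cG(D(d(\alpha)))=\Omega(\log\|d(\alpha)\|)$ by Lemma~\ref{lem:upperboundcomplexitycoherent} the worst-case correction budget satisfies $T_2=\Omega(\log1/\kappa)$. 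Re-running the same dichotomy with $\widehat{\Pi}_{[-R,R]}$ and the momentum-space tail bound of Lemma~\ref{lem: proj pos space} produces the matching $\Omega(\log1/\Delta)$ contributions; summing the two and dividing by two gives the claimed $\tfrac1{200}(\log1/\kappa+\log1/\Delta)-1$ bound on $T_1+T_2+(2m+1)+2m'$, hence on $\cC_{p,\varepsilon}^{*,\mathsf{her}}$. The main difficulty relative to the unitary case is exactly this bookkeeping: one must choose $R$ so as to balance $E_m(T_1)/R^2$ against $\kappa R$ while respecting the constraint $R\ge 2d_\acc$ coming from the correction step, and absorb the slack from $p_\acc$ being only lower-bounded by $p$ rather than close to $1$.
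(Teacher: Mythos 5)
Your proposal is correct and follows exactly the route the paper intends: Corollary~\ref{cor: her state complexity} is stated as an immediate consequence of Theorem~\ref{thm: lower bound heralded complexity}, and your instantiation with $s(\kappa,\Delta)=20\sqrt{\kappa}+28\sqrt{\Delta}$ (with the same loose reading of ``polynomial'' the paper itself uses) together with the monotonicity remark for $\varepsilon=0$ is precisely the intended derivation. Your recap of the theorem's proof also matches the paper's argument (same dichotomy on $R_0$ versus $2d_\acc$, same lemmas), differing only in unimportant constants.
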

\begin{proof}[Proof of Theorem~\ref{thm: lower bound heralded complexity}]
    Consider a heralding protocol, i.e., a unitary $U$ composed of $T_1$ gates from $\cG$, a POVM $\left\{E_\alpha\right\}$, a map $F$ computing a flag, and shift corrections $\left\{D(d(\alpha))\right\}$, preparing with probability at least $p$ a distance-$\varepsilon$ approximation to~$\ket{\gkp_{\kappa,\Delta}}$ with minimal complexity~$\cC_{p,\varepsilon}^{*,\her}(\ket{\gkp_{\kappa,\Delta}})$. In other words, the protocol uses  $m+1$~bosonic modes, $m'$~qubits  and  requires (including shift corrections) $T_1 + T_2$ gates from the set~$\cG$, where $T_2$ accounts for the complexity of shift corrections (cf.~the definition in~\eqref{eq: def T_2}). These parameters satisfy
     \begin{align}
    \cC_{p,\varepsilon}^{*,\mathsf{her}}(\ket{\gkp_{\kappa,\Delta}})= T_1 + T_2 +(2m+1)+2m'\ .\label{eq:complexitygkpstatem heralded}
\end{align}
In addition, the output state conditioned on acceptance $\rho_\acc$ (cf.~\eqref{eq: def rho acc}) satisfies
\begin{align}
    \left\| \rho_\acc - \proj{\gkp_{\kappa,\Delta}}\right\|_1 \le \varepsilon\ .
\end{align}
In the following, we argue similarly as in the proof of Theorem~\ref{thm: unitary state complexity}.

For convenience, we define
\begin{align}
        d_\acc :=\sup_{\alpha\in F^{-1}(\{\acc\})} \|d(\alpha)\| \le R/2\ . \label{eq: constraint d_acc}
    \end{align}
By Lemma~\ref{lem:distancetogkpprojection} and Lemma~\ref{lem: proj rho acc upper bound} (using the assumption $p_\acc \ge p$), and considering the projector $\Pi_{[-R,R]}$, we have
\begin{align}
    \left\| \rho_\acc - \proj{\gkp_{\kappa,\Delta}}\right\|_1  \ge  2\left(p-\left(\frac{4E_m(T_1)}{R^2}+4\kappa R\right)-(7\sqrt{\kappa}+5\sqrt{\Delta})\right)
    \label{eq: dist rho acc E bound kappa first}
\end{align}
for any $R \ge 2d_\acc$, where $d_\acc$ is defined in Lemma~\ref{lem: proj rho acc upper bound}. By the assumption~\eqref{eq: assum kappa delta heralded}, we have $5\sqrt{\kappa} + 7\sqrt{\Delta} < p/4$. Thus, solving for $E_m(T_1)$, we find
\begin{align}
    E_m(T_1) &\ge \frac{1}{4}\left(\left(3p/4 - \varepsilon/2 \right) - 4\kappa R\right) R^2\\
    &\geq \frac{1}{4}\left(p/4 - 4\kappa R\right) R^2
    =:f(R) \qquad\textrm{for all} \qquad R\ge 2d_\acc\ .\label{eq: E(U) lower bound R}
\end{align}
where we used the assumption $\varepsilon\leq p$. 
Since this bound is valid for any $R\ge 2d_\acc$, we can maximize the \rhs of this inequality.
That is, we choose \begin{align}
    R_\textrm{max} =\arg\max_{R\ge2d_\acc} f(R)\ .
\end{align}
Note that for $A,B>0$, the function $x\mapsto (A-Bx)x^2$ restricted to $x\ge 0$ has its global maximum at $x_0 = 2A/(3B)$ and is monotonously decreasing for $x\ge x_0$. Hence, the global maximum of~$f(R)$ on  $[0,\infty)$ is at 
\begin{align}
    R_0  = \frac{p}{24\kappa}\, .
\end{align}We have to distinguish two cases, depending on whether this global maximum is contained in $[2d_\acc,\infty)$ or not.
\begin{enumerate}[(a)]
    \item If $R_0 \geq 2d_\acc$, we conclude that 
    the maximum of $f(R)$ restricted to~$[2d_\acc,\infty)$ is equal to 
    $ R_\textrm{max} = R_0$
and so inserting this into~\eqref{eq: E(U) lower bound R}, we find
\begin{align}
    E_m(T_1) \ge \frac{p^3}{432\cdot 4^3} \cdot 1/\kappa^2\ .
\end{align}

Inserting the definition of $E_m$ from Lemma~\ref{lem: moment limits unitary state prep} and solving for $T_1$, we obtain
\begin{align}
    8\pi T_1 + \log(m+2) \ge \log1/\kappa^2 + 3\log p -\log (432\cdot 4^3)\ .
\end{align}
Since $T_1\le T_1 + T_2$, $(\log(m+2))/(8\pi) \le m+1$ for all $m\ge 0$ and $(\log(432\cdot 4^3))/(8\pi)<1$ this implies
\begin{align}
    T_1 + T_2+ (2m+1) + 2m' \ge \frac{1}{4\pi} \log1/\kappa + \frac{3}{8\pi} \log p-1
\end{align}
hence with~\eqref{eq:complexitygkpstatem heralded}
\begin{align}
    \cC_{p,\varepsilon}^{*,\mathsf{her}}(\ket{\gkp_{\kappa,\Delta}}) \ge \frac{1}{4\pi}\log1/\kappa + \frac{3}{8\pi}\log p-1 \label{eq: C kappa case 1}\ .
\end{align}

\item If $R_0<2d_\acc$, we have
\begin{align}
T_2&=\max_{\alpha\in F^{-1}(\{\acc\})} \cC_{\cG}(D(d(\alpha)))\\
&\geq \max_{\alpha\in F^{-1}(\{\acc\})} \frac{1}{4\pi}\left(\log \|d(\alpha)\|\right)-1\\
&=\frac{1}{4\pi}(\log d_\acc)-1
\end{align}
where we used the lower bound~\eqref{eq: lower bound complexiry displacement complexityd} 
on the complexity of a displacement operator~$D(d)$. We conclude that 
    \begin{align}
        T_1 + T_2 +(2m+1)+2m'&\geq
        T_2+1\\
        &\geq \frac{1}{4\pi}\log d_\acc\\
        &\ge \frac{1}{4\pi} \log R_0/2\\
        &\ge \frac{1}{4\pi} \log1/\kappa + \frac{1}{4\pi} \log \left(p/48\right) \ .
    \end{align}
Hence in this case using $\log(48)/(4\pi)<1$, we have
\begin{align}
    \cC_{p,\varepsilon}^{*,\mathsf{her}}(\ket{\gkp_{\kappa,\Delta}}) &\ge  \frac{1}{4\pi} \log1/\kappa + \frac{1}{4\pi} \log \left(p/48\right) \\
    &\geq \frac{1}{4\pi} \log1/\kappa + \frac{1}{4\pi} \log p-1\ 
    \label{eq: C kappa case 2} \ .
\end{align}
\end{enumerate}
In both cases, we have shown that 
\begin{align}
        \cC_{p,\varepsilon}^{*,\mathsf{her}}(\ket{\gkp_{\kappa,\Delta}}) 
    &\geq \frac{1}{4\pi} \log1/\kappa + \frac{3}{8\pi} \log p-1\ 
\end{align}
Due to the assumption~\eqref{eq: assum kappa delta heralded}, we have $p\geq \sqrt{\kappa}$. This implies that 
\begin{align}
\cC_{p,\varepsilon}^{*,\mathsf{her}}(\ket{\gkp_{\kappa,\Delta}}) &\ge \left(\frac{1}{4\pi}-\frac{3}{16\pi}\right)\log 1/\kappa-1\\
&\geq \frac{1}{200}\log 1/\kappa-1\, .\label{eq: final eq kappa her}
\end{align}

Analogously, we can derive bounds using $\widehat{\Pi}_{[-R,R]}$ instead of $\Pi_{[-R,R]}$. Again by Lemma~\ref{lem: proj rho acc upper bound}, Lemma~\ref{lem:distancetogkpprojection}, the assumption~\eqref{eq: assum kappa delta heralded} and $p_\acc \ge p$, we have
\begin{align}
    \left\| \rho_\acc - \proj{\gkp_{\kappa,\Delta}}\right\|_1  \ge  2\left(3p/4-\left(\frac{4E_m(T_1)}{R^2}+2\Delta R \right)\right) \label{eq: dist rho acc E bound Delta first}
\end{align}
for any $R \ge 2d_\acc$. Solving for $E_m(T_1)$ and using that $\varepsilon \le p $, we find
\begin{align}
    E_m(T_1) \ge   \frac{1}{4}\left(p/4 - 2\Delta R \right) R^2 \qquad\textrm{for all} \qquad R\ge 2d_\acc\ .\label{eq: E(U) lower bound R Delta}
\end{align}
Maximizing the rhs.\ for $R\ge 2d_\acc$  and proceeding as before with a case distinction, we  arrive again at
\begin{align}
    \cC_{p,\varepsilon}^{*,\mathsf{her}}(\ket{\gkp_{\kappa,\Delta}}) \ge \frac{1}{4\pi}\log1/\Delta + \frac{3}{8\pi}(\log p) -1 \ . \label{eq: C delta case 1}
\end{align} 
By the same argument as in~\eqref{eq: final eq kappa her}, we infer
\begin{align}
     \cC_{p,\varepsilon}^{*,\mathsf{her}}(\ket{\gkp_{\kappa,\Delta}}) \geq \frac{1}{100}\log 1/\Delta-1 \label{eq: final delta her}
\end{align}
Combining~\eqref{eq: final eq kappa her} and \eqref{eq: final delta her} it follows
\begin{align}
    \cC_{p,\varepsilon}^{*,\mathsf{her}}(\ket{\gkp_{\kappa,\Delta}}) \geq \frac{1}{200}\left(\log1/\kappa+ \log 1/\Delta\right)-1
\end{align}
\end{proof}

In particular, Theorem~\ref{thm: lower bound heralded complexity} covers all heralding protocols whose acceptance probability $p_\acc$ is lower bounded by a constant for sufficiently small $\kappa$ and $\Delta$, as it is the case in Protocol~\ref{prot: appx GKP state prep} (cf.~Theorem~\ref{thm:main}). We are in the position to combine the upper bound from Corollary~\ref{cor: upper bound her state complexity} and the upper bound on the heralded state complexity stated in Corollary~\ref{cor: her state complexity} to get the following.
\begin{corollary}
    There is a polynomial $r(\kappa,\Delta)$ with $r(0,0)=0$ such that for all functions $p(\kappa,\Delta)$ and $\varepsilon(\kappa,\Delta)$ satisfying $r(\kappa,\Delta) \le \varepsilon(\kappa,\Delta) \le p(\kappa,\Delta) \le 1/10$ for sufficiently small $(\kappa,\Delta)$, we have
    \begin{align}
        \cC_{p(\kappa,\Delta),\varepsilon(\kappa,\Delta)}^{*,\mathsf{her}}(\ket{\gkp_{\kappa,\Delta}}) = \Theta(\log1/\kappa + \log1/\Delta) \qquad\textrm{for}\qquad(\kappa,\Delta) \rightarrow (0,0)\, .
    \end{align}
\end{corollary}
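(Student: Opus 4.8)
The plan is to obtain the claimed $\Theta(\log1/\kappa+\log1/\Delta)$ characterization by combining the upper bound (Corollary~\ref{cor: upper bound her state complexity}) with the lower bound (Corollary~\ref{cor: her state complexity}), after first reconciling the hypotheses of the two results. Recall that Corollary~\ref{cor: upper bound her state complexity} provides a polynomial $q(\kappa,\Delta)$ with $q(0,0)=0$ such that for all $p(\kappa,\Delta)\in[0,1/10]$ and all $\varepsilon(\kappa,\Delta)\ge q(\kappa,\Delta)$ (for sufficiently small $(\kappa,\Delta)$) the upper bound $\cC^{*,\mathsf{her}}_{p,\varepsilon}(\ket{\gkp_{\kappa,\Delta}})\le O(\log1/\kappa+\log1/\Delta)$ holds; and Corollary~\ref{cor: her state complexity} provides a polynomial $s(\kappa,\Delta)$ with $s(0,0)=0$ such that whenever $0\le\varepsilon(\kappa,\Delta)\le p(\kappa,\Delta)$ and $s(\kappa,\Delta)\le p(\kappa,\Delta)\le 1$ (for sufficiently small $(\kappa,\Delta)$) the matching lower bound $\cC^{*,\mathsf{her}}_{p,\varepsilon}(\ket{\gkp_{\kappa,\Delta}})\ge \Omega(\log1/\kappa+\log1/\Delta)$ holds.

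First I would set $r(\kappa,\Delta):=q(\kappa,\Delta)+s(\kappa,\Delta)$, which is again a polynomial with $r(0,0)=0$, and which dominates both $q$ and $s$ pointwise for $(\kappa,\Delta)$ in the positive orthant. Now suppose $p(\kappa,\Delta)$ and $\varepsilon(\kappa,\Delta)$ satisfy $r(\kappa,\Delta)\le\varepsilon(\kappa,\Delta)\le p(\kappa,\Delta)\le 1/10$ for all sufficiently small $(\kappa,\Delta)$. Then $\varepsilon\ge r\ge q$ and $p\le 1/10$, so the hypotheses of Corollary~\ref{cor: upper bound her state complexity} are met, yielding the upper bound $\cC^{*,\mathsf{her}}_{p,\varepsilon}(\ket{\gkp_{\kappa,\Delta}})\le O(\log1/\kappa+\log1/\Delta)$. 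Simultaneously, $\varepsilon\le p$, and $p\ge\varepsilon\ge r\ge s$ together with $p\le 1/10\le 1$ give exactly the hypotheses of Corollary~\ref{cor: her state complexity}, yielding the lower bound $\cC^{*,\mathsf{her}}_{p,\varepsilon}(\ket{\gkp_{\kappa,\Delta}})\ge \Omega(\log1/\kappa+\log1/\Delta)$. Combining the two asymptotic bounds gives $\cC^{*,\mathsf{her}}_{p,\varepsilon}(\ket{\gkp_{\kappa,\Delta}})=\Theta(\log1/\kappa+\log1/\Delta)$ as $(\kappa,\Delta)\to(0,0)$, which is the claim.

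The only genuinely delicate point — and the one I would be careful to state explicitly — is the interplay of the various ``for sufficiently small $(\kappa,\Delta)$'' qualifiers: each of the two cited corollaries guarantees its conclusion only below some fixed threshold on $(\kappa,\Delta)$, so the combined statement is valid below the minimum of those two thresholds, which is still a fixed constant and hence still ``sufficiently small.'' There is no real obstacle here; the argument is a bookkeeping exercise in matching hypotheses, and the substance lives entirely in the previously established Theorem~\ref{thm:main} (driving the upper bound) and Theorem~\ref{thm: lower bound heralded complexity} (driving the lower bound). I would also note in passing that specializing to constant functions $p(\kappa,\Delta)\equiv p$ and $\varepsilon(\kappa,\Delta)\equiv\varepsilon$ with $\varepsilon\le p\le 1/10$ recovers the constant-parameter version $\cC^{*,\mathsf{her}}_{(p,\varepsilon)}(\ket{\gkp_{\kappa,\Delta}})=\Theta(\log1/\kappa+\log1/\Delta)$, since constants eventually exceed any polynomial with no constant term.
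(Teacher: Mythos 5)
Your proposal is correct and matches the paper's approach: the paper likewise obtains this corollary simply by combining Corollary~\ref{cor: upper bound her state complexity} and Corollary~\ref{cor: her state complexity}, and your choice $r=q+s$ together with the hypothesis-matching and threshold bookkeeping is exactly the (implicit) content of that combination.
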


\section{Conclusions}
We demonstrated that the complexities of coherent states and approximate GKP states are essentially determined by their energies. The scaling of their complexities can thus be determined by straightforward back-of-the-envelope calculations. Based on these two qualitatively very different examples, it is natural to ask whether the observed relationship between complexity and energy can be generalized to arbitrary bosonic states. More specifically, can every bosonic state be (approximately) prepared with a number of elementary operations that scales logarithmically with the energy of the state?

While our state-preparation protocols achieve optimal complexity under ideal, noiseless conditions, a compelling open question is whether efficient fault-tolerant implementations can be devised with similar fidelity guarantees. This question is particularly relevant for approximate GKP states, as these are instrumental in universal fault-tolerant quantum computation with continuous-variable systems. Existing work~\cite{Shi_2019} in this direction provides first examples of such protocols but only cover highly restricted error models.

\section*{Acknowledgements}
LB, LC and RK gratefully acknowledge support by the European Research Council
under grant agreement no. 101001976 (project EQUIPTNT), as well as the Munich Quantum Valley, which is supported by the Bavarian state government through the Hightech Agenda Bayern Plus. XCR acknowledges funding from the BMW endowment fund. XCR thanks the Swiss National Science Foundation (SNSF) for their financial support.

\newpage
\appendix
\section{Basic properties of approximate GKP states}
In this appendix, we prove closeness of various types of approximate GKP states that we use.

\subsection{Bounds on sums of Gaussians}

\begin{lemma} \label{lem: gaussian sum integral bound} Let $c>0$.  Then, 
\begin{align} \label{eq: bound 1 lem gaussian bound}
   \sqrt{\frac{\pi}{c}} - 1 
   \le &\sum_{z \in \mathbb{Z}} e^{-cz^2} 
   \le 
   \sqrt{\frac{\pi}{c}} + 1\ .
\end{align}
In particular,\begin{align}
     \label{eq: bound 2 lem gaussian bound}
    \sum_{z \in \mathbb{Z}\setminus\{0\}} e^{-cz^2} \le  
    \sqrt{\frac{\pi}{c}}\ .
\end{align}
Moreover, we can bound
\begin{align}
    \sum_{z \in \mathbb{Z}} e^{-c(z-1/2)^2} \ge \sqrt{\frac{\pi}{c}} - 1\, .\label{eq: bound gaussian shift 1/2}
\end{align}
We also have for $\varepsilon\in(0,1/2)$ that
\begin{align}
    \sum_{z \in \mathbb{Z}} e^{-c(|z|+\varepsilon)^2} & \ge  \sqrt{\frac{\pi}{c}}-2(1+\varepsilon)\label{eq: abs z gauss shift lower}\\
    \sum_{z \in \mathbb{Z}\setminus\{0\}} e^{-c(|z|-\varepsilon)^2} & \le \sqrt{\frac{\pi}{c}}+2  \ .\label{eq: abs z gauss shift upper}
\end{align}

\end{lemma}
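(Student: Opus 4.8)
\textbf{Proof plan for Lemma~\ref{lem: gaussian sum integral bound}.}
The unifying idea is a comparison of each Gaussian sum with the corresponding Gaussian integral $\int_{\mathbb{R}} e^{-cz^2}\,dz = \sqrt{\pi/c}$, exploiting monotonicity of $z\mapsto e^{-cz^2}$ on each side of its peak. First I would establish~\eqref{eq: bound 1 lem gaussian bound}: for the upper bound, since $e^{-cz^2}$ is decreasing for $z\ge 0$ and increasing for $z\le 0$, the standard integral test gives $\sum_{z\in\mathbb{Z}} e^{-cz^2} \le e^{-c\cdot 0^2} + \int_{\mathbb{R}} e^{-cz^2}\,dz = 1 + \sqrt{\pi/c}$ (bounding each term $e^{-cz^2}$ for $z\ge 1$ by $\int_{z-1}^{z} e^{-ct^2}\,dt$ and symmetrically for $z\le -1$, plus the $z=0$ term). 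For the lower bound, bound $e^{-cz^2} \ge \int_z^{z+1} e^{-ct^2}\,dt$ for $z\ge 0$ and $e^{-cz^2}\ge \int_{z-1}^{z} e^{-ct^2}\,dt$ for $z\le 0$, so that $\sum_{z\in\mathbb{Z}} e^{-cz^2} \ge \int_{\mathbb{R}} e^{-ct^2}\,dt - (\text{overlap at }0) \ge \sqrt{\pi/c} - 1$. Then~\eqref{eq: bound 2 lem gaussian bound} follows immediately by subtracting the $z=0$ term ($=1$) from the upper bound in~\eqref{eq: bound 1 lem gaussian bound}.

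For~\eqref{eq: bound gaussian shift 1/2}, the function $z\mapsto e^{-c(z-1/2)^2}$ is decreasing for $z\ge 1/2$ and increasing for $z\le 1/2$, peaking between the integers $0$ and $1$. The integral test then gives $\sum_{z\in\mathbb{Z}} e^{-c(z-1/2)^2} \ge \int_{\mathbb{R}} e^{-c(t-1/2)^2}\,dt - (\text{correction})$; more carefully, I would split at $z=0,1$ (the two integers straddling the peak), noting each of those two terms is at least $e^{-c/4}\ge 0$ while the tail sums $\sum_{z\le -1}$ and $\sum_{z\ge 2}$ each dominate the corresponding integral tails, yielding the lower bound $\sqrt{\pi/c}-1$ (the deficit of $1$ absorbing the at-most-length-$1$ region near the peak not covered by tail integrals). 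The shifted-absolute-value bounds~\eqref{eq: abs z gauss shift lower} and~\eqref{eq: abs z gauss shift upper} are handled similarly by splitting the sum over $z\ge 0$ and $z\le 0$ (so that $|z|=z$ resp.\ $|z|=-z$) and comparing $\sum_{z\ge 1} e^{-c(z+\varepsilon)^2}$ with $\int_0^\infty e^{-c(t+\varepsilon)^2}\,dt = \int_\varepsilon^\infty e^{-cs^2}\,ds$ etc.; for~\eqref{eq: abs z gauss shift lower} one loses at most $2(1+\varepsilon)$ from the endpoint term and the shift by $\varepsilon$ (factor $2$ from the two half-lines), and for~\eqref{eq: abs z gauss shift upper} one gains at most $2$ from the $z=0$-adjacent terms when overcounting.

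The only mildly delicate step is keeping the bookkeeping of endpoint terms and overlap regions tight enough to land exactly the stated constants ($1$, $2$, $2(1+\varepsilon)$); none of this is deep, but it requires care that the integral-comparison inequalities point in the right direction on each half-line and that the single-term corrections near the peak are bounded by $1$ (since $e^{-cz^2}\le 1$ always). I expect no genuine obstacle — this is a routine but slightly fiddly exercise in comparing sums to integrals, and the main thing is to organize the case split (peak at an integer vs.\ peak between integers) cleanly so all six inequalities come out with the advertised constants.
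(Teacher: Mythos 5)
Your plan is correct and follows essentially the same route as the paper: all six bounds are obtained by comparing each Gaussian sum with $\int_{\mathbb{R}}e^{-cx^2}\,dx=\sqrt{\pi/c}$ via the monotone integral test, with endpoint/peak terms bounded by $1$ each. The only cosmetic difference is that the paper handles the $1/2$-shift and the $\varepsilon$-shifted sums by first symmetrizing (writing them as $2\sum_{z\in\mathbb{N}}$) and then comparing to a single half-line integral, which makes the bookkeeping around the peak slightly cleaner than your split at $z=0,1$, but lands the same constants.
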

\begin{proof}
    We obtain the bounds in Eq.~\eqref{eq: bound 1 lem gaussian bound} and~\eqref{eq: bound 2 lem gaussian bound} as follows. Notice that $x\mapsto e^{-cx^2}$ is monotonously decreasing function for $c>0$ and $x\ge 0$. Thus,
    \begin{align}
    e^{-cz^2} &\le \int_{z-1}^z e^{-cx^2} dx \qquad\textrm{ for any $z\ge 1$}\\
    e^{-cz^2} &\ge  \int_{z}^{z+1} e^{-cx^2} dx \qquad\textrm{ for any $z\ge 0$} \ . \label{eq: gauss one step int bound}
    \end{align}
    Thus, we obtain the upper bound in~\eqref{eq: bound 1 lem gaussian bound}:
    \begin{align}
        \sum_{z \in \mathbb{Z}} e^{-cz^2} = 1+ \sum_{z \in \mathbb{Z}\setminus\{0\}} e^{-cz^2} \le 1+ \int e^{-cx^2} dx = \sqrt{\frac{\pi}{c}} + 1\ .
    \end{align}
    This also implies the upper bound~\eqref{eq: bound 2 lem gaussian bound}.
    By~\eqref{eq: gauss one step int bound}, we have
    \begin{align}
        \sum_{z \in \mathbb{Z}} e^{-cz^2}\ge 2\int_1^\infty e^{-cx^2} dx + 1  \ .\label{eq:sum gauss int lowerbound}
    \end{align}
    Since $e^{-cx^2} \le 1$ for all $x \in \mathbb{R}$, we infer $\int_{0}^1  e^{-cx^2} dx \le 1$. This with~\eqref{eq:sum gauss int lowerbound} gives the lower bound in~\eqref{eq: bound 1 lem gaussian bound}
    \begin{align}
        \sum_{z \in \mathbb{Z}} e^{-cz^2}\ge \int e^{-cx^2} dx - 1=\sqrt{\frac{\pi}{c}} -1 \ .
    \end{align}
    
To show the bound~\eqref{eq: bound gaussian shift 1/2}, we note that
\begin{align}
     \sum_{z \in \mathbb{Z}} e^{-c(z-1/2)^2} = 2 \sum_{z \in \mathbb{N}} e^{-c(z-1/2)^2} \ge 2\int_{1/2}^\infty e^{-cx^2}dx  \ge \int e^{-cx^2}dx - 1 = \sqrt{\frac{\pi}{c}} - 1\, ,
\end{align}
where we used~\eqref{eq: gauss one step int bound} to obtain the first inequality and the fact that $e^{-cx^2} \le 1$ for all $x$ in the second inequality.

Next, we show Eq.~\eqref{eq: abs z gauss shift lower}. Observe that
\begin{align}
    \sum_{z \in \mathbb{Z}} e^{-c(|z|+\varepsilon)^2} =  \sum_{z > 0} e^{-c(z+\varepsilon)^2} + \sum_{z \in\mathbb{N}} e^{-c(z+\varepsilon)^2} \ge 2 \sum_{z \in\mathbb{N}} e^{-c(z+\varepsilon)^2}\ . \label{eq: shifted gauss sum bound first}
\end{align}
By \eqref{eq: gauss one step int bound} and by the assumption $\varepsilon\in(0,1/2)$, we have
\begin{align}
    2 \sum_{z \in\mathbb{N}} e^{-c(z+\varepsilon)^2} &\ge 2\int_{1+\varepsilon}^\infty e^{-cx^2} dx\\
    &\ge \int e^{-cx^2} dx - 2(1+\varepsilon) \ ,
\end{align}
where we used that $e^{-cx^2}\le 1$ for all $x>0$ and $c>0$. Combining this with Eq.~\eqref{eq: shifted gauss sum bound first} implies the claim in Eq.~\eqref{eq: abs z gauss shift lower}
\begin{align}
    \sum_{z \in \mathbb{Z}} e^{-c(|z|+\varepsilon)^2} \ge \sqrt{\frac{\pi}{c}}-2(1+\varepsilon)\ .
\end{align}

Finally, we show~\eqref{eq: abs z gauss shift upper}. By $e^{-c x^2}\le 1$ for all $x\in\mathbb{R}$ and $c>0$, we have
\begin{align}
    \sum_{z \in \mathbb{Z}\setminus\{0\}} e^{-c(|z|-\varepsilon)^2} 
    &= 2 \sum_{z\in \mathbb{N}} e^{-c(z-\varepsilon)^2}
    \le 2 + 2 \sum_{z\ge2} e^{-c(z-\varepsilon)^2} \ .
\end{align}
This, together with the upper bound from Eq.~\eqref{eq: gauss one step int bound}, implies the claim
\begin{align}
    \sum_{z \in \mathbb{Z}\setminus\{0\}} e^{-c(|z|-\varepsilon)^2}
    &\le 2+2\int_{1-\varepsilon}^{\infty} e^{-cx^2} dx\\
    &\le 2+\int e^{-cx^2} dx\\
    &=2+\sqrt{\frac{\pi}{c}}\ .
\end{align}
\end{proof}

\subsection{Distance bounds between approximate comb and GKP states}

In this section, we prove several bounds on the closeness of Gaussian states, approximate comb states, and approximate GKP states.

\subsubsection{Bounds on Gaussian states}
Let us recall the definition of Gaussian state with parameter $\Delta$ (cf. Eq.~\eqref{eq:chiDeltadefinition})
\begin{align}
\Psi_\Delta(x)=\frac{1}{(\pi \Delta^2)^{1/4}}e^{-x^2/(2\Delta^2)}\,
\ \label{eq:appx Psi Delta}
\end{align}
and its truncated variant (cf. Eq.~\eqref{eq:chiDelta ep definition})
\begin{align}
    \Psi^\varepsilon_\Delta= \frac{\Pi_{[-\varepsilon,\varepsilon]}\Psi_{\Delta}}{\|\Pi_{[-\varepsilon,\varepsilon]}\Psi_{\Delta}\|}\ , \label{eq:appx Psi Delta Ep}
\end{align}
where $\|\cdot\|$ is the Euclidean norm.

We show that, for a certain choice of parameters, these states are close to each other.

\begin{lemma}\label{lem: gaussian gaussian epsilon} Let $\Delta>0$ and $\varepsilon\in(0,1/2)$. Then
\begin{align}
    \abs{\langle \Psi_\Delta , \Psi_\Delta^{\varepsilon}\rangle}^2
    \ge 1-2e^{-(\varepsilon/\Delta)^2}\ . \label{eq: Psi Psi eps overlap}
\end{align}
Furthermore, if $\varepsilon\in [\sqrt{\Delta},1/2)$, we have
\begin{align}
    \left|\langle \Psi_\Delta, \Psi_\Delta^\varepsilon\rangle \right|^2 
    &\ge 1 - 2e^{-1/\Delta} \ge 1 - 2 \Delta\ . \label{eq: Psi Psi eps overlap restricted ep}
\end{align}
\end{lemma}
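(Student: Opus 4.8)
The plan is to reduce both inequalities to a single elementary tail computation, exploiting that $\Psi_\Delta^\varepsilon$ is nothing but a renormalized restriction of $\Psi_\Delta$. Set $N_\varepsilon := \|\Pi_{[-\varepsilon,\varepsilon]}\Psi_\Delta\|^2 = \int_{-\varepsilon}^{\varepsilon}\Psi_\Delta(x)^2\,dx > 0$. Then $\Psi_\Delta^\varepsilon(x) = \Psi_\Delta(x)/\sqrt{N_\varepsilon}$ for $|x|\le\varepsilon$ and $\Psi_\Delta^\varepsilon(x)=0$ otherwise, so that, both functions being real and nonnegative,
$\langle \Psi_\Delta,\Psi_\Delta^\varepsilon\rangle = \tfrac{1}{\sqrt{N_\varepsilon}}\int_{-\varepsilon}^{\varepsilon}\Psi_\Delta(x)^2\,dx = \sqrt{N_\varepsilon}$.
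Consequently $|\langle\Psi_\Delta,\Psi_\Delta^\varepsilon\rangle|^2 = N_\varepsilon = 1 - \int_{|x|>\varepsilon}\Psi_\Delta(x)^2\,dx$, and the whole statement reduces to bounding the tail mass of $\Psi_\Delta(\cdot)^2$.

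Next I would observe that $\Psi_\Delta(x)^2 = \frac{1}{\sqrt{\pi}\,\Delta}e^{-x^2/\Delta^2}$ is exactly the probability density of a centered Gaussian $X\sim\mathcal{N}(0,\Delta^2/2)$, so $\int_{|x|>\varepsilon}\Psi_\Delta(x)^2\,dx = \Pr[|X|>\varepsilon]$. The standard Gaussian tail (Chernoff) bound — the same one invoked for $\eta_\kappa$ in the proof of Lemma~\ref{lem: prot 2 accept prob}, see Ref.~\cite{Vershynin_2018} — gives $\Pr[|X|>\varepsilon]\le 2e^{-\varepsilon^2/(2\cdot\Delta^2/2)} = 2e^{-(\varepsilon/\Delta)^2}$. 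Combining this with the previous paragraph yields $|\langle\Psi_\Delta,\Psi_\Delta^\varepsilon\rangle|^2 \ge 1 - 2e^{-(\varepsilon/\Delta)^2}$, which is claim~\eqref{eq: Psi Psi eps overlap}. (As a cross-check one can obtain the same tail estimate by hand via the substitution $u=x/\Delta$ together with the elementary inequality $\tfrac{2}{\sqrt{\pi}}\int_a^\infty e^{-u^2}\,du \le 2e^{-a^2}$, valid for all $a\ge 0$.)

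For the second claim~\eqref{eq: Psi Psi eps overlap restricted ep} I would just specialize the first bound: under the hypothesis $\varepsilon\ge\sqrt{\Delta}$ we have $(\varepsilon/\Delta)^2 \ge \Delta/\Delta^2 = 1/\Delta$, hence $e^{-(\varepsilon/\Delta)^2}\le e^{-1/\Delta}$ and therefore $|\langle\Psi_\Delta,\Psi_\Delta^\varepsilon\rangle|^2\ge 1-2e^{-1/\Delta}$. Finally, $e^{-1/\Delta}\le\Delta$ is equivalent to $1/\Delta\ge\log(1/\Delta)$, i.e.\ to the universal inequality $t\ge\log t$ applied to $t=1/\Delta>0$; this gives $1-2e^{-1/\Delta}\ge 1-2\Delta$ and finishes the proof.

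There is no genuine obstacle in this argument; it is a short, self-contained computation. The only points that need a little care are the bookkeeping that pins down $\langle\Psi_\Delta,\Psi_\Delta^\varepsilon\rangle=\sqrt{N_\varepsilon}$ (so that the overlap itself, not merely its square, is identified), and correctly matching the variance $\Delta^2/2$ of the density $\Psi_\Delta(\cdot)^2$ to the exponent $(\varepsilon/\Delta)^2$ appearing in the claimed tail bound.
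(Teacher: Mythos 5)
Your proposal is correct and follows essentially the same route as the paper: identify $|\langle\Psi_\Delta,\Psi_\Delta^\varepsilon\rangle|^2$ with $\|\Pi_{[-\varepsilon,\varepsilon]}\Psi_\Delta\|^2=\Pr[|X|\le\varepsilon]$ for $X\sim\mathcal{N}(0,\Delta^2/2)$, apply the Gaussian (Chernoff) tail bound, and then specialize with $\varepsilon\ge\sqrt{\Delta}$ together with $e^{-1/\Delta}\le\Delta$. The only cosmetic difference is that the paper justifies the last step via $e^{-x}\le 1/x$ rather than $t\ge\log t$, which is the same inequality in disguise.
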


\begin{proof}
By definition of $\Psi_\Delta^\varepsilon$ (cf.~\eqref{eq:appx Psi Delta Ep}), we have
\begin{align}
    \langle \Psi_\Delta, \Psi_\Delta^\varepsilon\rangle
    &= \|\Pi_{[-\varepsilon,\varepsilon]}\Psi_{\Delta}\|^{-1}\langle \Psi_\Delta, \Pi_{[-\varepsilon,\varepsilon]}\Psi_\Delta\rangle\\
    &=\|\Pi_{[-\varepsilon,\varepsilon]}\Psi_{\Delta}\| \ .\label{eq: overlap psi psi ep}
\end{align}
By definition of $\Psi_{\Delta}$ (cf.~\eqref{eq:appx Psi Delta}) we have
\begin{align}
    \|\Pi_{[-\varepsilon,\varepsilon]}\Psi_{\Delta}\|^2 =\int_{-\varepsilon}^{\varepsilon}  \frac{1}{\sqrt{\pi}\Delta} e^{-(x/\Delta)^2} dx = \Pr[\abs{X}\le \varepsilon]\ ,
\end{align}
where we used that the integrand $x\mapsto 1/(\sqrt{\pi}\Delta)e^{-(x/\Delta)^2}$ is the probability density function of the random variable $X$ sampled from Gaussian distribution with mean $0$ and variance $\Delta^2/2$, i.e., $X\sim\cN(0,\Delta^2/2)$.  Thus, we can use the Chernoff bound (see e.g. Ref.~\cite{Vershynin_2018}) to obtain
\begin{align}
    \|\Pi_{[-\varepsilon,\varepsilon]}\Psi_{\Delta}\|^2
    &=1-2\Pr[X\ge \varepsilon]\\
    &\ge 1-2e^{-(\varepsilon/\Delta)^2}\ .
\end{align}
Inserting this into the square of Eq.~\eqref{eq: overlap psi psi ep} implies \eqref{eq: Psi Psi eps overlap}.

Eq.~\eqref{eq: Psi Psi eps overlap restricted ep} follows from \eqref{eq: Psi Psi eps overlap}, from the assumption $\varepsilon\in [\sqrt{\Delta},1/2)$, and from the inequality $e^{-x}\le 1/x$ for all $x \ge 0$. We have
\begin{align}
    \left|\langle \Psi_\Delta, \Psi_\Delta^\varepsilon\rangle \right|^2 &\ge 1 - 2e^{-(\varepsilon/\Delta)^2}\\
    &\ge 1 - 2e^{-1/\Delta}\\
    &\ge 1 - 2 \Delta\ .
\end{align}
\end{proof}

\begin{corollary}\label{cor: gaussian - gaussian epsilon trace distance}
    Assume $\Delta>0$ and $\varepsilon\in [\sqrt{\Delta},1/2)$. Then,    
    \begin{align}
         \left\| \proj{\Psi_\Delta} - \proj{\Psi_\Delta^{\varepsilon}}\right\|_1
         &\le 3 \sqrt{\Delta}\ .
    \end{align}
\end{corollary}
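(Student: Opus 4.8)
The plan is to derive the trace-distance bound directly from the overlap bound in Lemma~\ref{lem: gaussian gaussian epsilon}. Both $\ket{\Psi_\Delta}$ and $\ket{\Psi_\Delta^\varepsilon}$ are pure states, so I will invoke the standard identity (already used in the excerpt, cf.~Eq.~\eqref{eq: trace distance overlap})
\begin{align}
\left\|\proj{\Psi_\Delta}-\proj{\Psi_\Delta^\varepsilon}\right\|_1 = 2\sqrt{1-\left|\langle\Psi_\Delta,\Psi_\Delta^\varepsilon\rangle\right|^2}\ .
\end{align}
Then I substitute the bound $\left|\langle\Psi_\Delta,\Psi_\Delta^\varepsilon\rangle\right|^2 \ge 1-2\Delta$ from Eq.~\eqref{eq: Psi Psi eps overlap restricted ep} of Lemma~\ref{lem: gaussian gaussian epsilon}, which is valid precisely under the hypothesis $\varepsilon\in[\sqrt{\Delta},1/2)$ that Corollary~\ref{cor: gaussian - gaussian epsilon trace distance} assumes.

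First I would note $1-\left|\langle\Psi_\Delta,\Psi_\Delta^\varepsilon\rangle\right|^2 \le 2\Delta$, so that $\left\|\proj{\Psi_\Delta}-\proj{\Psi_\Delta^\varepsilon}\right\|_1 \le 2\sqrt{2\Delta} = 2\sqrt{2}\sqrt{\Delta}$. Since $2\sqrt{2}\approx 2.83 \le 3$, this immediately gives the claimed bound $3\sqrt{\Delta}$. That is essentially the whole argument — it is a two-line computation chaining the pure-state trace-distance/overlap identity with the overlap estimate already established.

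There is really no main obstacle here; the only thing to be careful about is making sure the hypotheses of the invoked lemma match. Corollary~\ref{cor: gaussian - gaussian epsilon trace distance} assumes $\Delta>0$ and $\varepsilon\in[\sqrt{\Delta},1/2)$, which is exactly the regime in which Eq.~\eqref{eq: Psi Psi eps overlap restricted ep} holds, so the substitution is legitimate. One should also double-check the numerical slack $2\sqrt{2}\le 3$ and that $\Delta>0$ ensures $\sqrt{\Delta}$ is real and the square root is well-defined. The proof below carries this out.

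\begin{proof}
Since $\ket{\Psi_\Delta}$ and $\ket{\Psi_\Delta^\varepsilon}$ are unit vectors, the relation between trace distance and overlap for pure states (cf.~Eq.~\eqref{eq: trace distance overlap}) gives
\begin{align}
\left\|\proj{\Psi_\Delta}-\proj{\Psi_\Delta^\varepsilon}\right\|_1 = 2\sqrt{1-\left|\langle\Psi_\Delta,\Psi_\Delta^\varepsilon\rangle\right|^2}\ .
\end{align}
By the assumption $\varepsilon\in[\sqrt{\Delta},1/2)$, Eq.~\eqref{eq: Psi Psi eps overlap restricted ep} of Lemma~\ref{lem: gaussian gaussian epsilon} yields $\left|\langle\Psi_\Delta,\Psi_\Delta^\varepsilon\rangle\right|^2\ge 1-2\Delta$, hence $1-\left|\langle\Psi_\Delta,\Psi_\Delta^\varepsilon\rangle\right|^2\le 2\Delta$. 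Therefore
\begin{align}
\left\|\proj{\Psi_\Delta}-\proj{\Psi_\Delta^\varepsilon}\right\|_1 \le 2\sqrt{2\Delta} = 2\sqrt{2}\,\sqrt{\Delta}\le 3\sqrt{\Delta}\ ,
\end{align}
where in the last step we used $2\sqrt{2}\le 3$. This is the claim.
\end{proof}
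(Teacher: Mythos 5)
Your proof is correct and matches the paper's own argument exactly: both combine the pure-state identity of Eq.~\eqref{eq: trace distance overlap} with the overlap bound $|\langle\Psi_\Delta,\Psi_\Delta^\varepsilon\rangle|^2\ge 1-2\Delta$ from Lemma~\ref{lem: gaussian gaussian epsilon} (valid for $\varepsilon\in[\sqrt{\Delta},1/2)$) to get $2\sqrt{2\Delta}\le 3\sqrt{\Delta}$. No issues.
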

\begin{proof}

By Lemma~\ref{lem: gaussian gaussian epsilon} and by the relation between the trace distance and the overlap (cf.~Eq.~\eqref{eq: trace distance overlap}), we have
\begin{align}
    \left\|\proj{\Psi_\Delta} - \proj{\Psi_\Delta^\varepsilon} \right\|_1
    &\le 2\sqrt{2 \Delta} \le 3\sqrt{\Delta}\ .
\end{align}
\end{proof}

Let us recall the definition of translated Gaussian states~\eqref{eq:chiDeltadefinition} and
translated truncated Gaussian states \eqref{eq:chiDelta ep definition}:
\begin{align}
    (\chi_\Delta(z))(x)&:=\Psi_\Delta(x-z) \qquad \textrm{ and }\qquad 
    (\chi^\varepsilon_\Delta(z))(x):=\Psi^\varepsilon_\Delta(x-z)\ .
\end{align}

They satisfy the following statements.
\begin{lemma}\label{lem: overlap chi z chi z'} Let $z,z'\in\mathbb{Z}$ and $\Delta\in(0,1/4)$. We have
\begin{align}
    \langle \chi_\Delta(z), \chi_\Delta(z')\rangle &=e^{-(z-z')^2/(4 \Delta^2)} \ .
\end{align}
\end{lemma}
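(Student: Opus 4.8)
\textbf{Proof plan for Lemma~\ref{lem: overlap chi z chi z'}.}
The statement is a completely explicit Gaussian integral, so the plan is simply to compute it directly. Writing out the definitions, $\langle \chi_\Delta(z),\chi_\Delta(z')\rangle = \int_{\mathbb{R}} \Psi_\Delta(x-z)\Psi_\Delta(x-z')\,dx$, where $\Psi_\Delta(x) = (\pi\Delta^2)^{-1/4}e^{-x^2/(2\Delta^2)}$ is real-valued, so no complex conjugation matters. Substituting $u = x - z$ gives $\int_{\mathbb{R}} \Psi_\Delta(u)\Psi_\Delta(u-(z'-z))\,du$, which reduces the problem to evaluating the autocorrelation of a single Gaussian at the offset $a := z'-z$.

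Next I would carry out the standard ``complete the square'' computation: the integrand equals $(\pi\Delta^2)^{-1/2}\exp\!\big(-(u^2 + (u-a)^2)/(2\Delta^2)\big)$, and $u^2 + (u-a)^2 = 2(u - a/2)^2 + a^2/2$. Hence the integral is $(\pi\Delta^2)^{-1/2} e^{-a^2/(4\Delta^2)} \int_{\mathbb{R}} e^{-(u-a/2)^2/\Delta^2}\,du = (\pi\Delta^2)^{-1/2} e^{-a^2/(4\Delta^2)} \cdot \sqrt{\pi}\,\Delta = e^{-a^2/(4\Delta^2)}$, using $\int_{\mathbb{R}} e^{-t^2/\Delta^2}\,dt = \sqrt{\pi}\,\Delta$. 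Since $a^2 = (z'-z)^2 = (z-z')^2$, this is exactly the claimed value $e^{-(z-z')^2/(4\Delta^2)}$. Note the normalization $(\pi\Delta^2)^{-1/4}$ on each factor multiplies to $(\pi\Delta^2)^{-1/2}$, which cancels against the $\sqrt{\pi}\,\Delta$ from the Gaussian integral, consistent with $\langle\chi_\Delta(z),\chi_\Delta(z)\rangle = 1$ at $z=z'$.

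There is essentially no obstacle here: the only things to be careful about are (i) that $\Psi_\Delta$ is real so the inner product is a plain integral of the product, (ii) bookkeeping of the normalization constant, and (iii) that the hypotheses $z,z'\in\mathbb{Z}$ and $\Delta\in(0,1/4)$ are not actually needed for the identity itself — they are presumably stated only because this lemma is invoked in that regime later. I would simply present the substitution and the completion of the square in two or three display lines.
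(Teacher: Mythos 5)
Your proposal is correct and follows essentially the same route as the paper: write out the overlap as a single Gaussian integral, complete the square, and evaluate the remaining Gaussian integral so that the normalization cancels, yielding $e^{-(z-z')^2/(4\Delta^2)}$. Your observation that the hypotheses $z,z'\in\mathbb{Z}$ and $\Delta\in(0,1/4)$ are not needed for the identity itself is also accurate.
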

\begin{proof}
We have by definition that
\begin{align}
    \langle \chi_\Delta(z), \chi_\Delta(z')\rangle  &= \frac{1}{\sqrt{\pi}\Delta} \int_{-\infty}^{\infty} e^{-(x-z)^2/(2\Delta^2)} e^{-(x-z')^2/(2\Delta^2)} dx \\
    &=  \frac{1}{\sqrt{\pi}\Delta} \int_{-\infty}^{\infty} e^{-((x-z)^2- (x-z')^2)/(2\Delta^2)}dx\ .
\end{align}
Note that 
\begin{align}
    (x-z)^2 + (x-z')^2 = 2(x- (z+z')/2)^2 + (z-z')^2/2 \ .
\end{align}
Therefore,
\begin{align}
    \langle \chi_\Delta(z), \chi_\Delta(z')\rangle &=  \frac{1}{\sqrt{\pi}\Delta} \left(\int_{-\infty}^{\infty} e^{-(x-(z+z')/2)^2/(2\Delta^2)} dx\right) e^{-(z-z')^2/{4 \Delta^2)}}\\
    &= e^{-(z-z')^2/(4 \Delta^2)} \ , \label{eq: overlap chi z chi z'}
\end{align}
where we used that $\|\Psi_\Delta\|=1$.
\end{proof}

\begin{lemma}\label{lem: e^iQ vs e^iz} Let $\varepsilon\in(0,1/2)$ and $\Delta>0$. Then,
\begin{align}
    \delta_{z,z'} \left(1 - 5\varepsilon^2\right) \le \left\langle \chi_\Delta^\varepsilon(z'), e^{-i\pi z} e^{i\pi Q} \chi_\Delta^\varepsilon(z)\right\rangle \le \delta_{z,z'} \qquad\textrm{for all }\qquad z,z' \in \mathbb{Z}\ .
\end{align}
In particular, the overlap $\left\langle \chi_\Delta^\varepsilon(z'), e^{-i\pi z} e^{i\pi Q} \chi_\Delta^\varepsilon(z)\right\rangle $ is real.
\end{lemma}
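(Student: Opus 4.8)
The statement compares two truncated Gaussians centered at integers~$z$ and~$z'$, under the extra multiplication operator~$e^{i\pi Q}$ and the scalar phase~$e^{-i\pi z}$. The first thing I would observe is that for $z\neq z'$, since $\varepsilon<1/2$, the functions $\chi^\varepsilon_\Delta(z)$ and $\chi^\varepsilon_\Delta(z')$ have disjoint supports (contained in $[z-\varepsilon,z+\varepsilon]$ and $[z'-\varepsilon,z'+\varepsilon]$ respectively), so the inner product vanishes identically; multiplying one of them pointwise by $e^{i\pi Q}$ or by a constant does not change this, so the $\delta_{z,z'}$ prefactor on both sides is immediate. Hence it only remains to treat the diagonal case $z=z'$, where we must show
\begin{align}
1-5\varepsilon^2\le \left\langle \chi_\Delta^\varepsilon(z), e^{-i\pi z} e^{i\pi Q} \chi_\Delta^\varepsilon(z)\right\rangle\le 1\ .
\end{align}

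For the diagonal term, I would write out the inner product as an integral over $[z-\varepsilon,z+\varepsilon]$ and substitute $x=z+u$ with $u\in[-\varepsilon,\varepsilon]$. Since $\chi^\varepsilon_\Delta(z)(x)=\Psi^\varepsilon_\Delta(x-z)=\Psi^\varepsilon_\Delta(u)$ and $e^{i\pi x}=e^{i\pi z}e^{i\pi u}$, the factor $e^{-i\pi z}$ exactly cancels the $e^{i\pi z}$ coming out of the shift, leaving
\begin{align}
\left\langle \chi_\Delta^\varepsilon(z), e^{-i\pi z} e^{i\pi Q} \chi_\Delta^\varepsilon(z)\right\rangle = \int_{-\varepsilon}^{\varepsilon} \Psi^\varepsilon_\Delta(u)^2\, e^{i\pi u}\,du = \int_{-\varepsilon}^{\varepsilon} \Psi^\varepsilon_\Delta(u)^2\cos(\pi u)\,du\ ,
\end{align}
where the imaginary part drops out because $\Psi^\varepsilon_\Delta$ is even (and $\sin$ is odd). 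This shows the quantity is real (the "in particular" claim) and manifestly at most $\int \Psi^\varepsilon_\Delta(u)^2\,du = 1$ since $\cos(\pi u)\le 1$, giving the upper bound. For the lower bound I would use $\cos(\pi u)\ge 1-\frac{\pi^2 u^2}{2}$ for all real $u$, so that
\begin{align}
\int_{-\varepsilon}^{\varepsilon} \Psi^\varepsilon_\Delta(u)^2\cos(\pi u)\,du \ge 1-\frac{\pi^2}{2}\int_{-\varepsilon}^{\varepsilon} u^2\,\Psi^\varepsilon_\Delta(u)^2\,du\ ,
\end{align}
and then bound $\int_{-\varepsilon}^{\varepsilon} u^2\,\Psi^\varepsilon_\Delta(u)^2\,du \le \varepsilon^2$ trivially (since $|u|\le\varepsilon$ on the support and $\Psi^\varepsilon_\Delta$ is normalized). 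This yields the bound $1-\frac{\pi^2}{2}\varepsilon^2$, and since $\pi^2/2 < 5$, we get $\ge 1-5\varepsilon^2$ as claimed.

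The argument is essentially elementary, so there is no real "main obstacle" — the only thing to be slightly careful about is keeping track of the complex phases and the evenness of $\Psi^\varepsilon_\Delta$ to see that everything is real and that the $e^{-i\pi z}$ cancellation works for arbitrary integer $z$ (not just $z=0$). One could even sharpen the constant using the second moment of the \emph{truncated} Gaussian rather than $\varepsilon^2$, but since the paper only needs $5\varepsilon^2$, the crude bound $\int u^2\Psi^\varepsilon_\Delta(u)^2\,du\le\varepsilon^2$ suffices and keeps the proof short. I would present the two cases ($z\ne z'$ handled by disjoint supports, $z=z'$ by the cosine estimate) and conclude.
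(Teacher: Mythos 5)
Your proof is correct and follows essentially the same route as the paper: disjoint supports dispose of the off-diagonal case, and the diagonal case reduces to $\int_{-\varepsilon}^{\varepsilon}\Psi_\Delta^\varepsilon(u)^2\cos(\pi u)\,du$, bounded below via the quadratic Taylor bound on cosine (the paper factors out $\cos(\pi\varepsilon)$ by monotonicity before applying $\cos x\ge 1-x^2/2$, while you apply the Taylor bound pointwise and use $|u|\le\varepsilon$ — the same estimate $1-\pi^2\varepsilon^2/2\ge 1-5\varepsilon^2$ either way).
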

\begin{proof}
First, we consider the case $z\neq z'$. By definition, the unitary $e^{i\pi Q}$ acts as a multiplication operator on $L^2(\mathbb{R})$. As the functions $\chi_\Delta^\varepsilon(z)$ and $\chi_\Delta^\varepsilon(z')$ have disjoint support, we conclude 
\begin{align}
    \left\langle \chi_\Delta^\varepsilon(z), e^{-i\pi z} e^{i\pi Q} \chi_\Delta^\varepsilon(z)\right\rangle = 0\qquad\textrm{if}\qquad z \neq z' \, .\label{eq: e^iQ z neq z'}
    \end{align}
Next, we consider the case $z=z'$.
By definition of the unitary $e^{i\pi Q}$, we have
\begin{align}
    \left\langle \chi_\Delta^\varepsilon(z), e^{-i\pi z} e^{i\pi Q} \chi_\Delta^\varepsilon(z)\right\rangle &=  \int_\mathbb{R}   \overline{\chi_\Delta^\varepsilon(z)(x)}e^{-i\pi z} e^{i\pi x} \chi_\Delta^\varepsilon(z)(x) dx \\
    &=\int_{z-\varepsilon}^{z+\varepsilon} |\chi_\Delta^\varepsilon(z)(x)|^2 e^{-i\pi z} e^{i\pi x}  dx \label{Eq: e^iQ vs e^iz 2}\\
    &= \int_{z-\varepsilon}^{z+\varepsilon} |\Psi_\Delta^\varepsilon(x-z)|^2 e^{i\pi (x-z)}  dx \\
    &=  \int_{-\varepsilon}^{\varepsilon} |\Psi_\Delta^\varepsilon(x)|^2 e^{i\pi x} dx \label{Eq: e^iQ vs e^iz 4}\\
    &= \int_{-\varepsilon}^{\varepsilon} |\Psi_\Delta^\varepsilon(x)|^2 \cos(\pi x) dx \label{Eq: e^iQ vs e^iz 5}
\end{align}
We obtained Eq.~\eqref{Eq: e^iQ vs e^iz 2} from the fact that the support of $\chi_\Delta^\varepsilon(z)$ is contained in the interval $[z-\varepsilon, z+\varepsilon]$. Eq.~\eqref{Eq: e^iQ vs e^iz 4} follows from a variable substitution and using that the support of $\Psi_\Delta^\varepsilon$ is contained in $[-\varepsilon,\varepsilon]$. Eq.~\eqref{Eq: e^iQ vs e^iz 5} follows from the fact that $|\Psi_\Delta^\varepsilon|^2$ is even and sinus is odd. We can bound~\eqref{Eq: e^iQ vs e^iz 5} as follows
\begin{align}
    \int_{-\varepsilon}^{\varepsilon} |\Psi_\Delta^\varepsilon(x)|^2 \cos(\pi x) dx &\ge  \cos(\pi \varepsilon) \int_{-\varepsilon}^{\varepsilon} |\Psi_\Delta^\varepsilon(x)|^2  dx \label{Eq: e^iQ vs e^iz 6}\\
    &=  \cos(\pi \varepsilon) \label{Eq: e^iQ vs e^iz 7}\\
    &\ge  1 - 5\varepsilon^2 \ . \label{Eq: e^iQ vs e^iz 8}\ .
\end{align}
Inequality~\eqref{Eq: e^iQ vs e^iz 6} follows from the fact that cosine is an even function monotonously decreasing on the interval $[0, \pi/2]$.  Eq.~\eqref{Eq: e^iQ vs e^iz 8} is a consequence of the bound $\cos x \ge 1 - x^2/2$ for all $x\in\mathbb{R}$. 
Finally, from Eq.~\eqref{eq: e^iQ z neq z'} and \eqref{Eq: e^iQ vs e^iz 5}, we conclude that $ \left\langle \chi_\Delta^\varepsilon(z), e^{-i\pi z} e^{i\pi Q} \chi_\Delta^\varepsilon(z)\right\rangle \in \mathbb{R}$ for all $z,z' \in \mathbb{Z}$.
\end{proof}

\subsubsection{Bounds on approximate comb states}
Let us recall the definitions of the approximate comb states (cf.~\eqref{eq:sha L}) and of the truncated approximate comb states (cf.~\eqref{eq:Sha L ep} and~\eqref{eq:Sha L ep sum}). Those are
\begin{align}
\ket{\Sha_{L,\Delta}} &=\frac{D_{L,\Delta}}{\sqrt{L}}\sum_{z=-L/2}^{L/2-1} \ket{\chi_\Delta(z)}\ , \label{eq:appx Sha L}
\end{align}
where $D_{L,\Delta}$ is a normalization factor and $\chi_{\Delta}$ are translated Gaussians as in~\eqref{eq:appx Psi Delta}, and
\begin{align}
        \ket{\Sha^{\varepsilon}_{L,\Delta}}&=\frac{1}{\sqrt{L}}\sum_{z=-L/2}^{L/2-1}\ket{\chi^\varepsilon_\Delta(z)}\ ,\label{eq:appx Sha L ep sum}
\end{align}
where $\chi_{\Delta}$ are truncated translated Gaussians as in~\eqref{eq:appx Psi Delta Ep}.

\begin{lemma} \label{lem: Sha - Sha epsilon overlap}
Let $\varepsilon\in(0, 1/2)$, $\Delta\in(0,1/4)$, and $L \in 2\mathbb{N}$.
Then
\begin{align}
    \left|\left\langle \Sha_{L, \Delta}, \Sha_{L, \Delta}^\varepsilon \right\rangle \right|^2 \ge 1 - 16\Delta^2 - 2e^{-(\varepsilon/\Delta)^2}\ .
\end{align}
\end{lemma}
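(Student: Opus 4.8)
The plan is to express the overlap $\langle \Sha_{L,\Delta},\Sha^\varepsilon_{L,\Delta}\rangle$ in terms of the translated Gaussian building blocks, bound each contribution, and then control the normalization factor $D_{L,\Delta}$. Starting from the definitions~\eqref{eq:appx Sha L} and~\eqref{eq:appx Sha L ep sum}, we have
\begin{align}
\langle \Sha_{L,\Delta},\Sha^\varepsilon_{L,\Delta}\rangle
&=\frac{D_{L,\Delta}}{L}\sum_{z,z'=-L/2}^{L/2-1}\langle \chi_\Delta(z),\chi^\varepsilon_\Delta(z')\rangle\ .
\end{align}
Since $\chi^\varepsilon_\Delta(z')$ has support in $[z'-\varepsilon,z'+\varepsilon]$ and $\varepsilon<1/2$, the off-diagonal overlap $\langle \chi_\Delta(z),\chi^\varepsilon_\Delta(z')\rangle$ for $z\neq z'$ is the integral of a Gaussian centered at $z$ against a truncated Gaussian near $z'$, which is exponentially small; more importantly, the diagonal term is $\langle \chi_\Delta(z),\chi^\varepsilon_\Delta(z)\rangle = \|\Pi_{[-\varepsilon,\varepsilon]}\Psi_\Delta\|$ by the computation in~\eqref{eq: overlap psi psi ep}. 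The cleanest route is to first lower bound the diagonal contribution using Lemma~\ref{lem: gaussian gaussian epsilon} (which gives $\|\Pi_{[-\varepsilon,\varepsilon]}\Psi_\Delta\|^2 \ge 1-2e^{-(\varepsilon/\Delta)^2}$), and then separately bound $D_{L,\Delta}$.

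First I would handle the normalization. From $\|\Sha_{L,\Delta}\|=1$ we get
\begin{align}
D_{L,\Delta}^{-2}=\frac1L\sum_{z,z'=-L/2}^{L/2-1}\langle \chi_\Delta(z),\chi_\Delta(z')\rangle
=\frac1L\sum_{z,z'=-L/2}^{L/2-1}e^{-(z-z')^2/(4\Delta^2)}
\end{align}
by Lemma~\ref{lem: overlap chi z chi z'}. Bounding the double sum: the diagonal contributes exactly $L$, and the off-diagonal part is at most $L\sum_{k\in\mathbb{Z}\setminus\{0\}}e^{-k^2/(4\Delta^2)}\le L\sqrt{4\pi\Delta^2}=2L\sqrt{\pi}\Delta$ by~\eqref{eq: bound 2 lem gaussian bound} of Lemma~\ref{lem: gaussian sum integral bound} applied with $c=1/(4\Delta^2)$. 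Hence $1\le D_{L,\Delta}^{-2}\le 1+2\sqrt{\pi}\Delta$, i.e. $D_{L,\Delta}\ge (1+2\sqrt\pi\Delta)^{-1/2}\ge 1-\sqrt\pi\Delta$, so in particular $D_{L,\Delta}^2\ge 1-2\sqrt\pi\Delta\ge 1-4\Delta$ using $2\sqrt\pi<4$ — actually $2\sqrt\pi\approx 3.54<4$, so $D_{L,\Delta}^2 \ge 1-4\Delta$; I may want the slightly sharper $1-2\sqrt\pi\Delta$ to land the stated $16\Delta^2$, so I should track constants carefully here.

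Next I would assemble the overlap. Writing $\Sha^\varepsilon_{L,\Delta}$ in the diagonal-plus-error form, and using that $\langle \chi_\Delta(z),\chi^\varepsilon_\Delta(z)\rangle$ is the same value $s:=\|\Pi_{[-\varepsilon,\varepsilon]}\Psi_\Delta\|$ independent of $z$, while the cross terms $z\neq z'$ are nonnegative (both functions are nonnegative), I get
\begin{align}
\langle \Sha_{L,\Delta},\Sha^\varepsilon_{L,\Delta}\rangle
\ge \frac{D_{L,\Delta}}{L}\cdot L\cdot s = D_{L,\Delta}\, s\ .
\end{align}
Then $|\langle \Sha_{L,\Delta},\Sha^\varepsilon_{L,\Delta}\rangle|^2 \ge D_{L,\Delta}^2\, s^2 \ge (1-2\sqrt\pi\Delta)(1-2e^{-(\varepsilon/\Delta)^2})\ge 1-2\sqrt\pi\Delta-2e^{-(\varepsilon/\Delta)^2}$. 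The target bound states $1-16\Delta^2-2e^{-(\varepsilon/\Delta)^2}$, so there is a genuine discrepancy: $16\Delta^2$ versus $2\sqrt\pi\Delta$, and since $\Delta<1/4$ we have $\Delta > 16\Delta^2$, meaning my bound is \emph{weaker} than claimed. This signals that the paper must be exploiting cancellation more carefully — presumably the cross terms in the normalization $D_{L,\Delta}^{-2}$ nearly cancel against cross terms appearing in the overlap double sum, rather than being discarded. So the main obstacle will be the normalization bookkeeping: I expect one should not bound $D_{L,\Delta}$ and $s$ separately, but instead compute $1-|\langle\Sha_{L,\Delta},\Sha^\varepsilon_{L,\Delta}\rangle|^2$ directly, noting that $\Sha^\varepsilon_{L,\Delta}$ differs from the renormalized projection $\Pi_{\mathbb{Z}(\varepsilon)}\Sha_{L,\Delta}/\|\cdot\|$ and that the overlap structure of the two double sums coincides up to the per-peak truncation factor $s$. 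Concretely, I would show $\langle \Sha_{L,\Delta},\Sha^\varepsilon_{L,\Delta}\rangle = s\cdot\langle \Sha_{L,\Delta},\Sha^{0}_{L,\Delta}\rangle/(\text{stuff})$ where $\Sha^0$ denotes the non-truncated comb with the \emph{same} $1/\sqrt L$ normalization (i.e. $D_{L,\Delta}^{-1}\Sha_{L,\Delta}$), giving $\langle \Sha_{L,\Delta},\Sha^\varepsilon_{L,\Delta}\rangle = s\cdot D_{L,\Delta}^{-1}$... which again is just $s/D_{L,\Delta}^{-1}$ — wait, that exceeds $s D_{L,\Delta}$ only if... Since this subtlety is exactly where the quadratic-in-$\Delta$ improvement must come from, the careful step is to observe that $\|\Pi_{\mathbb{Z}(\varepsilon)}\Sha_{L,\Delta}\|^2 = s^2 D_{L,\Delta}^2 \cdot \frac1L\sum_{z,z'}(\text{truncated overlaps})$ and the truncated-Gaussian off-diagonal overlaps are themselves $O(\Delta^2)$-close to the full ones, so the ratio defining the normalized truncated state only shifts things at second order; chasing this through, together with the Gaussian-tail bound $s^2\ge 1-2e^{-(\varepsilon/\Delta)^2}$ and the $O(\Delta^2)$ control of the residual double-sum ratio, should produce exactly $1-16\Delta^2-2e^{-(\varepsilon/\Delta)^2}$. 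I would present it as: (i) $s^2\ge 1-2e^{-(\varepsilon/\Delta)^2}$ by Lemma~\ref{lem: gaussian gaussian epsilon}; (ii) a lemma-free computation showing $|\langle\Sha_{L,\Delta},\Sha^\varepsilon_{L,\Delta}\rangle|^2\ge s^2(1-16\Delta^2)$ via the normalization ratio, then combine.
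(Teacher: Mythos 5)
Your setup is exactly the paper's: expand the overlap, drop the nonnegative cross terms to get $\langle \Sha_{L,\Delta},\Sha^\varepsilon_{L,\Delta}\rangle \ge D_{L,\Delta}\,\langle\Psi_\Delta,\Psi_\Delta^\varepsilon\rangle$, and invoke Lemma~\ref{lem: gaussian gaussian epsilon}. The gap is in how you bound $D_{L,\Delta}^{-2}$. You estimate the off-diagonal part of $\frac1L\sum_{z,z'}e^{-(z-z')^2/(4\Delta^2)}$ by the integral-comparison bound of Lemma~\ref{lem: gaussian sum integral bound}, giving $2\sqrt{\pi}\Delta$; that bound is hopelessly lossy when $c=1/(4\Delta^2)$ is large, and it is why you end up with an $O(\Delta)$ error term and correctly notice your result is weaker than the claim. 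But your diagnosis of the repair is wrong: no cancellation between the cross terms of the overlap double sum and those of the normalization is needed (and your sketch (ii) is not carried out — it even contains an unresolved "wait"). The paper does exactly what you did in your first assembly step; the only difference is a sharper, elementary estimate of the off-diagonal normalization sum. Namely,
\begin{align}
J \;\le\; 2L\sum_{y\ge 1} e^{-y^2/(4\Delta^2)} \;\le\; 2L\sum_{y\ge 1}\left(e^{-1/(4\Delta^2)}\right)^{y} \;=\; 2L\,\frac{e^{-1/(4\Delta^2)}}{1-e^{-1/(4\Delta^2)}} \;\le\; 4L\,e^{-1/(4\Delta^2)} \;\le\; 16L\Delta^2\,,
\end{align}
using $y\le y^2$, the geometric series, $e^{-1/(4\Delta^2)}\le 1/4$ (from $\Delta<1/4$) together with $x/(1-x)\le 2x$, and finally $e^{-x}\le 1/x$. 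Hence $D_{L,\Delta}^{-2}\le 1+16\Delta^2$, and your own inequality $|\langle \Sha_{L,\Delta},\Sha^\varepsilon_{L,\Delta}\rangle|^2\ge D_{L,\Delta}^2\,\langle\Psi_\Delta,\Psi_\Delta^\varepsilon\rangle^2 \ge (1+16\Delta^2)^{-1}\bigl(1-2e^{-(\varepsilon/\Delta)^2}\bigr)\ge 1-16\Delta^2-2e^{-(\varepsilon/\Delta)^2}$ finishes the proof. In short: the nearest-peak Gaussian overlap $e^{-1/(4\Delta^2)}$ is already $O(\Delta^2)$ (indeed smaller than any power of $\Delta$), so the quadratic constant comes purely from bounding the normalization correctly, not from any interplay with the truncated-state cross terms (which vanish anyway, since the truncated peaks have disjoint supports for $\varepsilon<1/2$).
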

\begin{proof}
By~\eqref{eq:appx Sha L} and~\eqref{eq:appx Sha L ep sum}, we have
\begin{align}
        \left\langle \Sha_{L, \Delta}, \Sha_{L, \Delta}^\varepsilon \right\rangle  
        &= \frac{ D_{L, \Delta}}{L}\sum_{z = -L/2}^{L/2-1}\sum_{z' = -L/2}^{L/2-1}  \langle \chi_\Delta(z), \chi_\Delta^\varepsilon(z') \rangle \\
        &\ge \frac{D_{L, \Delta}}{L}  \sum_{z = -L/2}^{L/2-1} \langle \Psi_\Delta , \Psi_\Delta^\varepsilon\rangle  \label{eq: first sha sha}\\
        & = D_{L, \Delta}  \langle \Psi_\Delta , \Psi_\Delta^\varepsilon\rangle \\
        & \ge D_{L, \Delta} \left( 1 - 2e^{-(\varepsilon/\Delta)^2}\right)^{1/2} \label{eq: second sha sha} \ ,
    \end{align}
    where inequality~\eqref{eq: first sha sha} follows from the non-negativity of $\chi_\Delta(z)(\cdot)$ and $\chi_\Delta^\varepsilon(z)(\cdot)$ and the equality $\langle \chi_\Delta(z), \chi_\Delta^\varepsilon(z) \rangle=\langle\Psi_\Delta,\Psi^\varepsilon_\Delta\rangle$ for all $z \in \mathbb{Z}$. The inequality~\eqref{eq: second sha sha} follows from Lemma~\ref{lem: gaussian gaussian epsilon}.
    
    We have
    \begin{align}
        D_{L,\Delta}^{-2} 
        = \frac{1}{L}\sum_{z = -L/2}^{L/2-1}\sum_{z' = -L/2}^{L/2-1}  \langle \chi_\Delta(z), \chi_\Delta(z') \rangle
        = \frac{1}{L}\sum_{z = -L/2}^{L/2-1}\sum_{z' = -L/2}^{L/2-1}  e^{-\frac{(z-z')^2}{4 \Delta^2}} \ ,\label{eq:DLDelta-2}
    \end{align}
    where we used that $\langle \chi_\Delta(z), \chi_\Delta(z')\rangle = e^{-\frac{(z-z')^2}{4 \Delta^2}}$ by Lemma~\ref{lem: overlap chi z chi z'}.
    We split the double sum into diagonal and off-diagonal parts, $S$ and $J$, respectively. This gives
    \begin{align}
        \sum_{z = -L/2}^{L/2-1}\sum_{z' = -L/2}^{L/2-1}  e^{-\frac{(z-z')^2}{4 \Delta^2}} = S+J\ ,\label{eq: comb double sum split}
    \end{align}
    where
    \begin{align}
        S &= \sum_{z = -L/2}^{L/2-1}  e^{-\frac{(z-z)^2}{4 \Delta^2}} = \sum_{z = -L/2}^{L/2-1} 1 =  L \label{eq: sh sha D is L}\\
        J &= \sum_{\substack{z,z'\in\{-L/2,\ldots, L/2-1\}\\ z\not=z'}} e^{-\frac{(z-z')^2}{4 \Delta^2}}
        \ .
    \end{align}
    Let us rewrite $J$ as
    \begin{align}
        J 
        &= 2 \sum_{z = -L/2}^{L/2-1}\sum_{z' = z+1}^{L/2-1} e^{-\frac{(z-z')^2}{4 \Delta^2}} \\
        &= 2  \sum_{z = -L/2}^{L/2-1} \sum_{ y=1}^{L/2-z-1} e^{-\frac{y^2}{4 \Delta^2}}\\
        &\le 2L \sum_{y=1}^{L-1} e^{-\frac{y^2}{4 \Delta^2}} \ .
        \label{eq: diag sum}
    \end{align}
    We will use that $y \le y^2$ for $y\ge 1$ and bound $J$ as follows.
    \begin{align}
        J 
        &\le 2L  \sum_{y=1}^{L-1} \left(e^{-\frac{1}{4 \Delta^2}}\right)^y \\
        &= 2L e^{-\frac{1}{4\Delta^2}} \frac{1 -  e^{-\frac{L-1}{4 \Delta^2}}}{1 - e^{-\frac{1}{4\Delta^2}}} \qquad\textrm{ by sum of geometric series}\\
        &\le 2L\frac{e^{-\frac{1}{4\Delta^2}}}{1 - e^{-\frac{1}{4\Delta^2}}}\ .
    \end{align}
    Notice that $e^{-1/(4\Delta^2)} \le 1/4$ because $0<\Delta<1/4$. Therefore, we can use the inequality $x/(1-x) \le 2x$ that holds for all $0\le x \le  1/2$, and obtain
    \begin{align}
        J &\le 2L\cdot 2e^{-\frac{1}{4\Delta^2}}\\
        &\le 16L \Delta^2 \qquad \textrm{ by $e^{-x} \le x^{-1}$ for all $x\ge 0$}\ .
    \end{align}
    Combining this bound with Eqs.~\eqref{eq: sh sha D is L},~\eqref{eq: comb double sum split}, and~\eqref{eq:DLDelta-2} gives
    \begin{align}
        D^{-2}_{L,\Delta} \le 1+16\Delta^2\ .
    \end{align}
    Finally, Eq.~\eqref{eq: second sha sha} together with the bound on $D^{-2}_{L,\Delta}$ gives
    \begin{align}
        \left|\left\langle \Sha_{L, \Delta}, \Sha_{L, \Delta}^\varepsilon \right\rangle \right|^2 &\ge \frac{1}{1 + 16\Delta^2} \left( 1 - 2e^{-(\varepsilon/\Delta)^2}\right) \\
        &\ge \left(1 - 16 \Delta^2\right) \left( 1 - 2e^{-(\varepsilon/\Delta)^2}\right) \\
        & \ge 1 - 16\Delta^2 - 2e^{-(\varepsilon/\Delta)^2}\ ,
    \end{align}
    where we used $(1+x)^{-1}\ge 1-x$ for all $x\ge 0$ on the first line and $(1-x)(1-y)\ge 1-x-y$ for all $x,y\ge 0$ on the last line.
\end{proof}

\begin{corollary}\label{cor: Sha - Sha epsilon trace distance}
    Let $\Delta\in(0,1/4)$, $\varepsilon\in[\sqrt{\Delta}, 1/2)$ and $L\in 2N$. Then
    \begin{align}
        \left\| \proj{ \Sha_{L,\Delta}} - \proj{\Sha_{L,\Delta}^{\varepsilon}}\right\|_1
        &\le 5 \sqrt{\Delta}\ .
    \end{align}
\end{corollary}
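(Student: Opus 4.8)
The plan is to obtain the trace-distance bound directly from the overlap estimate of Lemma~\ref{lem: Sha - Sha epsilon overlap} combined with the pure-state identity~\eqref{eq: trace distance overlap} relating trace distance and fidelity. First I would invoke Lemma~\ref{lem: Sha - Sha epsilon overlap} (whose hypotheses $\varepsilon\in(0,1/2)$, $\Delta\in(0,1/4)$ and $L\in2\mathbb{N}$ are all satisfied here), giving
\[
\left|\left\langle \Sha_{L,\Delta},\Sha^\varepsilon_{L,\Delta}\right\rangle\right|^2\ge 1-16\Delta^2-2e^{-(\varepsilon/\Delta)^2}\ .
\]

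Next I would simplify the exponential tail using the extra hypothesis $\varepsilon\ge\sqrt{\Delta}$ available in the corollary but not in the lemma. Since $\varepsilon\ge\sqrt{\Delta}$ we have $(\varepsilon/\Delta)^2\ge 1/\Delta$, hence by monotonicity of $x\mapsto e^{-x}$ and the elementary inequality $e^{-x}\le 1/x$ for $x>0$,
\[
e^{-(\varepsilon/\Delta)^2}\le e^{-1/\Delta}\le \Delta\ ,
\]
so that $\left|\left\langle \Sha_{L,\Delta},\Sha^\varepsilon_{L,\Delta}\right\rangle\right|^2\ge 1-16\Delta^2-2\Delta$. Then I would apply~\eqref{eq: trace distance overlap}, i.e.\ $\|\proj{\Psi}-\proj{\Phi}\|_1=2\sqrt{1-|\langle\Psi,\Phi\rangle|^2}$, with $\ket{\Psi}=\ket{\Sha_{L,\Delta}}$ and $\ket{\Phi}=\ket{\Sha^\varepsilon_{L,\Delta}}$, to conclude
\[
\left\|\proj{\Sha_{L,\Delta}}-\proj{\Sha^\varepsilon_{L,\Delta}}\right\|_1\le 2\sqrt{16\Delta^2+2\Delta}\ .
\]
To finish I would use $\Delta<1/4$ to bound $16\Delta^2=16\Delta\cdot\Delta<4\Delta$, so $16\Delta^2+2\Delta<6\Delta$, whence $2\sqrt{16\Delta^2+2\Delta}<2\sqrt{6}\,\sqrt{\Delta}<5\sqrt{\Delta}$ because $2\sqrt{6}=\sqrt{24}<5$.

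There is no genuine obstacle here: the statement is a short corollary of the overlap bound, and the only thing requiring care is the chain of elementary numerical estimates (checking $2\sqrt{6}<5$, controlling the exponential term via $\varepsilon\ge\sqrt{\Delta}$, and verifying that the hypotheses of Lemma~\ref{lem: Sha - Sha epsilon overlap} and of~\eqref{eq: trace distance overlap} are met). If a cleaner constant were desired one could instead retain $e^{-(\varepsilon/\Delta)^2}$ throughout and note that it is dominated by the polynomial-in-$\Delta$ terms, but the crude bound $5\sqrt{\Delta}$ already suffices for all later uses of this corollary.
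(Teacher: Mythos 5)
Your proposal is correct and follows essentially the same route as the paper: invoke Lemma~\ref{lem: Sha - Sha epsilon overlap}, use $\varepsilon\ge\sqrt{\Delta}$ together with $e^{-x}\le 1/x$ to bound the tail by $\Delta$, absorb $16\Delta^2\le 4\Delta$ via $\Delta<1/4$, and conclude with the identity~\eqref{eq: trace distance overlap} and $2\sqrt{6}<5$. No gaps.
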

\begin{proof}
By Lemma~\ref{lem: Sha - Sha epsilon overlap}, we have
        \begin{align}
            \left| \langle \Sha_{L,\Delta}, \Sha_{L, \Delta}^\varepsilon\rangle \right|^2 &\ge 1 - 16\Delta^2 - 2e^{-(\varepsilon/\Delta)^2} \\
            &\ge 1 - 16 \Delta^2 - 2e^{-1/\Delta} \qquad \textrm{ by $\varepsilon \ge \sqrt{\Delta}$}\\
            &\ge 1 - 16\Delta^2 - 2\Delta \qquad   \textrm{ by $e^{-x} \le x^{-1}$  for $x>0$}  \\
            &\ge 1 - 6\Delta \qquad  \textrm{ by $\Delta<1/4$} \label{eq: overlap sha sha first}\ . 
        \end{align}
        Using the relation between the trace distance and the overlap (cf.~Eq.~\eqref{eq: trace distance overlap}) gives the claim
        \begin{align}
        \left\lVert  \proj{ \Sha_{L,\Delta}} - \proj{\Sha_{L,  \Delta}^{\varepsilon}}\right\rVert_1
         &\le 2\sqrt{6\Delta} \le 5 \sqrt{\Delta}\ .
        \end{align}
\end{proof}

\subsubsection{Bounds on approximate GKP states}
Let us recall the definition of the approximate $\ket{\gkp_{\kappa,\Delta}}$ state with parameters $\kappa$ and $\Delta$ (cf.\ Eq.~\eqref{eq:approximategkpstate eta chi} and Eq.~\eqref{eq:gkpkappdeltageneraldefinition}):
\begin{align}
    \ket{\gkp_{\kappa, \Delta}}&:= C_{\kappa,\Delta} \sum_{z\in\mathbb{Z}}\eta_\kappa(z)\ket{\chi_\Delta (z)} \ ,\label{eq:appx GKP eta chi}
\end{align}
where $C_{\kappa,\Delta}$ is the normalization factor.

It is convenient to define approximate GKP states with truncated peaks $\ket{\gkp^\varepsilon_{\kappa,\Delta}}$ and approximate GKP states with truncated peaks and compactly supported envelope $\ket{\gkp^\varepsilon_{L,\kappa,\Delta}}$. We define these (for $\varepsilon\in(0,1/2)$ and $\kappa,\Delta>0$) as
\begin{align}
\ket{\gkp_{\kappa, \Delta}^{\varepsilon}}&:= C_\kappa \sum_{z\in\mathbb{Z}}\eta_\kappa(z)\ket{\chi_\Delta^\varepsilon(z)} \ , \\
\ket{\gkp_{L,\kappa, \Delta}^{\varepsilon}}&:= C_{L,\kappa} \sum_{z=-L/2}^{L/2-1}\eta_\kappa(z)\ket{\chi_\Delta^\varepsilon(z)} \ ,
\label{eq:approximategkpstatedef}
\end{align}
where $C_\kappa$ and $C_{L,\kappa}$ are the respective normalization factors and
where we used 
\begin{align}
(\chi^\varepsilon_\Delta(z))(x)&:=\Psi^\varepsilon_\Delta(x-z)
\qquad \textrm{ and }\qquad \Psi^\varepsilon_\Delta:=
\frac{\Pi_{[-\varepsilon,\varepsilon]}\Psi_{\Delta}}{\|\Pi_{[-\varepsilon,\varepsilon]}\Psi_{\Delta}\|}\ 
\end{align}
as in Eq.~\eqref{eq:chiDelta ep definition}.

We display these approximate GKP states in Fig.~\ref{fig:GKP trunc bounded} and show by the following lemmas  and subsequent corollaries that they are close to state $\gkp_{\kappa,\Delta}$ for suitably chosen $\varepsilon$ and $L$.

\begin{figure}[!htb]
    \centering
    \begin{subfigure}[b]{0.45\textwidth}
        \centering
        \includegraphics[width= \textwidth]{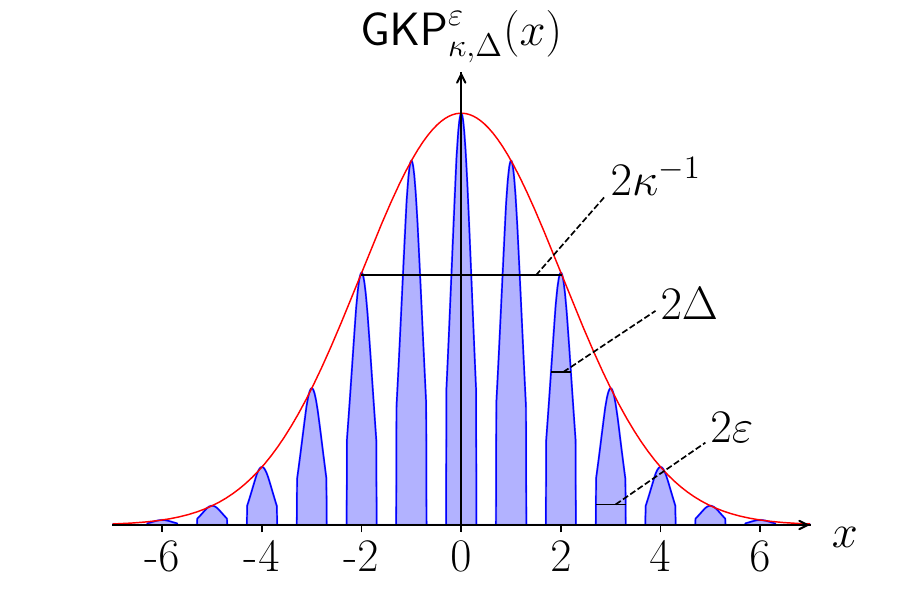}
    \caption{$\ket{\gkp_{\kappa, \Delta}^\varepsilon}$}
    \end{subfigure}
    \hfill
    \centering
    \begin{subfigure}[b]{0.45\textwidth}
        \centering
        \includegraphics[width= \textwidth]{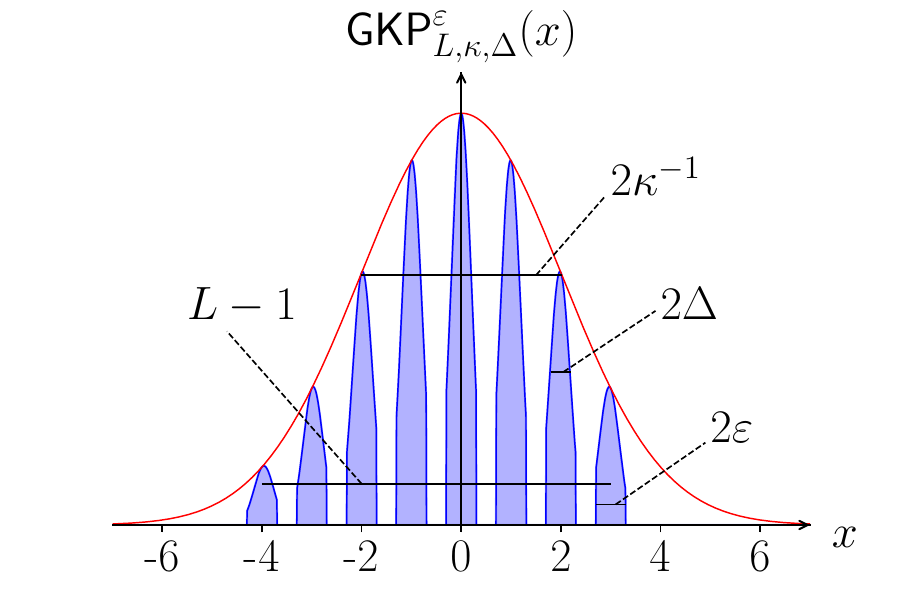}
    \caption{$\ket{\gkp_{L, \kappa,\Delta}^\varepsilon}$}
    \end{subfigure}
    \hfill
    \caption{Illustration of GKP states with truncated peaks $\ket{\gkp_{\kappa,\Delta}^\varepsilon}$ and with truncated peaks and compactly supported envelope $\ket{\gkp^\varepsilon_{L,\kappa,\Delta}}$. 
    }\label{fig:GKP trunc bounded}
\end{figure}

For future reference, observe that (by the orthogonality of the states~$\{\ket{\chi^\varepsilon_\Delta(z)}\}_{z\in\mathbb{Z}}$) the normalization constants satisfy
\begin{align}
    C_{\kappa}^{-2}&=\sum_{k\in\mathbb{Z}} \eta_\kappa(k)^2
    \label{eq:cknormbound}\\
    C_{L,\kappa}^{-2}&=\sum_{k=-L/2}^{L/2-1} \eta_\kappa(k)^2
    \label{eq:cknormboundlimit}
\end{align}

We can bound the quantities $C_{\kappa}^{-2},C_{\kappa,\Delta}^{-2}$ and $C_{L,\kappa}^{-2}$ related to normalization constants by the following technical lemma, which we will use repeatedly in our proofs.
\begin{lemma}\label{lem: technical lemma Ck CkD CkL}
    Let $\kappa,\Delta>0$. Then
    \begin{align}
        1 - \frac{\kappa}{\sqrt{\pi}}\le C_{\kappa}^{-2} &\le 1 + \frac{\kappa}{\sqrt{\pi}}\ \label{eq: Ck-2 bounds}\\
        C_{\kappa,\Delta}^{-2} &\le 1+\frac{\kappa}{\sqrt{\pi}}+2(\sqrt{2\pi}+\kappa)\Delta   \label{eq: CkD-2 bound}\\
        \left(1 - 2e^{-(\kappa L/2)^2}\right)\cdot\left(1 + \frac{\kappa}{\sqrt{\pi}}\right) \le C_{L,\kappa}^{-2}& \ .\label{eq: CLk-2 bound}
    \end{align}
    Furthermore we have
    \begin{align}
         1-\frac{2(\sqrt{2\pi}+\kappa)}{1-\kappa/\sqrt{\pi}}\Delta \ \leq \frac{C^{2}_{\kappa,\Delta}}{C_\kappa^2}\leq 1\ .\label{eq:comparisoninequalityckappakappadelta}
    \end{align}
\end{lemma}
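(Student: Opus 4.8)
The plan is to reduce all four estimates to elementary bounds on sums of Gaussians. Recall from~\eqref{eq:cknormbound} and~\eqref{eq:cknormboundlimit} that $C_\kappa^{-2}=\sum_{k\in\mathbb{Z}}\eta_\kappa(k)^2$ and $C_{L,\kappa}^{-2}=\sum_{k=-L/2}^{L/2-1}\eta_\kappa(k)^2$, and that expanding the unnormalized vector in~\eqref{eq:appx GKP eta chi} and using Lemma~\ref{lem: overlap chi z chi z'} gives $C_{\kappa,\Delta}^{-2}=\sum_{z,z'\in\mathbb{Z}}\eta_\kappa(z)\eta_\kappa(z')e^{-(z-z')^2/(4\Delta^2)}$. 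By~\eqref{eq:etakappadefinition} we have $\eta_\kappa(k)^2=\tfrac{\kappa}{\sqrt{\pi}}e^{-\kappa^2 k^2}$, so~\eqref{eq: Ck-2 bounds} is immediate from Lemma~\ref{lem: gaussian sum integral bound} with $c=\kappa^2$: it gives $\tfrac{\sqrt{\pi}}{\kappa}-1\le\sum_{k\in\mathbb{Z}}e^{-\kappa^2 k^2}\le\tfrac{\sqrt{\pi}}{\kappa}+1$, and multiplying through by $\tfrac{\kappa}{\sqrt{\pi}}$ yields $1-\tfrac{\kappa}{\sqrt{\pi}}\le C_\kappa^{-2}\le1+\tfrac{\kappa}{\sqrt{\pi}}$.

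For~\eqref{eq: CkD-2 bound} I would split the double sum for $C_{\kappa,\Delta}^{-2}$ into its diagonal part, which equals $C_\kappa^{-2}$, and its off-diagonal part. Writing the latter with $y=z-z'$ and using $\eta_\kappa(z)\eta_\kappa(z-y)\le\tfrac12\bigl(\eta_\kappa(z)^2+\eta_\kappa(z-y)^2\bigr)$ gives $\sum_{z\in\mathbb{Z}}\eta_\kappa(z)\eta_\kappa(z-y)\le C_\kappa^{-2}$ for each $y\neq 0$, so the off-diagonal part is at most $C_\kappa^{-2}\sum_{y\neq0}e^{-y^2/(4\Delta^2)}$, and Lemma~\ref{lem: gaussian sum integral bound} (with $c=1/(4\Delta^2)$) bounds the remaining sum by $2\sqrt{\pi}\,\Delta$. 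Hence $C_{\kappa,\Delta}^{-2}\le C_\kappa^{-2}\bigl(1+2\sqrt{\pi}\,\Delta\bigr)\le\bigl(1+\tfrac{\kappa}{\sqrt{\pi}}\bigr)+2\sqrt{\pi}\,\Delta+2\kappa\Delta$, and~\eqref{eq: CkD-2 bound} follows since $2\sqrt{\pi}\le2\sqrt{2\pi}$. The same computation shows $C_{\kappa,\Delta}^{-2}\le C_\kappa^{-2}+2(\sqrt{2\pi}+\kappa)\Delta$, which is what feeds~\eqref{eq:comparisoninequalityckappakappadelta}: the upper bound $C_{\kappa,\Delta}^2/C_\kappa^2=C_\kappa^{-2}/C_{\kappa,\Delta}^{-2}\le1$ is just non-negativity of every off-diagonal overlap term, and the lower bound follows from $C_\kappa^{-2}/C_{\kappa,\Delta}^{-2}\ge C_\kappa^{-2}/\bigl(C_\kappa^{-2}+2(\sqrt{2\pi}+\kappa)\Delta\bigr)\ge1-2(\sqrt{2\pi}+\kappa)\Delta/C_\kappa^{-2}$ together with $C_\kappa^{-2}\ge1-\tfrac{\kappa}{\sqrt{\pi}}$.

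The only step requiring a genuine tail estimate is~\eqref{eq: CLk-2 bound}. Here I would write $C_{L,\kappa}^{-2}=C_\kappa^{-2}-\tfrac{\kappa}{\sqrt{\pi}}\sum_{k\in\mathbb{Z}\setminus\{-L/2,\dots,L/2-1\}}e^{-\kappa^2 k^2}$ and bound the deleted tail symmetrically by $\tfrac{2\kappa}{\sqrt{\pi}}\sum_{k\ge L/2}e^{-\kappa^2 k^2}$; substituting $k=L/2+j$ and using $(L/2+j)^2\ge L^2/4+j^2$ gives $\sum_{k\ge L/2}e^{-\kappa^2 k^2}\le e^{-(\kappa L/2)^2}\sum_{j\ge0}e^{-\kappa^2 j^2}\le e^{-(\kappa L/2)^2}\sum_{k\in\mathbb{Z}}e^{-\kappa^2 k^2}$, whence $C_{L,\kappa}^{-2}\ge C_\kappa^{-2}\bigl(1-2e^{-(\kappa L/2)^2}\bigr)$; combined with the two-sided bound on $C_\kappa^{-2}$ from~\eqref{eq: Ck-2 bounds} this yields~\eqref{eq: CLk-2 bound}. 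I expect this tail manipulation --- extracting exactly the factor $e^{-(\kappa L/2)^2}$ and pairing it cleanly with the $C_\kappa^{-2}$ estimate rather than a looser bound --- to be the only point where a little care is needed; every other step is a one- or two-line application of Lemma~\ref{lem: gaussian sum integral bound} and Lemma~\ref{lem: overlap chi z chi z'}.
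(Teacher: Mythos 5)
Your treatment of \eqref{eq: Ck-2 bounds}, \eqref{eq: CkD-2 bound} and \eqref{eq:comparisoninequalityckappakappadelta} is correct and follows essentially the paper's own route: the same diagonal/off-diagonal split of $C_{\kappa,\Delta}^{-2}$, the same applications of Lemma~\ref{lem: gaussian sum integral bound} and Lemma~\ref{lem: overlap chi z chi z'}, and the same chain $C_\kappa^{-2}/(C_\kappa^{-2}+J)\ge 1-J/C_\kappa^{-2}\ge 1-\frac{2(\sqrt{2\pi}+\kappa)\Delta}{1-\kappa/\sqrt{\pi}}$ for the ratio bound. The only genuine variation is your bound on the off-diagonal term: you use $\eta_\kappa(z)\eta_\kappa(z-y)\le\tfrac12(\eta_\kappa(z)^2+\eta_\kappa(z-y)^2)$ to get $\sum_z\eta_\kappa(z)\eta_\kappa(z-y)\le C_\kappa^{-2}$ for each $y\neq0$, whereas the paper bounds one factor by $\sqrt{\kappa}/\pi^{1/4}$ and sums the other via Lemma~\ref{lem: gaussian sum integral bound} with $c=\kappa^2/2$; both give $J\le 2(\sqrt{2\pi}+\kappa)\Delta$, so this is a cosmetic (if slightly tidier) difference.

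The last step for \eqref{eq: CLk-2 bound}, however, does not go through as you wrote it. Your tail estimate is fine and yields $C_{L,\kappa}^{-2}\ge\bigl(1-2e^{-(\kappa L/2)^2}\bigr)C_\kappa^{-2}$, but combining this with \eqref{eq: Ck-2 bounds} gives only the factor $1-\kappa/\sqrt{\pi}$, not the factor $1+\kappa/\sqrt{\pi}$ appearing in \eqref{eq: CLk-2 bound}; the direction you would need is $C_\kappa^{-2}\ge 1+\kappa/\sqrt{\pi}$, which contradicts \eqref{eq: Ck-2 bounds}. That said, this reflects a flaw in the printed statement rather than in your argument: \eqref{eq: CLk-2 bound} as stated is actually too strong (by Poisson summation $C_{L,\kappa}^{-2}\le C_\kappa^{-2}=1+2\sum_{z\ge1}e^{-\pi^2 z^2/\kappa^2}$, which for small $\kappa$ is far below $1+\kappa/\sqrt{\pi}$, so the inequality fails once $\kappa L$ is large), and the paper's own proof likewise only delivers $C_{L,\kappa}^{-2}\ge C_\kappa^{-2}-2e^{-(\kappa L/2)^2}(1+\kappa/\sqrt{\pi})$, i.e.\ a bound with $1-\kappa/\sqrt{\pi}$ in place of $1+\kappa/\sqrt{\pi}$. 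What is actually consumed downstream, in Lemma~\ref{lem: overlap gkp L varepsilon varepsilon}, is precisely the ratio bound $C_{L,\kappa}^{-2}/C_\kappa^{-2}\ge 1-2e^{-(\kappa L/2)^2}$, which your form gives directly and cleanly. So the substance of your proof is correct; you should simply state the corrected inequality (with $C_\kappa^{-2}$, or $1-\kappa/\sqrt{\pi}$, replacing $1+\kappa/\sqrt{\pi}$) rather than claiming the literal \eqref{eq: CLk-2 bound}.
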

\begin{proof}
We obtain the lower bound in Eq.~\eqref{eq: Ck-2 bounds} by using Lemma~\ref{lem: gaussian sum integral bound} (with $c=\kappa^2$):
\begin{align}
    C_{\kappa}^{-2} &= \sum_{z \in \mathbb{Z}} \eta_\kappa(z)^2 \\
    & = \frac{\kappa}{\sqrt{\pi}} \sum_{z\in\mathbb{Z}} e^{-\kappa^2 z^2}\\
    &\ge \frac{\kappa}{\sqrt{\pi}} \left( \frac{\sqrt{\pi}}{\kappa} - 1 \right) \\
    &= 1 - \frac{\kappa}{\sqrt{\pi}}\, . \label{eq: D lower bound}
\end{align}
We obtain the upper bound in Eq.~\eqref{eq: Ck-2 bounds} analogously (by using Lemma~\ref{lem: gaussian sum integral bound} (with $c=\kappa^2$)):
\begin{align}
    C_{\kappa}^{-2} &\le 1 + \frac{\kappa}{\sqrt{\pi}}\, . \label{eq: D upper bound}
\end{align}

We get the upper bound on $C_{\kappa,\Delta}^{-2}$ in Eq.~\eqref{eq: CkD-2 bound} by first noting that
\begin{align}
    C_{\kappa,\Delta}^{-2}
    &=\sum_{z,z' \in \mathbb{Z}}\eta_\kappa(z) \eta_\kappa(z') \langle \chi_\Delta(z), \chi_\Delta(z') \rangle\\
    &=\sum_{z \in \mathbb{Z}} \eta_\kappa(z)^2 + \sum_{z \in \mathbb{Z}} \sum_{\substack{z'\in\mathbb{Z}\\z'\neq z}} \eta_\kappa(z) \eta_\kappa(z') \langle \chi_\Delta(z), \chi_\Delta(z') \rangle\\
    &=C_{\kappa}^{-2}+J \ , \label{eq: CkL-2 = Ck-2 + J}
\end{align}
where we defined $J$ as the second term in~\eqref{eq: CkL-2 = Ck-2 + J}.
Because we  have $J\geq 0$  
(because each function~$\chi_\Delta(z)$ is non-negative), 
the second inequality in Eq.~\eqref{eq:comparisoninequalityckappakappadelta} follows.

Since $\langle \chi_\Delta(z), \chi_\Delta(z')\rangle = e^{-\frac{(z-z')^2}{4 \Delta^2}}$,  we have
\begin{align}
    J &= \sum_{z \in \mathbb{Z}} \eta_\kappa(z) \sum_{k\in\mathbb{Z}\setminus\{ 0\}} \eta_\kappa(z-k) e^{- k^2/(4\Delta^2)} \ \label{eq: offdiag terms}
\end{align}
by Lemma~\ref{lem: overlap chi z chi z'}. 
Since $\eta_\kappa(z') \le \sqrt{\kappa}/\pi^{1/4}$ for any $z'\in\mathbb{Z}$, we obtain 
\begin{align}
    J &\le \sum_{z \in \mathbb{Z}} \eta_\kappa(z) \sum_{k\in\mathbb{Z}\setminus\{0\}} \frac{\sqrt{\kappa}}{\pi^{1/4}} e^{-k^2/(2\Delta)^2} \\
    &\le  \left(\sum_{z \in \mathbb{Z}} \eta_\kappa(z) \right) \left( \frac{\sqrt{\kappa}}{\pi^{1/4}} 2\sqrt{\pi} \Delta \right) \qquad\textrm{ by Lemma~\ref{lem: gaussian sum integral bound} with $c=1/(2\Delta)^2$ \,,}\label{eq: overlap first bound}\\
    &\le \frac{\kappa}{\sqrt{\pi}}  \left(\frac{\sqrt{2 \pi}}{\kappa}+1\right) 2\sqrt{\pi} \Delta \qquad\textrm{ by Lemma~\ref{lem: gaussian sum integral bound} with $c=\kappa^2/2$\,, } \label{eq: overlap second bound}\\
    &= 2\left(\sqrt{2\pi} + \kappa\right)\Delta
    \, .\label{eq: O upper bound} 
\end{align}
We obtain the claim~\eqref{eq: CkD-2 bound} by inserting this and~\eqref{eq:cknormboundlimit} in~\eqref{eq: CkL-2 = Ck-2 + J}.
This further gives
\begin{align}
    \frac{C_{\kappa, \Delta}^{2}}{C_\kappa^{2}}=\frac{C_\kappa^{-2}}{C_{\kappa, \Delta}^{-2}}=\frac{C_\kappa^{-2}}{C_\kappa^{-2}+J}=(1+J/C_\kappa^{2})^{-1}\ge 1-J/C_\kappa^{-2}
\end{align}
by inequality $(1+x)^{-1}\ge 1-x$ for all $x>0$. Thus,
\begin{align}
    \frac{C_{\kappa, \Delta}^{2}}{C_\kappa^{2}}\ge 1-\frac{2(\sqrt{2\pi}+\kappa)\Delta}{1-\kappa/\sqrt{\pi}}\ .
\end{align}

We get the upper bound on $C_{L,\kappa}^{-2}$ in~\eqref{eq: CLk-2 bound} as follows.
By~\eqref{eq:cknormbound} and~\eqref{eq:cknormboundlimit}, we can write 
\begin{align}
    C_{L,\kappa}^{-2}=C_\kappa^{-2}-K \label{eq: CLk-2 Ck-2 K}\ ,
\end{align}
where
\begin{align}
    K&=\sum_{k=L/2+1}^\infty \eta_{\kappa}(-k)^2+\sum_{k=L/2}^\infty \eta_{\kappa}(k)^2 \ .
\end{align}
We have
\begin{align}
K&\le 2\sum_{k=L/2}^\infty \eta_\kappa(k)^2\qquad\textrm{ using $\eta_\kappa(-k)=\eta_\kappa(k)$\,,}\\
&= \frac{2\kappa}{\sqrt{\pi}}\sum_{k=L/2}^\infty e^{-\kappa^2 k^2}\\
&=\frac{2\kappa}{\sqrt{\pi}} e^{-(\kappa L/2)^2}\sum_{k=L/2}^\infty e^{-\kappa^2(k^2 - (L/2)^2)}\\
&\le \frac{2\kappa}{\sqrt{\pi}} e^{-(\kappa L/2)^2}\sum_{k=L/2}^\infty e^{-\kappa^2(k-L/2)^2} \qquad \textrm{ since $(k-a)^2\le k^2-a^2$ for $0<a\le k$\,,}\\
&= \frac{2\kappa}{\sqrt{\pi}} e^{-(\kappa L/2)^2}\sum_{r=0}^\infty e^{-\kappa^2 r^2}\label{eq: J upper} \ .
\end{align}
By Lemma~\ref{lem: gaussian sum integral bound} with $c=\kappa^2$, we have
\begin{align}
    \sum_{r=0}^\infty e^{-\kappa^2 r^2}
    &=
    \frac12 \left(1+\sum_{r=-\infty}^\infty e^{-\kappa^2 r^2}\right) \le \left(1+\frac{\sqrt{\pi}}{2\kappa}\right) \ .
\end{align}
By inserting this into Eq.~\eqref{eq: J upper}, we obtain
\begin{align}
    K &\le \frac{2\kappa}{\sqrt{\pi}} e^{-(\kappa L/2)^2} \left(1+\frac{\sqrt{\pi}}{2\kappa}\right) =e^{-(\kappa L/2)^2}\left(1+\frac{2\kappa}{\sqrt{\pi}}\right) \le 2e^{-(\kappa L/2)^2}\left(1+\frac{\kappa}{\sqrt{\pi}}\right)\ .
\end{align}
Inserting this bound and bound from~\eqref{eq: Ck-2 bounds} into Eq.~\eqref{eq: CLk-2 Ck-2 K} gives the claim~\eqref{eq: CLk-2 bound}.
\end{proof}

\begin{lemma}\label{lemma: overlap truncated gkp}
    Let $\kappa\in(0,1/4)$, $\Delta>0$ and $\varepsilon\in(0,1/2)$. Then
    \begin{align}
         \left|\left\langle \gkp_{\kappa, \Delta}, \gkp_{\kappa, \Delta}^\varepsilon \right\rangle \right|^2 \ge 1-7\Delta - 2e^{-(\varepsilon/\Delta)^2} \, .
    \end{align}
\end{lemma}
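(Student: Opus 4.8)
The plan is to expand the overlap in the Gaussian-peak basis, throw away the non-negative off-diagonal contributions, reduce to the single-peak estimate already proved in Lemma~\ref{lem: gaussian gaussian epsilon}, and then absorb the ratio of normalization constants using Lemma~\ref{lem: technical lemma Ck CkD CkL}.

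Concretely, I would first write, using the definitions~\eqref{eq:appx GKP eta chi} of $\ket{\gkp_{\kappa,\Delta}}$ and~\eqref{eq:approximategkpstatedef} of $\ket{\gkp^\varepsilon_{\kappa,\Delta}}$,
\begin{align}
\langle \gkp_{\kappa,\Delta},\gkp^\varepsilon_{\kappa,\Delta}\rangle
= C_{\kappa,\Delta}\,C_\kappa \sum_{z,z'\in\mathbb{Z}}\eta_\kappa(z)\eta_\kappa(z')\,\langle\chi_\Delta(z),\chi^\varepsilon_\Delta(z')\rangle\ .
\end{align}
Since $\eta_\kappa\ge 0$ and each $\chi_\Delta(z)(\cdot),\chi^\varepsilon_\Delta(z')(\cdot)\ge 0$, every summand is non-negative, so dropping all terms with $z\neq z'$ gives the lower bound
\begin{align}
\langle \gkp_{\kappa,\Delta},\gkp^\varepsilon_{\kappa,\Delta}\rangle
\ \ge\ C_{\kappa,\Delta}\,C_\kappa \sum_{z\in\mathbb{Z}}\eta_\kappa(z)^2\,\langle\chi_\Delta(z),\chi^\varepsilon_\Delta(z)\rangle\ .
\end{align}
By translation invariance, $\langle\chi_\Delta(z),\chi^\varepsilon_\Delta(z)\rangle=\langle\Psi_\Delta,\Psi^\varepsilon_\Delta\rangle$ for every $z$, and this quantity equals $\|\Pi_{[-\varepsilon,\varepsilon]}\Psi_\Delta\|\ge 0$ (see the proof of Lemma~\ref{lem: gaussian gaussian epsilon}); Lemma~\ref{lem: gaussian gaussian epsilon} then gives $\langle\Psi_\Delta,\Psi^\varepsilon_\Delta\rangle\ge(1-2e^{-(\varepsilon/\Delta)^2})^{1/2}$. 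Using $\sum_{z\in\mathbb{Z}}\eta_\kappa(z)^2=C_\kappa^{-2}$ (orthogonality of the truncated peaks for $\varepsilon<1/2$), this collapses to
\begin{align}
\langle \gkp_{\kappa,\Delta},\gkp^\varepsilon_{\kappa,\Delta}\rangle
\ \ge\ \frac{C_{\kappa,\Delta}}{C_\kappa}\bigl(1-2e^{-(\varepsilon/\Delta)^2}\bigr)^{1/2}\ ,
\end{align}
hence, squaring,
\begin{align}
\bigl|\langle \gkp_{\kappa,\Delta},\gkp^\varepsilon_{\kappa,\Delta}\rangle\bigr|^2
\ \ge\ \frac{C_{\kappa,\Delta}^{2}}{C_\kappa^{2}}\bigl(1-2e^{-(\varepsilon/\Delta)^2}\bigr)\ .
\end{align}

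Finally I would invoke inequality~\eqref{eq:comparisoninequalityckappakappadelta} of Lemma~\ref{lem: technical lemma Ck CkD CkL}, namely $C_{\kappa,\Delta}^{2}/C_\kappa^{2}\ge 1-\tfrac{2(\sqrt{2\pi}+\kappa)}{1-\kappa/\sqrt{\pi}}\Delta$, combine it with $(1-x)(1-y)\ge 1-x-y$ for $x,y\ge 0$, and conclude
\begin{align}
\bigl|\langle \gkp_{\kappa,\Delta},\gkp^\varepsilon_{\kappa,\Delta}\rangle\bigr|^2
\ \ge\ 1-\frac{2(\sqrt{2\pi}+\kappa)}{1-\kappa/\sqrt{\pi}}\,\Delta-2e^{-(\varepsilon/\Delta)^2}\ .
\end{align}
The only remaining point is the numerical claim that the prefactor of $\Delta$ is at most $7$ for $\kappa\in(0,1/4)$: the map $\kappa\mapsto \tfrac{2(\sqrt{2\pi}+\kappa)}{1-\kappa/\sqrt{\pi}}$ is increasing on $(0,1/4)$ (numerator increasing, denominator decreasing and positive), so it is bounded above by its value at $\kappa=1/4$, which is $<7$. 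This gives the stated bound. The proof is essentially bookkeeping; the only mildly delicate step is keeping track of the three normalization constants $C_\kappa$, $C_{\kappa,\Delta}$, $C_{L,\kappa}$ and checking the elementary constant estimate, both of which are handled by Lemma~\ref{lem: technical lemma Ck CkD CkL}, so I do not expect any real obstacle.
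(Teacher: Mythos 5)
Your proposal is correct and follows essentially the same route as the paper's proof: expand the overlap, drop the non-negative off-diagonal terms, reduce to Lemma~\ref{lem: gaussian gaussian epsilon} via $\sum_{z}\eta_\kappa(z)^2=C_\kappa^{-2}$, and control $C_{\kappa,\Delta}^2/C_\kappa^2$ with Lemma~\ref{lem: technical lemma Ck CkD CkL}, checking that the prefactor of $\Delta$ is below $7$ for $\kappa<1/4$. (The only cosmetic slip is mentioning $C_{L,\kappa}$, which plays no role here.)
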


\begin{proof}
We have
\begin{align}
    \left\langle \gkp_{\kappa, \Delta}, \gkp_{\kappa, \Delta}^\varepsilon\right 
    \rangle &= C_{\kappa, \Delta} C_\kappa \sum_{z,z' \in \mathbb{Z}} \eta_\kappa(z) \eta_\kappa(z') \langle \chi_\Delta(z), \chi_\Delta^\varepsilon(z') \rangle\\
    &\ge C_{\kappa, \Delta} C_\kappa \sum_{z \in \mathbb{Z}} \eta_\kappa(z)^2 \langle \Psi_\Delta , \Psi_\Delta^\varepsilon\rangle  \label{eq: bound overlaps trunc non trunc}\\
    &= \frac{C_{\kappa, \Delta}}{C_\kappa} \langle \Psi_\Delta , \Psi_\Delta^\varepsilon\rangle\\
    &\ge \frac{C_{\kappa, \Delta}}{C_\kappa}\left(1-2e^{-(\varepsilon/\Delta)^2}\right)^{1/2}\ ,
    \label{eq:bound overlap gkp gkp eps first}
\end{align}
where the first inequality follows from non-negativity of $\chi_\Delta(z)(\cdot)$ and $\chi_\Delta^\varepsilon(z)(\cdot)$ and $\langle \chi_\Delta(z), \chi_\Delta^\varepsilon(z) \rangle=\langle \Psi_\Delta , \Psi_\Delta^\varepsilon\rangle$ for all $z \in \mathbb{Z}$. The last inequality is from Lemma~\ref{lem: gaussian gaussian epsilon}.

We will use Lemma~\ref{lem: technical lemma Ck CkD CkL} to obtain
\begin{align}
    \frac{C_{\kappa, \Delta}^{2}}{C_\kappa^{2}}
    &\ge 
    1-\frac{2(\sqrt{2\pi}+\kappa)}{1-\kappa/\sqrt{\pi}}\Delta \ .
\end{align}
Thus, by the assumption $\kappa<1/4$, we get
\begin{align}
    \frac{C_{\kappa, \Delta}^{2}}{C_\kappa^{2}}\ge 1-7\Delta\ .
\end{align}
Inserting this into the square of~\eqref{eq:bound overlap gkp gkp eps first} and using the inequality $(1-x)(1-y) \ge 1 - x - y$ for $x,y \ge 0$ implies the claim.
\end{proof}

\begin{corollary}\label{cor: gkp gkp eps trace distance}
    Let $\kappa \in (0,1/4)$, $\Delta>0$, and $\varepsilon\in [\sqrt{\Delta}, 1/2)$. Then
    \begin{align}
            \left\| \proj{\gkp_{\kappa,\Delta}} - \proj{\gkp_{\kappa,\Delta}^{\varepsilon}} \right\|_1
            &\le 6\sqrt{\Delta}\, .
        \end{align}
\end{corollary}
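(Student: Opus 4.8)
The plan is to derive the trace-distance bound directly from the overlap estimate in Lemma~\ref{lemma: overlap truncated gkp} together with the standard identity~\eqref{eq: trace distance overlap} relating the $L^1$-distance of two pure states to their overlap. First I would invoke Lemma~\ref{lemma: overlap truncated gkp}, which gives
\begin{align}
\left|\left\langle \gkp_{\kappa,\Delta},\gkp_{\kappa,\Delta}^\varepsilon\right\rangle\right|^2 \ge 1 - 7\Delta - 2e^{-(\varepsilon/\Delta)^2}\ .
\end{align}
Then I would use the hypothesis $\varepsilon\in[\sqrt{\Delta},1/2)$ to bound the exponential term: since $\varepsilon\ge\sqrt{\Delta}$ we have $(\varepsilon/\Delta)^2\ge 1/\Delta$, hence $e^{-(\varepsilon/\Delta)^2}\le e^{-1/\Delta}\le \Delta$ by the elementary inequality $e^{-x}\le 1/x$ for $x>0$. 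Substituting, the overlap squared is at least $1 - 7\Delta - 2\Delta = 1 - 9\Delta$.

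Next I would apply the pure-state identity~\eqref{eq: trace distance overlap}, namely $\|\proj{\Psi}-\proj{\Phi}\|_1 = 2\sqrt{1-|\langle\Psi,\Phi\rangle|^2}$, with $\ket{\Psi}=\ket{\gkp_{\kappa,\Delta}}$ and $\ket{\Phi}=\ket{\gkp_{\kappa,\Delta}^\varepsilon}$. This yields
\begin{align}
\left\|\proj{\gkp_{\kappa,\Delta}}-\proj{\gkp_{\kappa,\Delta}^\varepsilon}\right\|_1 \le 2\sqrt{9\Delta} = 6\sqrt{\Delta}\ ,
\end{align}
which is exactly the claimed bound. The argument is essentially a two-line specialization of the preceding lemma, mirroring the pattern of Corollary~\ref{cor: gaussian - gaussian epsilon trace distance} and Corollary~\ref{cor: Sha - Sha epsilon trace distance}.

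I do not anticipate any genuine obstacle here: the only points requiring a small amount of care are (i) checking that the hypotheses of Lemma~\ref{lemma: overlap truncated gkp} are met (in particular $\kappa\in(0,1/4)$, which is assumed, and $\varepsilon\in(0,1/2)$, which follows from $\varepsilon\in[\sqrt{\Delta},1/2)$), and (ii) verifying that the clean numerical bound $7\Delta+2e^{-(\varepsilon/\Delta)^2}\le 9\Delta$ holds under the stated range of $\varepsilon$, which reduces to $e^{-1/\Delta}\le\Delta$. Both are routine. The main thing to state explicitly in the write-up is the substitution $\varepsilon\ge\sqrt{\Delta}\implies e^{-(\varepsilon/\Delta)^2}\le\Delta$, after which the computation is immediate.
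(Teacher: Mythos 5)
Your proposal is correct and follows exactly the paper's own proof: invoke Lemma~\ref{lemma: overlap truncated gkp}, use $\varepsilon\ge\sqrt{\Delta}$ together with $e^{-x}\le x^{-1}$ to reduce the bound to $1-9\Delta$, and conclude via the pure-state identity~\eqref{eq: trace distance overlap}. No differences worth noting.
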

\begin{proof}
    By Lemma~\ref{lemma: overlap truncated gkp} and the assumption $\varepsilon \ge \sqrt{\Delta}$, we have 
        \begin{align}
            \left|\left\langle \gkp_{\kappa, \Delta}, \gkp_{\kappa, \Delta}^\varepsilon \right\rangle \right|^2
            &\ge 1-7\Delta -2e^{-(\varepsilon/\Delta)^2}\\
            &\ge 1-7\Delta - 2e^{-1/\Delta}\\
            &\ge 1 - 9\Delta \ ,
        \end{align}
        where the last inequality follows from the inequality $e^{-x} \le x^{-1}$ for $x>0$.
        Using the relation between the trace distance and the overlap (cf.~Eq.~\eqref{eq: trace distance overlap}) gives the claim
        \begin{align}
            \left\| \proj{\gkp_{\kappa,\Delta}} - \proj{\gkp_{\kappa,\Delta}^{\varepsilon}} \right\|_1
            &\le 2\sqrt{9\Delta } = 6\sqrt{\Delta}\ . 
        \end{align}
\end{proof}

\begin{lemma} \label{lem: overlap gkp L varepsilon varepsilon}
Let $\kappa\in(0,1/4)$, $\Delta>0$, $\varepsilon \in(0,1/2)$ and $L \in 2\mathbb{N}$.
Then
\begin{align}
\abs{\langle \gkp_{L,\kappa,\Delta}^{\varepsilon},\gkp_{\kappa,\Delta}^\varepsilon\rangle}^2 &\ge 1-2e^{-\kappa^2 L^2/4}\ .
\end{align}
\end{lemma}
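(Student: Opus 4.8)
The statement compares two truncated-peak approximate GKP states: $\ket{\gkp_{\kappa,\Delta}^\varepsilon}$ (full envelope sum over $z\in\mathbb{Z}$) and $\ket{\gkp_{L,\kappa,\Delta}^\varepsilon}$ (envelope sum restricted to $z\in\{-L/2,\ldots,L/2-1\}$). Since both states are built from the \emph{same} orthonormal family $\{\ket{\chi_\Delta^\varepsilon(z)}\}_{z\in\mathbb{Z}}$ (orthonormal because $\varepsilon<1/2$), the overlap collapses to a single sum over the shared index range. The plan is to compute this overlap directly and bound the "missing mass" of the envelope tails using the Gaussian tail estimates already available (Lemma~\ref{lem: gaussian sum integral bound} and the normalization bounds in Lemma~\ref{lem: technical lemma Ck CkD CkL}).

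\textbf{Step 1: Reduce the overlap to a ratio of normalization constants.} Using orthonormality of the $\ket{\chi_\Delta^\varepsilon(z)}$, I would write
\begin{align}
\langle \gkp_{L,\kappa,\Delta}^{\varepsilon},\gkp_{\kappa,\Delta}^\varepsilon\rangle
= C_{L,\kappa}C_\kappa \sum_{z=-L/2}^{L/2-1}\eta_\kappa(z)^2
= C_{L,\kappa}C_\kappa\cdot C_{L,\kappa}^{-2}
= \frac{C_\kappa}{C_{L,\kappa}}\ ,
\end{align}
where I used $\sum_{z=-L/2}^{L/2-1}\eta_\kappa(z)^2 = C_{L,\kappa}^{-2}$ from Eq.~\eqref{eq:cknormboundlimit}. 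Hence $\abs{\langle \gkp_{L,\kappa,\Delta}^{\varepsilon},\gkp_{\kappa,\Delta}^\varepsilon\rangle}^2 = C_\kappa^2/C_{L,\kappa}^2 = C_{L,\kappa}^{-2}/C_\kappa^{-2}$.

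\textbf{Step 2: Bound the ratio using the tail of the envelope.} From Eq.~\eqref{eq: CLk-2 Ck-2 K} in the proof of Lemma~\ref{lem: technical lemma Ck CkD CkL}, $C_{L,\kappa}^{-2} = C_\kappa^{-2} - K$ where $K = \sum_{k\ge L/2+1}\eta_\kappa(-k)^2 + \sum_{k\ge L/2}\eta_\kappa(k)^2$ is the tail mass. Therefore
\begin{align}
\frac{C_{L,\kappa}^{-2}}{C_\kappa^{-2}} = 1 - \frac{K}{C_\kappa^{-2}}\ .
\end{align}
It remains to show $K/C_\kappa^{-2} \le 2e^{-\kappa^2L^2/4}$. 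The proof of Lemma~\ref{lem: technical lemma Ck CkD CkL} already establishes $K \le 2e^{-(\kappa L/2)^2}\bigl(1+\kappa/\sqrt{\pi}\bigr)$ and $C_\kappa^{-2}\ge 1-\kappa/\sqrt{\pi}$; a cruder and cleaner route is to observe directly that $C_\kappa^{-2} = \sum_{z\in\mathbb{Z}}\eta_\kappa(z)^2 \ge 1$ (the $z=0$ term alone is $\eta_\kappa(0)^2 = \kappa/\sqrt{\pi}$\textemdash wait, this is not $\ge 1$ in general), so more carefully I would note $C_\kappa^{-2}\ge \sum_{z=-L/2}^{L/2-1}\eta_\kappa(z)^2 = C_{L,\kappa}^{-2}$, giving $K/C_\kappa^{-2} \le K/C_{L,\kappa}^{-2}$; alternatively, simply combine $K \le 2e^{-(\kappa L/2)^2}(1+\kappa/\sqrt{\pi})$ with $C_\kappa^{-2}\ge 1-\kappa/\sqrt{\pi}$ and check that $(1+\kappa/\sqrt\pi)/(1-\kappa/\sqrt\pi)\le 2$ for $\kappa<1/4$ (since $\kappa/\sqrt\pi < 1/4 < 1/3$). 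This yields $\abs{\langle\cdot,\cdot\rangle}^2 \ge 1 - 2e^{-\kappa^2L^2/4}$, which is the claim.

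\textbf{Main obstacle.} There is essentially no hard step here\textemdash the result is a short computation once one recognizes that the shared orthonormal basis makes the overlap equal to $C_\kappa/C_{L,\kappa}$. The only point requiring a little care is getting the constant in front of $e^{-\kappa^2L^2/4}$ right: one must verify that the elementary bound $K/C_\kappa^{-2}\le 2e^{-\kappa^2L^2/4}$ holds under the hypothesis $\kappa<1/4$, which reduces to checking $(1+\kappa/\sqrt\pi)/(1-\kappa/\sqrt\pi)\le 2$, true since $\kappa/\sqrt\pi<1/3$. All the Gaussian-tail arithmetic (the bound on $K$) is already done inside Lemma~\ref{lem: technical lemma Ck CkD CkL}, so I would just cite it.
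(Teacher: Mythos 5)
Your route is essentially the paper's: you reduce the overlap to $C_\kappa/C_{L,\kappa}$ via orthonormality of the truncated peaks exactly as the paper does, and you then control $C_{L,\kappa}^{-2}/C_\kappa^{-2}$ through the decomposition $C_{L,\kappa}^{-2}=C_\kappa^{-2}-K$ and the Gaussian tail estimates inside Lemma~\ref{lem: technical lemma Ck CkD CkL}; the paper instead invokes the packaged inequality \eqref{eq: CLk-2 bound} together with the upper bound in \eqref{eq: Ck-2 bounds}, so that the factor $1+\kappa/\sqrt{\pi}$ cancels and the bound $1-2e^{-\kappa^2L^2/4}$ comes out directly.

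The one genuine issue is the constant in your final step. Combining $K\le 2e^{-\kappa^2L^2/4}\bigl(1+\kappa/\sqrt{\pi}\bigr)$ with $C_\kappa^{-2}\ge 1-\kappa/\sqrt{\pi}$ gives
\begin{align}
\frac{K}{C_\kappa^{-2}}\le 2e^{-\kappa^2L^2/4}\cdot\frac{1+\kappa/\sqrt{\pi}}{1-\kappa/\sqrt{\pi}}\ ,
\end{align}
and bounding the ratio by $2$ (as you do) only yields $\abs{\langle\gkp_{L,\kappa,\Delta}^{\varepsilon},\gkp_{\kappa,\Delta}^\varepsilon\rangle}^2\ge 1-4e^{-\kappa^2L^2/4}$, which is weaker than the stated claim; to obtain the factor $2$ along this exact chain you would need the ratio to be at most $1$, which fails for every $\kappa>0$. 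The fix is minor: use the sharper intermediate estimate appearing in the proof of Lemma~\ref{lem: technical lemma Ck CkD CkL}, namely $K\le e^{-\kappa^2L^2/4}\bigl(1+2\kappa/\sqrt{\pi}\bigr)$, and then check that $\bigl(1+2\kappa/\sqrt{\pi}\bigr)/\bigl(1-\kappa/\sqrt{\pi}\bigr)\le 2$, which is equivalent to $\kappa/\sqrt{\pi}\le 1/4$ and hence holds for $\kappa<1/4$. With that replacement your argument gives exactly the claimed bound, and otherwise matches the paper's proof step for step.
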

\begin{proof}
By the orthogonality of the states~$\{\ket{\chi^\varepsilon_\Delta(z)}\}_{z\in\mathbb{Z}}$ for $\varepsilon<1/2$, and by~\eqref{eq:cknormboundlimit}, we have
\begin{align}
\langle \gkp_{L,\kappa,\Delta}^{\varepsilon},\gkp_{\kappa,\Delta}^\varepsilon\rangle
&=C_{\kappa}C_{L,\kappa}\sum_{k=-L/2}^{L/2-1}
\eta_{\kappa}(k)^2 = \frac{C_\kappa}{C_{L,\kappa}}\ .
\end{align}
Thus, Lemma~\ref{lem: technical lemma Ck CkD CkL} gives the claim
\begin{align}
    \frac{C_\kappa^{2}}{C_{L,\kappa}^{2}}=\frac{C_{L,\kappa}^{-2}}{C_\kappa^{-2}}\ge\frac{\left(1 - 2e^{-(\kappa L/2)^2}\right)\cdot\left(1 + \kappa/\sqrt{\pi}\right)}{1 + \kappa/\sqrt{\pi}}=1 - 2e^{-(\kappa L/2)^2}\ .
\end{align}
\end{proof}

\begin{corollary} \label{cor: overlap gkp L varepsilon varepsilon tr dist}
Assume $\kappa\in(0,1/4)$ and $L \in 2\mathbb{N}$.
We have
\begin{align}
    \left\|\proj{\gkp_{L,\kappa,\Delta}^{\varepsilon}}-\proj{\gkp_{\kappa,\Delta}^\varepsilon}\right\|_1 &\le 3 e^{-\kappa^2 L^2/8} \ .
\end{align}
\end{corollary}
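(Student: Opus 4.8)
The plan is to derive this directly from Lemma~\ref{lem: overlap gkp L varepsilon varepsilon} together with the standard identity~\eqref{eq: trace distance overlap} relating the trace distance of two pure states to their overlap. This is a routine corollary, so I do not anticipate any genuine obstacle; the only thing to check is that the numerical constant works out.

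Concretely, I would first invoke Lemma~\ref{lem: overlap gkp L varepsilon varepsilon}, which under the stated hypotheses ($\kappa\in(0,1/4)$, $L\in 2\mathbb{N}$, and for any $\Delta>0$, $\varepsilon\in(0,1/2)$) gives
\begin{align}
    \abs{\langle \gkp_{L,\kappa,\Delta}^{\varepsilon},\gkp_{\kappa,\Delta}^\varepsilon\rangle}^2 &\ge 1-2e^{-\kappa^2 L^2/4}\ .
\end{align}
Then I would apply Eq.~\eqref{eq: trace distance overlap}, i.e., $\|\proj{\Psi}-\proj{\Phi}\|_1 = 2\sqrt{1-|\langle\Psi,\Phi\rangle|^2}$, with $\ket{\Psi}=\ket{\gkp_{L,\kappa,\Delta}^{\varepsilon}}$ and $\ket{\Phi}=\ket{\gkp_{\kappa,\Delta}^\varepsilon}$, to get
\begin{align}
    \left\|\proj{\gkp_{L,\kappa,\Delta}^{\varepsilon}}-\proj{\gkp_{\kappa,\Delta}^\varepsilon}\right\|_1 &= 2\sqrt{1-\abs{\langle \gkp_{L,\kappa,\Delta}^{\varepsilon},\gkp_{\kappa,\Delta}^\varepsilon\rangle}^2}\le 2\sqrt{2e^{-\kappa^2 L^2/4}}=2\sqrt2\,e^{-\kappa^2 L^2/8}\ .
\end{align}

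Finally I would note that $2\sqrt2 \le 3$, which yields the claimed bound $\left\|\proj{\gkp_{L,\kappa,\Delta}^{\varepsilon}}-\proj{\gkp_{\kappa,\Delta}^\varepsilon}\right\|_1 \le 3 e^{-\kappa^2 L^2/8}$. The one minor point worth flagging is that Lemma~\ref{lem: overlap gkp L varepsilon varepsilon} is stated for $\Delta>0$ and $\varepsilon\in(0,1/2)$, so I would keep those (implicit) hypotheses in force; they are exactly the ones under which the truncated states $\ket{\gkp_{L,\kappa,\Delta}^\varepsilon}$ and $\ket{\gkp_{\kappa,\Delta}^\varepsilon}$ are well-defined and the peaks $\{\ket{\chi^\varepsilon_\Delta(z)}\}_{z\in\mathbb{Z}}$ are pairwise orthogonal, which is what made the overlap computation in Lemma~\ref{lem: overlap gkp L varepsilon varepsilon} collapse to the ratio $C_\kappa/C_{L,\kappa}$.
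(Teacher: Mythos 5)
Your proof is correct and matches the paper's own argument: the paper likewise combines Lemma~\ref{lem: overlap gkp L varepsilon varepsilon} with the overlap--trace-distance relation~\eqref{eq: trace distance overlap} and bounds $2\sqrt{2}\,e^{-\kappa^2L^2/8}\le 3e^{-\kappa^2L^2/8}$. Your remark about keeping the implicit hypotheses $\Delta>0$, $\varepsilon\in(0,1/2)$ in force is a fair (and harmless) clarification of the corollary's statement.
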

\begin{proof}
    By Lemma~\ref{lem: overlap gkp L varepsilon varepsilon} and the relation between the trace distance and the overlap (cf.~Eq.~\eqref{eq: trace distance overlap}), we have
    \begin{align}
        \left\|\proj{\gkp_{L,\kappa,\Delta}^{\varepsilon}}-\proj{\gkp_{\kappa,\Delta}^\varepsilon}\right\|_1 &\le 2\sqrt{2e^{-\kappa^2 L^2/4}}\le 3 e^{-\kappa^2 L^2/8}\ . 
    \end{align}
\end{proof}

\begin{lemma} \label{lem: gkp S_P}
    Let $\kappa \in (0,1/4)$, $\Delta>0$ and $\varepsilon\in(0,1/2)$. Then
\begin{align}
    \left|\langle \gkp_{\kappa, \Delta}^\varepsilon,e^{-iP}\gkp_{\kappa, \Delta}^\varepsilon\rangle\right|^2 \ge 1 - 4\kappa \ .
\end{align} 
\end{lemma}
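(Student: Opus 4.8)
The statement asserts that the approximate GKP state with truncated peaks, $\ket{\gkp_{\kappa,\Delta}^\varepsilon}$, is an approximate $+1$-eigenstate of the stabilizer $S_P = e^{-iP}$. The plan is to compute the overlap $\langle \gkp_{\kappa,\Delta}^\varepsilon, e^{-iP}\gkp_{\kappa,\Delta}^\varepsilon\rangle$ directly by expanding both copies in the basis $\{\ket{\chi_\Delta^\varepsilon(z)}\}_{z\in\mathbb{Z}}$ and exploiting the fact that $e^{-iP}$ is the unit translation operator, so $e^{-iP}\ket{\chi_\Delta^\varepsilon(z)} = \ket{\chi_\Delta^\varepsilon(z+1)}$. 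Since $\varepsilon < 1/2$, the translated peaks $\ket{\chi_\Delta^\varepsilon(z+1)}$ are again orthonormal and aligned with the original lattice, so the double sum collapses to a single sum:
\begin{align}
\langle \gkp_{\kappa, \Delta}^\varepsilon, e^{-iP}\gkp_{\kappa, \Delta}^\varepsilon\rangle = C_\kappa^2 \sum_{z\in\mathbb{Z}} \eta_\kappa(z)\eta_\kappa(z+1)\ .
\end{align}
This is manifestly real and positive.

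\textbf{Key steps.} First I would establish the collapse of the double sum using orthonormality of the shifted truncated peaks (this is the same mechanism used in Lemmas~\ref{lem: doubling sha state} and~\ref{lem: overlap gkp L varepsilon varepsilon}). Second, I would lower-bound $\sum_{z}\eta_\kappa(z)\eta_\kappa(z+1)$ in terms of $\sum_z \eta_\kappa(z)^2 = C_\kappa^{-2}$. The natural tool is the elementary inequality $\eta_\kappa(z)\eta_\kappa(z+1) = \tfrac{\sqrt\kappa}{\sqrt\pi} e^{-\kappa^2(z^2+(z+1)^2)/2}$; completing the square gives $z^2 + (z+1)^2 = 2(z+1/2)^2 + 1/2$, so
\begin{align}
\sum_{z\in\mathbb{Z}} \eta_\kappa(z)\eta_\kappa(z+1) = \frac{\sqrt\kappa}{\sqrt\pi}\, e^{-\kappa^2/4} \sum_{z\in\mathbb{Z}} e^{-\kappa^2(z+1/2)^2}\ .
\end{align}
Then I would invoke the bound~\eqref{eq: bound gaussian shift 1/2} of Lemma~\ref{lem: gaussian sum integral bound} (with $c = \kappa^2$) to get $\sum_z e^{-\kappa^2(z+1/2)^2} \ge \sqrt{\pi}/\kappa - 1$, and also use $e^{-\kappa^2/4} \ge 1 - \kappa^2/4$. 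Combining, $\sum_z \eta_\kappa(z)\eta_\kappa(z+1) \ge (1-\kappa^2/4)(1 - \kappa/\sqrt\pi)$. Finally, using $C_\kappa^{-2} \le 1 + \kappa/\sqrt\pi$ from Lemma~\ref{lem: technical lemma Ck CkD CkL}, I would conclude $\langle \gkp_{\kappa,\Delta}^\varepsilon, e^{-iP}\gkp_{\kappa,\Delta}^\varepsilon\rangle \ge (1-\kappa^2/4)(1-\kappa/\sqrt\pi)/(1+\kappa/\sqrt\pi)$, and squaring (the quantity being nonnegative) plus a crude estimate using $\kappa < 1/4$ should yield the claimed $1 - 4\kappa$ after expanding $(1-x)/(1+x) \ge 1 - 2x$ and collecting first-order terms in $\kappa$.

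\textbf{Main obstacle.} No step is genuinely hard; the only delicate point is the final arithmetic bookkeeping — tracking the various $O(\kappa)$ and $O(\kappa^2)$ error terms so that after squaring everything fits under the clean bound $1-4\kappa$ for all $\kappa \in (0,1/4)$. The factor $(1-\kappa/\sqrt\pi)/(1+\kappa/\sqrt\pi)$ contributes roughly $-2\kappa/\sqrt\pi \approx -1.13\kappa$, squaring doubles the linear coefficient to about $-2.26\kappa$, the $e^{-\kappa^2/4}$ factor and its square add lower-order corrections, and one must verify the $\kappa^2$ remainder terms are dominated by the remaining slack down to $4\kappa$; since $4\kappa$ is quite generous this should go through with room to spare, but it requires care to state cleanly. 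I would also note in passing that the bound is independent of $\varepsilon$ and $\Delta$, which is expected since $e^{-iP}$ acts purely by shifting the (exactly lattice-aligned) peaks.
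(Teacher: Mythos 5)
Your proposal follows essentially the same route as the paper's proof: collapsing the double sum via orthogonality of the truncated peaks, rewriting $\eta_\kappa(z)\eta_\kappa(z\pm 1)=e^{-\kappa^2/4}\eta_\kappa(z\mp 1/2)^2$ (your completing-the-square step is the same identity), invoking the half-integer Gaussian sum bound and the bound $C_\kappa^{-2}\le 1+\kappa/\sqrt\pi$, and finishing with $(1-x)/(1+x)\ge 1-2x$, $e^{-\kappa^2/4}\ge 1-\kappa^2/4$, $\kappa<1/4$, and squaring. The only blemish is the prefactor typo $\sqrt\kappa/\sqrt\pi$ in place of $\kappa/\sqrt\pi$, which your subsequent combination implicitly corrects, so the argument is sound as stated.
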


\begin{proof}
From pairwise orthogonality of the states $\{\ket{\chi_\Delta^\varepsilon(z)}\}_{z\in \mathbb{Z}}$ and $e^{-iP}\ket{\chi_{\Delta}^{\varepsilon}(z)}=\ket{\chi_{\Delta}^{\varepsilon}(z-1)}$, we have
\begin{align}
 \langle \gkp_{\kappa, \Delta}^\varepsilon,e^{-iP}\gkp_{\kappa, \Delta}^\varepsilon\rangle
 &= C_\kappa ^{2} \sum_{z,z'\in\mathbb{Z}}\eta_\kappa(z)\eta_\kappa(z')
 \langle \chi_\Delta^\varepsilon(z),\chi_\Delta^\varepsilon(z'-1)\rangle\\
 &= C_\kappa ^{2} \sum_{z\in\mathbb{Z}}\eta_\kappa(z)\eta_\kappa(z-1)\\
 &= C_\kappa ^{2} e^{-\kappa^2/4} \sum_{z\in\mathbb{Z}}\eta_\kappa(z-1/2)^2 \ ,\label{eq: appx GKP shift}
\end{align}
where the last step is by the definition of $\eta_\kappa$ (cf. Eq.~\eqref{eq:etakappadefinition}) that implies
\begin{align}
    \eta_\kappa(z)\eta_\kappa(z-1)&=\frac{\kappa}{\sqrt{\pi}}e^{-\kappa^2z^2/2}e^{-\kappa^2(z-1)^2/2}\\
    &=\frac{\kappa}{\sqrt{\pi}}e^{-\frac{\kappa^2}{2}\left(2(z-1/2)^2+1/2\right)}\\
    &=e^{-\kappa^2/4}\eta_\kappa(z-1/2)^2\ .
\end{align}
By definition of $\eta_\kappa$ and by using Lemma \ref{lem: gaussian sum integral bound} (with $c=\kappa^2$), we have
\begin{align}
    \sum_{z\in\mathbb{Z}}\eta_\kappa(z-1/2)^2  &\ge \frac{\kappa}{\sqrt{\pi}}\left(\frac{\sqrt{\pi}}{\kappa} -1\right) = 1-\frac{\kappa}{\sqrt{\pi}}\ .
\end{align}
Combining Eq.~\eqref{eq: appx GKP shift}, the upper bound on $C_{\kappa}^{-2}$ from Lemma~\ref{lem: technical lemma Ck CkD CkL} gives
\begin{align} 
    \langle \gkp_{\kappa, \Delta}^\varepsilon,e^{-iP}\gkp_{\kappa, \Delta}^\varepsilon\rangle
    &\ge e^{-\kappa^2/4} \frac{1- \kappa/\sqrt{\pi}}{1 + \kappa/\sqrt{\pi}}\\
    &\ge e^{-\kappa^2/4} \left(1 -2\frac{\kappa}{\sqrt{\pi}}\right) \qquad \textrm{by $\frac{1-x}{1+x} \ge 1-2x$ for $x>0$\,,}\\
    &\ge \left(1-\frac{1}{4}\kappa^2
    \right)\left(1 -\frac{2}{\sqrt{\pi}}\kappa\right)\qquad\textrm{ by $e^{-x} \ge 1 - x$\,,}\\
    &\ge 1-\frac{1}{4}\kappa^2 -\frac{2}{\sqrt{\pi}}\kappa\\
    &\ge 1-2\kappa \ ,
\end{align}
where we used the inequality $(1-x)(1-y) \ge 1- x- y$ for $x,y \ge 0$ on the penultimate line and the assumption $\kappa \le 1/4$ on the last line. The claim follows by $(1-x)^2\ge1-2x$ for all $x\in\mathbb{R}$.
\end{proof}

\begin{lemma} \label{lem: gkp S_Q}
     Let $\varepsilon\in(0,1/2)$ and $\kappa >0$. Then,
    \begin{align}
        \left|\langle \gkp_{\kappa, \Delta}^\varepsilon,e^{2\pi iQ}\gkp_{\kappa, \Delta}^\varepsilon\rangle\right|^2 \ge 1 - 40\varepsilon^2
    \end{align}
\end{lemma}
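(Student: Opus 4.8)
The plan is to compute the overlap $\langle \gkp_{\kappa,\Delta}^\varepsilon, e^{2\pi i Q}\gkp_{\kappa,\Delta}^\varepsilon\rangle$ directly by expanding the definition of $\ket{\gkp_{\kappa,\Delta}^\varepsilon}$ as a weighted superposition of truncated translated Gaussians $\ket{\chi_\Delta^\varepsilon(z)}$ with envelope weights $\eta_\kappa(z)$. The key observation is that $e^{2\pi i Q}$ acts as a multiplication operator by $e^{2\pi i x}$ in position space, so $\langle \chi_\Delta^\varepsilon(z'), e^{2\pi i Q}\chi_\Delta^\varepsilon(z)\rangle$ vanishes unless $z=z'$ (since $\varepsilon<1/2$ means the truncated peaks at distinct integers have disjoint support), exactly as in Lemma~\ref{lem: e^iQ vs e^iz}. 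This collapses the double sum to a single sum $C_\kappa^2 \sum_{z\in\mathbb{Z}} \eta_\kappa(z)^2 \langle \chi_\Delta^\varepsilon(z), e^{2\pi i Q}\chi_\Delta^\varepsilon(z)\rangle$.

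Next I would evaluate the diagonal term. After the substitution $x\mapsto x+z$ and using that the phase $e^{2\pi i z}=1$ for integer $z$, we get $\langle \chi_\Delta^\varepsilon(z), e^{2\pi i Q}\chi_\Delta^\varepsilon(z)\rangle = \int_{-\varepsilon}^\varepsilon |\Psi_\Delta^\varepsilon(x)|^2 e^{2\pi i x}\,dx = \int_{-\varepsilon}^\varepsilon |\Psi_\Delta^\varepsilon(x)|^2 \cos(2\pi x)\,dx$, where the imaginary part drops because $|\Psi_\Delta^\varepsilon|^2$ is even and $\sin$ is odd. This is independent of $z$, so it factors out of the sum, and since $C_\kappa^2\sum_z \eta_\kappa(z)^2 = 1$ by the normalization~\eqref{eq:cknormbound}, we obtain simply $\langle \gkp_{\kappa,\Delta}^\varepsilon, e^{2\pi i Q}\gkp_{\kappa,\Delta}^\varepsilon\rangle = \int_{-\varepsilon}^\varepsilon |\Psi_\Delta^\varepsilon(x)|^2 \cos(2\pi x)\,dx$, which is in particular real and non-negative (for $\varepsilon<1/2$, $\cos(2\pi x)$ can be negative on part of the range, so one should be slightly careful; but we only need a lower bound).

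For the lower bound, since $\cos$ is even and decreasing on $[0,\pi/2]$, and $\varepsilon<1/2$ gives $2\pi\varepsilon < \pi$, I would bound $\cos(2\pi x)\ge \cos(2\pi\varepsilon)$ for $|x|\le\varepsilon$, hence the integral is at least $\cos(2\pi\varepsilon)\int_{-\varepsilon}^\varepsilon |\Psi_\Delta^\varepsilon(x)|^2\,dx = \cos(2\pi\varepsilon)$ using $\|\Psi_\Delta^\varepsilon\|=1$. Then the elementary inequality $\cos(t)\ge 1 - t^2/2$ gives $\cos(2\pi\varepsilon)\ge 1 - 2\pi^2\varepsilon^2 \ge 1 - 20\varepsilon^2$. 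Squaring and using $(1-x)^2\ge 1-2x$ yields $|\langle \gkp_{\kappa,\Delta}^\varepsilon, e^{2\pi i Q}\gkp_{\kappa,\Delta}^\varepsilon\rangle|^2 \ge 1 - 40\varepsilon^2$, which is the claim. (Here one could also simply square $1-20\varepsilon^2$ directly and note $(1-20\varepsilon^2)^2 = 1 - 40\varepsilon^2 + 400\varepsilon^4 \ge 1-40\varepsilon^2$.)

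There is essentially no serious obstacle here — the proof closely mirrors Lemma~\ref{lem: e^iQ vs e^iz}, the main (trivial) point being to track that the extra factor of $2\pi$ in the exponent changes the constant from $5\varepsilon^2$ to $20\varepsilon^2$ (via $2\pi^2\varepsilon^2$ instead of $\pi^2\varepsilon^2/2$), and then the doubling from squaring the overlap. The only mild care needed is checking that the sign convention for $S_Q = e^{2\pi i Q}$ matches and that $2\pi\varepsilon$ stays below $\pi$ so the monotonicity bound on $\cos$ is valid, both of which follow from $\varepsilon\in(0,1/2)$.
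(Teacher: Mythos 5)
Your proposal is correct and follows essentially the same route as the paper's proof: collapse the double sum via disjoint supports of the truncated peaks, reduce the diagonal term to $\int_{-\varepsilon}^{\varepsilon}|\Psi_\Delta^\varepsilon(x)|^2\cos(2\pi x)\,dx$, bound it by $\cos(2\pi\varepsilon)\ge 1-20\varepsilon^2$, use the normalization $C_\kappa^2\sum_z\eta_\kappa(z)^2=1$, and finish with $(1-x)^2\ge 1-2x$. No gaps worth noting.
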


\begin{proof}
Since for $\varepsilon<1/2$ and $z,z'\in\mathbb{Z}$, $\chi_\Delta^\varepsilon(z)$ and $\chi_\Delta^\varepsilon(z')$, respectively $e^{2\pi i Q}\chi_\Delta^\varepsilon(z')$, have disjoint support for $z\neq z'$, we have
    \begin{align}
    \left\langle \chi_\Delta^\varepsilon(z'),  e^{2\pi i Q} \chi_\Delta^\varepsilon(z)\right\rangle &=\delta_{z,z'} \int_{z-\varepsilon}^{z+\varepsilon} \overline{\chi_\Delta^\varepsilon(z)(x)}e^{2\pi i x}\chi_\Delta^\varepsilon(z)(x) dx \\
    &= \delta_{z,z'} \int_{z-\varepsilon}^{z+\varepsilon}\overline{\Psi_\Delta^\varepsilon(x-z)} e^{2\pi i x} \Psi_\Delta^\varepsilon(x-z) dx \\
    &= \delta_{z,z'} \int_{-\varepsilon}^{\varepsilon} |\Psi_\Delta^\varepsilon(x)|^2 e^{2\pi i x} dx\\
    &= \delta_{z,z'} \int_{-\varepsilon}^{\varepsilon} |\Psi_\Delta^\varepsilon(x)|^2 \cos(2\pi x) dx \label{eq: symmetry cos}\ ,
\end{align}
where Eq.~\eqref{eq: symmetry cos} follows from $\Psi_\Delta^\varepsilon(-x) = \Psi_\Delta^\varepsilon(x)$ and the fact that sinus is an odd function.
Moreover, since cosine is an even function monotonously increasing on the interval $[-\pi, 0]$ and monotonously decreasing on the interval $[0,\pi]$, we get
\begin{align}
    \int_{-\varepsilon}^{\varepsilon} |\Psi_\Delta^\varepsilon(x)|^2 \cos(2\pi x) dx &\ge \cos(2\pi \varepsilon) \int_{-\varepsilon}^{\varepsilon} |\Psi_\Delta^\varepsilon(x)|^2  dx\\
    &=  \cos(2\pi \varepsilon) \\
    &\ge 1 - 20\varepsilon^2 \ ,
\end{align}
where we used the bound $\cos x \ge 1 - x^2/2$ for all $x\in\mathbb{R}$ to obtain the last inequality.
Therefore, we have
\begin{align}
    \langle \gkp_{\kappa, \Delta}^\varepsilon,e^{2\pi iQ}\gkp_{\kappa, \Delta}^\varepsilon\rangle &= C_\kappa^2 \sum_{z,z' \in \mathbb{Z}} \eta_\kappa(z') \eta_\kappa(z)  \left\langle \chi_\Delta^\varepsilon(z'),  e^{2\pi i Q} \chi_\Delta^\varepsilon(z)\right\rangle\\
    &\ge C_\kappa^2 \sum_{z\in\mathbb{Z}} \eta_\kappa(z)^2 (1 - 20 \varepsilon^2)\\
    &=C_\kappa^2 C_\kappa^{-2} (1 - 20 \varepsilon^2)\\
    &= 1 - 20 \varepsilon^2\ .
\end{align}
where we used the identity for $C_\kappa^{-2}$ from Eq.~\eqref{eq:cknormbound} on the penultimate line. The claim follows using $(1-x)^2\ge 1-2x$ for $x\in \mathbb{R}$.
\end{proof}

\begin{lemma} \label{lem: overlap gkp vareps proj pos}
Let $\kappa\in (0,1/4)$, $\Delta\in (0,1/4)$ and $\varepsilon\in (0,1/2)$. Then, 
\begin{align}
\bra{\gkp_{\kappa,\Delta}^\varepsilon}\Pi_{[-R, R]}\ket{\gkp_{\kappa,\Delta}^\varepsilon} \le 4\kappa R + 10\kappa \qquad\textrm{for any}\qquad R>0\ .
\end{align}
\end{lemma}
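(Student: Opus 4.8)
The plan is to diagonalize the quadratic form $\bra{\gkp_{\kappa,\Delta}^\varepsilon}\Pi_{[-R,R]}\ket{\gkp_{\kappa,\Delta}^\varepsilon}$ by exploiting the disjoint-support structure of the truncated peaks, and then to estimate the resulting sum crudely by (number of surviving terms)$\times$(uniform per-term bound).

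First I would recall that $\gkp_{\kappa,\Delta}^\varepsilon = C_\kappa\sum_{z\in\mathbb{Z}}\eta_\kappa(z)\,\chi_\Delta^\varepsilon(z)$, that each $\chi_\Delta^\varepsilon(z)$ is a unit vector supported on $[z-\varepsilon,z+\varepsilon]$, and that these supports are pairwise disjoint since $\varepsilon<1/2$. The projection $\Pi_{[-R,R]}$ acts as multiplication by the indicator function of $[-R,R]$, so the vectors $\Pi_{[-R,R]}\chi_\Delta^\varepsilon(z)$ still have pairwise disjoint supports and are therefore pairwise orthogonal. Using $\bra{\Psi}\Pi\ket{\Psi}=\|\Pi\Psi\|^2$ for a projection $\Pi$ and expanding, all cross terms vanish, giving
\[
\bra{\gkp_{\kappa,\Delta}^\varepsilon}\Pi_{[-R,R]}\ket{\gkp_{\kappa,\Delta}^\varepsilon}
= C_\kappa^2\sum_{z\in\mathbb{Z}}\eta_\kappa(z)^2\,\big\|\Pi_{[-R,R]}\chi_\Delta^\varepsilon(z)\big\|^2 .
\]

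Then I would bound the summand. One has $\|\Pi_{[-R,R]}\chi_\Delta^\varepsilon(z)\|^2\le\|\chi_\Delta^\varepsilon(z)\|^2=1$, and this quantity vanishes unless $[z-\varepsilon,z+\varepsilon]$ meets $[-R,R]$, i.e.\ unless $|z|\le R+\varepsilon<R+1/2$; the number of integers $z$ with this property is at most $2R+2$. Moreover $\eta_\kappa(z)^2=\tfrac{\kappa}{\sqrt\pi}e^{-\kappa^2z^2}\le\tfrac{\kappa}{\sqrt\pi}$, and Lemma~\ref{lem: technical lemma Ck CkD CkL} gives $C_\kappa^{-2}\ge 1-\kappa/\sqrt\pi$, so $C_\kappa^2\le(1-\kappa/\sqrt\pi)^{-1}\le 2$ using $\kappa<1/4$. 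Combining these estimates,
\[
\bra{\gkp_{\kappa,\Delta}^\varepsilon}\Pi_{[-R,R]}\ket{\gkp_{\kappa,\Delta}^\varepsilon}
\le 2\cdot(2R+2)\cdot\frac{\kappa}{\sqrt\pi}
= \frac{4\kappa}{\sqrt\pi}(R+1)
\le 4\kappa R+10\kappa ,
\]
since $1/\sqrt\pi<1$, which is the claim. The only step needing care is the bookkeeping that $\Pi_{[-R,R]}$ preserves the disjointness of the peak supports (so the quadratic form genuinely diagonalizes) and that only order $R$ peaks lie inside the window $[-R,R]$; everything else is a one-line estimate, so I do not expect a real obstacle.
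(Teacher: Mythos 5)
Your proposal is correct and follows essentially the same route as the paper's proof: diagonalize the quadratic form using the pairwise-disjoint supports of the truncated peaks, keep only the $O(R)$ integers $z$ whose peaks meet $[-R,R]$, bound each term by $\eta_\kappa(z)^2\le \kappa/\sqrt{\pi}$, and control $C_\kappa^2$ via the lower bound on $C_\kappa^{-2}$ from Lemma~\ref{lem: technical lemma Ck CkD CkL}. The only differences are cosmetic bookkeeping of constants (e.g.\ your $C_\kappa^2\le 2$ versus the paper carrying $(1-\kappa/\sqrt{\pi})^{-1}$), and both land within the stated bound $4\kappa R+10\kappa$.
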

\begin{proof}
    Recall that the support of $\chi^\varepsilon_\Delta(z)$ is contained in the interval $[z-\varepsilon, z+\varepsilon]$ for all $z\in \mathbb{Z}$. Therefore, by definition of the state $\ket{\gkp_{\kappa,\Delta}^\varepsilon}$, we have
\begin{align}
    \bra{\gkp_{\kappa,\Delta}^\varepsilon}\Pi_{[-R,R]}\ket{\gkp_{\kappa,\Delta}^\varepsilon} &= C_\kappa^2 \sum_{z\in \mathbb{Z}} \eta_\kappa(z)^2 \int_{-R}^{R} \chi_\Delta^\varepsilon(z)^2(x) dx \\
    &\le C_\kappa^2 \sum_{z=-\ceil{R+1/2}}^{\ceil{R+1/2}}\eta_\kappa(z)^2 \\
    &\le C_\kappa^2 \sum_{z=-\ceil{R+1/2}}^{\ceil{R+1/2}} \kappa \qquad\textrm{since $\eta_\kappa(z)^2 \le \kappa/\sqrt{\pi}<\kappa$\,,}\\
    &\le C_\kappa^2 \left(2\kappa R+ 5\kappa\right)\label{eq: gkp hat eps overlap first}\ ,
\end{align}
where we used that $\ceil{R+1/2}\le R + 2 $ to obtain the last inequality.
Combining~\eqref{eq: gkp hat eps overlap first} and the lower bound on $C_{\kappa}^{-2}$ from Lemma~\eqref{lem: technical lemma Ck CkD CkL}, we have
\begin{align}
\bra{\gkp_{\kappa,\Delta}^\varepsilon}\Pi_{[-R,R]}\ket{\gkp_{\kappa,\Delta}^\varepsilon} \le  \left(1 - \frac{\kappa}{\sqrt{\pi}}\right)^{-1}\cdot \left(2\kappa R + 5\kappa\right)\le 3 \kappa R + 6\kappa\, ,
\end{align}
where we used $\kappa<1/4$ in the last step.
\end{proof}

\subsubsection{Bounds on ``point-wise'' approximate GKP states}

Recall the ``point-wise'' GKP state (cf.~\eqref{eq:postmeasurementstateyoutcomedef}):
\begin{align}
\tGKP^\varepsilon_{L,\kappa,\Delta}(x)
    &= D^\varepsilon_{L,\kappa,\Delta}\sum_{z=-L/2}^{L/2-1} 
    \eta_\kappa(x) \chi^\varepsilon_\Delta(z)(x)\ ,
\end{align}
where $D^\varepsilon_{L,\kappa,\Delta}$ is a normalization factor.

It is convenient to define
\begin{align}
    I_k(y)&:=
    \int
    \eta_\kappa(x-y)^2\chi^\varepsilon_\Delta(k)(x)^2 dx \label{eq:Ikdefinition}\\
    I'_k&:=
    \int \eta_\kappa(x)\eta_\kappa(k)\chi^\varepsilon_\Delta(k)(x)^2 dx\label{eq:Iprimekdefinition}\ 
\end{align}
for $k\in\mathbb{N}$ and $y\in\mathbb{R}$.

We prove the following relations between the ``point-wise'' GKP states and the ``peak-wise'' GKP states.
\begin{lemma}\label{lem:approximatetgkpLgkp} Let $\kappa>0$, $\Delta>0$ and $\varepsilon \in (0,1/2)$. Then,
\begin{align}
    \abs{\langle \tGKP^\varepsilon_{L,\kappa,\Delta},\gkp^\varepsilon_{L,\kappa,\Delta}\rangle}^2
    &\ge 1-4\kappa^2\varepsilon L\ .
    \end{align}
\end{lemma}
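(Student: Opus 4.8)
The plan is to evaluate the overlap $\langle \tGKP^\varepsilon_{L,\kappa,\Delta},\gkp^\varepsilon_{L,\kappa,\Delta}\rangle$ in closed form and then estimate the resulting ratio of sums. First I would exploit that the truncated peaks $\{\chi^\varepsilon_\Delta(z)\}_{z}$ have pairwise disjoint supports $[z-\varepsilon,z+\varepsilon]$ (since $\varepsilon<1/2$), a property unaffected by pointwise multiplication with the envelope $\eta_\kappa$. Hence every off-diagonal ($z\neq z'$) term in the overlap vanishes, and, recalling $I'_k=\eta_\kappa(k)\int\eta_\kappa(x)\chi^\varepsilon_\Delta(k)(x)^2\,dx$ from~\eqref{eq:Iprimekdefinition}, one obtains
\[
\langle \tGKP^\varepsilon_{L,\kappa,\Delta},\gkp^\varepsilon_{L,\kappa,\Delta}\rangle
= D^\varepsilon_{L,\kappa,\Delta}\,C_{L,\kappa}\sum_{k=-L/2}^{L/2-1} I'_k\,.
\]
The same orthogonality argument gives $(D^\varepsilon_{L,\kappa,\Delta})^{-2}=\sum_k I_k(0)$ with $I_k(0)=\int\eta_\kappa(x)^2\chi^\varepsilon_\Delta(k)(x)^2\,dx$ (cf.~\eqref{eq:Ikdefinition}) and $(C_{L,\kappa})^{-2}=\sum_k\eta_\kappa(k)^2$ by~\eqref{eq:cknormboundlimit}. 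Since all quantities involved are non-negative, the overlap equals the positive real number $\big(\sum_k I'_k\big)\big/\big(\sqrt{\sum_k I_k(0)}\,\sqrt{\sum_k\eta_\kappa(k)^2}\big)$.

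The heart of the argument is a pointwise comparison of $\eta_\kappa(x)$ with $\eta_\kappa(k)$ on the support of $\chi^\varepsilon_\Delta(k)$. Writing $\eta_\kappa(x)=\eta_\kappa(k)\exp\!\big(-\tfrac{\kappa^2}{2}(x^2-k^2)\big)$ and using that for $|x-k|\le\varepsilon$ and $|k|\le L/2$ one has $|x^2-k^2|=|x-k|\,|x+k|\le\varepsilon(L+\varepsilon)$, I get on that support the bounds $\eta_\kappa(x)\ge\eta_\kappa(k)\,e^{-\kappa^2\varepsilon(L+\varepsilon)/2}$ and $\eta_\kappa(x)^2\le\eta_\kappa(k)^2\,e^{\kappa^2\varepsilon(L+\varepsilon)}$. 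Integrating against $\chi^\varepsilon_\Delta(k)^2$, which has unit $L^2$-norm, yields $I'_k\ge e^{-\kappa^2\varepsilon(L+\varepsilon)/2}\,\eta_\kappa(k)^2$ and $I_k(0)\le e^{\kappa^2\varepsilon(L+\varepsilon)}\,\eta_\kappa(k)^2$. Summing over $k$ sandwiches both $\sum_k I'_k$ and $\sum_k I_k(0)$ against $\sum_k\eta_\kappa(k)^2$, so the overlap is at least $e^{-\kappa^2\varepsilon(L+\varepsilon)/2}\big/e^{\kappa^2\varepsilon(L+\varepsilon)/2}=e^{-\kappa^2\varepsilon(L+\varepsilon)}$.

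Finally I would square this lower bound, use $\varepsilon<1/2<L$ (recall $L\in 2\mathbb{N}$, so the defining sums make sense and $L+\varepsilon<2L$) to get $\abs{\langle\tGKP^\varepsilon_{L,\kappa,\Delta},\gkp^\varepsilon_{L,\kappa,\Delta}\rangle}^2\ge e^{-4\kappa^2\varepsilon L}$, and apply $e^{-t}\ge 1-t$ to obtain the claimed bound $1-4\kappa^2\varepsilon L$. There is no serious obstacle here; the only points requiring a little care are bookkeeping the two different envelope evaluations (the point-wise state carries the factor $\eta_\kappa(x)$ while the peak-wise state carries $\eta_\kappa(z)$, which is precisely why $I'_k$ — and not $I_k(0)$ — appears in the numerator) and checking that the crude estimate $|x^2-k^2|\le\varepsilon(L+\varepsilon)$ is sufficient to land on the constant $4$ in the statement.
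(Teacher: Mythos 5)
Your proof is correct and follows essentially the same approach as the paper: the disjoint supports of the truncated peaks reduce the overlap to the diagonal quantities $\sum_k I'_k$, $\sum_k I_k(0)$, $\sum_k \eta_\kappa(k)^2$, and the estimate rests on the same pointwise comparison of the Gaussian envelope across each peak support (via $|x^2-k^2|\le \varepsilon(L+\varepsilon)$ for $|x-k|\le\varepsilon$, $|k|\le L/2$). The only difference is organizational: the paper factors the overlap as $\bigl(\sum_k I_k(0)/\sum_k\eta_\kappa(k)^2\bigr)^{1/2}\cdot\bigl(\sum_k I'_k/\sum_k I_k(0)\bigr)$ and invokes Lemmas~\ref{lem:ikzerobound} and~\ref{lem:Ikprimeikzerocomparison}, whereas you compare $\sum_k I'_k$ and $\sum_k I_k(0)$ directly to $\sum_k\eta_\kappa(k)^2$ (upper- rather than lower-bounding $\sum_k I_k(0)$), which is the same Gaussian-ratio estimate in slightly different packaging and lands comfortably within the stated constant $4$.
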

\begin{proof}
We note that the normalization constant~$D^\varepsilon_{L,\kappa,\Delta}$ for $\ket{\tGKP^\varepsilon_{L,\kappa,\Delta}}$ is 
\begin{align}
D^\varepsilon_{L,\kappa,\Delta}&=
\left(
\sum_{k_1=-L/2}^{L/2-1}
\sum_{k_2=-L/2}^{L/2-1}
M_{k_1,k_2}\right)^{-1/2}\ ,
\end{align}
where 
\begin{align}
M_{k_1,k_2}&:=\int \eta_\kappa(x)^2 \chi^\varepsilon_\Delta(k_1)(x)\chi^\varepsilon_\Delta(k_2)(x)
 dx\ .
\end{align}
Because of the factor 
$\chi^\varepsilon_\Delta(k_1)(x)\chi^\varepsilon_\Delta(k_2)$, $k_1,k_2$ being integers and $\varepsilon<1/2$
we have $M_{k_1,k_2}=0$ unless $k_1=k_2$. Furthermore, we have $M_{k,k}=I_k(0)$ thus 
\begin{align}
D^{\varepsilon}_{L,\kappa,\Delta}&=\left(\sum_{k=-L/2}^{L/2-1}I_k(0)\right)^{-1/2}\ .\label{eq:normexpressiondlkappadleta}
\end{align}
We have
\begin{align}
\langle \tGKP^\varepsilon_{L,\kappa,\Delta},\gkp^\varepsilon_{L,\kappa,\Delta}\rangle&=
C_{L,\kappa} D^{\varepsilon}_{L,\kappa,\Delta}
\sum_{k_1=-L/2}^{L/2-1} \sum_{k_2=-L/2}^{L/2-1}I'_{k_1,k_2}\ ,
\end{align}
where
\begin{align}
I'_{k_1,k_2}&=\int
\eta_\kappa(x)\eta_\kappa(k_2)\chi^\varepsilon_{\Delta}(k_1)(x)\chi^\varepsilon_\Delta(k_2)(x)dx\ .
\end{align}
The integral $I'_{k_1,k_2}$ vanishes unless $k_1=k_2$ because of the term~$\chi^\varepsilon_{\Delta}(k_1)(x)\chi^\varepsilon_\Delta(k_2)(x)$, and it follows that
\begin{align}
\langle \tGKP^\varepsilon_{L,\kappa,\Delta},\gkp^\varepsilon_{L,\kappa,\Delta}\rangle&=
C_{L,\kappa} D^{\varepsilon}_{L,\kappa,\Delta}\sum_{k=-L/2}^{L/2-1}
I'_k\ , \label{eq:innterproductgkpconvgkp}
\end{align}
where $I'_k=I'_{k,k}$ (see the definition~\eqref{eq:Iprimekdefinition}). 

Combining~\eqref{eq:innterproductgkpconvgkp} with~\eqref{eq:normexpressiondlkappadleta} and~\eqref{eq:cknormbound}
gives
\begin{align}
\langle \tGKP^\varepsilon_{L,\kappa,\Delta},\gkp^\varepsilon_{L,\kappa,\Delta}\rangle&=
\left(\sum_{k=-L/2}^{L/2-1} \eta_\kappa(k)^2\right)^{-1/2}
\cdot \left(\sum_{k=-L/2}^{L/2-1}I_k(0)\right)^{-1/2}\cdot \sum_{k=-L/2}^{L/2-1}
I'_k\\
&=
\left(
\frac{\sum_{k=-L/2}^{L/2-1}I_k(0)}{\sum_{k=-L/2}^{L/2-1}\eta_\kappa(k)^2}
\right)^{1/2}
\left(\frac{\sum_{k=-L/2}^{L/2-1}I_k'}{\sum_{k=-L/2}^{L/2-1}I_k(0)}\right)\ ,
\end{align}
By using Lemmas~\ref{lem:ikzerobound} and~\ref{lem:Ikprimeikzerocomparison} , we obtain
\begin{align}
    \langle \tGKP^\varepsilon_{L,\kappa,\Delta},\gkp^\varepsilon_{L,\kappa,\Delta}\rangle
    &\ge (1-2\kappa^2\varepsilon L)^{1/2}(1-\kappa^2\varepsilon L/2)\\
    &\ge 1-2\kappa^2\varepsilon L\ ,
\end{align}
where we used the inequality $(1-2x)^{1/2}(1-x/2)\ge 1-2x$ for $0\le x \le 1/2$.
The claim follows using $(1-x)^2\ge 1-2x$ for all $x\in \mathbb{R}$.
\end{proof}

\begin{corollary} \label{cor: approximate tgkpL GKP trace dist} Let $\kappa\in(0,1/4)$, $\Delta\in (0,1/4)$, and $\varepsilon\in(0,1/2)$. Then,
\begin{align}
    \left\|\proj{\tGKP^\varepsilon_{L,\kappa,\Delta}}-\proj{\gkp^\varepsilon_{L,\kappa,\Delta}}\right\|_1 
    &\le 4\kappa\sqrt{\varepsilon L} \qquad \textrm{ and } \label{eq:L1-dist tgkp GKP 1} \\
    \left\|\proj{\tGKP^\varepsilon_{L,\kappa,\Delta}}~~-~\proj{\gkp^\varepsilon_{\kappa,\Delta}}\right\|_1
    &\le 4\kappa\sqrt{\varepsilon L} + 3e^{-\kappa^2L^2/8}\ . \label{eq:L1-dist tgkp GKP 2} 
\end{align}
Furthermore if $\varepsilon\in[\sqrt{\Delta},1/2)$, we have
\begin{align}
    \left\|\proj{\tGKP^\varepsilon_{L,\kappa,\Delta}}-\proj{\gkp_{\kappa,\Delta}}\right\|_1
    &\le 3\kappa\sqrt{L} +6\sqrt{\Delta} + 3e^{-\kappa^2L^2/8}\ . \label{eq:L1-dist tgkp GKP 3} 
\end{align}
\end{corollary}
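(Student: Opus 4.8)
The plan is to obtain all three inequalities by converting the overlap estimate of Lemma~\ref{lem:approximatetgkpLgkp} into a trace-distance bound via the pure-state identity~\eqref{eq: trace distance overlap}, and then chaining in two previously established neighbours using the triangle inequality for $\|\cdot\|_1$.

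First, for~\eqref{eq:L1-dist tgkp GKP 1}: Lemma~\ref{lem:approximatetgkpLgkp} gives $\abs{\langle \tGKP^\varepsilon_{L,\kappa,\Delta},\gkp^\varepsilon_{L,\kappa,\Delta}\rangle}^2 \ge 1-4\kappa^2\varepsilon L$, so the identity $\|\proj{\Psi}-\proj{\Phi}\|_1 = 2\sqrt{1-|\langle\Psi,\Phi\rangle|^2}$ from~\eqref{eq: trace distance overlap} immediately yields $\|\proj{\tGKP^\varepsilon_{L,\kappa,\Delta}}-\proj{\gkp^\varepsilon_{L,\kappa,\Delta}}\|_1 \le 2\sqrt{4\kappa^2\varepsilon L} = 4\kappa\sqrt{\varepsilon L}$. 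Second, for~\eqref{eq:L1-dist tgkp GKP 2}: I add the bound $\|\proj{\gkp_{L,\kappa,\Delta}^{\varepsilon}}-\proj{\gkp_{\kappa,\Delta}^\varepsilon}\|_1 \le 3 e^{-\kappa^2 L^2/8}$ of Corollary~\ref{cor: overlap gkp L varepsilon varepsilon tr dist} to the previous estimate by the triangle inequality, obtaining $\|\proj{\tGKP^\varepsilon_{L,\kappa,\Delta}}-\proj{\gkp^\varepsilon_{\kappa,\Delta}}\|_1 \le 4\kappa\sqrt{\varepsilon L} + 3e^{-\kappa^2 L^2/8}$. Third, for~\eqref{eq:L1-dist tgkp GKP 3}, under the stronger hypothesis $\varepsilon\in[\sqrt{\Delta},1/2)$: I add Corollary~\ref{cor: gkp gkp eps trace distance}, namely $\|\proj{\gkp_{\kappa,\Delta}} - \proj{\gkp_{\kappa,\Delta}^{\varepsilon}}\|_1 \le 6\sqrt{\Delta}$, again via the triangle inequality, and finally simplify $4\kappa\sqrt{\varepsilon L}\le 4\kappa\sqrt{L/2}=2\sqrt{2}\,\kappa\sqrt{L}\le 3\kappa\sqrt{L}$ using $\varepsilon<1/2$, so that the resulting bound $3\kappa\sqrt{L} + 6\sqrt{\Delta} + 3e^{-\kappa^2 L^2/8}$ no longer references $\varepsilon$.

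There is essentially no obstacle here: the corollary is a bookkeeping assembly, with all the analytic work already carried out in Lemma~\ref{lem:approximatetgkpLgkp} (and in the overlap lemmas feeding Corollaries~\ref{cor: overlap gkp L varepsilon varepsilon tr dist} and~\ref{cor: gkp gkp eps trace distance}). The only points requiring minor attention are (i) tracking which hypotheses on $\varepsilon$ are needed in each part—only $\varepsilon\in(0,1/2)$ for the first two, with $\varepsilon\ge\sqrt{\Delta}$ added in the third so that Corollary~\ref{cor: gkp gkp eps trace distance} is applicable—and (ii) the trivial monotonicity estimate $\sqrt{\varepsilon}<1/\sqrt{2}$ used to present the final bound in a form independent of $\varepsilon$.
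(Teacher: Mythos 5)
Your proposal is correct and follows essentially the same route as the paper: convert the overlap bound of Lemma~\ref{lem:approximatetgkpLgkp} into a trace-distance bound via Eq.~\eqref{eq: trace distance overlap}, then chain in Corollary~\ref{cor: overlap gkp L varepsilon varepsilon tr dist} and Corollary~\ref{cor: gkp gkp eps trace distance} by the triangle inequality, using $\varepsilon<1/2$ to absorb $4\kappa\sqrt{\varepsilon L}$ into $3\kappa\sqrt{L}$. Your final bookkeeping is in fact slightly cleaner than the paper's own displayed chain (which carries a stray $4e^{-\kappa^2L^2/8}$ instead of $3e^{-\kappa^2L^2/8}$), and it matches the stated bound exactly.
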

\begin{proof}
     Eq.~\eqref{eq:L1-dist tgkp GKP 1} is an immediate corollary of Lemma~\ref{lem:approximatetgkpLgkp} and the relation between the trace distance and the overlap (cf.~Eq.~\eqref{eq: trace distance overlap}).

     Eq.~\eqref{eq:L1-dist tgkp GKP 2} follows by application of the triangle inequality to the Eq.~\eqref{eq:L1-dist tgkp GKP 1} and the bound in Corollary~\ref{cor: overlap gkp L varepsilon varepsilon tr dist}.

     Eq.~\eqref{eq:L1-dist tgkp GKP 3} is obtained by application of the triangle inequality to the Eq.~\eqref{eq:L1-dist tgkp GKP 2} and the bound in Corollary~\ref{cor: gkp gkp eps trace distance}. We have
     \begin{align}
        \left\|\proj{\tGKP^\varepsilon_{L,\kappa,\Delta}}-\proj{\gkp_{\kappa,\Delta}}\right\|_1
        &\le 4\kappa\sqrt{\varepsilon L} +6\sqrt{\Delta} + 4e^{-\kappa^2L^2/8}\\
    &\le 3\kappa\sqrt{L} +6\sqrt{\Delta} + 4e^{-\kappa^2L^2/8}\ , 
    \end{align}
    where we used the assumption $\varepsilon<1/2$ to obtain the last inequality.
\end{proof}

\subsubsection{Bounds on approximate GKP states in momentum space}
In this section, we are concerned with the Fourier-transform (cf. Section~\ref{sec:energylowerbound}) of the wave function~$\ket{\gkp_{\kappa,\Delta}}\in L^2(\mathbb{R})$.
\begin{lemma} \label{lem: hat gkp(p)}
    Let $\kappa>0$ and $\Delta>0$. Then, we have
    \begin{align}
         \widehat{\gkp}_{\kappa,\Delta}(p) = 
         \sqrt{2\pi}C_{\kappa,\Delta} \sum_{z\in\mathbb{Z}} \eta_\Delta(p) \Psi_\kappa(p - 2\pi z) \, .
    \end{align}
\end{lemma}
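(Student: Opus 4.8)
The plan is to compute the Fourier transform of $\gkp_{\kappa,\Delta}(x)$ directly from its definition as a weighted sum of translated Gaussians, using only the linearity of the Fourier transform and two elementary facts: how the Fourier transform acts on a Gaussian, and how it interchanges translation and modulation.

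First I would recall from Eq.~\eqref{eq:approximategkpstate eta chi} that
\begin{align}
\gkp_{\kappa,\Delta}(x) = C_{\kappa,\Delta}\sum_{z\in\mathbb{Z}}\eta_\kappa(z)\chi_\Delta(z)(x)\ ,
\end{align}
where $\chi_\Delta(z)(x)=\Psi_\Delta(x-z)$ and $\Psi_\Delta(x)=(\pi\Delta^2)^{-1/4}e^{-x^2/(2\Delta^2)}$. Since the Fourier transform $\cF$ of Eq.~\eqref{eq: def fourier} is linear and bounded on $L^2(\mathbb{R})$, and since the sum converges in $L^2$, I can interchange $\cF$ with the sum:
\begin{align}
\widehat{\gkp}_{\kappa,\Delta}(p) = C_{\kappa,\Delta}\sum_{z\in\mathbb{Z}}\eta_\kappa(z)\,\cF(\chi_\Delta(z))(p)\ .
\end{align}
Then I would use the translation rule $\cF(f(\cdot-z))(p)=e^{-ipz}\widehat{f}(p)$ together with the standard Gaussian Fourier transform: a short computation gives $\widehat{\Psi_\Delta}(p)=(\pi/\Delta^{-2})^{-1/4}\cdot\ldots$, which one checks equals $\Psi_{1/\Delta}(p)$ up to the normalization conventions, i.e.\ $\widehat{\Psi_\Delta}=\Psi_{\Delta^{-1}}$ (using $\frac{1}{\sqrt{2\pi}}\int e^{-x^2/(2\Delta^2)}e^{-ipx}dx = \Delta e^{-\Delta^2p^2/2}$ and matching the $(\pi\Delta^2)^{-1/4}$ prefactor). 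Writing $\widehat{\Psi_\Delta}(p) = \pi^{-1/4}\Delta^{1/2}e^{-\Delta^2 p^2/2}$, I recognize (comparing with Eq.~\eqref{eq:etakappadefinition}) that this is exactly $\eta_\Delta(p)$. Hence $\cF(\chi_\Delta(z))(p)=e^{-ipz}\eta_\Delta(p)$.

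Substituting back,
\begin{align}
\widehat{\gkp}_{\kappa,\Delta}(p) = C_{\kappa,\Delta}\,\eta_\Delta(p)\sum_{z\in\mathbb{Z}}\eta_\kappa(z)\,e^{-ipz}\ .
\end{align}
It remains to identify $\sum_{z\in\mathbb{Z}}\eta_\kappa(z)e^{-ipz}$ with $\sqrt{2\pi}\sum_{z\in\mathbb{Z}}\Psi_\kappa(p-2\pi z)$. This is the Poisson summation formula applied to the Gaussian $\eta_\kappa$: for a Schwartz function $g$ one has $\sum_{z\in\mathbb{Z}}g(z)e^{-ipz}=\sqrt{2\pi}\sum_{z\in\mathbb{Z}}\widehat{g}(p-2\pi z)$ with the convention of Eq.~\eqref{eq: def fourier}, where I must be careful about the $2\pi$ placement in the exponent $e^{-ipz}$ versus the usual $e^{-2\pi i \xi z}$. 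Concretely, I would set $g=\eta_\kappa$, note $\widehat{\eta_\kappa}(p)=\pi^{-1/4}\kappa^{-1/2}\cdot\kappa\cdot e^{-p^2/(2\kappa^2)}\cdot\ldots = \Psi_\kappa(p)$ by the same Gaussian computation as above (the roles of $\kappa$ and $\kappa^{-1}$ swap and the prefactors match to give precisely $\Psi_\kappa$), and conclude
\begin{align}
\sum_{z\in\mathbb{Z}}\eta_\kappa(z)e^{-ipz} = \sqrt{2\pi}\sum_{z\in\mathbb{Z}}\Psi_\kappa(p-2\pi z)\ .
\end{align}
Combining the last two displays yields the claim.

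The main obstacle — really the only subtle point — is bookkeeping of the Fourier-transform normalization conventions: the $\frac{1}{\sqrt{2\pi}}$ in Eq.~\eqref{eq: def fourier}, the $2\pi$ in the lattice spacing of the momentum-space grid, and the precise prefactors $(\pi\Delta^2)^{-1/4}$ in $\Psi_\Delta$ versus $\sqrt{\kappa}\pi^{-1/4}$ in $\eta_\kappa$ must all be tracked so that the Gaussian self-duality $\widehat{\Psi_\Delta}=\eta_{\Delta}$ and $\widehat{\eta_\kappa}=\Psi_\kappa$ come out with coefficient exactly $1$, and so that Poisson summation produces exactly the factor $\sqrt{2\pi}$ in front. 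I would verify these by explicitly evaluating $\frac{1}{\sqrt{2\pi}}\int_{\mathbb{R}} e^{-ax^2}e^{-ipx}\,dx = \frac{1}{\sqrt{2a}}e^{-p^2/(4a)}$ once and specializing. No deep input is needed beyond Poisson summation, which applies since $\eta_\kappa$ is a Gaussian (hence Schwartz), so both the series and its Fourier-dual series converge absolutely and uniformly on compacta.
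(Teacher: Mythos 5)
Your proposal is correct and follows essentially the same route as the paper: expand $\gkp_{\kappa,\Delta}$ as a sum of translated Gaussians, use linearity and the translation rule so that $\widehat{\Psi}_\Delta=\eta_\Delta$ factors out, and apply Poisson summation to the Gaussian sum $\sum_{z}\eta_\kappa(z)e^{-ipz}$, with $\widehat{\eta}_\kappa=\Psi_\kappa$ producing the $2\pi$-spaced momentum peaks and the factor $\sqrt{2\pi}$. The convention bookkeeping you flag is exactly the content of the paper's short computation, and your Gaussian identities check out.
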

\begin{proof}
We apply the Fourier transformation (cf. Eq.~\eqref{eq: def fourier}) to the definition of approximate GKP states in Eq.~\eqref{eq:appx GKP eta chi}. By linearity, we have
\begin{align}
    \widehat{\gkp}_{\kappa,\Delta}(p) &= C_{\kappa,\Delta} \widehat{\Psi}_\Delta(p) \sum_{z\in \mathbb{Z}} \eta_\kappa(z) e^{- i zp} \ .
    \end{align}
We will use the Poisson summation  formula (see e.g.~\cite{AntoniZygmund1935}), which states that the following holds for a Schwartz-function~$f$:
\begin{align}
    \sum_{z\in \mathbb{Z}} f(z) = \sqrt{2\pi} \sum_{z\in \mathbb{Z}} \widehat{f}(2\pi z) \ .
\end{align}
By applying it with $f(z) = \eta_\kappa(z)e^{-ipz}$, we get
\begin{align}
         \widehat{\gkp}_{\kappa,\Delta}(p)&= C_{\kappa,\Delta} \widehat{\Psi}_\Delta(p)\left( \sqrt{2\pi}\sum_{z\in \mathbb{Z}}  \widehat{\eta}_\kappa(p + 2\pi z)\right) \qquad\textrm{for}\qquad p\in \mathbb{R}\ .
\end{align}
From the definition of $\Psi_\Delta$ and $\eta_\kappa$ (see~\eqref{eq:chiDeltadefinition}  and~\eqref{eq:etakappadefinition}) along with 
\begin{align}
    \frac{1}{\sqrt{2\pi}}\int e^{-cx^2/2} e^{-ipx}dx = \frac{1}{\sqrt{a}} e^{-p^2/(2a)}\ ,
\end{align}
we obtain (by setting $a$ as $1/\Delta^2$ and $\kappa^2$, respectively) that
\begin{align}
     \widehat{\gkp}_{\kappa,\Delta}(p) &= \sqrt{2\pi} C_{\kappa,\Delta} \frac{\sqrt{\Delta}}{\sqrt{\pi \kappa}} e^{-\Delta^2 p^2/2}  \sum_{z\in \mathbb{Z}} e^{-(p+2\pi z)^2/(2\kappa^2)}\, \\
    &=\sqrt{2\pi} C_{\kappa,\Delta}\sum_{z\in\mathbb{Z}} \eta_\Delta(p) \Psi_\kappa(p - 2\pi z)\, .
\end{align}
\end{proof}
From Lemma~\ref{lem: hat gkp(p)}, we see that, in momentum space, the roles of $\kappa$ and $\Delta$ are interchanged; the spacing of the peaks is altered from $1$ to $2\pi$; and the envelope is acting ``point-wise'', similar to the wavefunction of the $\ket{\tGKP_{\kappa,\Delta}}$ state in position space.

In the remaining part of this section, we show properties of the truncated GKP state in momentum space $\ket{\gkp_{\kappa,\Delta}^{\widehat{\varepsilon}}}$ (cf.~\eqref{eq: def gkp vareps hat first}), which we recall to be defined pointwise in momentum space as
\begin{align}
\gkp_{\kappa,\Delta}^{\widehat{\varepsilon}}(p) =  D_{\kappa,\Delta}^{\varepsilon} \sum_{z\in\mathbb{Z}} \eta_\Delta(p) \chi_\kappa^\varepsilon(2\pi z)(p) \qquad\textrm{for}\qquad p \in \mathbb{R}\ .
\end{align}

It is convenient to first bound the quantity $(D_{\kappa,\Delta}^{\varepsilon})^{-2}$.
\begin{lemma} \label{eq:DkD ep 2 bound} Let $\varepsilon\in(0,1/2)$. Then
\begin{align}
    \frac{1}{2\pi}(1+6\sqrt{\pi}\Delta) \ge (D_{\kappa,\Delta}^{\varepsilon})^{-2} \ge \frac{1}{2\pi}(1-6\sqrt{\pi}\Delta)\ . \label{eq:bounds DkD ep}
\end{align}
\end{lemma}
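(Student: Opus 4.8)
The plan is to compute $(D_{\kappa,\Delta}^\varepsilon)^{-2}=\bigl\|\sum_{z}\eta_\Delta(\cdot)\,\chi_\kappa^\varepsilon(2\pi z)(\cdot)\bigr\|^2$ directly from the momentum-space formula~\eqref{eq: def gkp varhat sec}. Since $\chi_\kappa^\varepsilon(2\pi z)(p)=\Psi_\kappa^\varepsilon(p-2\pi z)$ is supported in $[2\pi z-\varepsilon,2\pi z+\varepsilon]$ and $\varepsilon<1/2<\pi$, the peaks for distinct $z$ have disjoint supports, so all cross terms vanish. Substituting $p=q+2\pi z$, using $\|\Psi_\kappa^\varepsilon\|_2=1$, and interchanging sum and integral (Tonelli, the integrand being nonnegative) gives
\begin{align}
(D_{\kappa,\Delta}^\varepsilon)^{-2}=\sum_{z\in\mathbb Z}\int_{-\varepsilon}^{\varepsilon}\eta_\Delta(q+2\pi z)^2\,\Psi_\kappa^\varepsilon(q)^2\,dq=:\sum_{z\in\mathbb Z}I_z\ .
\end{align}
Note in particular that $\kappa$ has dropped out, consistent with the $\kappa$-independence of the claimed bound. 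Each $I_z$ is a probability-weighted average (with weight $\Psi_\kappa^\varepsilon(q)^2\,dq$) of the values of $\eta_\Delta(\cdot)^2$ on the interval $[2\pi z-\varepsilon,2\pi z+\varepsilon]$.

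Next I would compare $I_z$ to $a_z:=\tfrac1{2\pi}\int_{-\pi}^{\pi}\eta_\Delta(q+2\pi z)^2\,dq$, the plain (uniform) average of $\eta_\Delta(\cdot)^2$ over $J_z:=[2\pi z-\pi,2\pi z+\pi]$. The intervals $\{J_z\}_{z\in\mathbb Z}$ tile $\mathbb R$ up to a null set, hence $\sum_z a_z=\tfrac1{2\pi}\int_{\mathbb R}\eta_\Delta(q)^2\,dq=\tfrac1{2\pi}$, since $\eta_\Delta$ is $L^2$-normalized. Moreover $[2\pi z-\varepsilon,2\pi z+\varepsilon]\subseteq J_z$, so both $I_z$ and $a_z$ lie between $\inf_{J_z}\eta_\Delta(\cdot)^2$ and $\sup_{J_z}\eta_\Delta(\cdot)^2$; therefore $|I_z-a_z|\le\mathrm{osc}_z:=\sup_{J_z}\eta_\Delta(\cdot)^2-\inf_{J_z}\eta_\Delta(\cdot)^2$, and consequently $\bigl|(D_{\kappa,\Delta}^\varepsilon)^{-2}-\tfrac1{2\pi}\bigr|\le\sum_z\mathrm{osc}_z$.

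Finally I would bound $\sum_z\mathrm{osc}_z$ by telescoping, using that $\eta_\Delta(\cdot)^2$ is even and strictly decreasing on $(0,\infty)$: for $z\ge1$ one has $\mathrm{osc}_z=\eta_\Delta(2\pi z-\pi)^2-\eta_\Delta(2\pi z+\pi)^2$, so $\sum_{z\ge1}\mathrm{osc}_z=\eta_\Delta(\pi)^2$; symmetrically $\sum_{z\le-1}\mathrm{osc}_z=\eta_\Delta(\pi)^2$; and $\mathrm{osc}_0=\eta_\Delta(0)^2-\eta_\Delta(\pi)^2$. Adding, $\sum_z\mathrm{osc}_z=\eta_\Delta(0)^2+\eta_\Delta(\pi)^2\le 2\eta_\Delta(0)^2=\tfrac{2\Delta}{\sqrt\pi}$, whence
\begin{align}
\left|(D_{\kappa,\Delta}^\varepsilon)^{-2}-\frac1{2\pi}\right|\le\frac{2\Delta}{\sqrt\pi}=\frac1{2\pi}\cdot4\sqrt\pi\,\Delta\le\frac1{2\pi}\cdot6\sqrt\pi\,\Delta\ ,
\end{align}
which gives exactly the asserted pair of inequalities, with a small margin to spare. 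There is no genuine obstacle here beyond careful bookkeeping: the only points requiring a moment's thought are setting up the averaging comparison so that the factors of $2\pi$ land correctly, and noticing that the oscillation sum telescopes.
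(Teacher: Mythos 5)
Your argument is correct, and it takes a genuinely different route from the paper. The paper, after the same reduction to a single sum over disjoint peaks, bounds each term by freezing the envelope at the extreme point of the small interval, namely $\eta_\Delta(p)\geq\eta_\Delta(2\pi|z|+\varepsilon)$ (respectively $\leq\eta_\Delta(2\pi|z|-\varepsilon)$ for $z\neq 0$ and $\leq\eta_\Delta(0)$ for $z=0$), and then controls the resulting sums $\sum_z e^{-(2\pi\Delta)^2(|z|\pm\varepsilon/(2\pi))^2}$ via the Gaussian sum-versus-integral comparison of Lemma~\ref{lem: gaussian sum integral bound}. You instead compare the peak-weighted average $I_z$ directly to the uniform average of $\eta_\Delta(\cdot)^2$ over the full period cell $J_z=[2\pi z-\pi,2\pi z+\pi]$, use that the cells tile $\mathbb{R}$ and $\|\eta_\Delta\|_2=1$ to identify $\sum_z a_z=\tfrac{1}{2\pi}$ exactly, and bound the total error by the summed oscillation, which telescopes by evenness and unimodality to $\eta_\Delta(0)^2+\eta_\Delta(\pi)^2\leq 2\Delta/\sqrt{\pi}$. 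This buys you a self-contained argument (no auxiliary Gaussian-sum lemma), a slightly sharper constant ($4\sqrt{\pi}\Delta$ in place of $6\sqrt{\pi}\Delta$), and it only uses that $\eta_\Delta^2$ is even, unimodal, normalized and vanishing at infinity, so it would generalize beyond Gaussian envelopes; the paper's route, by contrast, reuses machinery (Lemma~\ref{lem: gaussian sum integral bound}) that it needs elsewhere anyway. One cosmetic remark: $\kappa$ does not literally ``drop out'' of $I_z$ (the weight $\Psi_\kappa^\varepsilon(q)^2\,dq$ still depends on $\kappa$); what matters, and what you in fact use, is only that this weight is a probability measure supported in $[-\varepsilon,\varepsilon]$, so the final bounds are $\kappa$-independent.
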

\begin{proof}
We have that
    \begin{align}
    (D_{\kappa,\Delta}^{\varepsilon})^{-2}&= \sum_{z,z'\in\mathbb{Z}}\int_{-\infty}^\infty  \eta_\Delta(p)^2 \chi_\kappa^\varepsilon(2\pi z)(p) \chi_\kappa^\varepsilon(2\pi z')(p)dp\\
    &=\sum_{z\in\mathbb{Z}}\int_{2\pi z - \varepsilon}^{2\pi z +\varepsilon}  \eta_\Delta(p)^2 \chi_\kappa^\varepsilon(2\pi z)(p)^2dp\, , \label{eq: DkD ep first equality}
    \end{align}
where we used that the functions $\chi^\varepsilon_\kappa(2\pi z)$ and  $\chi^\varepsilon_\kappa(2\pi z')$ with $\varepsilon\in(0,1/2)$ have disjoint support for  $z\neq z'$. Note that 
\begin{align}
    \eta_\Delta(p) \ge \eta_\Delta(2\pi |z|+\varepsilon) \qquad \textrm{for \quad $p \in [2\pi z -\varepsilon, 2\pi z+ \varepsilon]$}\, . \label{eq: eta kappa monotonicity}
\end{align}
By this and the expression $\eta_\Delta(x)^2=\Delta/\sqrt{\pi} \cdot e^{-\Delta^2 x^2}$, we have
\begin{align}
    (D_{\kappa,\Delta}^{\varepsilon})^{-2}& \ge \sum_{z\in\mathbb{Z}} \eta_\Delta(2\pi |z|+\varepsilon)^2  = \frac{1}{2\pi}\sum_{z\in\mathbb{Z}} \eta_{2\pi\Delta}(|z|+\varepsilon/(2\pi))^2 \, .
\end{align}
Lemma~\ref{lem: gaussian sum integral bound} (with $c=(2\pi\Delta)^2$) and the assumption $\varepsilon\in(0,1/2)$ imply the lower bound of the claim~\eqref{eq:bounds DkD ep}:
\begin{align}
    (D_{\kappa,\Delta}^{\varepsilon})^{-2} \ge 
    \frac{1}{2\pi} \frac{2\pi\Delta}{\sqrt{\pi}}
    \left(\frac{\sqrt{\pi}}{2\pi \Delta}-3\right)
    &=\frac{1}{2\pi}\left(1-6\sqrt{\pi}\Delta\right) \ . 
\end{align}

We obtain the upper bound in the claim~\eqref{eq:bounds DkD ep} analogously as follows.
Similarly to~\eqref{eq: eta kappa monotonicity}, we have
\begin{align}
    \eta_\Delta(p)\le \eta_\Delta(2\pi |z|-\varepsilon) \qquad \textrm{for $p\in[2\pi z-\varepsilon,2\pi z+\varepsilon], z\in  \mathbb{Z}\setminus\{0\}$}\ .
\end{align}
By this, by~\eqref{eq: DkD ep first equality}, by the fact that $\eta_\Delta(p)\le\eta_\Delta(0)$ for all $p$ , and by the definition of $\eta_\Delta(\cdot)^2$, we have
\begin{align}
    (D_{\kappa,\Delta}^{\varepsilon})^{-2}
    & \le \eta_{\Delta}(0)^2 +  \sum_{z\in\mathbb{Z}\setminus\{0\}} \eta_\Delta(2\pi |z|-\varepsilon)^2  = \eta_{\Delta}(0)^2 + \frac{1}{2\pi}\sum_{z\in\mathbb{Z}\setminus\{0\}} \eta_{2\pi\Delta}(|z|-\varepsilon/(2\pi))^2 \, .\label{eq: D const bound one}
\end{align}
Therefore, by the bound on this sum from Lemma~\ref{lem: gaussian sum integral bound}
(with $c=(2\pi\Delta)^2$) and by the definition of $\eta_\Delta(\cdot)^2$, we have
\begin{align}
    (D_{\kappa,\Delta}^{\varepsilon})^{-2}
    \le \eta_{\Delta}(0)^2+\frac{1}{2\pi}\frac{2\pi\Delta}{\sqrt{\pi}}\left( \frac{\sqrt{\pi}}{2\pi\Delta}+2 \right) 
    =\frac{1}{2\pi}(1+6\sqrt{\pi}\Delta) \ .
\end{align}
\end{proof}

\begin{lemma} \label{lem: Omega proj mom}
    Let $\kappa\in (0,1/4)$, $\Delta\in (0,1/100)$ and $\varepsilon\in (0,1/2)$. Then,
\begin{align}
\langle \gkp_{\kappa,\Delta}^{\widehat{\varepsilon}}|\widehat{\Pi}_{[-R,R]}|\gkp_{\kappa,\Delta}^{\widehat{\varepsilon}} \rangle \le 2 \Delta R + 12 \Delta
\end{align}
for any $R>0$.
\end{lemma}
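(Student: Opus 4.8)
The plan is to evaluate $\langle \gkp_{\kappa,\Delta}^{\widehat{\varepsilon}}|\widehat{\Pi}_{[-R,R]}|\gkp_{\kappa,\Delta}^{\widehat{\varepsilon}} \rangle$ directly as a momentum-space integral and then reduce it to a finite sum of Gaussian values that can be controlled crudely. Since $\widehat{\Pi}_{[-R,R]}=\cF^\dagger\Pi_{[-R,R]}\cF$ and the momentum-space wavefunction of $\ket{\gkp_{\kappa,\Delta}^{\widehat{\varepsilon}}}$ is the expression in Eq.~\eqref{eq: def gkp varhat sec}, we first obtain
\begin{align}
    \langle \gkp_{\kappa,\Delta}^{\widehat{\varepsilon}}|\widehat{\Pi}_{[-R,R]}|\gkp_{\kappa,\Delta}^{\widehat{\varepsilon}} \rangle
    =(D_{\kappa,\Delta}^{\varepsilon})^2\int_{-R}^{R}\eta_\Delta(p)^2\Big(\sum_{z\in\mathbb{Z}}\chi_\kappa^\varepsilon(2\pi z)(p)\Big)^{2}\,dp\ .
\end{align}
Because $\varepsilon<1/2<\pi$, the functions $\chi_\kappa^\varepsilon(2\pi z)$, whose support is contained in $[2\pi z-\varepsilon,2\pi z+\varepsilon]$, have pairwise disjoint supports; hence the square of the sum is the sum of the squares, and only those integers $z$ with $[2\pi z-\varepsilon,2\pi z+\varepsilon]\cap[-R,R]\neq\emptyset$, i.e.\ $|z|\le(R+\varepsilon)/(2\pi)$, contribute.

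Next I would bound, for each contributing $z$, the factor $\eta_\Delta(p)^2$ over the support of $\chi_\kappa^\varepsilon(2\pi z)$ by its maximum there; since $\eta_\Delta(\cdot)^2$ is even and decreasing on $[0,\infty)$ (cf.\ Eq.~\eqref{eq:etakappadefinition}), this maximum is at most $\eta_\Delta(0)^2=\Delta/\sqrt{\pi}$. Pulling this constant out of the integral and using $\int\chi_\kappa^\varepsilon(2\pi z)(p)^2\,dp=\|\Psi_\kappa^\varepsilon\|^2=1$, each contributing $z$ contributes at most $\Delta/\sqrt{\pi}$. The number of contributing integers $z$ is at most $2\lfloor(R+\varepsilon)/(2\pi)\rfloor+1\le(R+\varepsilon)/\pi+1$, so
\begin{align}
    \langle \gkp_{\kappa,\Delta}^{\widehat{\varepsilon}}|\widehat{\Pi}_{[-R,R]}|\gkp_{\kappa,\Delta}^{\widehat{\varepsilon}} \rangle
    \le(D_{\kappa,\Delta}^{\varepsilon})^2\cdot\frac{\Delta}{\sqrt{\pi}}\cdot\Big(\frac{R+\varepsilon}{\pi}+1\Big)\ .
\end{align}

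Finally I would invoke Lemma~\ref{eq:DkD ep 2 bound}, which gives $(D_{\kappa,\Delta}^{\varepsilon})^{2}\le 2\pi/(1-6\sqrt{\pi}\Delta)$ (and $1-6\sqrt{\pi}\Delta>0$ since $\Delta<1/100<1/(6\sqrt{\pi})$), substitute $\varepsilon<1/2$, and check by a short computation that the resulting right-hand side is at most $2\Delta R+12\Delta$ for all $R>0$: the coefficient multiplying $\Delta R$ is $\tfrac{2}{\sqrt{\pi}}(1-6\sqrt{\pi}\Delta)^{-1}<2$, and the remaining $\Delta$-term has constant well below $12$. There is no serious obstacle here; the only care needed is the bookkeeping for the $z$-intervals that protrude past $\pm R$ (harmless, since we upper-bound and may integrate over the full support $[2\pi z-\varepsilon,2\pi z+\varepsilon]$ rather than its intersection with $[-R,R]$) and checking that the explicit constants indeed fit inside $2\Delta R+12\Delta$.
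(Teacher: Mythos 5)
Your proposal is correct and follows essentially the same route as the paper's proof: both exploit the pairwise disjoint supports of the peaks $\chi_\kappa^\varepsilon(2\pi z)$, bound $\eta_\Delta(p)^2$ by $\eta_\Delta(0)^2=\Delta/\sqrt{\pi}$, count the $O(R/\pi)$ contributing peaks using $\|\Psi_\kappa^\varepsilon\|=1$, and finish with the bound $(D^\varepsilon_{\kappa,\Delta})^2\le 2\pi(1-6\sqrt{\pi}\Delta)^{-1}$ from Lemma~\ref{eq:DkD ep 2 bound} together with $\Delta<1/100$. The only cosmetic difference is that the paper enlarges the integration interval to $[-2\pi\lceil R/(2\pi)\rceil-\varepsilon,\,2\pi\lceil R/(2\pi)\rceil+\varepsilon]$ before counting, whereas you count the peaks intersecting $[-R,R]$ directly, which yields the same (in fact slightly tighter) constants.
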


\begin{proof}
    The support of $\chi^\varepsilon_\kappa(2\pi z)$ is contained in the interval $[2\pi z-\varepsilon, 2\pi z+\varepsilon]$ for all $z\in \mathbb{Z}$. Hence, by definition of the state $\ket{\gkp_{\kappa,\Delta}^{\widehat{\varepsilon}}}$, we have
\begin{align}
    \langle\gkp_{\kappa,\Delta}^{\widehat{\varepsilon}}|\widehat{\Pi}_{[-R, R]}|\gkp_{\kappa,\Delta}^{\widehat{\varepsilon}} \rangle &= (D_{\kappa,\Delta}^{\varepsilon})^{2} \sum_{z\in\mathbb{Z}}\int_{-R}^{R} \eta_\Delta(p)^2 \chi_\kappa^\varepsilon(2\pi z)(p)^2 dp\\
    &\le (D_{\kappa,\Delta}^{\varepsilon})^{2} \sum_{z\in\mathbb{Z}}\int_{-2\pi  \ceil{R/ (2\pi)}-\varepsilon}^{2\pi  \ceil{R/ (2\pi)}+\varepsilon} \eta_\Delta(p)^2 \chi_\kappa^\varepsilon(2\pi z)(p)^2 dp\\
    &=  (D_{\kappa,\Delta}^{\varepsilon})^{2} \sum_{z=- \ceil{R/ (2\pi)}}^{ \ceil{R/ (2\pi)}} \int_{2\pi z-\varepsilon}^{2\pi z +\varepsilon} \eta_\Delta(p)^2 \chi_\kappa^\varepsilon(2\pi z)(p)^2 dp \ . \label{eq: omega overlap first}
\end{align}
Notice that $\eta_\Delta$ is an even function monotonously decreasing on the interval $[0,
\infty)$, it holds for $p\in [2\pi z - \varepsilon,2\pi z +\varepsilon]$ and $ z \in \mathbb{Z}\setminus \{0\}$ that
\begin{align}
    \eta_\Delta(p)^2 &\le \eta_\Delta(2\pi |z| - \varepsilon)^2 \le \eta_\Delta(2\pi (|z| - 1))^2 \ .
\end{align}
Therefore
\begin{align}
    \langle\gkp_{\kappa,\Delta}^{\widehat{\varepsilon}}|\widehat{\Pi}_{[-R, R]}|\gkp_{\kappa,\Delta}^{\widehat{\varepsilon}} \rangle 
     &\le (D_{\kappa,\Delta}^{\varepsilon})^{2}\left(\eta_\Delta(0)^2 + 2\!\!\!\!\sum_{z=1}^{ \ceil{R/ (2\pi)}} \!\!\!\!\eta_\Delta(2\pi(z-1))^2 \int_{2\pi z-\varepsilon}^{2\pi z +\varepsilon}  \!\!\!\!\!\!\chi_\kappa^\varepsilon(2\pi z)(p)^2 dp \right)\\
     &= (D_{\kappa,\Delta}^{\varepsilon})^{2}\left(\eta_\Delta(0)^2 + 2\sum_{z=1}^{ \ceil{R/ (2\pi)}} \eta_\Delta(2\pi(z-1))^2 \right)\\
     &= (D_{\kappa,\Delta}^{\varepsilon})^{2}\left(\eta_\Delta(0)^2 + 2\sum_{z=0}^{ \ceil{R/ (2\pi)}-1} \eta_\Delta(2\pi z)^2 \right)\ , \label{eq: overlap mom first}
\end{align}
where we used that $\chi^\varepsilon_\kappa(2\pi z)$ are normalized. Clearly, we have
\begin{align}
    \eta_\Delta(p)^2 \le \eta_\Delta(0)^2 = \frac{\Delta}{\sqrt{\pi}}\qquad\textrm{for all}\qquad p \in \mathbb{R} \ .
\end{align}
By using this bound on each term in the sum
in Eq.~\eqref{eq: overlap mom first}, we obtain
\begin{align}
    \langle\gkp_{\kappa,\Delta}^{\widehat{\varepsilon}}|\widehat{\Pi}_{[-R, R]}|\gkp_{\kappa,\Delta}^{\widehat{\varepsilon}} \rangle 
     &\le (D^\varepsilon_{\kappa,\Delta})^2 \frac{\Delta}{\sqrt{\pi}}\left(1+ 2\ceil{R/ (2\pi)}\right) \label{eq: overlap mom sec}
\end{align}
Combining ~\eqref{eq: overlap mom sec}, the fact that $\ceil{R/(2\pi)}\le R/(2\pi)+1$, and the bound from Lemma~\ref{eq:DkD ep 2 bound} gives the claim 
\begin{align}
    \langle \gkp_{\kappa,\Delta}^{\widehat{\varepsilon}}|\widehat{\Pi}_{[-R+s, R+s]}|\gkp_{\kappa,\Delta}^{\widehat{\varepsilon}} \rangle 
    \le 2\pi\left(1-6\sqrt{\pi}\Delta\right)^{-1} \cdot \left( \frac{\Delta R}{\pi^{3/2}} + \frac{3\Delta}{\sqrt{\pi}} \right) 
    \le 2 \Delta R + 12 \Delta \ ,
\end{align}
where we used $\Delta < 1/100$ in the last step.
\end{proof}

\begin{lemma} \label{lem: omgea eps gkp overlap} Let $\kappa \in (0,1/4)$, $\Delta \in (0,1/4)$ and $\varepsilon \in [\sqrt{\Delta},1/2)$. Then, 
\begin{align}
    \left|\langle \gkp_{\kappa,\Delta}, \gkp_{\kappa,\Delta}^{\widehat{\varepsilon}} \rangle \right|^2 \ge 1-3\kappa-39\Delta \ .
\end{align}
\end{lemma}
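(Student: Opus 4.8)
The plan rests on the elementary observation that $\ket{\gkp_{\kappa,\Delta}^{\widehat\varepsilon}}$ is, by definition, the normalized projection $\widehat\Pi_{(2\pi\mathbb{Z})(\varepsilon)}\ket{\gkp_{\kappa,\Delta}}$, so that $\langle \gkp_{\kappa,\Delta},\gkp_{\kappa,\Delta}^{\widehat\varepsilon}\rangle = \|\widehat\Pi_{(2\pi\mathbb{Z})(\varepsilon)}\ket{\gkp_{\kappa,\Delta}}\|$ and hence $|\langle \gkp_{\kappa,\Delta},\gkp_{\kappa,\Delta}^{\widehat\varepsilon}\rangle|^2 = \|\widehat\Pi_{(2\pi\mathbb{Z})(\varepsilon)}\ket{\gkp_{\kappa,\Delta}}\|^2 = \int_{(2\pi\mathbb{Z})(\varepsilon)}|\widehat\gkp_{\kappa,\Delta}(p)|^2\,dp$, using that $\cF$ is unitary. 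Thus the task reduces to bounding the ``leakage'' $\int_{\mathbb{R}\setminus(2\pi\mathbb{Z})(\varepsilon)}|\widehat\gkp_{\kappa,\Delta}(p)|^2\,dp$ above by $3\kappa+39\Delta$. Equivalently — and this is the route I would actually carry out, since it parallels the proofs of Lemma~\ref{lemma: overlap truncated gkp} and Lemma~\ref{lem: gkp S_P} — one expands $\langle\gkp_{\kappa,\Delta},\gkp_{\kappa,\Delta}^{\widehat\varepsilon}\rangle$ directly in momentum space using the explicit forms of $\widehat\gkp_{\kappa,\Delta}$ and of $\widehat{\gkp_{\kappa,\Delta}^{\widehat\varepsilon}}$.

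Concretely, I would substitute the momentum-space wavefunction from Lemma~\ref{lem: hat gkp(p)}, namely $\widehat\gkp_{\kappa,\Delta}(p)=\sqrt{2\pi}\,C_{\kappa,\Delta}\sum_{z\in\mathbb{Z}}\eta_\Delta(p)\Psi_\kappa(p-2\pi z)$ — a ``point-wise''-envelope comb with Gaussian peaks $\Psi_\kappa$ of width $\kappa$ at the points $2\pi z$ — and perform the same diagonal/off-diagonal bookkeeping as in Lemma~\ref{lemma: overlap truncated gkp}, now in the Fourier picture. The three ingredients are: (i) since the peak spacing $2\pi$ hugely exceeds $\kappa<1/4$, on any window $[2\pi z_0-\pi,2\pi z_0+\pi]$ the comb $\sum_z\Psi_\kappa(p-2\pi z)$ is dominated by the single term $z=z_0$, the remainder being a geometric-type tail $\sum_{k\ge 1}\Psi_\kappa((2k-1)\pi)$ that is super-exponentially small in $1/\kappa^2$ \emph{uniformly} for $\kappa\in(0,1/4)$; (ii) the envelope $\eta_\Delta$ varies on scale $1/\Delta>4$, so on each window it may be replaced by its central value $\eta_\Delta(2\pi z_0)$ at a relative cost $O(\Delta)$ (quantified via monotonicity of $\eta_\Delta$, exactly as in Lemma~\ref{lem: Omega proj mom} and Lemma~\ref{eq:DkD ep 2 bound}), and the resulting weight sums $\sum_z\eta_\Delta(2\pi z)^2$ together with their half-integer-shifted variants are pinned to $\tfrac1{2\pi}(1+O(\Delta))$ by Lemma~\ref{lem: gaussian sum integral bound} applied with $c=(2\pi\Delta)^2$; (iii) on $[2\pi z-\varepsilon,2\pi z+\varepsilon]$ one has $\Psi_\kappa^\varepsilon(p-2\pi z)=\Psi_\kappa(p-2\pi z)/\|\Pi_{[-\varepsilon,\varepsilon]}\Psi_\kappa\|$, so after recombining the normalization constants $C_{\kappa,\Delta}$, $D_{\kappa,\Delta}$ and $D_{\kappa,\Delta}^\varepsilon$ (controlled through Lemma~\ref{lem: technical lemma Ck CkD CkL}, Eqs.~\eqref{eq: Ck-2 bounds}--\eqref{eq: CkD-2 bound}, and Lemma~\ref{eq:DkD ep 2 bound}) the whole overlap collapses, up to multiplicative factors $1+O(\kappa)+O(\Delta)$ and additive super-exponentially small remainders, to $\|\Pi_{[-\varepsilon,\varepsilon]}\Psi_\kappa\|^2=|\langle\Psi_\kappa,\Psi_\kappa^\varepsilon\rangle|^2$.

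The final step invokes the Gaussian-truncation estimate of Lemma~\ref{lem: gaussian gaussian epsilon} (with the peak parameter $\kappa$ playing the role of ``$\Delta$'' there), namely $|\langle\Psi_\kappa,\Psi_\kappa^\varepsilon\rangle|^2\ge 1-2e^{-(\varepsilon/\kappa)^2}$, and converts the tail $2e^{-(\varepsilon/\kappa)^2}$ into a term of order $\kappa$ by using the quantitative lower bound on $\varepsilon$ together with $e^{-x}\le 1/x$ — the precise analogue of the passage from Eq.~\eqref{eq: Psi Psi eps overlap} to Eq.~\eqref{eq: Psi Psi eps overlap restricted ep}. Collecting the $O(\kappa)$ contributions (truncation tail plus the $C_{\kappa,\Delta}$-factor) and the $O(\Delta)$ contributions (the envelope weight sums, the window replacement of $\eta_\Delta$, and $D_{\kappa,\Delta}^\varepsilon$), absorbing the super-exponentially small remainders using $\kappa,\Delta<1/4$, and cleaning up with the standard inequalities $(1-x)(1-y)\ge 1-x-y$ and $(1-x)^2\ge 1-2x$ used throughout this appendix, yields the claimed bound $1-3\kappa-39\Delta$.

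The main obstacle is keeping all three scales — the peak width $\kappa$, the envelope width $1/\Delta$, and the truncation width $\varepsilon$ — under simultaneous control while tracking the constants. In particular, the inter-peak overlaps must be shown negligible uniformly in $\kappa$ even though $\kappa$ need not be small (easy, but it must be written carefully); replacing $\eta_\Delta$ by a constant on windows of length $2\pi$ must be shown to cost only $O(\Delta)$ (this is where $\Delta<1/4$ enters); and — the crux — the peak-truncation tail $e^{-(\varepsilon/\kappa)^2}$ must be brought down to order $\kappa$, which is exactly what the hypothesis on $\varepsilon$ is there to supply. Once these reductions are in place, the remainder is routine arithmetic.
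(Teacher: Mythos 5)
Your proposal follows essentially the same route as the paper's proof: expand the overlap in momentum space via Lemma~\ref{lem: hat gkp(p)}, keep only the (non-negative) diagonal peak terms, control the envelope sums and the normalization constants $C_{\kappa,\Delta}$ and $D^{\varepsilon}_{\kappa,\Delta}$ through Lemma~\ref{lem: gaussian sum integral bound}, Lemma~\ref{lem: technical lemma Ck CkD CkL} and Lemma~\ref{eq:DkD ep 2 bound}, and finish with the truncation estimate on $|\langle\Psi_\kappa,\Psi_\kappa^\varepsilon\rangle|^2$ from Lemma~\ref{lem: gaussian gaussian epsilon} plus the standard cleanup inequalities. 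The only differences are cosmetic (you discuss inter-peak leakage explicitly where the paper simply drops the non-negative off-diagonal terms in a lower bound, and you center the envelope at $2\pi z$ rather than $2\pi|z|+\varepsilon$), and your final conversion of the tail $e^{-(\varepsilon/\kappa)^2}$ into an $O(\kappa)$ term is exactly the step the paper itself takes.
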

\begin{proof}
    By Lemma~\ref{lem: hat gkp(p)} and by the fact that the Fourier transformation is unitary, we get
    \begin{align}
         \langle \gkp_{\kappa,\Delta}, \gkp_{\kappa,\Delta}^{\widehat{\varepsilon}} \rangle &= \sqrt{2\pi} C_{\Delta,\kappa} D_{\kappa,\Delta}^{\varepsilon}\sum_{z,z'\in\mathbb{Z}}\int  \eta_\Delta(p)^2 \chi_\kappa(2\pi z)(p) \chi_\kappa^\varepsilon(2\pi z')(p) dp\\
         &\ge \sqrt{2\pi} C_{\Delta,\kappa}  D_{\kappa,\Delta}^{\varepsilon} \sum_{z\in\mathbb{Z}} \int  \eta_\Delta(p)^2 \chi_\kappa(2\pi z)(p)\chi_\kappa^\varepsilon(2\pi z)(p)dp\ ,
         \end{align}
         where we used the non-negativity of the integrands. We have
         \begin{align}
             \int  
             \eta_\Delta(p)^2 \chi_\kappa(2\pi z)(p)\chi_\kappa^\varepsilon(2\pi z) dp  
             &= \int 
             \eta_\Delta(p)^2 \Psi_\kappa(p-2\pi z) \Psi_\kappa^\varepsilon(p-2\pi z) dp\\
             &= \int_{ -\varepsilon}^{ \varepsilon}  \eta_\Delta(u+ 2\pi z)^2 \Psi_\kappa(u)\Psi_\kappa^\varepsilon(u) du \ .
         \end{align}  where we used that the support of $\Psi_\kappa^\varepsilon$ is contained in $[-\varepsilon, \varepsilon]$ and substituted $u:=p-2\pi z$ in the last step. Due to the symmetry and monotonicity of $\eta_\Delta$, we infer that
         \begin{align}
             \eta_\Delta(u + 2\pi z) \ge \eta_\Delta(2\pi|z| + \varepsilon) 
             \qquad\textrm{for}\quad u \in [-\varepsilon, \varepsilon], \ z\in \mathbb{Z}\ .
         \end{align}
         Therefore, we can bound
         \begin{align}
             \langle \gkp_{\kappa,\Delta}, \gkp_{\kappa,\Delta}^{\widehat{\varepsilon}} \rangle  &\ge \sqrt{2\pi} C_{\kappa,\Delta}  D_{\kappa,\Delta}^{\varepsilon} \sum_{z\in\mathbb{Z}} \eta_\Delta(2\pi|z|+\varepsilon)^2 \int_{-\varepsilon}^\varepsilon  \Psi_\kappa(p) \Psi_\kappa^\varepsilon(p) dp\\
             &=  \sqrt{2\pi} C_{\kappa,\Delta} D_{\kappa,\Delta}^{\varepsilon} \sum_{z\in\mathbb{Z}} \eta_\Delta(2\pi |z|+\varepsilon)^2  \langle \Psi_\kappa, \Psi_\kappa^\varepsilon\rangle \ . \label{eq: first overlap omega eps gkp}
         \end{align}
         We bound the squares of the four factors in~\eqref{eq: first overlap omega eps gkp} separately.
         By Lemma~\ref{lem: technical lemma Ck CkD CkL} and by inequality $(1+x)^{-1}\ge 1-x$ for $x>0$ ,
         we have
         \begin{align}
              C_{\kappa,\Delta}^{2} 
              &\ge 1-\frac{\kappa}{\sqrt{\pi}}-2(\sqrt{2\pi}+\kappa)\Delta \\
              &\ge 1 - \kappa - 6\Delta \ , \label{eq: hat C Delta kappa lower}
         \end{align}
         where we used the assumption $\kappa<1/4$ in the last inequality.
         By Lemma~\ref{eq:DkD ep 2 bound},
         we have
         \begin{align}
             (D_{\kappa,\Delta}^{\varepsilon})^2 \ge 2\pi(1+6\sqrt{\pi}\Delta)^{-1}\ge 2\pi(1-6\sqrt{\pi}\Delta) \ge 2\pi(1-11\Delta)\ , \label{eq: D kappa delta eps lower}
         \end{align}
         where we used $(1+x)^{-1}\ge 1-x$ for all $x>0$. By the definition of $\eta_\Delta(\cdot)$, $\varepsilon\in(0,1/2)$ and by Lemma~\ref{lem: gaussian sum integral bound} (with $c=2\pi\Delta$), we have
         \begin{align}
             \sum_{z\in\mathbb{Z}} \eta_\Delta(2\pi |z|+\varepsilon)^2 
             &=\frac{1}{2\pi}\sum_{z\in\mathbb{Z}} \eta_{2\pi\Delta}( |z|+\varepsilon/(2\pi))^2\\
             &\ge \frac{1}{2\pi}\frac{2\pi\Delta}{\sqrt{\pi}}{}\left(\frac{\sqrt{\pi}}{2\pi\Delta}-3\right)\\
             &=\frac{1}{2\pi}\left(1-6\sqrt{\pi}\Delta\right) \\
             &\ge\frac{1}{2\pi}\left(1-11\Delta\right)\ .
         \end{align}
         Thus,
         \begin{align}
             \left(\sum_{z\in\mathbb{Z}} \eta_\Delta(2\pi |z|+\varepsilon)^2 \right)^2\ge \frac{1}{(2\pi)^2}\left(1-12\sqrt{\pi}\Delta\right) \ge \frac{1}{(2\pi)^2}\left(1-22\Delta\right) \label{eq: sum eta Delta lower} \ .
         \end{align}
Finally, by Lemma~\ref{lem: gaussian gaussian epsilon}, we have for $\varepsilon\in [\sqrt{\kappa},1/2)$ that 
\begin{align}
    \left|\langle \Psi_\kappa, \Psi_\kappa^\varepsilon\rangle \right|^2 
    &\ge 1 - 2 \kappa\ .\label{eq: overlap gauss gauss eps}
\end{align}
Combining the bounds from~\eqref{eq: first overlap omega eps gkp} with~\eqref{eq: hat C Delta kappa lower},~\eqref{eq: D kappa delta eps lower},~\eqref{eq: sum eta Delta lower}, and~\eqref{eq: overlap gauss gauss eps} and repeatedly using~$(1-x)(1-y) \ge 1 -x-y$ for $x,y
\ge 0$ gives
\begin{align}
     \left|\langle \gkp_{\kappa,\Delta}, \gkp_{\kappa,\Delta}^{\widehat{\varepsilon}} \rangle 
     \right|^2 
     &\ge (1-\kappa -6\Delta)(1-11\Delta)(1-22\Delta)(1-2\kappa)\\
     &\ge 1-3\kappa-39\Delta \ .
\end{align}
This is the claim.
\end{proof}
         
\begin{corollary}\label{cor: dist gkp omega eps}
     Let $\kappa \in (0,1/4)$, $\Delta \in (0,1/4)$ and $\varepsilon \in [\sqrt{\Delta},1/2)$. Then, 
\begin{align}
    \left\| \proj{\gkp_{\kappa,\Delta}}- \proj{\gkp_{\kappa,\Delta}^{\widehat{\varepsilon}}}  \right\|_1 \le  4\sqrt{\kappa} + 13\sqrt{\Delta}\ .
\end{align}
\end{corollary}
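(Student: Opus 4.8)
The plan is to combine the overlap estimate from Lemma~\ref{lem: omgea eps gkp overlap} with the standard identity relating trace distance and fidelity for pure states, namely Eq.~\eqref{eq: trace distance overlap}. Since both $\ket{\gkp_{\kappa,\Delta}}$ and $\ket{\gkp_{\kappa,\Delta}^{\widehat{\varepsilon}}}$ are pure states, we have
\begin{align}
\left\| \proj{\gkp_{\kappa,\Delta}}- \proj{\gkp_{\kappa,\Delta}^{\widehat{\varepsilon}}}  \right\|_1 = 2\sqrt{1-\left|\langle \gkp_{\kappa,\Delta}, \gkp_{\kappa,\Delta}^{\widehat{\varepsilon}} \rangle\right|^2}\ .
\end{align}

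Next I would invoke Lemma~\ref{lem: omgea eps gkp overlap}, which (under the hypotheses $\kappa\in(0,1/4)$, $\Delta\in(0,1/4)$, $\varepsilon\in[\sqrt{\Delta},1/2)$ of the corollary) gives $\left|\langle \gkp_{\kappa,\Delta}, \gkp_{\kappa,\Delta}^{\widehat{\varepsilon}} \rangle\right|^2 \ge 1-3\kappa-39\Delta$. Substituting this lower bound yields
\begin{align}
\left\| \proj{\gkp_{\kappa,\Delta}}- \proj{\gkp_{\kappa,\Delta}^{\widehat{\varepsilon}}}  \right\|_1 \le 2\sqrt{3\kappa+39\Delta}\ .
\end{align}

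Finally I would apply the elementary subadditivity inequality $\sqrt{x+y}\le\sqrt{x}+\sqrt{y}$ for $x,y\ge 0$ to split the right-hand side into $2\sqrt{3}\sqrt{\kappa}+2\sqrt{39}\sqrt{\Delta}$, and bound the numerical constants by $2\sqrt{3}\le 4$ and $2\sqrt{39}\le 13$. This gives exactly $\left\| \proj{\gkp_{\kappa,\Delta}}- \proj{\gkp_{\kappa,\Delta}^{\widehat{\varepsilon}}}  \right\|_1 \le 4\sqrt{\kappa}+13\sqrt{\Delta}$, as claimed. There is no real obstacle here: the work has already been done in Lemma~\ref{lem: omgea eps gkp overlap} (whose own proof, via Lemma~\ref{lem: hat gkp(p)} and the Poisson-summation expression for $\widehat{\gkp}_{\kappa,\Delta}$, together with the normalization bounds of Lemma~\ref{lem: technical lemma Ck CkD CkL} and Lemma~\ref{eq:DkD ep 2 bound}, is where the genuine estimates reside), so this corollary is purely a matter of passing from a squared-overlap bound to an $L^1$-distance bound and tidying up constants.
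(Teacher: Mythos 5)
Your proposal is correct and follows exactly the paper's own argument: invoke Lemma~\ref{lem: omgea eps gkp overlap}, convert the overlap bound to an $L^1$-distance via Eq.~\eqref{eq: trace distance overlap}, and split the square root with $\sqrt{x+y}\le\sqrt{x}+\sqrt{y}$, bounding $2\sqrt{3}\le 4$ and $2\sqrt{39}\le 13$. No differences worth noting.
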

\begin{proof}
    From Lemma~\ref{lem: omgea eps gkp overlap} and from the relation between the trace distance and the overlap (cf.~Eq.~\eqref{eq: trace distance overlap}), we have
    \begin{align}
         \left\| \proj{\gkp_{\kappa,\Delta}}- \proj{\gkp_{\kappa,\Delta}^{\widehat{\varepsilon}}}  \right\|_1 \le 2\sqrt{3\kappa} + 2\sqrt{39\Delta}\le 4\sqrt{\kappa} + 13\sqrt{\Delta} \ ,
    \end{align}
    where we used $\sqrt{x+y} \le \sqrt{x}+\sqrt{y}$ for $x,y\ge 0$.
\end{proof}

\section{Bounds on convolutions}

In this section, we prove bounds on and relations between convolutions of Gaussians with summands (indexed by $k\in\mathbb{Z}$) of the following form (cf.~\eqref{eq:Ikdefinition} and~\eqref{eq:Iprimekdefinition}, where we used the definition of translated Gaussians~\eqref{eq:chiDelta ep definition}). 
\begin{align}
I_k(y)&=\int
\eta_\kappa(x-y)^2\Psi^\varepsilon_\Delta(x-k)^2 dx \qquad \textrm{ where $y\in\mathbb{R}$, and}\label{eq:Ikdefinition2}\\
I'_k&=
\int \eta_\kappa(x)\eta_\kappa(k)\Psi^\varepsilon_\Delta(x-k)^2 dx\label{eq:Iprimekdefinition2}\ .
\end{align}

For future reference, we observe that for any integer~$m\in\mathbb{Z}$ and $\delta\in\mathbb{R}$, it holds by the variable substitution~$z:=x-m$ that
\begin{align}
I_k(m+\delta)&=
\int \eta_\kappa(x-(m+\delta))^2\Psi^\varepsilon_\Delta(x-k)^2 dx\\
&=\int \eta_\kappa(z-\delta)^2\Psi^\varepsilon_\Delta(z-(k-m))^2 dz\\
&=I_{k-m}(\delta)\ .\label{eq:ikmdeltarelation}
\end{align}
Furthermore, by the variable substitution~$z:=x-k$, we obtain the identities
\begin{align}
    I_k(s) 
  &=\int_{-\varepsilon}^\varepsilon  \eta_\kappa((k+z)-s)^2 \Psi^\varepsilon_\Delta(z)^2 dz\ ,  \label{eq:ikdeltaexpressionh}\\
  I_k'
    &=\int_{-\varepsilon}^\varepsilon \eta_\kappa(k+z)\eta_\kappa(k) \Psi^\varepsilon_\Delta(z)^2 dz\ .\label{eq:ikprime}
    \end{align}

In our proofs, we will frequently use the following two identities regarding the product of shifted Gaussians:
\begin{align}
\eta_\kappa(k+z)^2 &=\frac{\kappa}{\sqrt{\pi}}
e^{-\kappa^2 (k+z)^2}\\
&=\frac{\kappa}{\sqrt{\pi}} e^{-\kappa^2 k^2}\cdot 
e^{-2\kappa^2 k z}\cdot e^{-\kappa^2 z^2}\\
&=\eta_\kappa(k)^2\cdot e^{-2\kappa^2 k z}\cdot e^{-\kappa^2 z^2}\ \label{eq:shifted eta2}
\end{align}
and
\begin{align}
\eta_\kappa(k+z)\eta_\kappa(k)&=\frac{\kappa}{\sqrt{\pi}}e^{-\kappa^2(k+z)^2/2}\cdot e^{-\kappa^2k^2/2}\\
&=\frac{\kappa}{\sqrt{\pi}}e^{-\kappa^2 (k+z)^2}\cdot 
e^{\kappa^2(k+z)^2/2-\kappa^2 k^2/2}\\
&=\frac{\kappa}{\sqrt{\pi}} e^{-\kappa^2 (k+z)^2}\cdot
e^{\kappa^2 k z+\kappa^2 z^2/2}\\
&=\eta_\kappa(k+z)^2 \cdot e^{\kappa^2 k z}\cdot  e^{\kappa^2 z^2/2}\ .\label{eq:shifted eta eta}
\end{align}

We prove the following statements.
\begin{lemma}\label{lem:ikzerobound} Let $I_k(0)$ be as in Eq.~\eqref{eq:Ikdefinition} with $\kappa>0$, $\Delta>0$, and $\varepsilon\in(0,1/2)$. Let $L\in 2\mathbb{N}$. Then,
\begin{align}
    \sum_{k=-L/2}^{L/2-1}I_k(0) &\geq \left(1-2\kappa^2\varepsilon L\right)\cdot \sum_{k=-L/2}^{L/2-1}\eta_\kappa(k)^2\ .\label{eq:ikzeroexp}
    \end{align}
\end{lemma}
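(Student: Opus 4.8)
The plan is to bound each integral $I_k(0)$ from below by $\eta_\kappa(k)^2$ times a factor close to $1$, uniformly over the range $k\in\{-L/2,\ldots,L/2-1\}$, and then sum. Starting from the expression~\eqref{eq:ikdeltaexpressionh} with $s=0$, namely
\begin{align}
I_k(0)&=\int_{-\varepsilon}^{\varepsilon}\eta_\kappa(k+z)^2\,\Psi^\varepsilon_\Delta(z)^2\,dz\ ,
\end{align}
I would substitute the identity~\eqref{eq:shifted eta2}, i.e., $\eta_\kappa(k+z)^2=\eta_\kappa(k)^2\,e^{-2\kappa^2kz}\,e^{-\kappa^2z^2}$, to pull out the factor $\eta_\kappa(k)^2$:
\begin{align}
I_k(0)&=\eta_\kappa(k)^2\int_{-\varepsilon}^{\varepsilon}e^{-2\kappa^2kz}\,e^{-\kappa^2z^2}\,\Psi^\varepsilon_\Delta(z)^2\,dz\ .
\end{align}
It then suffices to lower-bound the integral. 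Since $|z|\le\varepsilon$ and $|k|\le L/2$, we have $e^{-\kappa^2z^2}\ge e^{-\kappa^2\varepsilon^2}\ge 1-\kappa^2\varepsilon^2$ and $e^{-2\kappa^2kz}\ge e^{-\kappa^2 L\varepsilon}\ge 1-\kappa^2 L\varepsilon$ (using $e^{-x}\ge 1-x$ and $2|k||z|\le L\varepsilon$). Multiplying these and using $\varepsilon<1/2\le L/2$ so that $\kappa^2\varepsilon^2\le\kappa^2\varepsilon L$, together with $(1-a)(1-b)\ge 1-a-b$ for $a,b\ge0$, gives the pointwise bound $e^{-2\kappa^2kz}e^{-\kappa^2z^2}\ge 1-2\kappa^2\varepsilon L$ for all $z\in[-\varepsilon,\varepsilon]$ and all $k$ in the relevant range.

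Carrying this bound through the integral and using $\int_{-\varepsilon}^{\varepsilon}\Psi^\varepsilon_\Delta(z)^2\,dz=\|\Psi^\varepsilon_\Delta\|^2=1$, I obtain
\begin{align}
I_k(0)&\ge \eta_\kappa(k)^2\,(1-2\kappa^2\varepsilon L)\qquad\textrm{for all }k\in\{-L/2,\ldots,L/2-1\}\ .
\end{align}
Summing this inequality over $k$ from $-L/2$ to $L/2-1$ yields exactly~\eqref{eq:ikzeroexp}. (If $1-2\kappa^2\varepsilon L<0$ the statement is trivially true since $I_k(0)\ge0$, so the bound holds in all regimes.)

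\textbf{Anticipated obstacle.} This is essentially a direct calculation, so there is no serious obstacle; the only care needed is in the bookkeeping of the exponential estimates — in particular making sure the bound on $e^{-2\kappa^2 kz}$ uses $2|k|\,|z|\le L\varepsilon$ (valid since $|k|\le L/2$ and $|z|\le\varepsilon$) and then combining it cleanly with the $e^{-\kappa^2 z^2}$ factor via $(1-a)(1-b)\ge 1-a-b$ to land on the stated constant $2\kappa^2\varepsilon L$ rather than something weaker. One should also confirm the range of the inner index: the support of $\Psi^\varepsilon_\Delta$ is $[-\varepsilon,\varepsilon]$ with $\varepsilon<1/2$, so the substitution $z=x-k$ and the restriction of the domain of integration in~\eqref{eq:ikdeltaexpressionh} are justified as in the earlier derivations.
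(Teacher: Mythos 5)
Your proof is correct and follows essentially the same route as the paper: pull out $\eta_\kappa(k)^2$ via the shifted-Gaussian identity, bound the remaining factors $e^{-2\kappa^2 kz}e^{-\kappa^2 z^2}$ uniformly on $|z|\le\varepsilon$, $|k|\le L/2$, use the normalization of $\Psi^\varepsilon_\Delta$, and sum. The only cosmetic difference is that the paper keeps the factor $e^{-\kappa^2\varepsilon(L+\varepsilon)}$ through the sum and applies $e^{-x}\ge 1-x$ (with $\varepsilon(L+\varepsilon)\le 2\varepsilon L$) only at the end, whereas you linearize the exponentials pointwise; your closing remark about the trivial regime $1-2\kappa^2\varepsilon L<0$ takes care of the sign issue this creates.
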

\begin{proof}

We will prove the claim by bounding integrands in $I_k(0)$ (cf. identity~\eqref{eq:ikdeltaexpressionh} with $s=0$).
We use that
\begin{align}
e^{-2\kappa^2 k z}\cdot e^{-\kappa^2 z^2}&\geq e^{-\kappa^2 L\varepsilon}\cdot e^{-\kappa^2 \varepsilon^2}\ ,
\end{align}
for $z\in (-\varepsilon,\varepsilon)$ and $k\in \{-L/2,\ldots,L/2-1\}$ together with Eq.~\eqref{eq:shifted eta2} to obtain
\begin{align}
    \eta_\kappa(k+z)^2&\geq \eta_\kappa(k)^2\cdot e^{-\kappa^2\varepsilon(L+ \varepsilon)}\ .
\end{align}
By inserting this lower bound into the identity for $I_k(y)$ in Eq.~\eqref{eq:ikdeltaexpressionh}, we obtain (for $k\in\{-L/2,\ldots,L/2-1\}$) 
\begin{align}
I_k(0)&\geq e^{-\kappa^2\varepsilon( L+ \varepsilon)}\eta_\kappa(k)^2\int_{-\varepsilon}^\varepsilon \Psi^\varepsilon_\Delta(x)^2  dz\\
&=e^{-\kappa^2\varepsilon( L+ \varepsilon)}\eta_\kappa(k)^2\ ,
\end{align} 
where we used the fact that  $\Psi^\varepsilon_\Delta\in L^2(\mathbb{R})$ is normalized and has support on~$(-\varepsilon,\varepsilon)$.  Summing over $k\in \{-L/2,\ldots,L/2-1\}$ yields
\begin{align}
\sum_{k=-L/2}^{L/2-1} I_k(0)&\geq e^{-\kappa^2\varepsilon( L+ \varepsilon)} \sum_{k=-L/2}^{L/2-1}\eta_\kappa(k)^2\ .\label{eq:intermediatesumexpre}
\end{align}
By the inequality $e^{-x}\ge 1-x$ for $x\ge 0$ and by the assumption on $\varepsilon$,
we obtain
\begin{align}
e^{-\kappa^2\varepsilon(L+ \varepsilon)}&\geq 1-\kappa^2\varepsilon(L+\varepsilon)\\
&\geq 1-2\kappa^2\varepsilon L\ ,
\end{align}
which together with Eq.~\eqref{eq:intermediatesumexpre} proves the claim.
\end{proof}

\begin{lemma}\label{lem:fractionalboundmnv} Let $I_k(0)$ be as defined in Eq.~\eqref{eq:Ikdefinition} with $\kappa>0$, $\Delta>0$, and $\varepsilon\in(0,1/2)$. Let $L\in2\mathbb{N}$ such that $L\ge 4$.
We have 
\begin{align}
\sum_{k=-L/2}^{L/2-1}I_k(0)&\geq
(1-e^{-\kappa^2 L^2/8}) \sum_{k=-L}^{L-1}I_k(0)\,.
\end{align}
\end{lemma}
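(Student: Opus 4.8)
The plan is to use the integral representation $I_k(0)=\int_{-\varepsilon}^{\varepsilon}\eta_\kappa(k+z)^2\,\Psi^\varepsilon_\Delta(z)^2\,dz$ recorded in Eq.~\eqref{eq:ikdeltaexpressionh} and to show that the ``tail'' contributions with $L/2\le |k|\le L-1$ are exponentially suppressed relative to the central block $\{-L/2,\dots,L/2-1\}$. First I would write
\[
\sum_{k=-L}^{L-1}I_k(0)=\sum_{k=-L/2}^{L/2-1}I_k(0)+T,\qquad T:=\sum_{k=L/2}^{L-1}I_k(0)+\sum_{k=-L}^{-L/2-1}I_k(0).
\]
If I can establish $T\le e^{-\kappa^2L^2/8}\sum_{k=-L/2}^{L/2-1}I_k(0)$, then, using that every $I_k(0)\ge 0$ (so that $\sum_{k=-L/2}^{L/2-1}I_k(0)\le\sum_{k=-L}^{L-1}I_k(0)$), I obtain $\sum_{k=-L/2}^{L/2-1}I_k(0)=\sum_{k=-L}^{L-1}I_k(0)-T\ge(1-e^{-\kappa^2L^2/8})\sum_{k=-L}^{L-1}I_k(0)$, which is the claim.

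To bound $T$, I would compare each tail term pointwise with a term from the central block, via the shifts $k\mapsto k-L/2$ for $k\ge L/2$ and $k\mapsto k+L/2$ for $k\le -L/2-1$. Using $\eta_\kappa(k+z)^2=\tfrac{\kappa}{\sqrt\pi}e^{-\kappa^2(k+z)^2}$ and the factorization $(k+z)^2-(k-L/2+z)^2=\tfrac L2\bigl(2k+2z-\tfrac L2\bigr)$, for $k\ge L/2$ and $|z|<\varepsilon<1/2$ one gets $2k+2z-\tfrac L2>L-1-\tfrac L2=\tfrac L2-1\ge\tfrac L4$ --- this is exactly where $L\ge 4$ enters --- hence $(k+z)^2-(k-L/2+z)^2\ge L^2/8$, and therefore $\eta_\kappa(k+z)^2\le e^{-\kappa^2 L^2/8}\,\eta_\kappa(k-L/2+z)^2$ on the support of $\Psi^\varepsilon_\Delta$; integrating gives $I_k(0)\le e^{-\kappa^2L^2/8}I_{k-L/2}(0)$. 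The negative tail is handled by the analogous (and slightly more generous) estimate $I_k(0)\le e^{-\kappa^2L^2/8}I_{k+L/2}(0)$ for $k\le -L/2-1$. Since $k\mapsto k-L/2$ maps $\{L/2,\dots,L-1\}$ bijectively onto $\{0,\dots,L/2-1\}$ and $k\mapsto k+L/2$ maps $\{-L,\dots,-L/2-1\}$ bijectively onto $\{-L/2,\dots,-1\}$, and these two images are disjoint with union $\{-L/2,\dots,L/2-1\}$, summing the pointwise bounds yields $T\le e^{-\kappa^2L^2/8}\sum_{k=-L/2}^{L/2-1}I_k(0)$, completing the argument.

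The only delicate point is the exponent bookkeeping: one must keep track of the $\pm\varepsilon$ slack from the integration range and the $\pm 1/2$ shifts to confirm that the difference of squares is $\ge L^2/8$ precisely under the hypothesis $L\ge 4$, and one should verify that after shifting the negative-tail indices one really lands in $\{-L/2,\dots,-1\}$ (it does, with room to spare). Everything else --- nonnegativity of the integrands, the disjointness and covering of the two shifted index blocks, and the final rearrangement --- is elementary, and no property of $\eta_\kappa$ beyond the identity already stated as Eq.~\eqref{eq:shifted eta2} is needed.
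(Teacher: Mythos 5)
Your proof is correct and follows essentially the same route as the paper's: both bound each tail term $I_k(0)$ with $L/2\le|k|\le L-1$ by $e^{-\kappa^2L^2/8}$ times a shifted central term, using the Gaussian exponent comparison on the support $|z|<\varepsilon<1/2$ together with $L\ge 4$, and then rearrange. The only cosmetic differences are that the paper treats the negative tail via the symmetry $I_{-k}(0)=I_k(0)$ rather than a direct shift, and closes with $(1+x)^{-1}\ge 1-x$ instead of your slightly cleaner subtraction step.
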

\begin{proof}
We will prove the claim by comparing integrands in $I_k(0)$ (cf. identity~\eqref{eq:ikdeltaexpressionh}).
Since both $\eta_\kappa(\cdot)$ and $\Psi_\Delta^\varepsilon(\cdot)$ are even functions, we have from Eq.~\eqref{eq:Ikdefinition} by the variable substitution $z'=-z$ that $I_k(0)$ is even in  $k$. Indeed
\begin{align}
    I_k(0) &=\int_{-\varepsilon}^\varepsilon  \eta_\kappa(k+z)^2 \Psi^\varepsilon_\Delta(z)^2 dz \\
    &= \int_{-\varepsilon}^\varepsilon  \eta_\kappa(-k+z')^2 \Psi^\varepsilon_\Delta(z')^2 dz'\\
    &=I_{-k}(0) \ .\label{eq: Ikzero even}
\end{align}

By Eq.~\eqref{eq:shifted eta2} (with $m+z$ in place of $k$ and $L/2$ in place of $z$), we have for $m\geq 0$ and $z\in (-\varepsilon,\varepsilon)$ that
\begin{align}
\eta_\kappa(L/2+m+z)^2
&=\eta_\kappa(m+z)^2\cdot e^{-\kappa^2mL/2}\cdot e^{-\kappa^2zL/2}\cdot e^{-\kappa^2 L^2/4}\\
&\leq \eta_\kappa(m+z)^2\cdot e^{\kappa^2\varepsilon L/2}\cdot e^{-\kappa^2 L^2/4}\\
&\le \eta_\kappa(m+z)^2\cdot e^{-\kappa^2(L^2/4-\varepsilon L/2)}\\
&\le 
\eta_\kappa(m+z)^2\cdot e^{-\kappa^2L^2/8}\ ,
\label{eq:lasteqmv}\end{align}
where we used that $\kappa^2m L/2\geq 0$ and $z\ge -\varepsilon$ on the second line, and the following inequality
$\varepsilon L/2 \le L/2\le (L/2)\cdot (L/4)=L^2/8$
for $L\ge 4$ on the last line. Inserting the bound from Eq.~\eqref{eq:lasteqmv} into the identity for $I_k(y)$ from Eq.~\eqref{eq:ikdeltaexpressionh} gives
\begin{align}
    I_{L/2+m}(0)&\le e^{-\kappa^2 L^2/8}I_{m}(0)\qquad\textrm{ for $ m\geq 0$ }\  .
\end{align}
Since $I_k(0)$ is even in $k$ (cf.~\eqref{eq: Ikzero even}), we also have
\begin{align}
    I_{-L/2-m}(0)&\leq e^{-\kappa^2 L^2/8}I_{-m}(0)\qquad\textrm{ for $ m\geq 0$ }\  .
\end{align}
In particular, this implies
\begin{align}
\sum_{k=-L}^{L-1}I_k(0)&=\sum_{k=-L/2}^{L/2-1} I_k(0)+
\sum_{k=L/2+1}^{L-1} I_{-k}(0)+
\sum_{k=L/2}^{L-1}I_k(0)\\
&\leq \sum_{k=-L/2}^{L/2} I_k(0)+
e^{-\kappa^2L^2/8}\sum_{m=1}^{L/2-1}I_{-m}(0)+
e^{-\kappa^2L^2/8}\cdot \sum_{m=0}^{L/2-1}I_{m}(0)\\
&=\sum_{k=-L/2}^{L/2} I_k(0)+
e^{-\kappa^2L^2/8}\sum_{k=-L/2+1}^{L/2-1}I_k(0)\\
&\leq \left(1+e^{-\kappa^2 L^2/16}\right)\cdot \sum_{k=-L/2}^{L/2} I_k(0)\ .
\end{align}
In the last inequality,  we  used that $I_k(0)\geq 0$ for every $k\in\mathbb{Z}$. 
Rewriting this as 
\begin{align}
\left(1+e^{-\kappa^2 L^2/8}\right)^{-1}
\sum_{k=-L}^{L-1}I_k(0)&\leq \sum_{k=-L/2}^{L/2} I_k(0)\  ,
\end{align}
and using that $1-x\leq (1+x)^{-1}$  for $x\in (0,1)$ gives the claim.
\end{proof}

\begin{lemma}\label{lem:Ikprimeikzerocomparison} Let $I_k(0)$ be as defined in Eq.~\eqref{eq:Ikdefinition} with $\kappa>0$, $\Delta>0$, and $\varepsilon\in(0,1/2)$. Let $L\ge 0$.
\begin{align}
    \sum_{k=-L/2}^{L/2-1}I_k' &\geq \left(1-\kappa^2\varepsilon L/2\right) \sum_{k=-L/2}^{L/2-1}I_k(0)\ .
    \end{align}
\end{lemma}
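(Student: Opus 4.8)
The statement compares $\sum_k I_k'$ with $\sum_k I_k(0)$, so the natural approach is to compare the integrands termwise using the identities~\eqref{eq:ikdeltaexpressionh} and~\eqref{eq:ikprime}. Recall that
\begin{align}
I_k(0)=\int_{-\varepsilon}^{\varepsilon}\eta_\kappa(k+z)^2\,\Psi^\varepsilon_\Delta(z)^2\,dz
\qquad\textrm{ and }\qquad
I_k'=\int_{-\varepsilon}^{\varepsilon}\eta_\kappa(k+z)\,\eta_\kappa(k)\,\Psi^\varepsilon_\Delta(z)^2\,dz\ .
\end{align}
So it suffices to lower bound the ratio $\eta_\kappa(k+z)\eta_\kappa(k)/\eta_\kappa(k+z)^2=\eta_\kappa(k)/\eta_\kappa(k+z)$ uniformly for $z\in(-\varepsilon,\varepsilon)$ and $k\in\{-L/2,\dots,L/2-1\}$, and multiply the resulting pointwise inequality through by $\Psi^\varepsilon_\Delta(z)^2$ and integrate, then sum over $k$.

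\textbf{Key steps.} First I would invoke the identity~\eqref{eq:shifted eta eta}, which gives $\eta_\kappa(k+z)\eta_\kappa(k)=\eta_\kappa(k+z)^2\cdot e^{\kappa^2 k z}\cdot e^{\kappa^2 z^2/2}$. Hence the integrand of $I_k'$ equals the integrand of $I_k(0)$ multiplied by the factor $e^{\kappa^2 k z+\kappa^2 z^2/2}$. For $z\in(-\varepsilon,\varepsilon)$ and $|k|\le L/2$ we have $\kappa^2 k z+\kappa^2 z^2/2\ge -\kappa^2|k|\,|z|\ge-\kappa^2(L/2)\varepsilon=-\kappa^2\varepsilon L/2$, so $e^{\kappa^2 k z+\kappa^2 z^2/2}\ge e^{-\kappa^2\varepsilon L/2}$. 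Since $\Psi^\varepsilon_\Delta(z)^2\ge 0$, integrating over $(-\varepsilon,\varepsilon)$ yields $I_k'\ge e^{-\kappa^2\varepsilon L/2}\,I_k(0)$ for each such $k$. Summing over $k\in\{-L/2,\dots,L/2-1\}$ and using $e^{-x}\ge 1-x$ for $x\ge 0$ with $x=\kappa^2\varepsilon L/2$ gives
\begin{align}
\sum_{k=-L/2}^{L/2-1}I_k'\ \ge\ e^{-\kappa^2\varepsilon L/2}\sum_{k=-L/2}^{L/2-1}I_k(0)\ \ge\ \bigl(1-\kappa^2\varepsilon L/2\bigr)\sum_{k=-L/2}^{L/2-1}I_k(0)\ ,
\end{align}
which is exactly the claim. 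One should note that each $I_k(0)\ge0$ (the integrand is manifestly non-negative), so the termwise inequality survives summation without sign issues.

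\textbf{Main obstacle.} There is essentially no obstacle here; this is a routine bound of the same flavor as Lemma~\ref{lem:ikzerobound}. The only mild subtlety is keeping track of which direction the exponential factor goes: one needs the \emph{lower} bound on $e^{\kappa^2 k z+\kappa^2 z^2/2}$, and since $z^2/2\ge 0$ the genuinely negative contribution comes only from the cross term $\kappa^2 kz$, whose magnitude is at most $\kappa^2(L/2)\varepsilon$. Care must also be taken that the bound $|k|\le L/2$ (rather than $|k|\le L/2-1$) is used so that it applies uniformly to all summation indices, which is harmless since it only weakens the estimate slightly. Everything else is a direct application of $e^{-x}\ge1-x$.
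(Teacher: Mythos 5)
Your proposal is correct and follows essentially the same route as the paper's proof: both use the identity~\eqref{eq:shifted eta eta} to factor the integrand of $I_k'$ as that of $I_k(0)$ times an exponential, lower bound the cross term by $e^{-\kappa^2\varepsilon L/2}$ for $|k|\le L/2$ and $|z|<\varepsilon$ (the paper simply drops $e^{\kappa^2 z^2/2}\ge 1$ first), and then sum and apply $e^{-x}\ge 1-x$. No gaps.
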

\begin{proof}

We will prove the claim by comparing integrands in $I'_k$ and $I_k(0)$ (cf. identities~\eqref{eq:ikprime} and~\eqref{eq:ikdeltaexpressionh}).
By the identity~\eqref{eq:shifted eta eta}, we have 
\begin{align}
\eta_\kappa(k+z)\eta_\kappa(k)
&=\eta_\kappa(k+z)^2 \cdot e^{\kappa^2 k z}\cdot  e^{\kappa^2 z^2/2}\\
&\ge \eta_\kappa(k+z)^2 \cdot e^{\kappa^2 k z}\ , \label{eq:lowerboundetakappakz}
\end{align}
where we used that $e^{\kappa^2 z^2/2}\geq 1$ that holds for $\kappa>0$ and $z\in\mathbb{R}$. Furthermore, for $k\in \{-L/2,\ldots,L/2-1\}$ and  $z\in (-\varepsilon,\varepsilon)$, we have 
\begin{align}
e^{\kappa^2 kz}\geq e^{-\kappa^2 L\varepsilon/2}\ .
\end{align}
It follows that (for this choice of $z$ and $k$)
\begin{align}
\eta_\kappa(k+z)\eta_\kappa(k)&\geq \eta_\kappa(k+z)^2 
e^{-\kappa^2 L\varepsilon/2}\label{eq:lowerboundetakappakzbb} \ .
\end{align}
Inserting Eq.~\eqref{eq:lowerboundetakappakzbb}  into Eq.~\eqref{eq:ikprime}, we get  (for $k\in\{-L/2,\ldots,L/2-1\}$)
\begin{align}
I_k'&\geq e^{-\kappa^2 L\varepsilon/2} \int_{-\varepsilon}^\varepsilon \Psi^\varepsilon_\Delta(z)^2\eta_\kappa(k+z)^2dz\\
&=e^{-\kappa^2 L\varepsilon/2} I_k(0)\ ,
\end{align}
where we used the identity~\eqref{eq:ikdeltaexpressionh} on the last line. Summing over $k\in \{-L/2,\ldots,L/2-1\}$ and using the inequality $e^{-x}\geq 1-x$  for $x\ge 0$ gives the claim.
\end{proof}

\begin{lemma}\label{lem:deltazeroonehalfintegermtwo} Let $\kappa\in\mathbb{R}$,
$\varepsilon\in [0,1/2)$, and $L\in 2\mathbb{N}$. 
Let  $m\in \{-L/4,\ldots,L/4\}$  and $s\in (-1/2,1/2]$. Then
\begin{align}
     \sum_{k=-L/2-m}^{L/2-1-m}I_k(s)\leq  e^{\kappa^2 L} \sum_{k=-L/2-m}^{L/2-1-m}I_k(0)\ .
\end{align}
    \end{lemma}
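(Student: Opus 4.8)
The plan is to reduce the sum-level statement to a term-by-term comparison between $I_k(s)$ and $I_k(0)$ using the explicit integral representation~\eqref{eq:ikdeltaexpressionh}, namely $I_k(s)=\int_{-\varepsilon}^{\varepsilon}\eta_\kappa((k+z)-s)^2\Psi^\varepsilon_\Delta(z)^2\,dz$. Since each summand is non-negative and $\Psi^\varepsilon_\Delta$ is a fixed probability density supported on $(-\varepsilon,\varepsilon)$, it suffices to bound the ratio of the Gaussian factors pointwise: I would show that for every $k$ in the summation range $\{-L/2-m,\ldots,L/2-1-m\}$ and every $z\in(-\varepsilon,\varepsilon)$,
\begin{align}
\eta_\kappa((k+z)-s)^2 \le e^{\kappa^2 L}\,\eta_\kappa(k+z)^2\ .
\end{align}
Once this pointwise inequality is in hand, multiplying by $\Psi^\varepsilon_\Delta(z)^2$, integrating over $z$, and summing over $k$ immediately gives the claim.

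For the pointwise bound I would use the identity~\eqref{eq:shifted eta2}-style manipulation: writing $u=k+z$, we have
\begin{align}
\eta_\kappa(u-s)^2 = \eta_\kappa(u)^2 \cdot e^{2\kappa^2 u s}\cdot e^{-\kappa^2 s^2}\le \eta_\kappa(u)^2\cdot e^{2\kappa^2 |u|\,|s|}\ ,
\end{align}
dropping the factor $e^{-\kappa^2 s^2}\le 1$. Then I need an upper bound on $|u|=|k+z|$ valid over the summation range. For $k\in\{-L/2-m,\ldots,L/2-1-m\}$ with $|m|\le L/4$ we get $|k|\le L/2+|m|\le 3L/4$, so $|u|\le 3L/4+\varepsilon\le 3L/4+1/2\le L$ (using $L\ge 2$). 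Combined with $|s|\le 1/2$, this yields $2\kappa^2|u|\,|s|\le 2\kappa^2\cdot L\cdot \tfrac12 = \kappa^2 L$, which is exactly the exponent needed. The boundedness of $|u|$ is the only genuinely geometric ingredient; everything else is just the multiplicativity of Gaussians.

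The main (minor) obstacle is bookkeeping the range of $k$ carefully: one must verify that $|k|\le 3L/4$ holds for \emph{all} indices in the shifted summation window and that the worst case $|s|=1/2$, $z$ near $\pm\varepsilon$ still produces an exponent no larger than $\kappa^2 L$. Since the bound we want ($e^{\kappa^2 L}$) is quite generous relative to the sharp value ($e^{\kappa^2(3L/4+1/2)}$ roughly), there is plenty of slack, and the assumption $L\in 2\mathbb{N}$ (hence $L\ge 2$) together with $\varepsilon<1/2$ is more than enough to absorb the additive constants. I would therefore present the proof as: (i) state the integral form of $I_k(s)$; (ii) establish the pointwise Gaussian ratio bound with the range estimate $|k+z|\le L$; (iii) integrate against $\Psi^\varepsilon_\Delta(z)^2$ and sum over $k$, invoking non-negativity of each $I_k(0)$ to conclude. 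No step requires more than elementary estimates, so the write-up should be short.

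\begin{proof}
By the identity~\eqref{eq:ikdeltaexpressionh}, for any $k\in\mathbb{Z}$ and $s\in\mathbb{R}$ we have
\begin{align}
I_k(s)=\int_{-\varepsilon}^{\varepsilon}\eta_\kappa\big((k+z)-s\big)^2\,\Psi^\varepsilon_\Delta(z)^2\,dz\ .
\end{align}
Fix $m\in\{-L/4,\ldots,L/4\}$ and $s\in(-1/2,1/2]$, and let $k$ range over $\{-L/2-m,\ldots,L/2-1-m\}$. For such $k$ we have $|k|\le L/2+|m|\le L/2+L/4=3L/4$, hence for $z\in(-\varepsilon,\varepsilon)$,
\begin{align}
|k+z|\le \tfrac{3L}{4}+\varepsilon\le \tfrac{3L}{4}+\tfrac12\le L\ ,
\end{align}
where the last inequality uses $L\ge 2$ (as $L\in 2\mathbb{N}$). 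Writing $u=k+z$ and using $\eta_\kappa(x)^2=\tfrac{\kappa}{\sqrt\pi}e^{-\kappa^2 x^2}$, we compute
\begin{align}
\eta_\kappa(u-s)^2=\eta_\kappa(u)^2\cdot e^{2\kappa^2 u s}\cdot e^{-\kappa^2 s^2}\le \eta_\kappa(u)^2\cdot e^{2\kappa^2 |u|\,|s|}\le \eta_\kappa(u)^2\cdot e^{\kappa^2 L}\ ,
\end{align}
since $e^{-\kappa^2 s^2}\le 1$, $|s|\le 1/2$, and $|u|\le L$. Therefore, for every $k$ in the summation range,
\begin{align}
I_k(s)=\int_{-\varepsilon}^{\varepsilon}\eta_\kappa(k+z-s)^2\,\Psi^\varepsilon_\Delta(z)^2\,dz\le e^{\kappa^2 L}\int_{-\varepsilon}^{\varepsilon}\eta_\kappa(k+z)^2\,\Psi^\varepsilon_\Delta(z)^2\,dz=e^{\kappa^2 L}\,I_k(0)\ .
\end{align}
Summing this inequality over $k\in\{-L/2-m,\ldots,L/2-1-m\}$ (and using that each term is finite) yields
\begin{align}
\sum_{k=-L/2-m}^{L/2-1-m}I_k(s)\le e^{\kappa^2 L}\sum_{k=-L/2-m}^{L/2-1-m}I_k(0)\ ,
\end{align}
which is the claim.
\end{proof}
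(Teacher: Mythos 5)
Your proof is correct and follows essentially the same route as the paper: the integral representation of $I_k(s)$, the pointwise Gaussian ratio bound obtained by completing the square (dropping $e^{-\kappa^2 s^2}\le 1$), the range estimate $|k|\le 3L/4$ on the shifted summation window, and term-by-term summation. The only cosmetic difference is that you bound $|k+z|$ jointly while the paper bounds the $k$- and $z$-dependent exponential factors separately, which yields the same exponent $\kappa^2(3L/4+\varepsilon)\le \kappa^2 L$.
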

    \begin{proof}
  We will prove the claim by comparing integrands in $I_k(s)$ (cf. identity~\eqref{eq:ikdeltaexpressionh}).
  By Eq.~\eqref{eq:shifted eta2}, we have
    \begin{align}
        \eta_\kappa((k+z)-s)^2
        &=\eta_\kappa (k+z)^2\cdot  e^{2\kappa^2(k+z)s}\cdot e^{-\kappa^2s^2} \\
        &\le  \eta_{\kappa}(k+z)^2 \cdot e^{2\kappa^2ks}\cdot e^{2\kappa^2 zs} \label{eq:chainofeqshmnp}
    \end{align}
    where we used $e^{-\kappa^2s^2}\leq 1$ (for $s,\kappa\in\mathbb{R}$) on the last line.
    Since for $s\in (-1/2,1/2]$, we have 
    \begin{align}
        \max_{k\in \{-L/2-m,\ldots,L/2-1-m\}}  e^{2\kappa^2ks }
        &\le    e^{\kappa^2(L/2+|m|) }
        \qquad\textrm{ for any }\qquad  m\in\mathbb{Z}\ ,
    \end{align}
        we obtain 
        \begin{align}
            e^{2\kappa^2ks }&\le  e^{3\kappa^2 L/4}
        \end{align}
        for any $m\in \{-L/4,\ldots,L/4\}$, $k\in  \{-L/2-m,\ldots,L/2-1-m\}$.
        Furthermore, we have
        \begin{align}
            e^{2\kappa^2zs}
            \le e^{\kappa^2\varepsilon}\qquad\textrm{ for any }z\in(-\varepsilon,\varepsilon)\textrm{ and }s\in (-1/2, 1/2]\ .
        \end{align}
        Inserting the previous two inequalities  into~\eqref{eq:chainofeqshmnp}, we conclude that for such $m,k,z,s$, we have
        \begin{align}
            \eta_\kappa((k+z)-s)^2&\le \eta_\kappa(k+z)^2\cdot 
            e^{3\kappa^2 L/4}\cdot e^{\kappa^2\varepsilon} \ . \label{eq:uniformlowerboundetakappakz}
        \end{align}
        Inserting~\eqref{eq:uniformlowerboundetakappakz} into~\eqref{eq:ikdeltaexpressionh},
        we obtain 
        \begin{align}
            I_k(s) &\le 
            e^{3\kappa^2 L/4+\kappa^2\varepsilon}\cdot
            \int_{-\varepsilon}^\varepsilon 
            \eta_\kappa(k+z)^2 \Psi^\varepsilon_\Delta(z)^2dz \\
            &=e^{\kappa^2(3L/4+\varepsilon)}
            \cdot I_k(0)\\
            &\leq e^{\kappa^2 L}I_k(0)\ ,\label{eq:kappaLupperboundkom}
            \end{align} 
            whenever $k\in \{-L/2-m,\ldots,L/2-1-m\}$ with $m\in \{-L/4,\ldots,L/4\}$. 
            The claim follows by summing~\eqref{eq:kappaLupperboundkom} over $k\in \{-L/2-m,\ldots,L/2-1-m\}$.
            \end{proof}

\section{Effective squeezing parameters} \label{sec: effective squeezing}
In this section, we bound the effective squeezing parameter, a metric to quantify the quality of GKP states (see~\cite{Weigand_2018, Hastrup_2021}). Let us define the unitaries $S_P = e^{-iP}$ and $S_Q= e^{2\pi i Q}$.
We want to quantify how invariant a state is under these unitaries. To do so, the effective squeezing parameters  $\Delta_P$ and $\Delta_Q$ are introduced as
    \begin{align}
        \Delta_P(\rho) := \sqrt{\log\left( \textfrac{1}{|\langle S_P\rangle_\rho|^2}\right)}\qquad \textrm{ and }\qquad
        \Delta_Q(\rho) := \sqrt{\log\left( \textfrac{1}{|\langle S_Q\rangle_\rho|^2}\right)}\ ,
    \end{align}
where we define $\langle U\rangle_\rho := \tr(U\rho)$ for a unitary $U$.
The motivation of these quantities is the case of the ideal GKP state
\begin{align}
    \ket{\gkp} \propto \sum_{z\in \mathbb{Z}} \ket{z}\ ,
\end{align}
which has stabilizers $S_P$ and $S_Q$ and for which we thus have $\Delta_P(\gkp) = \Delta_Q(\gkp) = 0$. We show the following.
\begin{lemma}
    Let $\rho\in\cB(L^2(\mathbb{R}))$ be a quantum state prepared by Protocol~\ref{prot: appx GKP state prep} (cf. Theorem~\ref{thm:main}). For $\kappa, \Delta$ sufficiently small, it holds that
    \begin{align}
        \Delta_P &\le 39 \Delta^{1/4} + 16\kappa^{1/6}\ ,\\
        \Delta_Q &\le    41 \Delta^{1/4} +16\kappa^{1/6}\ .
    \end{align}
\end{lemma}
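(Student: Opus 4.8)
The plan is to bound $\Delta_P(\rho)$ and $\Delta_Q(\rho)$ by controlling the expectation values $\langle S_P\rangle_\rho = \tr(e^{-iP}\rho)$ and $\langle S_Q\rangle_\rho = \tr(e^{2\pi i Q}\rho)$ from below, and then invoke the definition $\Delta_P(\rho) = \sqrt{\log(1/|\langle S_P\rangle_\rho|^2)}$. The key observation is that the output state $\rho$ of Protocol~\ref{prot: appx GKP state prep} is close in $L^1$-distance to $\proj{\gkp_{\kappa,\Delta}}$ (by Theorem~\ref{thm:main}, $\|\rho - \proj{\gkp_{\kappa,\Delta}}\|_1 \le 190\sqrt{\Delta} + 24\kappa^{1/3}$), so by $|\tr(U\rho) - \tr(U\sigma)| \le \|U\|\cdot\|\rho-\sigma\|_1$ with $\|S_P\| = \|S_Q\| = 1$, it suffices to lower bound $|\langle S_P\rangle_{\gkp_{\kappa,\Delta}}|$ and $|\langle S_Q\rangle_{\gkp_{\kappa,\Delta}}|$ and track the error.

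First I would pass from $\ket{\gkp_{\kappa,\Delta}}$ to the truncated state $\ket{\gkp_{\kappa,\Delta}^\varepsilon}$ with $\varepsilon = \sqrt{\Delta}$, using Corollary~\ref{cor: gkp gkp eps trace distance} to bound $\|\proj{\gkp_{\kappa,\Delta}} - \proj{\gkp_{\kappa,\Delta}^\varepsilon}\|_1 \le 6\sqrt{\Delta}$, again paying a $\|U\|\cdot\|\cdot\|_1$ penalty for each of $S_P$ and $S_Q$. Then Lemma~\ref{lem: gkp S_P} gives $|\langle \gkp_{\kappa,\Delta}^\varepsilon, e^{-iP}\gkp_{\kappa,\Delta}^\varepsilon\rangle|^2 \ge 1 - 4\kappa$, i.e.\ $|\langle S_P\rangle_{\gkp_{\kappa,\Delta}^\varepsilon}| \ge \sqrt{1-4\kappa} \ge 1 - 4\kappa$ for small $\kappa$; and Lemma~\ref{lem: gkp S_Q} (with $\varepsilon = \sqrt{\Delta}$) gives $|\langle S_Q\rangle_{\gkp_{\kappa,\Delta}^\varepsilon}|^2 \ge 1 - 40\varepsilon^2 = 1 - 40\Delta$, hence $|\langle S_Q\rangle_{\gkp_{\kappa,\Delta}^\varepsilon}| \ge 1 - 40\Delta$. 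Combining these with the two triangle-inequality steps, one obtains lower bounds of the form
\begin{align}
|\langle S_P\rangle_\rho| &\ge 1 - 4\kappa - 3\sqrt{\Delta} - (95\sqrt{\Delta} + 12\kappa^{1/3})\ ,\\
|\langle S_Q\rangle_\rho| &\ge 1 - 40\Delta - 3\sqrt{\Delta} - (95\sqrt{\Delta} + 12\kappa^{1/3})\ ,
\end{align}
so both are $\ge 1 - O(\sqrt{\Delta}) - O(\kappa^{1/3})$.

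Finally I would convert these lower bounds on $|\langle S_P\rangle_\rho|, |\langle S_Q\rangle_\rho|$ into the stated upper bounds on $\Delta_P, \Delta_Q$. Writing $|\langle S_P\rangle_\rho| \ge 1 - t$ with $t = O(\sqrt{\Delta}) + O(\kappa^{1/3})$, we have $\log(1/|\langle S_P\rangle_\rho|^2) \le -2\log(1-t) \le \frac{2t}{1-t} \le 4t$ for $t \le 1/2$, whence $\Delta_P(\rho) \le 2\sqrt{t} = O(\Delta^{1/4}) + O(\kappa^{1/6})$, and similarly for $\Delta_Q$. Chasing the explicit constants through (using $\sqrt{a+b}\le\sqrt a+\sqrt b$, e.g.\ $\sqrt{\kappa^{1/3}}=\kappa^{1/6}$ and $\sqrt{\sqrt\Delta}=\Delta^{1/4}$) yields bounds of the claimed shape $39\Delta^{1/4} + 16\kappa^{1/6}$ and $41\Delta^{1/4} + 16\kappa^{1/6}$ once $\kappa,\Delta$ are small enough that $t<1/2$. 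The only mildly delicate point is bookkeeping the numerical constants so that they land at exactly $39,41,16$; there is no conceptual obstacle, since every ingredient (closeness of $\rho$ to $\gkp_{\kappa,\Delta}$, closeness of $\gkp_{\kappa,\Delta}$ to its truncation, and the two single-state overlap bounds in Lemmas~\ref{lem: gkp S_P} and~\ref{lem: gkp S_Q}) is already available. The main thing to be careful about is that Lemma~\ref{lem: gkp S_Q} requires $\varepsilon = \sqrt{\Delta} < 1/2$ and that the $\sqrt{\cdot}$ and $\log$ manipulations are only valid once the arguments are in the right range, which is guaranteed by taking $\kappa,\Delta$ sufficiently small as in the statement.
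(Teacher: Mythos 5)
Your proposal is correct and follows essentially the same route as the paper's proof: it combines Theorem~\ref{thm:main}, Corollary~\ref{cor: gkp gkp eps trace distance} with $\varepsilon=\sqrt{\Delta}$, and Lemmas~\ref{lem: gkp S_P} and~\ref{lem: gkp S_Q} via H\"older/triangle inequalities and then converts $-\log(1-t)\le 2t$ into the $\Delta^{1/4},\kappa^{1/6}$ bounds, exactly as the paper does (the paper merely works with squared overlaps through $|\langle U\rangle_\rho|^2\ge|\langle U\rangle_\sigma|^2-2\|\rho-\sigma\|_1$, a cosmetic difference). One small bookkeeping caution: for a unitary $U$ the penalty is the full trace norm, $|\tr(U(\rho-\sigma))|\le\|\rho-\sigma\|_1$, not half of it as your intermediate figures ($3\sqrt{\Delta}$ and $95\sqrt{\Delta}+12\kappa^{1/3}$) suggest, but with the full penalties one still obtains roughly $28\Delta^{1/4}+11\kappa^{1/6}$ and $31\Delta^{1/4}+10\kappa^{1/6}$, comfortably within the claimed constants, so the conclusion is unaffected.
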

\begin{proof}
    Note that for a unitary $U$ and two quantum state $\rho$ and $\sigma$, we have
    \begin{align}\label{eq: exp value trace distance}
         \left|\tr\left[U \rho\right]\right| \ge \left|\tr\left[U\sigma\right]\right| - \left|\tr\left[U(\rho - \sigma)\right]\right|\ge \left|\tr\left[U\sigma\right]\right| - \left\| \rho - \sigma \right\|_1\ .
    \end{align}
    The first inequality uses the triangle inequality; the second uses Hölder's inequality implying that for any trace class operator $X$ and any unitary $U$, we have $|\tr[UX]| \le \|X\|_1$. In particular, we conclude
    \begin{align} \label{eq: squared exp value}
        |\langle U \rangle_\rho|^2 \ge |\langle U \rangle_\sigma|^2 - 2\lVert \rho - \sigma \rVert_1\ .
    \end{align}
    By definition, $S_P = e^{-iP}$ and $S_Q = e^{2\pi i Q}$. Due to Lemma~\ref{lem: gkp S_P}, we have
    \begin{align}
        \left|\langle \gkp_{\kappa, \Delta}^\varepsilon, e^{-iP} \gkp_{\kappa,\Delta}^\varepsilon \rangle\right|^2 \ge 1 - 4\kappa\ . 
    \end{align}
    Moreover, setting $\varepsilon = \sqrt{\Delta}$, we infer from Corollary~\ref{cor: gkp gkp eps trace distance} that
    \begin{align}
        \left\| \proj{\gkp_{\kappa,\Delta}^\varepsilon} - \proj{\gkp_{\kappa,\Delta}} \right\|_1 &\le 6 \sqrt{\Delta}  \label{eq:tri first}   \ .
    \end{align}
    Finally, we have by Theorem~\ref{thm:main} that
    \begin{align}
        \left\|\rho - \proj{\gkp_{\kappa,\Delta}} \right\|_1 \le 190 \sqrt{\Delta} + 24 \kappa^{1/3} \label{eq: tri sec}\ .
    \end{align}
    Thus, by the triangle inequality applied to~\eqref{eq:tri first} and~\eqref{eq: tri sec}, we infer
    \begin{align}
        \left\| \rho - \proj{\gkp_{\kappa,\Delta}^\varepsilon} \right\|_1 \le 190\sqrt{ \Delta} + 30 \kappa^{1/3} \label{eq: tri third}\ .
    \end{align}
    Hence, by Eq.~\eqref{eq: squared exp value} along with~\eqref{eq: tri third}
    \begin{align}
        \left|\langle S_P\rangle_\rho\right|^2 \ge 1 - 380\sqrt{\Delta} - 64 \kappa^{1/3} \ .
    \end{align}
    As $\log((1-x)^{-1}) = -\log(1-x) \le 2x$ for $0\le x\le 1/2$, we deduce for small enough $\Delta$ and $\kappa$ that
    \begin{align}
        \Delta_P(\rho) \le 2\sqrt{380\sqrt{\Delta}+  64 \kappa^{1/3}} \le 39 \Delta^{1/4} + 16\kappa^{1/6}\ .
    \end{align}
    Similarly, by Lemma~\ref{lem: gkp S_Q} we have that
    \begin{align}
        \left|\langle \gkp_{\kappa, \Delta}^\varepsilon, e^{2\pi i Q} \gkp_{\kappa,\Delta}^\varepsilon \rangle\right|^2 \ge 1 - 40\varepsilon^2 = 1 - 40 \Delta\ . 
    \end{align}
     Hence, by Eq.~\eqref{eq: squared exp value} along with~\eqref{eq: tri third} and using $\sqrt{x} \ge x$ for $x \in[0,1]$, we have that
    \begin{align}
        \left|\langle S_Q\rangle_\rho\right|^2 \ge 1  - 40\Delta- 380\sqrt{\Delta} - 60 \kappa^{1/3} \ge 1   - 420 \sqrt{\Delta} - 60\kappa^{1/3}\ .
    \end{align}
    If follows that
    \begin{align}
        \Delta_Q(\rho) \le 2\sqrt{420 \sqrt{\Delta} + 60 \kappa^{1/3}} \le 41\Delta^{1/4} + 16\kappa^{1/6}\ ,
    \end{align}
    where we used $\sqrt{x+y} \le \sqrt{x}+\sqrt{y}$ for $x,y\ge 0$.
\end{proof}

\normalem

\end{document}